\documentclass[10pt,rqno, a4paper]{article}



\usepackage{xr}
\externaldocument{SSCCI}



\usepackage{titlesec}
\titleformat*{\section}{\bf\centering\large} 
\titleformat*{\subsection}{\bf} 
\titleformat*{\subsubsection}{\it} 



\usepackage{psfrag}
\usepackage{cite}
\usepackage[small]{caption}
\usepackage{amsmath}
\usepackage{amssymb}
\usepackage{amsfonts}
\usepackage{setspace}
\usepackage{latexsym}
\usepackage{mathrsfs}
\usepackage{epsfig}
\usepackage{amsthm}
\usepackage{yfonts}
\usepackage{graphicx}
\usepackage{color}
\usepackage{verbatim}

\usepackage{enumitem}

\usepackage[margin=2cm]{caption}






\newcommand{\be}{\begin{equation}}
\newcommand{\ee}{\end{equation}}

\newcommand{\vs}{\vspace{0.2cm}}


\usepackage{fancyhdr}
\pagestyle{fancy}

\AtBeginDocument{\thispagestyle{plain}}
\fancypagestyle{plain}
	{\fancyhead{}
	\fancyfoot{}
	\fancyhead[LE,RO]{}
	\fancyhead[LO,RE]{}
	\fancyfoot[C]{\thepage}
	
	\headsep = 20pt}

\fancyfoot{}
\fancyhead[LE,RO]{}
\fancyhead[LO,RE]{}
\fancyfoot[C]{\thepage}



\numberwithin{equation}{subsection}
\newtheorem{Theorem}{Theorem}[subsection]
\newtheorem{Definition}[Theorem]{Definition}
\newtheorem{Proposition}[Theorem]{Proposition}

\newtheorem{Lemma}[Theorem]{Lemma}
\newtheorem{Corollary}[Theorem]{Corollary}

\newtheorem{Problem}[Theorem]{Problem}




\addtolength{\hoffset}{-1.2cm}\addtolength{\textwidth}{1.5cm}\addtolength{\textheight}{1.5cm}\addtolength{\oddsidemargin}{.7cm}\addtolength{\evensidemargin}{-0.2cm}\addtolength{\voffset}{-.8cm}
\linespread{1.1}
\headsep = 9pt \addtolength{\headwidth}{1.3cm}



\newcommand{\dist}{d} 


\newcommand{\sg}{g} 
\newcommand{\hg}{{\mathfrak{g}}} 

\newcommand{\length}{L} 
\newcommand{\diam}{{\rm  diam}} 
\newcommand{\area}{A}

\newcommand{\Sa}{{\rm S}^{1}} 

\newcommand{\qM}{S} 
\newcommand{\stM}{\mathcal{M}} 
\newcommand{\sM}{\Sigma} 

\newcommand{\gcur}{\kappa} 
\newcommand{\hqg}{q} 

\newcommand{\qstM}{\mathcal{N}}

\newcommand{\hgls}{\star}
\newcommand{\T}{{\rm T}}
\newcommand{\K}{\mathbb{K}}
\newcommand{\U}{\mathcal{U}}


\usepackage{tocloft} 

\setlength\cftparskip{1pt}
\setlength\cftbeforesecskip{1pt}
\setlength\cftaftertoctitleskip{1pt}

\usepackage{setspace, tocloft}


\setlength\cftparskip{-.1pt}
\setlength\cftbeforesecskip{1pt}
\setlength\cftaftertoctitleskip{1pt}

\allowdisplaybreaks


\begin{document}

\thispagestyle{empty}

\begin{center}
{\Large\bf A classification theorem for static vacuum black holes
\vspace{.2cm}

Part II: the study of the asymptotic
}

\vspace{.5cm}

{\sc Mart\'in Reiris Ithurralde}

{mreiris@cmat.edu.uy}

\vs

{\it Centro de Matem\'atica/Universidad de la Rep\'ublica}

{\it Montevideo, Uruguay}
\vs

\begin{abstract}
This is the second article of a series or two, proving a generalisation of the uniqueness theorem of the Schwarzschild solution. The theorem to be shown classifies all (metrically complete) solutions of the static vacuum Einstein equations with compact but non-necessarily connected horizon without any further assumption on the topology or the asymptotic. Specifically, it is shown that any such solution is either: (i) a Boost, (ii) a Schwarzschild black hole, or (iii) is of Myers/Korotkin-Nicolai type, that is, it has the same topology and Kasner asymptotic as the Myers/Korotkin-Nicolai black holes.  

In this Part II we show that the only end of a static black hole data set is either asymptotically flat or asymptotically Kasner. This proves the third and last step required in the proof of the classification theorem, as was explained in Part I. The analysis requires a thorough study of static data sets with a free $\Sa$-symmetry and a delicate investigation of the geometry of static black hole ends with sub-cubic volume growth. Many of the conclusions of this article are hitherto unknown and have their own interest. 
\end{abstract}

\newpage

\thispagestyle{plain}

\begin{minipage}[l]{11cm}
{\small
\tableofcontents
}
\end{minipage}
\end{center}
\vs

\section{Introduction}  

This is the second of two articles intended to prove the following classification theorem of vacuum static black holes,
\begin{Theorem}[The classification Theorem]\label{TCTHM3} Any static black hole data set is either,
\begin{enumerate}[labelindent=\parindent, leftmargin=*, label={\rm (\Roman*)}, widest=a, align=left]
\item\label{FTII} a Schwarzschild black hole, or,
\item\label{FTI} a Boost, or,
\item\label{FTIII} is of Myers/Korotkin-Nicolai type.
\end{enumerate}
\end{Theorem}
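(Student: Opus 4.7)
The plan is to assemble the classification theorem as the concatenation of the three reduction steps developed across Parts I and II, with the core analytic work of Part II being the asymptotic dichotomy for a single end. More concretely, I would proceed as follows.

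First, I would invoke the structural results from Part I to reduce the problem to the analysis of a single end. Recall from Part I that any static black hole data set with compact horizon has finitely many ends, each of which is itself a static black hole end whose geometry can be isolated and studied in its own right. After this reduction, the classification is equivalent to showing that each end is modelled, up to a rigidity step, on one of the three target solutions in \ref{FTII}--\ref{FTIII}. So the main remaining question is: what are the possible asymptotic geometries of a static black hole end?

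Second, and this is the content of Part II that must be carried out now, I would prove the asymptotic dichotomy: any end is either asymptotically flat (AF) or asymptotically Kasner. The natural dividing line is the volume growth of geodesic balls. Ends with at least cubic volume growth are handled by a Bishop/Gromov-type argument combined with the static vacuum equations, steering the analysis toward the AF model and hence, by the classical positive-mass/rigidity argument of Bunting--Masood-ul-Alam, to a Schwarzschild end. The harder case is sub-cubic volume growth. Here I would exploit that in the sub-cubic regime the lapse and metric must collapse along a distinguished direction, producing (after a rescaling and compactness argument) a free $\Sa$-symmetry in the limit; one then studies $\Sa$-symmetric static data to identify the limit as a Kasner end. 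This is where one must prove that (i) the $\Sa$-symmetry actually exists in the limit, (ii) the reduced $2$-dimensional static system has a well-defined blow-down, and (iii) the blow-down is rigid enough to force the Kasner form of the metric.

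Third, with the asymptotic settled, the classification is completed by matching each asymptotic type to the corresponding explicit family. The AF case yields \ref{FTII} by uniqueness of Schwarzschild among AF static vacuum black holes. The Kasner case splits according to the homology of the horizon and the Kasner exponents: the ``trivial'' topological profile yields the Boost \ref{FTI}, while the nontrivial profile, together with the $\Sa$-symmetric reduction, matches the Myers/Korotkin--Nicolai family, giving \ref{FTIII}; the matching step is precisely what was set up in Part I once the asymptotic is known to be Kasner.

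The main obstacle is unambiguously the sub-cubic case in step two: one must perform a controlled blow-down of a three-dimensional static vacuum end with \emph{a priori} only weak geometric bounds, extract an $\Sa$-symmetric limit, and establish enough regularity/rigidity of that limit to conclude the Kasner form. This is where most of the new analytic effort of Part II is concentrated, and where the delicate study of $\Sa$-symmetric static data announced in the abstract plays its decisive role.
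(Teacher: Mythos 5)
Your overall architecture (reduce to one end, split by volume growth, cubic $\Rightarrow$ AF $\Rightarrow$ Schwarzschild, sub-cubic $\Rightarrow$ collapse $\Rightarrow$ $\Sa$-symmetric limit $\Rightarrow$ Kasner) matches the paper's for cases \ref{FTII} and \ref{FTIII}, but your route to conclusion \ref{FTI} contains a genuine gap. You propose to obtain the Boost from within the Kasner branch, by arguing that ``the trivial topological profile yields the Boost.'' That step would require proving that the only static black hole data sets asymptotic to a Boost are the Boosts — and this is explicitly stated in the paper as an \emph{open problem}, not a theorem. The paper never derives the Boost from the asymptotic analysis at all. Instead, the Boost arises from the horizon-topology dichotomy of Part I: either $\partial\Sigma$ is a single toroidal horizon, in which case the data is a Boost outright (this is the rigidity content of Proposition \ref{SOFOR} of Part I, with no asymptotic analysis needed), or every horizon component is a sphere, in which case the volume-growth dichotomy applies and a spherical-horizon solution that happens to be asymptotic to a Boost is simply classified as being of Myers/Korotkin--Nicolai type (the definition of M/KN type only demands Kasner asymptotics, which includes $B$). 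Without the toroidal-horizon alternative as the source of case \ref{FTI}, your scheme cannot close.

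Two smaller points. First, Part I proves that a static black hole data set has exactly \emph{one} end, not merely finitely many; the uniqueness is what makes the subsequent single-end analysis legitimate, and ``finitely many ends, each a black hole end'' is not what is available. Second, your sketch of the sub-cubic case is directionally right but understates what must be ruled out: the collapsed limit may be one-dimensional (yielding a $\T^{2}$-symmetric Kasner directly) or a two-dimensional orbifold (yielding only an $\Sa$-symmetric reduced data), and in the latter case the bulk of the work is excluding the flat Kasners $A$ and $C$ and excluding sub-quadratic curvature decay, which the paper does via the monotonic quantity $G(U)=\int_{U^{-1}}|\nabla U|^{2}dA$ along level sets of $U$ on $\star$-static ends. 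A compactness-plus-rigidity argument alone, as you describe it, does not by itself exclude these degenerate alternatives.
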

For a contextual discussion of Theorem \ref{TCTHM3} including the notion of static black hole data set and a detailed description of each family see the introduction of Part I,\cite{PartI}.

In Part I it was proved that static black holes data sets have only one end and that the horizons are weakly outermost. This accomplished the first two of the three steps required for the proof of Theorem \ref{TCTHM3}, as was explained in subsection \ref{TSOTP} of Part I. In this Part II we prove the third step, namely that the end is either asymptotically flat or asymptotically Kasner. The results of this article are found in sections \ref{S1S} and \ref{VWAE}. Section \ref{S1S}, which is independent of section \ref{VWAE}, has interest in itself and gives a thorough discussion of free $\Sa$-symmetric data sets. This section is used in section \ref{VWAE} where it is proved that black hole ends are either asymptotically flat or asymptotically Kasner. The techniques introduced for the asymptotic study are so far new and are based upon a careful analysis of static solutions on metrically collapsed annuli. Many of the conclusions are hitherto unknown and have their own interest.
\vs

Before discussing in subsection \ref{TCSTA} the structure of the article and the different proofs, let us make a succinct summary of the background material. We hope it will help the presentation and the reading. Complementary information is found in the background section \ref{BACKGROUNDMATERIAL2} which contains in particular the background material of Part I.
\vs

Formally, a (vacuum) static black hole data set $(\Sigma;g,N)$ consists of a non-compact orientable three-manifold $\Sigma$ with compact and non-empty boundary $\partial \Sigma$, a three-metric $g$ such that $(\Sigma;g)$  is metrically complete, and a non-negative lapse function $N$ that is zero on $\partial \Sigma$ (the horizons) and positive in the interior $\Sigma^{\circ}=\Sigma\setminus \partial \Sigma$ of $\Sigma$, satisfying the static vacuum Einstein equations
\be\label{STATSTAT}
NRic=\nabla\nabla N,\quad \Delta N=0.
\ee
A static black hole data set $(\Sigma;g,N)$ gives rise to a vacuum static black hole spacetime\footnote{The exterior communication region of it.} (${\bf Ric}=0$), 
\be\label{SPTE2}
{\bf \Sigma}=\mathbb{R}\times \Sigma,\quad {\bf g}=N^{2}dt^{2}+g,
\ee
where $\partial_{t}$ is the static Killing field. Conversely, a static black hole spacetime of the form (\ref{SPTE2}), gives rise to a static black hole data set $(\Sigma;g,N)$. Throughout this article we will work with data sets rather than spacetimes. In other words, we work at the `initial data level'.

The data sets of the Schwarzschild black holes are,
\be
\Sigma=\mathbb{R}^{3}\setminus B(0,2m),\quad g=\frac{1}{1-2m/r}dr^{2}+r^{2}d\Omega^{2}\quad {\rm and}\quad N=\sqrt{1-2m/r}
\ee
where $m>0$ is the ADM-mass and $B(0,2m)$ is the open ball of radius $2m$. The Boost data sets are,
\be
\Sigma=[0,\infty)\times \T^{2},\quad g=dx^{2}+h,\quad N=x
\ee
where $h$ is any flat metric on the two-torus $\T^{2}=\Sa\times \Sa$. Finally a data set $(\Sigma;g,N)$ is of Myers/Korotkin-Nicolai type if $\Sigma$ has the topology of an open solid three-torus minus a finite number of open three-balls, and if the asymptotic is Kasner. 

The Kasner spaces (that define the asymptotic) are defined as any $\mathbb{Z}\times \mathbb{Z}$-quotient of the data,

\be
\tilde{\Sigma}=(0,\infty)\times \mathbb{R}^{2};\quad \tilde{g}= dx^{2}+x^{2\alpha}dy^{2}+x^{2\beta}dz^{2},\quad \tilde{N}=x^{\gamma},
\ee
where $y$ and $z$ are coordinates on each of the factors $\mathbb{R}$ of $\mathbb{R}^{2}$, and $\alpha, \beta$ and $\gamma$ are any numbers satisfying,
\be\label{CIRCLE1}
\alpha+\beta+\gamma=1,\qquad \alpha^{2}+\beta^{2}+\gamma^{2}=1
\ee
(see Figure \ref{Figure21}).
\begin{figure}[h]
\centering
\includegraphics[width=6cm, height=6cm]{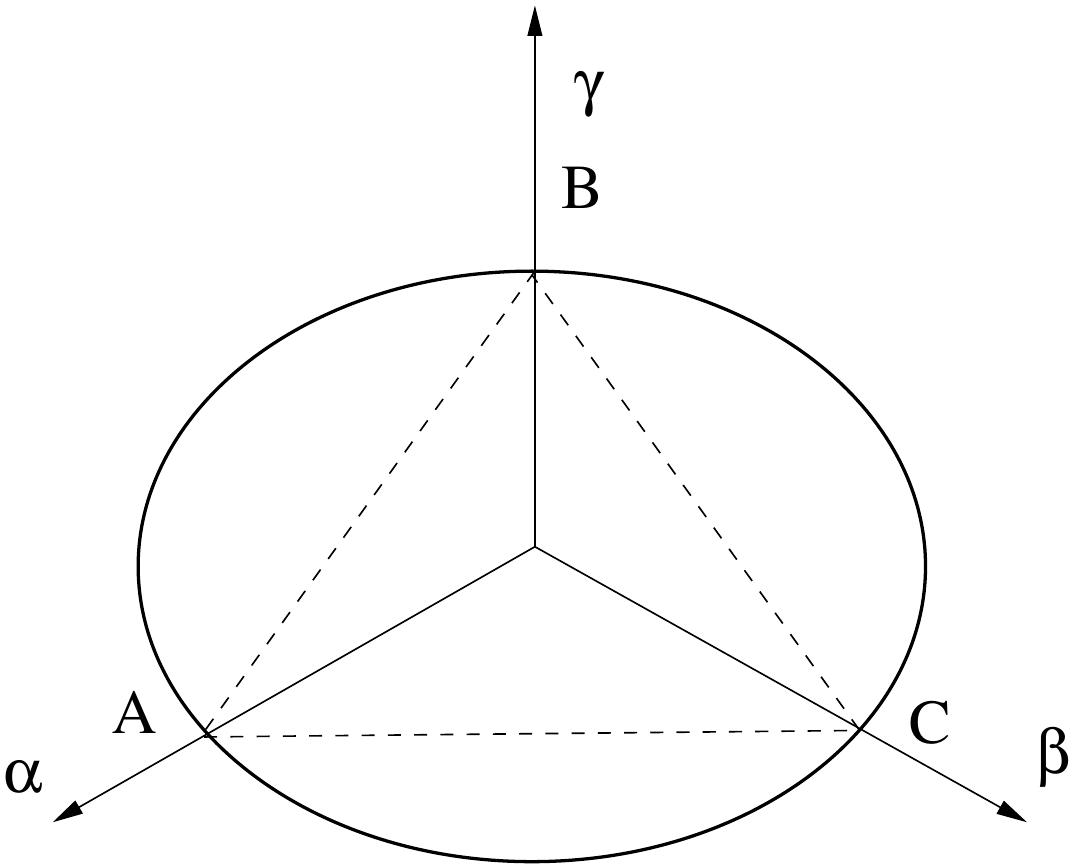}
\caption{The circle that defines the range of the Kasner parameters $\alpha$, $\beta$, $\gamma$.}
\label{Figure21}
\end{figure}
The group $\mathbb{Z}\times \mathbb{Z}$ acts freely on the factor $\mathbb{R}^{2}$ by translations and therefore the quotient manifold is diffeomorphic to $(0,\infty)\times \T^{2}$. Observe that the diameters of the `transversal tori' $\{x\}\times \T^{2}$ are determined by the $\mathbb{Z}\times \mathbb{Z}$-action and can be therefore arbitrarily small.

The Kasner spaces with $(\alpha,\beta,\gamma)=(0,0,1)$ are the Boosts and are the Kasner data with faster growth of the lapse (linear). They are the only Kasner that are static black hole data sets\footnote{One must include $\{0\}\times \T^{2}$ to the manifold.}, the other being singular as $x\rightarrow 0$. We denote the Boosts by the letter $B$. The Kasner spaces $(\alpha,\beta,\gamma)=(1,0,0)$ and $(\alpha,\beta,\gamma)=(0,1,0)$, that have constant lapse and are therefore flat, are denoted respectively by the letters $A$ and $C$. In simple terms, a data set is asymptotically Kasner if it approaches a particular Kasner data, at any order of differentiability, faster than any inverse power of the distance (see Definition \ref{KADEF}).

In this article we will use mainly the harmonic presentation of data sets, namely we will use $(\Sigma;\hg,U)$ instead of $(\Sigma; g,N)$ where,
\begin{equation}
\hg=N^{2}g,\quad U=\ln N.
\end{equation}
The static equations (\ref{STATSTAT}) now are,
\be
Ric_{\hg}=2\nabla U\nabla U,\quad \Delta_{\hg} U=0,
\ee
and have several geometric advantages. In particular, the Ricci curvature of $\hg$ is non-negative, and is zero iff $U$ is constant. 

The Kasner spaces in the harmonic presentation are now $\mathbb{Z}\times \mathbb{Z}$-quotients of,
\begin{align}
\tilde{\Sigma}=(0,\infty)\times \Sigma,\quad \tilde{\hg}=dx^{2}+x^{2a}dy^{2}+x^{2b}d z^{2},\quad \tilde{U}=c\ln x
\end{align}
where $a, b$ and $c$ satisfy
\be
2c^{2}+(a-\frac{1}{2})^{2}+(b-\frac{1}{2})^{2}=\frac{1}{2}\quad \text{and}\quad a+b=1
\ee
Thus, the circle (\ref{CIRCLE1}), (see Figure \ref{Figure21}), is seen as an ellipse in the plane $a+b=1$, (see Figure \ref{Figure31}). The $g$-flat solutions $A, C$ and $B$ are now $(a,b,c)=(1,0,0), (0,1,0)$,  and $(1/2,1/2,1/2)$, respectively.
\begin{figure}[h]
\centering
\includegraphics[width=6cm, height=6cm]{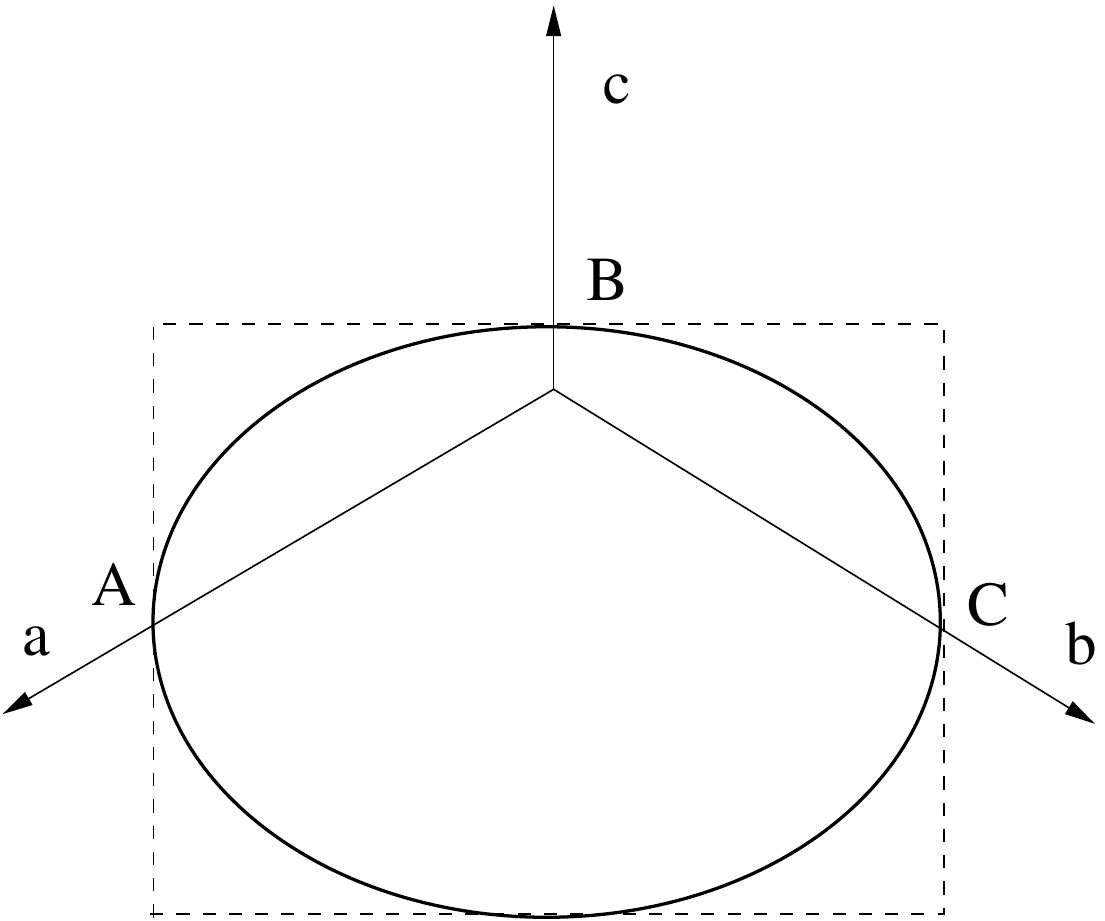}
\caption{The ellipse that defines the range of the parameters $a, b$ and $c$.}
\label{Figure31}
\end{figure}
\vs

The study of the asymptotic will be done by looking at rescaled annuli. Let $k\geq 1$, $r>0$ and let $\mathcal{A}_{\hg}(2^{-k}r,2^{k}r)$ be the annulus,
\be 
\mathcal{A}_{\hg}(2^{-k}r,2^{k}r):=\{p\in \Sigma: 2^{-k}r<\dist_{\hg}(p,\partial \Sigma)<2^{k}r\}
\ee
where $d_{\hg}(p,\partial \Sigma)$ is the $\hg$-distance from $p$ to the boundary $\partial \Sigma$. Fixed $k$, we will let $r$ increase, and, over the annulus $\mathcal{A}_{\hg}(2^{-k}r,2^{k}r)$ we will look at the rescaled metrics $\hg_{r}:=\hg/r^{2}$. Anderson's estimates (see Theorem \ref{LACD2} in Part I) show that $Ric_{\hg_{r}}$ and $\nabla U$ are uniformly bounded (i.e. their $\hg_{r}$-norms have bounds that do not depend on $r$), as so are any derivatives of them. This fundamental property will permit first the analysis of the geometry of rescaled annuli, and then, by suitable concatenation, the analysis of the asymptotic. We will discuss all that in the next subsection.

As a result of the different proofs it will be clear that, if the asymptotic is Kasner, then the parameter $c$ is positive, $c>0$. Hence the Kasner spaces with $c>0$ will be the ones more relevant to us. Still, the Kasner $A$ and $C$, that have $c=0$, will also come into play often but for technical reasons. There is a significant difference between $A$ and $C$ on one side, and the Kasner with $c>0$ on the other side: the diameter of the transversal tori $\{x\}\times \T^{2}$ grows linearly with the distance $x$ in the first case but sub-linearly in the second case. Thus, if $c>0$, the diameters of the tori $\{x\}\times \T^{2}$ with respect to $\hg_{x}=\hg/x^{2}$, tend to zero as $x\rightarrow \infty$. Another way to say this is: fixed $k\geq 1$, the Riemannian annuli $(\mathcal{A}_{\hg_{x}}(2^{-k},2^{k});\hg_{x})$ metrically collapse, as $x$ tends to infinity, to a segment of length $2^{k}-2^{-k}$ but there is no such type of collapse if the Kasner is $A$ or $C$, they metrically collapse to a two-dimensional flat annulus, (for metric-collapse see subsection \ref{SFCCRM}). As said, these global differences will cause technical difficulties while studying Kasner asymptotic. We will resume this point in subsection \ref{TCSTA}.
\vs

We move now to explain the structure of the article and the route behind the series of results and their proofs. In particular the claims of section \ref{VWAE} are somehow interrelated and therefore it is useful to have a clear overview.    

\subsection{The content and the structure of this article (Part II)}\label{TCSTA}
Section \ref{BACKGROUNDMATERIAL2} contains the background material, including notation and terminology. Subsection \ref{SDSMT2} contains the main definitions, as the one of static black hole data set or Kasner asymptotic, and states again the classification theorem as Theorem \ref{TCTHM3}. Subsection \ref{SAP2} defines annuli and partitions cuts, that are technically useful to study asymptotic properties. All that is the background material that was already introduced in Part I. The rest of the background material is special for this Part II and is the following. Also inside subsection \ref{SAP}, we pay special attention to `scaling', and notations related to it that will be used massively when studying ends in section \ref{VWAE} (it is important to keep track of them). Scaling techniques are useful due to the scale invariance of Anderson's decay estimates for the curvature and for the gradient of the logarithm of the lapse, see the Theorems \ref{LACD1} and \ref{LACD2} in Part I. Furthermore the study of ends through scaling requires a minimum material on the Cheeger-Gromov-Fukaya theory of convergence and collapse of Riemannian manifolds under curvature bounds that is shortly introduced in subsection \ref{SFCCRM}. Subsection \ref{SSKK} contains a careful account of Kasner spaces and a suitable proof of their (well known) uniqueness, which will be used throughout section \ref{VWAE} when we discuss asymptotic. This ends the background section.
\vs

Section \ref{S1S} marks the beginning of the results of Part II, whose final goal is to describe, in section \ref{VWAE}, the asymptotic of ends of black hole data sets. Section \ref{S1S} studies various aspects of data sets which are free $\Sa$-symmetric. It has interest in itself and proves a number of novel results on these types of spaces. The contents are as follows. Subsection \ref{RDRE} presents the reduced equations, that is, the fields and the equations that are obtained after the quotient by a $\Sa$-symmetric static data set by $\Sa$, Proposition \ref{PRED}. The reduced data $(\qM;q,U,V)$ of a static data set $(\Sigma; \hg,U)$, with a $\Sa$-symmetry generated by a Killing field $\xi$, consists of a two-manifold $\qM$, a Riemannian metric $q$ on $\qM$, and two fields, the usual field $U=\ln N$, and $V=\ln \Lambda$ with $\Lambda=|\xi|_{\hg}$. Relevant examples of reduced data sets are discussed in the subsections \ref{KSOL} and \ref{CIGARSOL}. Subsection \ref{KSOL} discusses, as a natural example, the reduced Kasner spaces (this subsection can be skipped). Subsection \ref{CIGARSOL} makes a thorough description of another particular reduced data that we call the `cigars' (due to their geometric shape). Subsection \ref{CIGUNIQU} proves a uniqueness statement for the cigars and subsection \ref{CIGNHCP} characterises the cigars as the data that model high-curvature regions. These properties of the cigars play an essential role in subsection \ref{DFIAT}, where it is proved that $|\nabla U|^{2}$, $|\nabla V|^{2}$ and $\kappa$ (the Gaussian curvature of $q$), have quadratic decay at infinity on $(\qM; q)$, (provided $(S;q)$ is metrically complete and $\partial S$ is compact). A few comments are in order here. The discussion of such decay depends on whether the twist $\Omega$ of $\xi$, which is a constant, is zero or not. When it is zero, the quadratic decay can be obtained using the same techniques \'a la Bakry-\'Emery used to prove a generalised Anderson's decay in Part I for the gradient of the logarithm of the lapse, Proposition \ref{REDCUR}. However, it turns out that such techniques do not entirely apply when the twist $\Omega$ is not zero. For this reason, quadratic decay in such case is proved arguing by contradiction, which explains why we study high-curvature regions in subsection \ref{CIGNHCP}. In the same subsection \ref{DFIAT} it is shown, using the decay previously proved, that $S$ has only a finite number of simple ends, each diffeomorphic to $[0,\infty)\times \Sa$. Furthermore it is proved in Proposition \ref{LPRO} that $U$ has a limit $U_{\infty}$ at infinity, $-\infty\leq U_{\infty}\leq \infty$. These are the most important results of section \ref{S1S}.
\vs

Altogether section \ref{VWAE} proves that black hole data sets are either asymptotically flat or asymptotically Kasner. The two types of asymptotic, that are discussed separately in subsections \ref{ENDSAF} and \ref{ENDSAK}, are distinguished by the type of volume growth of the end $(\Sigma;\hg)$, which is at most cubic by the Bishop-Gromov volume comparison. In subsection \ref{ENDSAF} it is shown that if the volume growth of the static black hole end is cubic then the end is asymptotically flat, whereas in subsection \ref{ENDSAK}, which has four subsections, it is shown the fundamental Theorem \ref{KAFR} stating that, if the volume growth is sub-cubic, then the asymptotic is Kasner. It is important to remark that as a byproduct of the proofs it will be shown that the Kasner asymptotic is indeed different from the flat Kasner $A$ or $C$, and of course from any Kasner with parameter $\gamma$ less than zero, (or $c<0$ if we work in the harmonic presentation), that are ruled out by the maximum principle (as then $N\rightarrow 0$ at infinity). This behaviour is compatible with the asymptotic of Myers/Korotkin-Nicolai black holes that can be that of any Kasner different from $A$, $B$, $C$ and different from those with $\gamma<0$. We leave it as an open problem to prove that the only static black hole data sets asymptotic to a Boost are in fact the Boosts. A more elaborated discussion on this point can be found in the introduction of Part I. 

Let us describe more in detail now the four subsections \ref{SPTWOP0},  \ref{SPTWOP1},  \ref{FTKASS}, \ref{POKA}.

The preliminary subsection \ref{SPTWOP0} discusses metrics on two-tori under conditions on the curvature and the diameter, and is used in the next subsection to study the geometry of the level sets of the lapse on (almost) one-collapsed annuli (that is, annuli whose geometry is `near' one-dimensional in the Gromov-Hausdorff metric, see subsection \ref{RDSACL}). 

Subsection \ref{SPTWOP1} proves a sufficient condition for a static end (non-necessarily a static black hole end) to have Kasner asymptotic different from $A$ or $C$, Theorem \ref{KASYMPTOTIC}. Roughly speaking, the sufficient criteria says that if the rescaled geometry of a sufficiently one-collapsed annulus has $|\nabla U|_{\hg_{r}}\neq 0$, then the end is asymptotic to a Kasner different from $A$ or $C$. The proof of Theorem \ref{KASYMPTOTIC} necessitates of two propositions, that we now comment in big terms. First, Proposition \ref{PORSI} shows, roughly (see the hypothesis), that any rescaled annulus that is sufficiently one-collapsed and has $|\nabla U|_{\hg_{r}}\neq 0$, is `$C^{k}$-close' to a Kasner space. More importantly, it  estimates the `$C^{k}$-distance' to the Kasner space, to any order of differentiability $k$, in terms of any power of the diameter of the transversal tori (i.e. the level sets of the lapse). The proof requires a very detailed study of one-collapsed annuli, done (in part) by carefully inspecting the geometry of the transversal tori, whose second fundamental forms is fully controlled as an easy consequence of Anderson's estimates. Second, Proposition \ref{PORE} proves a type of `geometric bootstrap'-procedure, basically stating (see the hypothesis) that if the rescaled geometry of an annulus is `close in $C^{k}$' to a Kasner different from $A$ and $C$, then the rescaled geometry of a next annulus (following the one before) is even `closer in $C^{k}$' to an annulus of a Kasner different from $A$ and $C$. Thus, once we are in the hypothesis of Theorem \ref{KASYMPTOTIC}, a proof can be reachd by (roughly) using Proposition \ref{PORSI} first, and then using repeatedly Proposition \ref{PORE}.

The sufficient criteria for Kasner asymptotic of Theorem \ref{KASYMPTOTIC} is the one leading ultimately to the proof of Theorem \ref{KAFR} showing, as said, the KA of black hole data sets with sub-cubic volume growth. But to apply the theorem one must grant first that the geometry of at least one rescaled annulus satisfies its hypothesis, and this is not easy. The proof of Theorem \ref{KAFR} is elaborated and requires subsections \ref{FTKASS} and \ref{POKA}. 

Subsection \ref{FTKASS} returns to the study of free $\Sa$-symmetric data sets $(\Sigma;\hg,U)$ by analysing the asymptotic of their ends under the natural condition $U\leq U_{\infty}$. It is shown in Theorem \ref{SSKAA} that, for such data, either the asymptotic is Kasner different from $A$ or $C$ or the whole data is flat and $U$ is constant. The proof of this follows after various steps. First we use the results of section \ref{S1S} to prove that such ends, when non-flat, are $\star$-static (Definition \ref{DEFSSS}), meaning in this case that the level sets of $U$ near $U_{\infty}$ are connected, compact and of genus greater than zero. It is then proved that either the asymptotic is Kasner different from $A$ and $C$ or it has sub-quadratic curvature decay. Finally sub-quadratic curvature decay is ruled out by making use of the monotonic quantity (\ref{GMONOTONIC}) along the level sets of $U$ (that we know are not spheres because the data is $\star$-static).

Theorem \ref{SSKAA}, on free $\Sa$-symmetric data sets, is needed in several parts of subsection \ref{POKA} to show the aforementioned non-trivial Theorem \ref{KAFR}.  As we explain more in detail below, free $\Sa$-symmetric reduced data sets show up often as the `collapsed limit' of rescaled annuli. Thus, as we study ends precisely via rescaled annuli, it is natural to expect Theorem \ref{SSKAA} to enter into scene in the analysis at some moment. Let us elaborate on this point a bit more. If an end has sub-cubic volume growth, then (sub)sequences of rescaled annuli either metrically collapse to a one-dimensional space and unwrap (i.e. after considering covers) to a ${\rm T}^{2}$-symmetric (Kasner) static data or metrically collapse to a two-dimensional orbifold $(\qM;q)$ and locally unwrap to a free $\Sa$-symmetric static data (this comes from a general fact about convergence and collapse of Riemannian manifolds that we discuss in subsection \ref{SFCCRM}).  Two-dimensional reduced ends arising as such scaled limits are described in subsection \ref{RDSACL}, (which is the last of section \ref{S1S} earlier discussed), and can have only a finite number of orbifold points. Thus, by the results of subsection \ref{FTKASS}, the asymptotic is Kasner. This crucial information is used suitably to prove Theorem \ref{DFNKA} in subsection \ref{POKA}, stating that the asymptotic of static ends with sub-cubic volume growth is Kasner different from $A$ and $C$ or the curvature decays sub-quadratically along a suitable set (precisely, a ray union a simple cut). Thus, to prove Theorem \ref{KAFR} after having Theorem \ref{DFNKA}, one must rule out the sub-quadratic decay. This is done by showing through various propositions that static black hole ends with sub-cubic volume growth are $\star$-static, and then using Proposition \ref{CORONA} that forbids such type of decay for $\star$-static data by appealing to the monotonic quantity (\ref{GMONOTONIC}). 

The proof of the classification theorem is done in section \ref{TCTH} by carefully putting together all the previous results as was explained in section \ref{TSOTP} of Part I. 
\vspace{.3cm}

{\bf Acknowledgment} I would like to thank Herman Nicolai, Marc Mars, Marcus Kuhri, Gilbert Weinstein, Michael Anderson, Greg Galloway, Miguel Sanchez, Carla Cederbaum, Lorenzo Mazzieri, Virginia Agostiniani and John Hicks for discussions and support. Also my gratefulness to Carla Cederbaum for inviting me to the conference `Static Solutions of the Einstein Equations' (T\"ubingen, 2016), to Piotr Chrusciel for inviting me to the meeting in `Geometry and Relativity' (Vienna, 2017) and to Helmut Friedrich for the very kind invitation to visit the Albert Einstein Institute (Max Planck Institute, Potsdam, 2017). This work has been largely discussed at them. Finally my gratefulness to the support received from the Mathethamical Center at the Universidad de la Rep\'ublica, Uruguay.

\section{Background material}\label{BACKGROUNDMATERIAL2}

\subsection{Static data sets and the main Theorem}\label{SDSMT2}

Manifolds will always be smooth ($C^{\infty}$). Riemannian metrics as well as tensors will also be smooth.  If $g$ is a Riemannian metric on a manifold $\Sigma$, then 
\be
\dist_{g}(p,q)= \inf\big\{\length_{g}(\gamma_{pq}):\gamma_{pq}\ \text{smooth curve joining $p$ to $q$}\big\},
\ee
is a metric, where $L_{g}$ is the notation we will use for length (when it is clear from the context we will remove the sub-index $g$ and write simply $\dist$ and $L$). A Riemannian manifold $(\Sigma;g)$ is {\it metrically complete} if the metric space $(\Sigma; \dist)$ is complete. If $(\Sigma;g)$ is metrically complete and $\partial \Sigma=\emptyset$ then the manifold is geodesically complete and we say simply as usual that $(\Sigma;g)$ is complete. 

\begin{Definition}[Static data sets]\label{SDS} A static (vacuum) data set $(\Sigma;\sg,N)$ consists of an orientable three-manifold $\Sigma$, possibly with boundary, a Riemannian metric $\sg$, and a function $N$, such that,
\begin{enumerate}[labelindent=\parindent, leftmargin=*, label={\rm (\roman*)}, widest=a, align=left]
\item $N$ is strictly positive in the interior $\Sigma^{\circ}(=\Sigma\setminus \partial\Sigma)$ of $\Sigma$,
\item $(\sg,N)$ satisfy the vacuum static Einstein equations,
\be
\label{SEQ}  N Ric = \nabla\nabla N,\qquad \Delta N=0
\ee
\end{enumerate}
\end{Definition}
The definition is quite general. Observe in particular that $\Sigma$ and $\partial \Sigma$ could be compact or non-compact. To give an example, a data set $(\Sigma;\sg,N)$ can be simply the data inherited on any region of the Schwarzschild data. This flexibility in the definition of static data set allows us to write statements with great generality.

A horizon is defined as usual.
\begin{Definition}[Horizons] Let $(\Sigma;\sg,N)$ be a static data set. A horizon is a connected component of $\partial \Sigma$ where $N$ is identically zero.
\end{Definition}
Note that the Definition \ref{SDS} doesn't require $\partial \Sigma$ to be a horizon, though the data sets that we classify in this article are those with $\partial \Sigma$ consisting of a finite set of compact horizons ($\Sigma$ is a posteriori non compact). It is known that the norm $|\nabla N|$ is constant on any horizon and different from zero. It is called the surface gravity.

It is convenient to give a name to those spaces that are the final object of study of this article. Naturally we will call them {\it static black hole} data sets.

\begin{Definition}[Static black hole data set]\label{DWO} A static data set $(\Sigma;\sg,N)$ with $\partial \Sigma=\{N=0\}$ and $\partial \Sigma$ compact, is called a static black hole data set.
\end{Definition}  

In order to study the asymptotic of ends of black hole data sets it will be more convenient to work with `static data ends' that are simply data sets with one end and compact boundary.

\begin{Definition}[Static data end]\label{SDE} A metrically complete static data set $(\Sigma;g,N)$ with $\partial \Sigma$ compact and $\Sigma$ containing only one end, will be called a static data end.
\end{Definition}

As will be shown, black hole data sets have only one end and so they are static data ends themselves. On the other hand, static data ends not necessarily arise from them. Hence, several of the theorems in this article, that are proved for static data ends, have a large range of applicability. 

The following definition, taken from \cite{4b6cb19bc94d4cf485e58571e3062f77}, recalls the notion of {\it weakly outermost} horizon.

\begin{Definition}[Galloway, \cite{4b6cb19bc94d4cf485e58571e3062f77}] Let $(\Sigma; \sg, N)$ be a static black hole data set. Then, a horizon $H$ is said weakly outermost if there are no embedded surfaces $S$ homologous to $H$ having negative outwards mean curvature. 
\end{Definition}

The following is the definition of Kasner asymptotic. It requires a decay into a background Kasner space faster than any inverse power of the distance. The definition follows the intuitive notion and it is written in the coordinates of the background Kasner, very much in the way AF is written in Schwarzschildian coordinates.

\begin{Definition}[Kasner asymptotic]\label{KADEF} A data set $(\Sigma; g,N)$ is asymptotic to a Kasner data $(\Sigma^{\mathbb{K}};g^{\mathbb{K}},N^{\mathbb{K}})$, $\Sigma^{\mathbb{K}}=(0,\infty)\times \T^{2}$, if for any $m\geq 1$ and $n\geq 0$ there is $c>0$, a bounded closed sets $K\subset \Sigma$, $K^{\mathbb{K}}\subset \Sigma^{\mathbb{K}}$ and a diffeomorphism $\phi:\Sigma\setminus K\rightarrow \Sigma^{\mathbb{K}}\setminus K^{\mathbb{K}}$ such that,
\begin{gather}
|\partial_{I}(\phi_{*}g)_{ij}-\partial_{I}g^{\mathbb{K}}_{ij}|\leq \frac{c}{x^{m}},\\
|\partial_{I}(\phi_{*}N)-\partial_{I}N^{\mathbb{K}}|\leq \frac{c}{x^{m}},
\end{gather}
for any multi-index $I=(i_{1},i_{2},i_{3})$ with $|I|=i_{1}+i_{2}+i_{3}\leq n$, where, if $x, y$ and $z$ are the coordinates in the Kasner space, then $\partial_{I}=\partial_{x}^{i_{1}}\partial_{y}^{i_{2}}\partial_{z}^{i_{3}}$. 
\end{Definition}

The next is the definition of data set of Myers/Korotkin-Nicolai type that we use.

\begin{Definition}[Black holes of M/KN type]\label{KNTDEF} A static data set $(\Sigma;\sg,N)$ is of Myers/Korotkin-Nicolai type if
\begin{enumerate}
\item $\partial \Sigma$ consist of $h\geq 1$ weakly outermost (topologically) spherical horizons,
\item $\Sigma$ is diffeomorphic to a solid three-torus minus  $h$-open three-balls,
\item the asymptotic is Kasner.
\end{enumerate}
\end{Definition}

It is worth to restate now the main classification theorem that we shall prove

\begin{Theorem}[The classification Theorem]\label{TCTHM4} Any static black hole data set is either,
\begin{enumerate}[labelindent=\parindent, leftmargin=*, label={\rm (\Roman*)}, widest=a, align=left]
\item\label{FTII} a Schwarzschild black hole, or,
\item\label{FTI} a Boost, or,
\item\label{FTIII} is of Myers/Korotkin-Nicolai type.
\end{enumerate}
\end{Theorem}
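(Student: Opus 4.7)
The plan is to assemble the three steps announced in subsection \ref{TSOTP} of Part I: that a static black hole data set $(\Sigma;\hg,U)$ has exactly one end and weakly outermost horizons (both proved in Part I), and that this unique end is either asymptotically flat or asymptotically Kasner (proved in Section \ref{VWAE} of this Part II). Once these three ingredients are in hand, what remains is a case analysis on the asymptotic type. Since $Ric_{\hg}\geq 0$, Bishop--Gromov forces the volume growth of $(\Sigma;\hg)$ to be at most cubic; subsection \ref{ENDSAF} handles the cubic case (AF), and Theorem \ref{KAFR} handles the sub-cubic case (Kasner). The maximum principle applied to the non-negative lapse, which vanishes on the compact boundary $\partial\Sigma$, rules out Kasner asymptotic with $\gamma<0$, and the arguments of Section \ref{VWAE} themselves exclude the flat Kasner $A$ and $C$. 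So the dichotomy narrows to AF versus Kasner with $\gamma>0$.

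For the AF branch I would invoke the classical static vacuum black hole uniqueness theorem of Israel and Bunting--Masood-ul-Alam, in the form adapted to weakly outermost (possibly disconnected) horizons: the weakly outermost property supplied by Part I is precisely the input needed to bypass the usual connected-horizon hypothesis, and the conclusion is that $(\Sigma;\hg,U)$ is a Schwarzschild black hole, giving case \ref{FTII}.

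For the Kasner branch one subdivides by the parameter. If the parameters are those of the Boost, $(\alpha,\beta,\gamma)=(0,0,1)$, one should use Section \ref{S1S}, and in particular Theorem \ref{SSKAA}, to promote the approximate asymptotic $\T^{2}$-symmetry to a genuine global Killing symmetry and to recognize the data as a flat product with toroidal horizon, i.e.\ a Boost, giving case \ref{FTI}. For any admissible $\gamma\in(0,1)$, the Kasner asymptotic forces the end to be diffeomorphic to $(0,\infty)\times\T^{2}$, which together with weakly outermost spherical horizons and a standard filling argument identifies $\Sigma$ with a solid three-torus minus finitely many open balls, matching Definition \ref{KNTDEF} and giving case \ref{FTIII}. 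The most delicate part is the rigidity step in the Boost subcase, since the introduction itself flags that a general ``Boost-asymptotic $\Rightarrow$ Boost'' statement is open; the argument here is meant to work only because of the extra inputs from Part I (one end, weakly outermost horizons) together with the full machinery of free $\Sa$-symmetric data developed in Section \ref{S1S}. The remaining arguments are essentially bookkeeping, checking that topology, horizon structure and asymptotic match the precise Definitions \ref{DWO} and \ref{KNTDEF}.
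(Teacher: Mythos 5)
There is a genuine gap in your Kasner branch, and it sits exactly on the problem the paper declares open. You propose to obtain case \ref{FTI} by taking the sub-case of Kasner asymptotic with Boost parameters $(\alpha,\beta,\gamma)=(0,0,1)$ and ``promoting the approximate asymptotic $\T^{2}$-symmetry to a genuine global Killing symmetry'' via Theorem \ref{SSKAA}. That theorem cannot do this: its hypothesis is that the data set already carries an exact free $\Sa$-symmetry, and nothing in Section \ref{S1S} or elsewhere converts an asymptotic symmetry into an exact one. The statement ``Boost-asymptotic $\Rightarrow$ Boost'' is precisely the open Problem recorded after Theorem \ref{TCTHM4}, and the extra inputs from Part I (one end, weakly outermost horizons) do not close it. Your decomposition would also misclassify genuine solutions: a data set with spherical horizons, solid-torus topology and Boost asymptotic belongs to case \ref{FTIII} (it satisfies Definition \ref{KNTDEF}), yet your argument would force it into case \ref{FTI}, which is impossible since a Boost has a single toroidal horizon and product topology.

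The paper's case analysis is organized differently and avoids this trap. The primary trichotomy (Proposition \ref{SOFOR} of Part I) is by horizon topology and volume growth, not by asymptotic type: (1) if some horizon component is a torus, the data is a Boost already by Part I, with no asymptotic analysis needed; (2) if all horizons are spheres and the volume growth is cubic, the end is AF and Israel--Robinson--Bunting--Masood-ul-Alam gives Schwarzschild; (3) if all horizons are spheres and the volume growth is sub-cubic, Theorem \ref{KAFR} gives Kasner asymptotic different from $A$ and $C$, and then either Proposition \ref{LALA} (if the asymptotic is a Boost) or Galloway's argument applied to a far-away convex torus (otherwise) identifies $\Sigma$ as a solid three-torus minus balls, i.e.\ M/KN type --- even when the asymptotic happens to be that of a Boost. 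Your AF branch is essentially the paper's, but the Boost case must come from the horizon topology, not from the asymptotic.
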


As an outcome of the proof it will be shown that the Kasner asymptotic of the static black holes of type \ref{FTIII}, that is of M/KN type, is different from the Kasner $A$ and $C$ (of course it can't be asymptotic to a Kasner with $\gamma<0$ by the maximum principle). We leave it as an open problem to prove that the only static black hole data sets asymptotic to a $B$ are the Boosts. 

\begin{Problem} Prove that the Boosts are the only static black hole data sets asymptotic to a Boost.
\end{Problem}

We do not know if the only solutions of type \ref{FTIII} are the Myers/Korotkin-Nicolai solutions. We state this as an open problem.

\begin{Problem}\label{OPENPRO2} Prove or disprove that the only static solutions of type \ref{FTIII} are the Myers/Korotkin-Nicolai solutions.
\end{Problem}

On a large part of the article we will use the variables $(\hg,U)$ with $\hg=N^{2}g$ and $U=\ln N$, instead of the natural variables $(g,N)$. The data $(\Sigma;\hg,U)$ is the {\it harmonic presentation} of the data $(\Sigma;g,N)$. The static equations in these variables are,
\be
Ric_{\hg}=2\nabla U\nabla U,\quad \Delta_{\hg} U=0
\ee
and therefore the map $U:(\Sigma;\hg)\rightarrow \mathbb{R}$ is harmonic, (hence the name).

\subsection{Scaling, annuli and partitions}\label{SAP2}

\begin{enumerate}[leftmargin=*, label={\rm \arabic*}, widest=a, align=left]

\item {\sc Metric balls}. If $C$ is a set and $p$ a point then $\dist_{g}(C,p)=\inf\{\dist_{g}(q,p):q\in C\}$. Very often we take $C=\partial \Sigma$. If $C$ is a set and $r>0$, then, define the open ball of `center' $C$ and radius $r$ as,
\be
B_{g}(C,r)=\{p\in \Sigma:\dist_{g}(C,p)<r\}
\ee

\item {\sc Scaling}. Very often we will work with scaled metrics. To avoid a cumbersome notation we will use often the subindex $r$ (the scale) on scaled metrics, tensors and other geometric objects. Precisely, let $r>0$, then for the scaled metric $g/r^{2}$ we use the notation $g_{r}$, namely,
\be
g_{r}:=\frac{1}{r^{2}}g
\ee
Similarly, $d_{r}(p,q)=d_{g_{r}}(p,q)$, $\langle X,Y\rangle_{r}=\langle X,Y\rangle_{g_{r}}$, $|X|_{r}=|X|_{g_{r}}$, and for curvatures and related tensors too, for instance if $R$ is the scalar curvature of $g$, then $R_{r}$ is the scalar curvature of $g_{r}$. 

This notation will be used very often and is important to keep track of it. 

\item {\sc Annuli}. Let $(\Sigma;g)$ be a metrically complete and non-compact Riemannian manifold with non-empty boundary $\partial \Sigma$. 

- Let $0<a<b$, then we define the open annulus $\mathcal{A}_{g}(a,b)$ as
\be
\mathcal{A}_{g}(a,b)=\{p\in \Sigma: a<\dist_{g}(p,\partial \Sigma)<b\}
\ee
We write just $\mathcal{A}(a,b)$ when the Riemannian metric $g$ is clear from the context. 

- When working with scaled metrics $g_{r}$, we will alternate often between the following notations
\be
\mathcal{A}_{r}(a,b),\quad \mathcal{A}_{g_{r}}(a,b),\quad \mathcal{A}_{g}(ra,rb),
\ee
(to denote the same set), depending on what is more simple to write or to read. For instance we could write $\mathcal{A}_{2^{j}}(1,2)$ instead of $\mathcal{A}_{g_{2^{j}}}(1,2)$ or $\mathcal{A}_{g}(2^{j},2^{1+j})$. If the metric is clear from the context we will use the first notation $\mathcal{A}_{r}(a,b)$.

- If $C$ is a connected set included in $\mathcal{A}_{g}(a,b)$, then we define,
\be
\mathcal{A}^{c}_{g}(C;a,b)
\ee
to denote the connected component of $\mathcal{A}_{g}(a,b)$ containing $C$. The set $C$ could be for instance a point $p$ in which case we write $\mathcal{A}^{c}_{g}(p;a,b)$.

\item {\sc Partitions cuts and end cuts}. To understand the asymptotic geometry of data sets, we will study the geometry of scaled annuli. Sometimes however it will be more convenient and transparent to use certain sub-manifolds instead of annuli that can have a rough boundary. For this purpose we define partitions, partition cuts, end cuts, and simple end cuts.

{\it Assumption}: Below (inside this part) we assume that $(\Sigma;g)$ is a metrically complete and non-compact Riemannian manifold with non-empty and compact boundary $\partial \Sigma$. 

\begin{Definition}[Partitions] A set of connected compact three-submanifolds of $\Sigma$ with non-empty boundary 
\be
\{\mathcal{P}^{m}_{j,j+1},\  j=j_{0},j_{0}+1,\ldots;\ m=1,2,\ldots,m_{j}\geq 1\}, 
\ee
($j_{0}\geq 0$), is a {\it partition} if,
\begin{enumerate}
\item $\mathcal{P}^{m}_{j,j+1}\subset \mathcal{A}(2^{1+2j},2^{4+2j})$ for every $j$ and $m$.
\item $\partial \mathcal{P}^{m}_{j,j+1}\subset (\mathcal{A}(2^{1+2j},2^{2+2j})\cup \mathcal{A}(2^{3+2j},2^{4+2j}))$ for every $j$ and $m$.
\item The union $\cup_{j,m}\mathcal{P}^{m}_{j,j+1}$ covers $\Sigma\setminus B(\partial \Sigma,2^{2+2j_{0}})$.
\end{enumerate}
\end{Definition}

\begin{figure}[h]
\centering
\includegraphics[width=7cm, height=9cm]{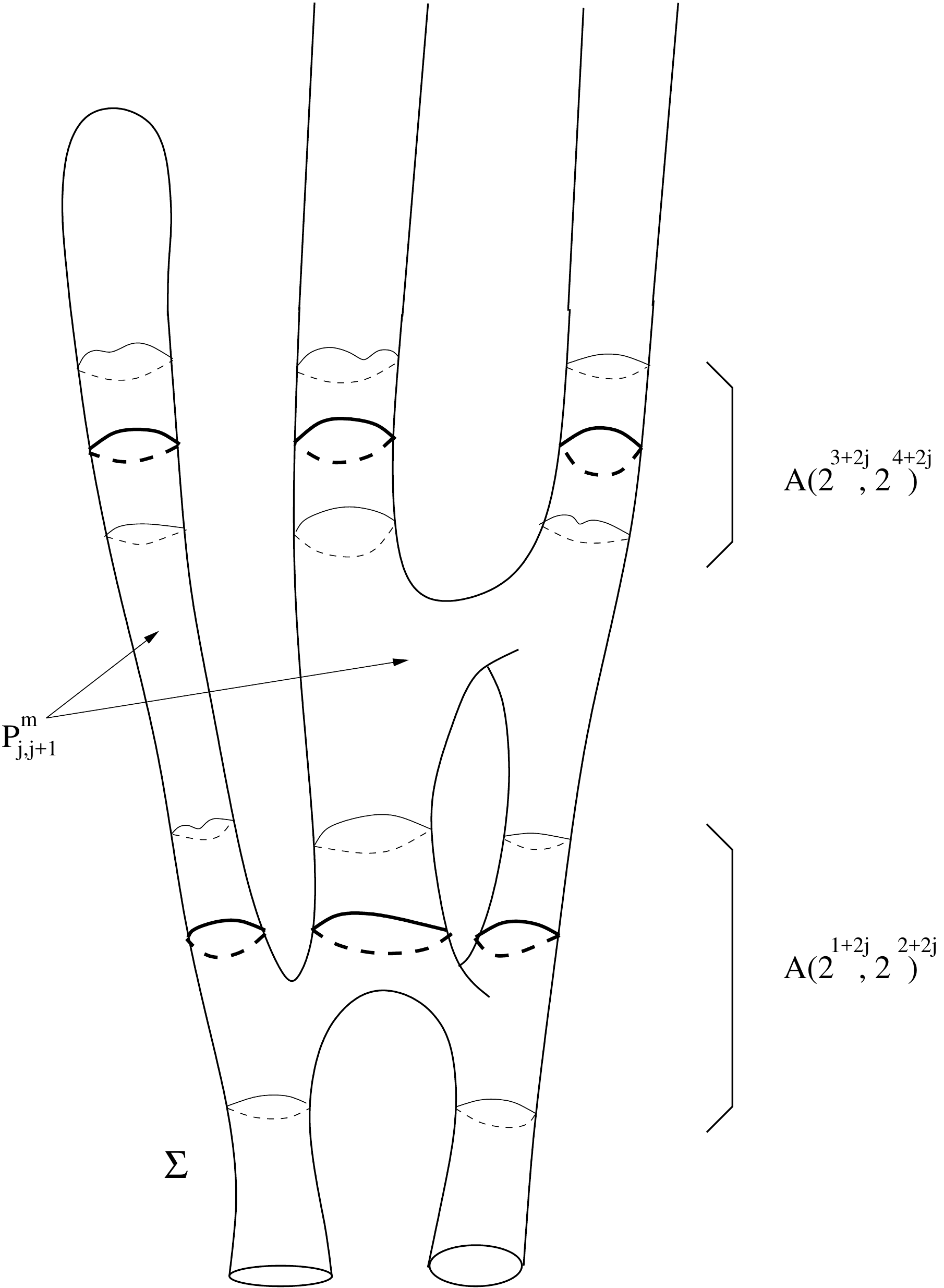}
\caption{The figure shows the annuli $\mathcal{A}(2^{1+2j},2^{2+2j})$, $\mathcal{A}(2^{3+2j},2^{4+2j})$ and the two components, for $m=1,2$ of $\mathcal{P}^{m}_{j,j+1}$.}
\label{PARTITIONF}
\end{figure}

Figure \ref{PARTITIONF} shows schematically a partition. The existence of partitions is done (succinctly) as follows. 
Let $j_{0}\geq 0$ and let $j\geq j_{0}$. Let $f:\Sigma\rightarrow [0,\infty)$ be a (any) smooth function such that $f\equiv 1$ on $\{p:\dist(p,\partial \Sigma)\leq 2^{1+2j}\}$ and $f\equiv 0$ on $\{p: \dist(p,\partial \Sigma)\geq 2^{2+2j}\}$, 
\footnote{Consider a partition of unity $\{\chi_{i}\}$ subordinate to a cover $\{\mathcal{B}_{i}\}$ where the neighbourhoods $\mathcal{B}_{i}$ are small enough that if $\mathcal{B}_{i}\cap \{p:\dist(p,\partial \Sigma)\leq 2^{1+2j}\}\neq \emptyset$ then $\mathcal{B}_{i}\cap \{p: \dist(p,\partial \Sigma)\geq 2^{2+2j}\}=\emptyset$. Then define $f=\sum_{i\in I}\chi_{i}$, where $i\in I$ iff $\mathcal{B}_{i}\cap \{p:\dist(p,\partial \Sigma_{i})\leq 2^{1+2j}\}\neq \emptyset$.}. 
Let $x$ be any regular value of $f$ in $(0,1)$. For each $j$ let $\mathcal{Q}_{j}$ be the compact manifold obtained as the union of the closure of the connected components of $\Sigma\setminus \{f=x\}$ containing at least a component of $\partial \Sigma$. Then the manifolds $\mathcal{P}^{m}_{j,j+1}$, $m=1,\ldots,m_{j}$, are defined as the connected components of $\mathcal{Q}_{j+1}\setminus \mathcal{Q}_{j}^{\circ}$. 

We let $\partial^{-}\mathcal{P}^{m}_{j,j+1}$ be the union of the connected components of $\partial \mathcal{P}^{m}_{j,j+1}$ contained in $\mathcal{A}(2^{1+2j},2^{2+2j})$. Similarly, we let $\partial^{+}\mathcal{P}^{m}_{j,j+1}$ be the union of the connected components of $\partial \mathcal{P}^{m}_{j,j+1}$ contained in $\mathcal{A}(2^{3+2j},2^{4+2j})$. 
\begin{Definition}[Partition cuts]
If $\mathcal{P}$ is a partition, then for each $j$ we let 
\be
\{\mathcal{S}_{jk},k=1,\ldots,k_{j}\} 
\ee
be the set of connected components of the manifolds $\partial^{-}\mathcal{P}^{m}_{j,j+1}$ for $m=1,\ldots,m_{j}$. The set of surfaces $\{\mathcal{S}_{jk},j\geq j_{0},\ldots,k=1,\ldots,k_{j}\}$ is called a {\it partition cut}. 
\end{Definition}

\begin{Definition}[End cuts] 
Say $\Sigma$ has only one end. Then, a subset, $\{\mathcal{S}_{jk_{l}}, l=1,\ldots,l_{j}\}$ of a partition cut
 $\{\mathcal{S}_{jk},k=1,\ldots,k_{j}\}$ is called an end cut if when we remove all the surfaces $\mathcal{S}_{jk_{l}}$, $l=1,\ldots,l_{j}$, from $\Sigma$, then every connected component of $\partial \Sigma$ belongs to a bounded component of the resulting manifold, whereas if we remove all but one of the surfaces $\mathcal{S}_{jk_{l}}$, then at least one connected component of $\partial \Sigma$ belongs to an unbounded component of the resulting manifold.
\end{Definition}

If $\Sigma$ has only one end, then one can always remove if necessary manifolds from a partition cut $\{\mathcal{S}_{jk},k=1,\ldots,k_{j}\}$ to obtain an end cut. End cuts always exist.
 
\begin{Definition}[Simple end cuts]
Say $\Sigma$ has only one end. If an end cut $\{\mathcal{S}_{jk_{l}},j\geq j_{0},l=1,\ldots,l_{j}\}$ has $l_{j}=1$ for each $j\geq j_{0}$ then we say that the end is a {\it simple end cut} and write simply $\{\mathcal{S}_{j}\}$.
\end{Definition}

Simple end cuts do not always exist. If $\{\mathcal{S}_{j}\}$ is a simple end cut and $j_{0}\leq j<j'$ we let $\mathcal{U}_{j,j'}$ be the compact manifold enclosed by $\mathcal{S}_{j}$ and $\mathcal{S}_{j'}$. This notation will be used very often.

\end{enumerate}

\subsection{The ball-covering property and a Harnak-type of estimate for the Lapse}\label{TBCPHTE}

Let $(\Sigma;\sg,N)$ be a static data set with $\partial \Sigma$ compact. In \cite{MR1809792}, Anderson observed that, as the four-metric $N^{2}dt^{2}+\sg$ is Ricci-flat, then Liu's ball-covering property holds, \cite{MR1216638} (the compactness of $\partial \Sigma$ is necessary here because Liu's theorem is for manifolds with non-negative Ricci curvature outside a compact set). Namely, for any $b>a>\delta>0$ there is $n$ and $r_{0}$ such that for any $r\geq r_{0}$ the annulus $\mathcal{A}(ra,rb)$ can be covered by at most $n$ balls of $g$-radius $r\delta$ centred in the same annulus (equivalently $\mathcal{A}_{r}(a,b)$ can be covered by at most $n$ balls of $g_{r}$-radius $\delta$ centred in the same annulus). Hence any two points $p$ and $q$ in a connected component of $\mathcal{A}_{r}(a,b)$ can be joined through a chain, say $\alpha_{pq}$, of at most $n+2$ radial geodesic segments of the balls of radius $\delta$ covering $\mathcal{A}_{r}(a,b)$. On the other hand Anderson's estimate implies that the $g_{r}$-gradient $|\nabla \ln N|_{r}$ is uniformly bounded (i.e. independent on $r$) on $\mathcal{A}_{r}(a-\delta,a+\delta)$ and therefore uniformly bounded over any curve $\alpha_{pq}$. Integrating $|\nabla \ln N|_{r}$ along the curves $\alpha_{pq}$ and using the bound we arrive at a relevant Harnak estimate controlling uniformly (i.e. independently of $r$) the quotients $N(p)/N(q)$. The estimate is due to Anderson and is summarised in the next Proposition (for further details see, \cite{0264-9381-32-19-195001}).
 
\begin{Proposition}{\rm (Anderson, \cite{MR1809792})}\label{MAXMINU1} Let $(\Sigma;\sg,N)$ be a metrically complete static data set with $\partial \Sigma$ compact, and let $0<a<b$. Then,
\begin{enumerate}
\item\label{anterior} There is $r_{0}$ and $\eta>0$, such that for any $r>r_{0}$ and for any set $Z$ included in a connected component of $\mathcal{A}_{r}(a,b)$ we have,
\be\label{EQHARN}
\max\{N(p):p\in Z\}\leq \eta \min\{N(p):p\in Z\}
\ee
\item\label{posterior} Furthermore, if $r_{i}\rightarrow \infty$ and if $Z_{i}$ is a sequence of sets included, for each $i$, included in a connected component $\mathcal{A}^{c}_{r_{i}}(a,b)$ of $\mathcal{A}_{r_{i}}(a,b)$, and we have, 
\be
\max\{|\nabla \ln N|_{r_{i}}(p):p\in \mathcal{A}^{c}_{r_{i}}(a/2,2b)\}\rightarrow 0
\ee
then,
\be
\frac{\max\{N(p):p\in Z_{i}\}}{\min\{N(p):p\in Z_{i}\}}\rightarrow 1.
\ee
as $i\rightarrow \infty$.
\end{enumerate}
\end{Proposition}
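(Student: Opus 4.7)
The plan is to follow exactly the strategy outlined by Anderson in the paragraph preceding the statement, making each step precise.

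First I would apply Liu's ball-covering property, \cite{MR1216638}, to the Ricci-flat four-metric $N^{2}dt^{2}+g$: for fixed $0<a<b$ and a small $\delta>0$ (to be chosen below, say $\delta<(b-a)/10$), there exist $n=n(a,b,\delta)$ and $r_{0}=r_{0}(a,b,\delta)$ such that for every $r\geq r_{0}$ the annulus $\mathcal{A}_{r}(a-\delta,b+\delta)$ can be covered by at most $n$ open balls of $g_{r}$-radius $\delta$ whose centres lie in the same annulus. This already fixes the combinatorics of the argument: for $r\geq r_{0}$ and $Z$ contained in a connected component $\mathcal{A}^{c}_{r}(a,b)$ of $\mathcal{A}_{r}(a,b)$, any two points $p,q\in Z$ can be joined inside $\mathcal{A}^{c}_{r}(a-\delta,b+\delta)$ by a chain $\alpha_{pq}$ consisting of at most $n+2$ radial geodesic segments of the covering balls. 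In particular $\length_{g_{r}}(\alpha_{pq})\leq 2\delta(n+2)$.

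Next I would invoke Anderson's gradient estimate (Theorem \ref{LACD2} in Part I) on the slightly enlarged annulus $\mathcal{A}_{r}(a-\delta,b+\delta)$: there exists a constant $C=C(a,b,\delta)$, independent of $r$, with
\be
|\nabla \ln N|_{g_{r}}\leq C\qquad \text{on }\ \mathcal{A}_{r}(a-\delta,b+\delta),\ r\geq r_{0}.
\ee
Since $U=\ln N$, integrating $\nabla U$ along the chain $\alpha_{pq}$ and using the length bound yields
\be
|\ln N(p)-\ln N(q)|\leq C\,\length_{g_{r}}(\alpha_{pq})\leq 2C\delta(n+2).
\ee
Setting $\const=\exp(2C\delta(n+2))$ gives \eqref{EQHARN} and proves part \ref{anterior}.

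For part \ref{posterior} I would repeat the same chain construction, this time inside the connected component $\mathcal{A}^{c}_{r_{i}}(a/2,2b)$, which one may arrange to contain $\mathcal{A}^{c}_{r_{i}}(a-\delta,b+\delta)$ after shrinking $\delta$ if necessary, so that the chain $\alpha_{p_{i}q_{i}}$ joining arbitrary $p_{i},q_{i}\in Z_{i}$ stays in $\mathcal{A}^{c}_{r_{i}}(a/2,2b)$ with $g_{r_{i}}$-length bounded by a constant independent of $i$. Integrating $\nabla U$ along $\alpha_{p_{i}q_{i}}$ and using the hypothesis
\be
M_{i}:=\max\{|\nabla\ln N|_{r_{i}}(p):p\in \mathcal{A}^{c}_{r_{i}}(a/2,2b)\}\longrightarrow 0
\ee
gives $|\ln N(p_{i})-\ln N(q_{i})|\leq M_{i}\cdot 2\delta(n+2)\to 0$, which is exactly the desired convergence of the ratio to $1$.

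The only real point that needs care is the connectivity/containment of the chain: one must ensure that every geodesic segment of $\alpha_{pq}$ remains within the connected component on which the gradient bound (resp.\ the gradient smallness) holds. This is handled by choosing $\delta$ strictly smaller than the separation between $\mathcal{A}_{r}(a,b)$ and the complement of the enlarged annulus, and by using that radial geodesic segments of $g_{r}$-length $<\delta$ emanating from a point in a fixed connected component cannot escape it. Beyond this bookkeeping, the argument is a direct combination of the ball-covering property and Anderson's uniform gradient bound.
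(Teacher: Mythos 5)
Your proposal is correct and follows essentially the same route the paper itself sketches in the paragraph preceding the statement: Liu's ball-covering property for the Ricci-flat four-metric gives a chain of at most $n+2$ radial geodesic segments of uniformly bounded $g_{r}$-length joining any two points of $Z$, and integrating Anderson's uniform bound on $|\nabla \ln N|_{r}$ (respectively the hypothesis that this quantity tends to zero) along the chain yields part \ref{anterior} (respectively part \ref{posterior}). The containment bookkeeping you flag at the end is exactly the point the paper handles implicitly by centring the covering balls in the same annulus, so there is nothing missing.
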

Let $(\Sigma;\hg,U)$ be a static data set in the harmonic presentation (assume $N>0$ and $\partial \Sigma$ compact). We have shown in Part I that $(\Sigma;\hg)$ is metrically complete and that $|\nabla U|_{\hg}^{2}$ decays quadratically. But as $Ric_{\hg}\geq 0$ Liu's ball covering property \cite{MR1216638} also holds on $(\Sigma;\hg)$ provided it is metrically complete and $\partial \Sigma$ is compact. Repeating then Anderson's argument we arrive at the following Harnak estimate but in the harmonic presentation. 
\begin{Proposition}[Anderson, \cite{MR1809792}]\label{MAXMINU} Let $(\Sigma;\hg,U)$ be a metrically complete static data set with $\partial \Sigma$ compact and let $0<a<b$. Then,
\begin{enumerate}
\item\label{anterior} There is $r_{0}>0$ and $\eta>0$, such that for any $r>r_{0}$ and set $Z$ included in a connected component of $\mathcal{A}_{r}(a,b)$ we have,
\be
\max\{U(q):q\in Z\}\leq \eta+\min\{U(q):q\in Z\},
\ee
\item\label{posterior} Furthermore, if $r_{i}\rightarrow \infty$ and if $Z_{i}$ is a sequence of sets included for each $i$ in a connected component $\mathcal{A}^{c}_{r_{i}}(a,b)$ of $\mathcal{A}_{r_{i}}(a,b)$, and we have,  
\be
\max\{|\nabla U|_{r_{i}}(q):q\in \mathcal{A}^{c}_{r_{i}}(a/2,2b)\}\rightarrow 0
\ee
then,
\be
\max\{U(q):q\in Z_{i}\}-\min\{U(q):q\in Z_{i}\}\rightarrow 0
\ee
as $i\rightarrow \infty$.
\end{enumerate}
\end{Proposition}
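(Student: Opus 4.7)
The plan is to follow exactly Anderson's argument outlined in the paragraph preceding the statement, replacing the role of $|\nabla \ln N|_{g}$ in the $(g,N)$-presentation by $|\nabla U|_{\hg}$ in the harmonic presentation. Three ingredients are combined: Liu's ball-covering property, the quadratic decay of $|\nabla U|_{\hg}^{2}$ established in Part I, and integration along chains of short radial geodesic segments in a suitable $\hg$-ball cover.

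First I would verify that Liu's ball-covering property applies to $(\Sigma;\hg)$: indeed $Ric_{\hg}=2\nabla U\nabla U\geq 0$ globally, $(\Sigma;\hg)$ is metrically complete by hypothesis, and $\partial \Sigma$ is compact. Hence, fixing $\delta>0$ small enough (in particular $\delta<a/2$), there exist $n\in\mathbb{N}$ and $r_{0}>0$ such that for every $r\geq r_{0}$ the annulus $\mathcal{A}_{r}(a/2,2b)$ can be covered by at most $n$ open $g_{r}$-balls of radius $\delta$ with centres in the same annulus. Any two points $p,q$ belonging to a common connected component of $\mathcal{A}_{r}(a,b)$ can then be joined by a chain $\alpha_{pq}$ consisting of at most $n+2$ radial segments of these covering balls, of total $g_{r}$-length at most $2(n+2)\delta$, and which stays inside one component of $\mathcal{A}_{r}(a/2,2b)$.

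Second, I would apply the generalized Anderson estimate of Part I (the quadratic decay of $|\nabla U|_{\hg}^{2}$) to obtain $|\nabla U|_{\hg}(p)\leq C/\dist_{\hg}(p,\partial \Sigma)$ on $\mathcal{A}_{r}(a/2,2b)$, with $C$ independent of $r$. After rescaling this becomes $|\nabla U|_{r}\leq C'$ uniformly on $\mathcal{A}_{r}(a/2,2b)$, and integrating along $\alpha_{pq}$ gives
\be
|U(p)-U(q)|\ \leq\ \int_{\alpha_{pq}}|\nabla U|_{r}\,ds_{r}\ \leq\ 2(n+2)\delta\,C'\ =:\ \eta,
\ee
which is part \ref{anterior} after taking supremum and infimum over $Z$.

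For part \ref{posterior} the chain construction is identical, but the uniform bound $C'$ is replaced by the assumed $M_{i}:=\max\{|\nabla U|_{r_{i}}(q):q\in\mathcal{A}^{c}_{r_{i}}(a/2,2b)\}\to 0$, so the same integration yields $|U(p)-U(q)|\leq 2(n+2)\delta\, M_{i}\to 0$ for all $p,q\in Z_{i}$. I do not expect any substantial obstacle; the only delicate point is ensuring that the covering balls (and hence the chain $\alpha_{pq}$) stay within the connected component $\mathcal{A}^{c}_{r_{i}}(a/2,2b)$ on which the hypothesis of part \ref{posterior} is available, which is guaranteed by choosing $\delta$ small enough at the outset.
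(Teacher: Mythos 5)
Your proposal reproduces exactly the argument the paper sketches in the paragraph preceding the statement: Liu's ball-covering property (valid here since $Ric_{\hg}\geq 0$, $(\Sigma;\hg)$ is metrically complete and $\partial\Sigma$ is compact), a chain of at most $n+2$ radial geodesic segments joining any two points of a connected component of the annulus, and integration of the uniformly bounded (respectively, uniformly vanishing) $|\nabla U|_{r}$ along that chain. This is the paper's own route, correctly carried out for both parts.
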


Both propositions will be used later.

\subsection{Facts about convergence and collapse of Riemannian manifolds}\label{SFCCRM}

In some parts of this article we will use well known techniques in convergence and collapse of Riemannian manifolds. We recall here the concepts and the results that we will use.

We first recall the basic definition of $C^{\infty}$-convergence (the presentation is in the category of tensors adjusted to our needs). We refer the reader to \cite{MR2243772} for more general definitions. 

A sequence of smooth compact Riemannian manifolds with smooth boundary $(M_{i};g_{i})$ converges in $C^{\infty}$ to a smooth compact Riemannian manifold with smooth boundary $(M_{\infty};g_{\infty})$, if there are smooth diffeomorphisms $\phi_{i}:M_{\infty}\rightarrow M_{i}$ such that $\phi^{*}g_{i}$ converges to $g_{\infty}$ in $C^{k}_{g_{\infty}}$ for all $k\geq 0$. That is,
\be
\|\phi_{i}^{*}g_{i}-g_{\infty}\|_{C^{k}_{g_{\infty}}(M_{\infty})}\rightarrow 0  
\ee
where the $C^{k}_{g_{\infty}}(M)$-norm of a smooth tensor field $W$ on a manifold $M$ is defined as usual as,
\be
\|W\|^{2}_{C^{k}_{g}(M)}:=\sup_{x\in M}\bigg\{\sum_{i=0}^{i=k}|\nabla^{(i)}W|^{2}_{g}(x)\bigg\} \quad \text{where}\quad \nabla^{(i)}W=\underbrace{\nabla\ldots\nabla}_{\text{i-times}} W
\ee
To fix ideas, the sequence of Riemannian manifolds,
\be
M_{i}=[1/2,3/4]\times \Sa\times \Sa,\quad g_{i}=(1+x^{i})dx^{2}+d\theta_{1}^{2}+d\theta_{2}^{2}
\ee
converges in $C^{\infty}$ to,
\be
M_{\infty}=[1/2,3/4]\times \Sa\times \Sa,\quad g_{\infty}=dx^{2}+d\theta_{1}^{2}+d\theta_{2}^{2}
\ee

If a sequence of manifolds $(M_{i};g_{i})$ grow in diameter, then there is no convergence in the previous sense but there can be convergence in the pointed sense to a pointed non-compact manifold $(M_{\infty}, p_{\infty};g_{\infty})$. This means that there is a sequence of points $p_{i}\in M_{i}$ and for each compact sub-manifold $N\subset M_{\infty}$ containing $p_{\infty}$, there are diffeomorphisms into the image $\phi_{i}:N\rightarrow M_{i}$ such that $\phi_{i}(p_{\infty})=p_{i}$ and such that $(N;\phi^{*}_{i}g_{i})$ converges in $C^{\infty}$ to $(N,g_{\infty})$. For instance, the sequence of manifolds,
\be
M_{i}=[0,i]\times \Sa\times \Sa,\quad g_{i}=(1+\frac{1}{(1+x)^{i}})dx^{2}+d\theta_{1}^{2}+d\theta_{2}^{2}
\ee
converges in $C^{\infty}$ and in the pointed sense to,
\be
M_{\infty}=[0,\infty)\times \Sa\times \Sa,\quad g_{\infty}=dx^{2}+d\theta_{1}^{2}+d\theta_{2}^{2}
\ee

It can happen that a sequence of manifolds metrically collapses into a manifold of lower dimension. For example, consider the sequence of Riemannian manifolds,
\be
M_{i}=[0,1/2]\times \Sa\times \Sa,\quad g_{i}=dx^{2}+x^{i}d\theta_{1}^{2}+d\theta_{2}^{2}
\ee 
where the coefficient $x^{i}$, over the first factor $\Sa$ tends uniformly to zero as $i\rightarrow \infty$. This sequence of manifolds metrically collapse to the two-dimensional Riemannian manifold,
\be
M_{\infty}=[0,1/2]\times \Sa,\quad g_{\infty}=dx^{2}+d\theta_{2}^{2}.
\ee
Similarly, the sequence of Riemannian manifolds,
\be
M_{i}=[0,1/2]\times \Sa\times \Sa,\quad g_{i}=dx^{2}+x^{i}d\theta_{1}^{2}+x^{i}d\theta_{2}^{2}
\ee 
metrically collapse to the one-dimensional Riemannian manifold,
\be
M_{\infty}=[0,1/2],\quad g_{\infty}=dx^{2}
\ee
that is, to the interval $[0,1/2]$ with the usual metric. Metric collapse means that the Gromov-Hausdoff distance (GH-distance and denoted by $d_{GH}$) between them, as metric spaces, tends to zero (see \cite{MR2243772}). It is a general fact that collapse with bounded curvature is always into a one dimensional manifold, or a two dimensional orbifold. We discuss below the only results that we will use in this respect. 

The context will be always that of metrically complete static data sets $(\Sigma;\hg,U)$ with $\Sigma$ non-compact and $\partial \Sigma$ compact. Let $\gamma$ be a ray emanating from $\partial \Sigma$, that is, an infinite geodesic $\gamma(s)$ such that $\gamma(0)\in \partial \Sigma$ and $\dist(\gamma(s),\partial \Sigma)=s$ (when the data is a static black hole data sets then we assume, because $\hg$ is singular on $\partial \Sigma$, that $\gamma$ is a ray from the boundary of a compact neighbourhood of $\partial \Sigma$). 

The first result we will use is the following. Suppose that for a divergent sequence of points $p_{i}\in \gamma$, the rescaled annuli $(\mathcal{A}^{c}_{r_{i}}(p_{i};a,b);\hg_{r_{i}})$ metrically collapse to $([a,b];dx^{2})$. Note that by Anderson's estimates (see Part I), the collapse is with bounded curvature (and bounded derivatives of the curvature). Then, there is a sequence $\mathcal{B}_{i}$ of neighbourhoods of $\mathcal{A}^{c}_{r_{i}}(p_{i};a,b)$ and finite covers $\tilde{\mathcal{B}}_{i}$ such that $(\tilde{\mathcal{B}}_{i}; \tilde{\hg}_{i})$ converges in $C^{\infty}$ to a $\T^{2}$-symmetric Riemannian space $([a,b]\times \T^{2};\tilde{\hg})$, \footnote{Another way to state this is the following. Given $\epsilon>0$ there are $\delta>0$ and $r_{0}>0$ such that for any $p\in \gamma$ with $r=r(p)\geq r_{0}$, such that the annulus $(\mathcal{A}^{c}_{r}(p;a,b);\hg_{r})$ is $\delta$-close in the GH-distance to the segment $[a,b]$, then there is a neighbourhood $\mathcal{B}$ of $\mathcal{A}^{c}_{r}(p;a,b)$ and a finite cover $\tilde{\mathcal{B}}$ such that $(\tilde{\mathcal{B}};\tilde{\hg}_{r})$ is $\epsilon$-close in $C^{k}$ to a $\T^{2}$-symmetric flat space $([a,b]\times \T^{2};\tilde{\hg})$.}. Here it is important that the points $p_{i}$ belong to $\gamma$ otherwise the existence of such coverings may be not true (this is well known, see examples in \cite{MR3302042} for instance). 

The second result we will use is the following. Suppose that for a divergent sequence of points $p_{i}\in \gamma$, the rescaled annuli $(\mathcal{A}^{c}_{r_{i}}(p_{i};a,b); \hg_{r_{i}})$ metrically collapse, but not into a segment. By Anderson's estimates again, the collapse is with bounded curvature (and bounded derivatives of the curvature). Then, there is a sequence $\mathcal{B}_{i}$ of neighbourhoods of $\mathcal{A}^{c}_{r_{i}}(p_{i};a,b)$ collapsing into a two dimensional Riemannian orbifold with orbifold points of angles $2\pi/2, 2\pi/3, 2\pi/4,\ldots$ Furthermore, if a sequence of points $q_{i}$ converges to a non-orbifold point $q$ then there are neighbourhoods $\mathcal{U}_{i}$ of $q_{i}$ and finite covers $(\tilde{\mathcal{U}}_{i};\tilde{\hg}_{i})$ converging in $C^{\infty}$ to an $\Sa$-symmetric Riemannian manifold, whose quotient by $\Sa$ is isometric to a neighbourhood of the limit point $q$ in the limit Riemannian manifold.

For collapse of two-dimensional manifolds the situation is similar but simpler. We will use the following. Let $(S;q)$ be a non-compact Riemannian manifold with non-empty boundary and let $\gamma$ be a ray from $\partial S$. Let $p_{i}\in \gamma$ be a divergent sequence of points. Suppose that $(\mathcal{A}^{c}_{r_{i}}(p_{i};a,b);q_{i})$ metrically collapses with bounded curvature. Then it does so into an interval $[a,b]$ and there is a sequence of neighborhoods $\mathcal{B}_{i}$ of $\mathcal{A}^{c}_{r_{i}}(p_{i};a,b)$ and finite covers $\tilde{\mathcal{B}}_{i}$, such that $(\tilde{B}_{i};q_{i})$ converges in $C^{\infty}$ to a $\Sa$-symmetric Riemannian manifold, whose quotient by $\Sa$ is $[a,b]$.

The existence of the coverings for each case follows from Theorem 12.1 in \cite{MR1145256}. The orbifold structure when there is two-dimensional metric collapse follows from Proposition 11.5 in \cite{MR1145256}. 

\subsection{The Kasner solutions}\label{SSKK}
\subsubsection{Explicit form and parameters} The Kasner data, denoted by $\mathbb{K}$, are $\mathbb{R}^{2}$-symmetric solutions explicitly given by
\be
\label{KASNERO} \sg=dx^{2} +x^{2\alpha}d y^{2} +x^{2\beta}d z^{2},\quad N=x^{\gamma}
\ee
with $(x,y,z)$ varying in the manifold $\mathbb{R}^{+}\times \mathbb{R}\times \mathbb{R}$, and where $(\alpha,\beta,\gamma)$ satisfy
\be\label{CIRCLE}
\alpha+\beta+\gamma=1\quad \text{and}\quad \alpha^{2}+\beta^{2}+\gamma^{2}=1
\ee
but are otherwise arbitrary (see Figure \ref{Figure21}). The solutions corresponding to two different triples $(\alpha,\beta,\gamma)$ and $(\alpha',\beta',\gamma')$ are equivalent (i.e. isometric) iff $\alpha=\beta'$, $\beta=\alpha'$ and $\gamma=\gamma'$. 

The metrics (\ref{KASNERO}) are flat only when $(\alpha,\beta,\gamma)=(1,0,0), (0,1,0)$ or $(0,0,1)$. We will give them the following names,
\be
A:\ (\alpha,\beta,\gamma)=(1,0,0),\quad C:\ (\alpha,\beta,\gamma)=(0,1,0),\quad B:\ (\alpha,\beta,\gamma)=(0,0,1)
\ee
The solution $B$ is the Boost.

$\mathbb{Z}$-actions, $\mathbb{Z}\times (0,\infty)\times \mathbb{R}^{2}\rightarrow (0,\infty)\times \mathbb{R}^{2}$, are given by fixing a (non-zero) vector field $X$, combination of $\partial_{y}$ and $\partial_{z}$, and letting $n\times p\rightarrow p+nX$. The quotients are ${\rm S}^{1}$-symmetric static solutions. 
Similarly, $\mathbb{Z}^{2}$ quotients give ${\rm S}^{1}\times {\rm S}^{1}$-symmetric static solutions. 
$\mathbb{Z}^{2}$-quotient of the Kasner space will also be called Kasner spaces and denoted too by $\mathbb{K}$. These are the spaces defining Kasner asymptotic.

\subsubsection{The harmonic presentation}

The Kasner spaces in the harmonic presentation are
\be
\label{KHARMAP1} \hg=dx^{2}+x^{2a}dy^{2}+x^{2b}d z^{2},\quad U=c\ln x
\ee
where $a, b$ and $c$ satisfy
\be
2c^{2}+(a-\frac{1}{2})^{2}+(b-\frac{1}{2})^{2}=\frac{1}{2}\quad \text{and}\quad a+b=1
\ee
Thus, the circle (\ref{CIRCLE}), (see Figure \ref{Figure21}), is seen now as an ellipse in the plane $a+b=1$, (see Figure \ref{Figure31}). The $g$-flat solutions $A, B$ and $C$ are,
\be
A:\ (a,b,c)=(1,0,0),\quad  C:\ (a,b,c)=(0,1,0),\quad B:\ (a,b,c)=(1/2,1/2,1/2)
\ee

The Kasner solutions (\ref{KHARMAP1}) are scale invariant. Namely, for any $\lambda>0$, $(\mathbb{R}^{+}\times \mathbb{R}^{2};\lambda^{2}\hg)$ represents the same Kasner space as $(\mathbb{R}^{+}\times \mathbb{R}^{2};\hg)$ does. This can be seen by making the change
\be
{\tt x}=\lambda x,\quad {\tt y}=\lambda^{1-a}y,\quad {\tt z}= \lambda^{1-b}z
\ee
that transforms (\ref{KHARMAP1}) into  
\be\label{KASNER}
\hg=d{\tt x}^{2} +{\tt x}^{2a}d {\tt y}^{2} +{\tt x}^{2b}d {\tt z}^{2},\quad U=c\ln {\tt x} -c\ln \lambda
\ee
Another way to say this is that $(1-2c)t\partial_{t}+x\partial_{x}+(1-a)y\partial_{y}+(1-b)z\partial_{z}$ is a homothetic Killing of the space-time. The scale invariance can of course be seen also in the original space $(\mathbb{R}^{+}\times \mathbb{R}^{2};g,N)$. Note that in general, the isometry that exists between $(\mathbb{R}^{+}\times \mathbb{R}^{2};\hg)$ and $(\mathbb{R}^{+}\times \mathbb{R}^{2};\lambda^{2}\hg)$ does not pass to the quotient by a $\mathbb{Z}\times \mathbb{Z}$-action.    

\subsubsection{Uniqueness}\label{UNIQ}

The Kasner data are the only data with a free $\mathbb{R}\times \mathbb{R}$-symmetry other than the Minkowski data 
\be
\Sigma=\mathbb{R}^{3},\quad g=dx^{2}+dy^{2}+dz^{2},\quad N=1.
\ee
We give now a proof of this fact in a way that becomes useful when we study the Kasner asymptotic later in Section \ref{ENDSAK}.

The proof is as follows. We work in the harmonic presentation $(\Sigma;\hg,U)$, therefore geometric tensors are defined with respect to $\hg$. If the data set $(\Sigma;\hg,U)$ has a free $\mathbb{R}^{2}$-symmetry, and is not the Minkowski solution, then $U$ can be taken as a harmonic coordinate with range in an interval $I$. Then, on $\mathbb{R}^{2}\times I$ we can write
\be\label{UDECESTIM}
\hg=\lambda^{2}dU^{2}+h
\ee
where $\lambda=\lambda(U)$, and where $h=h(U)$ is a family of flat metrics on $\mathbb{R}^{2}$. Without loss of generality assume that $U=0$ at the left end of $I$. Let $(z_{1},z_{2})$ be a (flat) coordinate system on $\mathbb{R}^{2}\times \{0\}$. In the coordinate system $(z_{1},z_{2},U)$ the static equation $Ric_{\hg}=2\nabla U\nabla U$ reduces to
\begin{align}
\label{RTE1} & \partial_{U}h_{AB}=2\lambda \Theta_{AB},\\
\label{RTE2} & \partial_{U}\Theta_{AB}=\lambda(-\theta\Theta_{AB}+2\Theta_{AC}\Theta^{C}_{\ B}),\\
\label{RTE3} & \Theta_{AB}\Theta^{AB}-\theta^{2}=-\frac{2}{\lambda^{2}},
\end{align}
where $\Theta$ is the second fundamental form of the leaves $\mathbb{R}^{2}\times \{U\}$ and $\theta=\Theta_{A}^{\ A}$ is the mean curvature. The static equation $\Delta_{\hg} U=0$ reduces to
\be\label{RTE4} 
\partial_{U}\bigg(\frac{\sqrt{|h|}}{\lambda}\bigg)=0
\ee
where $|h|$ is the determinant of $h_{AB}$. Hence 
\be\label{LEQO}
\Gamma \sqrt{|h|}= \lambda
\ee
for a constant $\Gamma>0$. This can be inserted in (\ref{RTE1})-(\ref{RTE2}) to get the autonomous system of ODE
\begin{align}
\label{RTE21} & \partial_{U}h_{AB}=2\Gamma \sqrt{|h|} \Theta_{AB},\\
\label{RTE22} & \partial_{U}\Theta_{AB}=\Gamma \sqrt{|h|}(-\theta\Theta_{AB}+2\Theta_{AC}\Theta^{C}_{\ B}),
\end{align}
The equation (\ref{RTE3}) transforms into
\be\label{RTE23} 
\Theta_{AB}\Theta^{AB}-\theta^{2}=-\frac{2}{\Gamma^{2}|h|},
\ee
and (it is direct to see) that it holds for all $U$ provided it holds for $U=0$ and (\ref{RTE21}) and (\ref{RTE22}) hold for all $U$. The (\ref{RTE23}) is thus only a `constraint" equation. Therefore the system (\ref{RTE1})-(\ref{RTE3}) is solved by giving $h_{AB}(0), \Theta_{AB}(0)$ and $\Gamma>0$ satisfying (\ref{RTE23}), then running (\ref{RTE21})-(\ref{RTE22}) and finally obtaining $\lambda$ from (\ref{LEQO}).

To solve (\ref{RTE21})-(\ref{RTE22}) first change variables from $U$ to $s$, where $ds=\Gamma\sqrt{|h|}dU$. The system (\ref{RTE21})-(\ref{RTE22}) now reads
\begin{align}
\label{RTE31} & \partial_{s}h_{AB}=2 \Theta_{AB},\\
\label{RTE32} & \partial_{s}\Theta_{AB}=-\theta\Theta_{AB}+2\Theta_{AC}\Theta^{C}_{\ B},
\end{align}
Use these equations to check that
\begin{align}
\label{THE13} & \partial_{s}\theta=-\theta^{2},\\
\label{THE12} & \partial_{s}\Theta_{12}=(\Theta_{11}h^{11}+\Theta_{22}h^{22}-2\Theta_{12}h^{12})\Theta_{12}
\end{align}
Thus, $\theta$ has its own evolution equation which gives $\theta(s)=1/(s+1/\theta(0))$. Moreover if we choose $(z_{1},z_{2})$ on $\{U=0\}$ to diagonalise $h(0)$ and $\Theta(0)$ simultaneously (i.e. $h_{11}(0)=1, h_{22}=1, h_{12}(0)=0$ and $\Theta_{12}(0)=0$), then (\ref{THE12}) shows that $\Theta_{12}=0$ and $h_{12}=0$ for all $s$ and therefore that the evolutions for $h_{11}$ and $h_{22}$ decouple to independent ODEs. With this information it is straightforward to see that the solutions to (\ref{THE13})-(\ref{THE12}), which at the initial times satisfy also (\ref{RTE23}) are only the Kasner solutions.   

We will use all the previous discussion later in Section \ref{ENDSAK}.

\section{Free $\Sa$-symmetric solutions}\label{S1S}

This section studies various aspects of data sets which are free $\Sa$-symmetric. The contents are as follows. Subsection \ref{RDRE} presents the reduced equations, Proposition \ref{PRED}. Subsection \ref{KSOL} discusses the reduced Kasner spaces and subsection \ref{CIGARSOL} describes thoroughly a reduced data that we call the `cigars' (due to their geometric shape). Subsection \ref{CIGUNIQU} proves the cigar's uniqueness and subsection \ref{CIGNHCP} characterises the cigars as the data that model high-curvature regions. These properties of the cigars play an essential role in subsection \ref{DFIAT}, where it is proved that $|\nabla U|^{2}$, $|\nabla V|^{2}$ and $\kappa$ (the Gaussian curvature of $q$), have quadratic decay at infinity on $(\qM; q)$, provided $(\qM;q)$ is metrically complete and $\partial S$ is compact. The discussion of such decay depends on whether the twists $\Omega$ of $\xi$, which is a constant, is zero or not. In the same subsection \ref{DFIAT} it is shown, using the decay previously proved, that $S$ has only a finite number of simple ends, each diffeomorphic to $[0,\infty)\times \Sa$. Furthermore it is proved in Proposition \ref{LPRO} that $U$ has a limit $U_{\infty}$ at infinity, $-\infty\leq U_{\infty}\leq \infty$. Finally, subsection \ref{RDSACL} describes the global structure of reduced data sets arising as collapsed limits that will be relevant to study asymptotic of static ends through scaling.

\subsection{The reduced data and the reduced equations}\label{RDRE}

Let $(\sM;g,N)$ be a static data set invariant under a free $\Sa$-action. The action induces a foliation of $\Sigma$ by $\Sa$-invariant circles. Let $(\Sigma;\hg,U)$ be the harmonic presentation. We will quotient the data $(\Sigma;\hg,U)$ by the Killing field and study the reduced system. 

The complete list of reduced variables and other necessary notation, is the following.
\begin{enumerate}[labelindent=\parindent, leftmargin=*, label={\rm -}, widest=a, align=left]
\item As usual let $\hg=N^{2}g$,
\item let $\xi$ be the Killing field generating the $\Sa$-action.
\item let $\Lambda=|\xi|_{\hg}$ be the $\hg$-norm of $\xi$, 
\item let $\Omega=\epsilon^{\hg}_{abc}\xi^{a}\nabla^{b}\xi^{c}$ be the $\hg$-twist of $\xi$ ($\epsilon^{\hg}$ is the $\hg$-volume form and $\nabla$ any cov. der.),
\item let $U=\ln N$, 
\item let $V=\ln \Lambda$,
\item let $\qM$ be the quotient manifold of $\sM$ by the $\Sa$-action,
\item let $\hqg$ be the quotient two-metric of $\hg$,
\item let $\gcur$ be the Gaussian curvature of $\hqg$.
\end{enumerate}

With all this at hand the following is the definition of a reduced static data set.
\begin{Definition}[Reduced static data set] A data set $(S;q,U,V)$ arising from reducing a $\Sa$-invariant static data set is a reduced static data set. 
\end{Definition}
The next proposition presents the reduced equations of a reduced data set\footnote{We haven't found a reference for these equations though most likely they are given somewhere}. The equations involve only $q$, $U$ and $V$, therefore the tensor $Ric$ and the operators, $\Delta$, $\nabla$ and $\langle\ ,\ \rangle$ are with respect to $q$.

\begin{Proposition}\label{PRED} The (reduced) static equations of a reduced data set $(\qM;q,U,V)$ are,
\begin{align}
\label{ES1} & Ric=\nabla\nabla V +\nabla V\nabla V +\frac{1}{2}\Omega^{2}e^{-4V}q+2\nabla U\nabla U,\\
\label{ES2} & \Delta V +\langle \nabla V,\nabla V\rangle=\frac{1}{2}\Omega^{2}e^{-4V},\\
\label{ES3} & \Delta U +\langle \nabla U,\nabla V\rangle=0.
\end{align}
where $\Omega$ (introduced earlier) is constant. Moreover $\Omega$ is zero iff $\xi$ is hypersurface orthogonal inside $\sM$. 
\end{Proposition}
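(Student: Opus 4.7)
The plan is to view the free $\Sa$-action as endowing $\pi:(\Sigma;\hg)\to(\qM;q)$ with the structure of a principal $\Sa$-bundle with connection 1-form $\theta$, where $\theta(\xi)=1$ and $\ker\theta=\xi^{\perp_{\hg}}$, and to reduce the vacuum equations
\begin{equation*}
Ric_{\hg}=2\nabla U\nabla U,\qquad \Delta_{\hg}U=0
\end{equation*}
via the standard Kaluza--Klein/O'Neill formulas for a Riemannian submersion with one-dimensional fibres. In adapted form,
\begin{equation*}
\hg=\Lambda^{2}\,\theta\otimes\theta+\pi^{\ast}q,\qquad F=d\theta.
\end{equation*}
Since $\xi$ is Killing, $F$ is $\xi$-invariant and descends to a 2-form on $\qM$. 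A direct computation in coordinates gives $\xi^{\flat}\wedge d\xi^{\flat}=\Lambda^{4}\,\theta\wedge \pi^{\ast}F$, and Hodge-dualising in $(\Sigma;\hg)$ yields $\Omega^{2}=\Lambda^{6}|F|_{q}^{2}$, equivalently $|F|_{q}^{2}=\Omega^{2}e^{-6V}$.

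Next I would show that $\Omega$ is constant. Differentiating $\Omega=\epsilon^{abc}\xi_{a}\nabla_{b}\xi_{c}$ produces two terms. The term $\epsilon^{abc}(\nabla_{d}\xi_{a})(\nabla_{b}\xi_{c})$ vanishes algebraically in dimension three: writing the Killing 2-form as $\nabla_{d}\xi_{a}=\epsilon_{dae}\omega^{e}$ and contracting via $\epsilon^{abc}\epsilon_{bcf}=2\delta^{a}{}_{f}$ turns it into $2\epsilon_{dfe}\omega^{e}\omega^{f}=0$. The second term, $\epsilon^{abc}\xi_{a}\nabla_{d}\nabla_{b}\xi_{c}=\epsilon^{abc}\xi_{a}R_{dbce}\xi^{e}$ by the Killing identity; expanding $R_{dbce}$ via the three-dimensional Ricci decomposition, substituting $Ric_{\hg}=2\nabla U\nabla U$, and using $\xi U=0$ (which follows from $\xi$ preserving the static data), every summand vanishes either by antisymmetry of $\epsilon^{abc}$ paired with the symmetric $\nabla U\,\nabla U$, or directly by $\xi^{a}\nabla_{a}U=0$. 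Hence $\nabla\Omega=0$.

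For the reduced equations themselves I apply the Kaluza--Klein formulas. For horizontal $X,Y$, with $V=\ln\Lambda$,
\begin{equation*}
Ric_{\hg}(X,Y)=Ric_{q}(X,Y)-\nabla\nabla V(X,Y)-\nabla V(X)\nabla V(Y)-\tfrac{1}{2}\Lambda^{2}F_{Xc}F_{Y}{}^{c};
\end{equation*}
in two dimensions a 2-form satisfies $F_{ac}F_{b}{}^{c}=|F|_{q}^{2}q_{ab}$, so combined with $|F|_{q}^{2}=\Omega^{2}e^{-6V}$ the Maxwell term becomes $\tfrac{1}{2}\Omega^{2}e^{-4V}q(X,Y)$. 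Substituting $Ric_{\hg}=2\nabla U\nabla U$ yields \eqref{ES1}. For the vertical component, the identity $Ric_{\hg}(\xi,\xi)/\Lambda^{2}=-\Delta V-|\nabla V|^{2}+\tfrac{1}{2}\Lambda^{2}|F|_{q}^{2}$ combined with $Ric_{\hg}(\xi,\xi)=2(\xi U)^{2}=0$ produces \eqref{ES2}. Equation \eqref{ES3} comes from the submersion identity $\Delta_{\hg}f=\Delta_{q}f+\langle\nabla V,\nabla f\rangle$ for $\xi$-invariant functions $f$, applied to $f=U$ together with $\Delta_{\hg}U=0$. Finally, hypersurface-orthogonality of $\xi$ is the Frobenius condition $\xi^{\flat}\wedge d\xi^{\flat}=0$, equivalent by the first step to $F=0$, hence to $\Omega=0$.

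The main obstacle is convention bookkeeping: the coefficients $\tfrac{1}{2}$ in front of $\Omega^{2}e^{-4V}$ in both \eqref{ES1} and \eqref{ES2} depend on matching normalisations for the volume form, for $|F|^{2}$, and for the Kaluza--Klein formulas along the fibre versus the base. Carrying out all computations in a single adapted chart with $\xi=\partial_{\psi}$ and $\psi$-independent horizontal coordinates, and using uniformly the convention $|F|^{2}=\tfrac{1}{2}F_{ab}F^{ab}$, should pin down the factors unambiguously so that the coefficients on the two sides match.
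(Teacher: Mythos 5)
Your proposal is correct, but it reaches the reduced equations by a genuinely different route from the paper. The paper works through the four-dimensional spacetime: it quotients $\stM$ by the $\Sa$-action, imports the reduction formulas (eqs.\ (18), (20), (25), (26), (30), (45)) from Dain's paper for the quotient pair $(\qstM,\qM)$, and obtains the constancy of $\Omega$ from the classical spacetime identity that $\partial_{t}^{a}\boldsymbol{\epsilon}_{abcd}\xi^{b}\boldsymbol{\nabla}^{c}\xi^{d}$ is constant for a vacuum Killing field (Wald, Thm.\ 7.1.1); the three reduced equations then come from translating those formulas into $(q,U,V)$ with $q=N^{2}h$. You instead never leave the three-dimensional harmonic data $(\sM;\hg,U)$: you treat $\pi:(\sM;\hg)\to(\qM;q)$ as a principal $\Sa$-bundle and read off \eqref{ES1}--\eqref{ES3} from the O'Neill/Kaluza--Klein submersion formulas for the horizontal, vertical and Laplacian components, with the curvature term converted via $|F|_{q}^{2}=\Omega^{2}e^{-6V}$; and you prove $\nabla\Omega=0$ by a purely three-dimensional computation, using the Killing identity together with the fact that in dimension three the Riemann tensor is determined by $Ric_{\hg}=2\nabla U\,\nabla U$ and that $\xi^{a}\nabla_{a}U=0$. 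Both arguments are sound. The paper's route buys economy (the heavy lifting is outsourced to published formulas and a classical theorem, and the reduced twist is directly tied to the spacetime twist), while yours is self-contained, stays at the ``initial data level'' the article otherwise emphasizes, and exposes the constancy of $\Omega$ as an elementary consequence of the static equation rather than of a spacetime theorem. The only caveat, which you correctly flag yourself, is the convention bookkeeping for the factors of $\tfrac12$ in front of $\Omega^{2}e^{-4V}$; your normalisations ($|F|^{2}=\tfrac12 F_{ab}F^{ab}$ together with $F=\Omega\Lambda^{-3}\epsilon_{q}$, which is exactly \eqref{MFORM}) do reproduce the stated coefficients, so no gap remains once those are fixed.
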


Before passing to the proof let us make some comments on the reduced equations. 
\vs

- When $\Omega=0$ the system (\ref{ES1})-(\ref{ES2}) is locally equivalent to the Weyl equations around any point where $\nabla \Lambda\neq 0$. We won't use this information however in the rest of the article.
\vs

- The solutions to (\ref{ES1})-(\ref{ES3}) are invariant under the simultaneous transformations
\be\label{TRANSSCA}
q\rightarrow \lambda^{2}q,\quad V\rightarrow V+\frac{1}{2}\ln \nu,\quad U\rightarrow U+\mu, \quad \Omega\rightarrow \frac{\nu}{\lambda} \Omega
\ee
for any $\lambda>0, \nu>0$ and $\mu$. Namely, if we replace $(q,V,U)$ and $\Omega$ in (\ref{ES1})-(\ref{ES3}) for $(\lambda^{2}q, V+\frac{1}{2}\ln \nu, U+\mu)$ and $\nu\Omega/\lambda$ respectively, then the equations are still verified. We will call them simply `scalings" and denote them by $(\lambda,\nu,\mu)$.
\vs

- Given a solution to (\ref{ES1})-(\ref{ES2}), the metric $\hg$ can be recovered using the expression
\be\label{OVERED}
\hg=h_{ab}dx^{a}dx^{b}+\Lambda^{2}(d\varphi+\theta_{i}dx^{i})^{2}
\ee
where $(x^{1},x^{2})$ are coordinates on $S$ and where the one form $\theta$ is found by solving
\be\label{MFORM}
d(\theta_{i}dx^{i}) = \frac{\Omega}{\Lambda^{3}}\sqrt{|q|}dx^{1}\wedge dx^{2}
\ee
where $|q|$ is the determinant of $q_{ij}$ and where $\partial_{\varphi}=\xi$ is the original Killing field. As $\xi$ is the generator of a $\Sa$-action, the range of $\varphi$ is $[0,2\pi)$. Without this information the range of $\varphi$ is undetermined. This is related to the fact that, locally, the reduction procedure requires only that $\xi$ is a non-zero Killing field. If the orbits of $\xi$ do not close up in parametric time $2\pi$, still the reduced equations (\ref{ES1})-(\ref{ES3}) hold, and to recover $\hg$ using (\ref{OVERED}) and (\ref{MFORM}) the right range of $\varphi$ needs to be provided.

This indeterminacy gives rise to two globally inequivalent ways to scale data $(\Sigma;\hg,U;\xi)$ giving rise to the same reduced variables and equations. We assume that $\xi\neq 0$ and has closed orbits. The first is the scaling,
\be\label{FIRSCA}
\hg\rightarrow \lambda^{2}\hg,\qquad \xi\rightarrow \frac{\sqrt{\nu}}{\lambda}\xi
\ee
the second is (recall $\hg=q_{ij}dx^{i}dx^{j}+\Lambda^{2}(d\varphi+\theta_{i}dx^{i})^{2}$),
\begin{align}\label{SECSCA}
\hg\rightarrow \lambda^{2}q_{ij}dx^{i}dx^{j}+\nu\Lambda^{2}(d\varphi +\frac{\lambda}{\nu^{1/2}}\theta_{i}dx^{i})^{2},\quad \xi\rightarrow \xi
\end{align}
In either case, the reduced variables $(q,U,V)$ scale in the same way (\ref{TRANSSCA}). The two new three-metrics are locally isometric but the new length of the orbits of the killing field $\xi$ do not necessarily coincide. The length of the orbits is scaled by $\lambda$ in the first case, and by $\sqrt{\nu}$ in the second case.
\vs

- As in dimension two  we have $Ric=\gcur q$, then (\ref{ES1})-(\ref{ES2}) imply that the Gaussian curvatures acquires the expression
\be\label{KAPPAF}
\gcur =\frac{3}{4}\Omega^{2}e^{-4V}+|\nabla U|^{2}. 
\ee
In particular $\gcur$ is non-negative. This will be an important property when analysing the geometry of the reduced data. 
\vs

The proof of Proposition \ref{PRED} is just computational and relies on formulae in \cite{Dain:2008xr}. We include it for the sake of completeness, but it can be skipped otherwise.

\begin{proof}[Proof of Proposition \ref{PRED}] We use calculations from \cite{Dain:2008xr}, but the notation is different. Precisely we use the following notation: $\qstM$ is the quotient of the spacetime manifold $\stM$ by the $\Sa$-action, $\omega_{a}$ is the twist one form of the Killing field $\xi$ in the spacetime and $\lambda$ its norm. Naturally, we have the commutative diagram
\vs\vs
\begin{center}
\setlength{\unitlength}{.75mm}
\begin{picture}(50,25)
\put(12,5){\vector(0,1){15}}
\put(52,5){\vector(0,1){15}}  
\put(10,0){$\sM$}
\put(50,0){$\stM$}
\put(17,2){\vector(1,0){30}}
\put(30,5){$i_{\sM}$}
\put(10,22){$\qM$}
\put(50,22){$\qstM$}
\put(17,24){\vector(1,0){30}}
\put(30,27){$i_{\qM}$}
\put(7,12){$\pi$}
\put(54,12){$\pi$}
\end{picture}   
\end{center}
where the $\pi$'s are the projections into the quotient spaces and the inclusions $i_{\sM}$ and $i_{\qM}$ are totally geodesic, namely the second fundamental form $K$ of $\sM$ in $\stM$ and the second fundamental form $\chi$ of $\qM$ in $\qstM$, are both zero. Let $\boldsymbol{n}$ be the normal to $\qM$ in $\qstM$.  

Equation (45) from \cite{Dain:2008xr} implies $\boldsymbol{n}(\lambda)=0$ and $i^{*}_{S}\omega_{a}=0$. Using this information inside (18) of \cite{Dain:2008xr} we obtain,
\be
\boldsymbol{\tilde{\nabla}}_{a}\boldsymbol{\tilde{\nabla}}^{a}\lambda=\frac{\omega(\boldsymbol{n})^{2}}{2\lambda^{3}}
\ee
where $\boldsymbol{\tilde{\nabla}}_{a}$ is the covariant derivative of the quotient metric on $\mathcal{N}$. We compute
\be
\boldsymbol{\tilde{\nabla}}_{a}\boldsymbol{\tilde{\nabla}}^{a}\lambda=-\boldsymbol{n}^{a}\boldsymbol{n}^{b}\boldsymbol{\tilde{\nabla}}_{a}\boldsymbol{\tilde{\nabla}}_{b} \lambda +\Delta\lambda=\langle \frac{\nabla N}{N},\nabla\lambda\rangle +\Delta\lambda
\ee
where now $\Delta$ and $\langle\ ,\ \rangle$ are defined with respect to the quotient two-metric over $\qM$ that we denote by $h$. Thus
\be\label{LEQQ}
\Delta\lambda+\langle \frac{\nabla N}{N},\nabla\lambda\rangle=\frac{\omega(\boldsymbol{n})^{2}}{2\lambda^{3}}
\ee 
On the other hand as $N$ is harmonic in $(\sM,g)$ we have
\be\label{NEQ}
\Delta N + \langle \nabla N, \frac{\nabla \lambda}{\lambda}\rangle=0
\ee
where the operators are again with respect to $h$. Finally, the equations (26) and (30) in \cite{Dain:2008xr} give
\be
\gcur_{h}=\frac{\Delta\lambda}{\lambda}+\frac{1}{4}\frac{\omega(\boldsymbol{n})^{2}}{\lambda^{4}}
\ee
where $\gcur_{h}$ is the gaussian curvature of $h$.
Now, $q=N^{2}h$, hence
\be\label{GCUREQ}
N^{2}\gcur=\gcur_{h}-\Delta \ln N=\hat{\gcur}-\frac{\Delta N}{N}+\frac{|\nabla N|^{2}}{N^{2}}
\ee
where again $\Delta$ and $|\ \ |$ are with respect to $h$. Combining (\ref{LEQ}), (\ref{NEQ}) and (\ref{GCUREQ}) we obtain
\be\label{GCUREQII}
\gcur=\frac{3}{4}\frac{\omega(\boldsymbol{n})^{2}}{N^{2}\lambda^{4}}+\frac{|\nabla N|^{2}}{N^{4}}
\ee

Now, the spacetime expression
\be
\partial_{t}^{a}\boldsymbol{\epsilon}_{abcd}\xi^{b}\boldsymbol{\nabla}^{c}\xi^{d}=N\omega(\boldsymbol{n})
\ee
is well known to be constant where $\boldsymbol{\nabla}$ is the spacetime covariant derivative and $\boldsymbol{\epsilon}$ the spacetime volume form (see \cite{MR757180} Theorem 7.1.1). On the other hand
\be\label{OEQ}
\Omega=N\epsilon_{abc}\xi^{a}\nabla^{b}\xi^{c}=\partial_{t}^{a}\boldsymbol{\epsilon}_{abcd}\xi^{b}\boldsymbol{\nabla}^{c}\xi^{d}
\ee
where $\epsilon^{\sg}_{abc}$ is the $\sg$-volume form. Expressing (\ref{LEQQ}), (\ref{NEQ}), (\ref{GCUREQII}) and (\ref{OEQ}) in terms of $U,V$, and expressing the Laplacians and norms in terms of $q$ we obtain (\ref{ES2})-(\ref{ES3}). To obtain (\ref{ES1}) use 
\be
\kappa_{h}h_{ab}=\frac{\nabla_{a}\nabla_{b}\lambda}{\lambda}+\frac{\omega(\boldsymbol{n})^{2}}{2\lambda^{4}}h_{ab}+\frac{\nabla_{a}\nabla_{b}N}{N}
\ee
taken from eqs. (20) and (25) in \cite{Dain:2008xr}, and re-express it in terms of $q_{ab}$ and its covariant derivative. 
\end{proof}

\subsection{Example: the reduced Kasner}\label{KSOL}

The most simple examples of reduced static data sets come from reducing the Kasner solutions through suitable Killing fields. Below we describe the reduced Kasner in detail.

Recall that the Kasner data sets (in the harmonic representation) are
\begin{align}
\hg=dx^{2}+x^{2a}dy^{2}+x^{2b}d z^{2},\qquad U=U_{1}+c\ln x
\end{align}
where $c,a$ and $b$ satisfy $c^{2}+(a-\frac{1}{2})^{2}=\frac{1}{4}$ and $a+b=1$. If we reduce these metrics through the Killing field $\xi=\lambda \partial_{z}$ we obtain the reduced data $(q,U,V)$,
\begin{align}
& q=dx^{2}+x^{2a}d\varphi^{2},\\
& U=U_{1}+c\ln x,\\
& V=V_{1}+b\ln x
\end{align}
where of course
\be
c^{2}+(a-\frac{1}{2})^{2}=\frac{1}{4},\qquad a+b=1.
\ee
and also
\be
\Omega=0
\ee
Above we made $V_{1}=\ln \lambda$, (note that $V_{1}=V(1)$ and that $U_{1}=U(1)$). If we make this solution periodic along $\varphi$ and vary $a$, (hence $b$ and $c$) and $\lambda$ we obtain all the possible reduced solutions with $\Omega=0$ and with a $\Sa$-symmetry (in $\varphi$). 

More general than this we can quotient the Kasner solutions by the Killing field
\be
\xi=\lambda(\cos\omega\ \partial_{y}+\sin\omega\ \partial_{z})
\ee 
for any $\lambda>0$ and $\omega\in [0,2\pi)$, (fixed). A direct calculation shows that the reduced data set $(q,U,V)$ is
\begin{align}
& q=dx^{2}+\bigg[\frac{x^{2}}{x^{2a}\cos^{2}\omega+x^{2b}\sin^{2}\omega}\bigg]d\varphi^{2},\\
& U=U_{1}+c\ln x,\\
& V=V_{1}+\frac{1}{2}\ln (x^{2a}\cos^{2}\omega+x^{2b}\sin^{2}\omega),
\end{align}
where of course
\be
c^{2}+(a-\frac{1}{2})^{2}=\frac{1}{4},\qquad a+b=1.
\ee
and furthermore
\be
\Omega^{2}=4e^{4V_{1}}(a-b)^{2}\cos^{2}\omega\sin^{2}\omega
\ee
Above we made $e^{V_{1}}=\lambda$, (note that $V_{1}=V(1)$ and that $U_{1}=U(1)$). If we make this solution periodic along $\varphi$ and vary $a$, (hence $b$ and $c$) and $\lambda$ and $\omega$, we obtain all the possible reduced solutions with a $\Omega\neq 0$ and with a $\Sa$-symmetry (in $\varphi$).

A simple computation shows that as long as $\Omega\neq 0$ the norm $\Lambda$ of the Killing field $\xi$ grows at least as fast as the square root of the distance to the boundary of the data set. More precisely we have
\be
\Lambda^{2}\geq \eta |\Omega| x
\ee
where $\eta$ does not depend on the data set. As we will see later this is indeed a general property for the asymptotic of any reduced data set.

\subsection{A subclass of the reduced Kasner: the cigars}\label{CIGARSOL}

When either $(a,b)=(1,0)$ or $(a,b)=(0,1)$ and $\omega\notin \{0,\pi/2,\pi,3\pi/2\}$ we obtain an important class of solutions that we will call the {\it cigars} (motivated by their shape, see Figure \ref{Figure1}). Their metrics are complete in $\mathbb{R}^{2}$. After a convenient change of variables, the cigars are given by
\be\label{CIGAR}
U=U_{0},\quad V=V_{0}+\frac{1}{2}\ln (1+r^{2})\quad \text{and}\quad q=4\Omega^{-2}e^{4V_{0}}\big(dr^{2}+\frac{r^{2}}{1+r^{2}}d\varphi^{2}\big)
\ee
where $U_{0}$ and $V_{0}$ are arbitrary constants and where $r$ is the radial coordinate from the origin and $\varphi$ is the polar angle ranging in $[0,2\pi)$, (note that $V_{0}=V(r=0)$). The asymptotic metric is $q=4\Omega^{-2}e^{4V_{0}}(dr^{2}+d\varphi^{2}$), hence cylindrical of section equal to $4\pi \Omega^{-1}e^{2V_{0}}$. 
\vs\vs

\begin{figure}[h]
\centering
\includegraphics[width=10cm,height=.8cm]{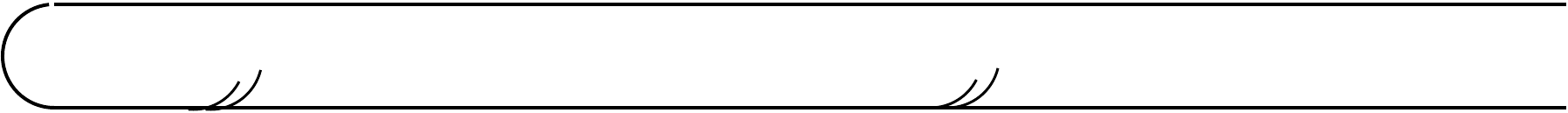}
\caption{Representation of the cigar.}
\label{Figure1}
\end{figure}
\vs\vs

As $U$ is constant, then the lapse $N$ is also constant and the original static solution, (from where the data (\ref{CIGAR}) is coming from), is flat. Let us explain now which quotient of $\mathbb{R}^{3}$ gives rise to the cigars. For any positive $\delta$ we let $T_{\delta}$ be the translation in $\mathbb{R}^{3}$ of magnitude $\delta$ along the $z$-axis and for any $\varphi$ we let $R_{\varphi}$ be the rotation in $\mathbb{R}^{3}$ of angle $\varphi$ around the $z$-axis. Consider the isometric $\mathbb{R}$-action $I$ on $\mathbb{R}^{3}$ given by
\be
I: (t)\times (x,y,z)\longrightarrow T_{te^{V_{0}}}\big( R_{t\Omega (e^{-V_{0}})/2}(x,y,z)\big)
\ee
Now, we quotient $\mathbb{R}^{3}$ as follows: two points $(x,y,z)$ and $(x',y',z')$ are identified iff $(x',y',z')=I(2\pi n,(x,y,z))$ for some $n\in \mathbb{Z}$. The quotient is free $\Sa$-symmetric where the action is by restricting $I$ to $[0,2\pi)$. A straight forward calculation shows that the quotient data $(q,U,V)$ is the cigar solution.  

\subsubsection{The cigars's uniqueness}\label{CIGUNIQU}

The cigars (\ref{CIGAR}) are the only complete non-compact boundary-less solutions to (\ref{ES1})-(\ref{ES3}) with $\Omega\neq 0$. To see this observe that any complete non-compact solution must have $U$ constant because $U$ satisfies
\be\label{UEST}
|\nabla U|(p)\leq \frac{\eta}{\dist(p,\partial \qM)}
\ee
and if $\qM$ is complete and non-compact then $\dist(p,\partial \qM)=\infty$ and $U$ is constant (this decay is direct from Anderson's estimate; We will make another proof of it in Proposition \ref{REDCUR}). Thus, as before, the original static $(\sM; g, N)$ solution is flat (and a $\Sa$-bundle). It is not difficult to see that the only possibility must be a quotient of $\mathbb{R}^{3}$ as described above.  However in Proposition \ref{LLPP} we give an alternative proof whose technique will be useful later when we present the cigar as the singularity model. Before and for the sake of completeness we prove that the only complete (reduced) data set with $\Omega=0$ is flat with $U$ constant.

\begin{Proposition} The only complete boundary-less (reduced) static data with $\Omega = 0$ is flat with $U$ constant .
\end{Proposition}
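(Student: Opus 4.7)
The plan is to exploit the drastic simplifications that $\Omega=0$ imposes on (\ref{ES1})--(\ref{ES3}). Rewriting (\ref{ES2}) as $\Delta(e^{V})=e^{V}(\Delta V+|\nabla V|^{2})=0$, the function $e^{V}$ is positive and harmonic on $(\qM;q)$; simultaneously, formula (\ref{KAPPAF}) degenerates to $\gcur=|\nabla U|^{2}\geq 0$, so $(\qM;q)$ has non-negative Gaussian curvature.

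In the compact case everything follows quickly: $e^{V}$ is a positive harmonic function on a closed surface, hence constant, so $V$ is constant; equation (\ref{ES3}) then reduces to $\Delta U=0$ and harmonicity again forces $U$ constant; (\ref{ES1}) finally gives $Ric=0$. In the non-compact case, I first apply the Anderson-type gradient bound $|\nabla U|(p)\leq \eta/\dist(p,\partial\qM)$ recalled in (\ref{UEST}) (and established for general reduced data in Proposition \ref{REDCUR} below); since $\partial\qM=\emptyset$ and $\qM$ is metrically complete, this yields $\nabla U\equiv 0$, and therefore $U$ is constant. The curvature identity above then gives $\gcur\equiv 0$, so $(\qM;q)$ is flat, and (\ref{ES1}) collapses to $\nabla\nabla V+\nabla V\,\nabla V=0$, i.e., $\nabla\nabla(e^{V})=0$. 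Thus $\nabla(e^{V})$ is a parallel vector field on the complete flat surface $(\qM;q)$; lifting $e^{V}$ to the universal cover $(\mathbb{R}^{2};\delta)$ produces a smooth positive affine function $\alpha+\beta z^{1}+\gamma z^{2}$, and positivity forces $\beta=\gamma=0$, so the lift is a positive constant and $V$ is constant on $\qM$.

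The main difficulty is the non-compact branch: the reduced equations alone provide no a priori control on $U$, which is why one must appeal to the Anderson-style bound (\ref{UEST}). Once $U$ is known to be constant, the remainder is a standard rigidity statement for positive solutions of $\nabla\nabla f=0$ on a complete flat two-manifold, and the conclusion that the original data is flat with constant lapse follows at once from (\ref{OVERED}).
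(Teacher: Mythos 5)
Your proof is correct and follows essentially the same route as the paper's: the constancy of $U$ comes from the gradient bound (\ref{UEST}), flatness from (\ref{KAPPAF}), and the constancy of $V$ from the rigidity of positive solutions of $\nabla\nabla(e^{V})=0$ on a complete flat surface (the paper phrases this as ``$\Lambda$ is linear along geodesics, hence constant by positivity and completeness'', which is the same argument you run on the universal cover). Your treatment is merely a more explicit write-up, separating the compact and non-compact cases, of what the paper does in three lines.
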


\begin{proof} As $U=U_{0}$ and $\Omega=0$ then $\nabla\nabla \Lambda=0$ (eq. (\ref{ES1})). This implies that $\Lambda$ is linear along geodesics. Thus, as the space is complete and $\Lambda>0$ then $\Lambda$ must be constant and $q$ flat. The result follows.
\end{proof}

\begin{Proposition}\label{LLPP} The only complete boundary-less (reduced) static data with $\Omega\neq 0$ are the cigars.
\end{Proposition}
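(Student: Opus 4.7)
The plan is to extract a clean conformal-Hessian identity for $\Lambda = e^V$ and then exploit it via geodesic polar coordinates. Since $(\qM;q)$ is complete and boundary-less, the discussion preceding the Proposition already gives $U \equiv U_0$, so (\ref{ES1})--(\ref{ES2}) reduce to equations in $(q,V)$ alone. Splitting (\ref{ES1}) into its trace and trace-free parts and using (\ref{ES2}) one recovers the Gaussian-curvature formula (\ref{KAPPAF}) together with the pointwise identity $\nabla\nabla V + \nabla V\, \nabla V = \frac14 \Omega^2 e^{-4V}\, q$; rewriting the left-hand side as $\Lambda^{-1}\nabla\nabla\Lambda$ one obtains the key relation
\[
\nabla\nabla\Lambda \;=\; \frac{\Omega^{2}}{4\Lambda^{3}}\, q,
\]
which is the structural backbone of the argument.

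From this identity $\nabla|\nabla\Lambda|^{2}$ is parallel to $\nabla\Lambda$, so $|\nabla\Lambda|$ is constant on the level sets of $\Lambda$, and a one-variable integration gives the first integral $|\nabla\Lambda|^{2} = C - \Omega^{2}/(4\Lambda^{2})$ for some constant $C>0$. Hence $\Lambda \geq \Lambda_{\min} := |\Omega|/(2\sqrt{C})$; moreover $\Delta\Lambda = \Omega^{2}/(2\Lambda^{3})>0$ rules out interior maxima, so $\qM$ is automatically non-compact. A short integration near $\Lambda_{\min}$ shows that the downward gradient trajectories have finite length, so by completeness $\Lambda$ attains the value $\Lambda_{\min}$ at some point $p_{0}\in \qM$, which is then a critical point of $\Lambda$.

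The central step is to upgrade this to a global rotational symmetry. The Hessian at any critical point is $\frac{\Omega^{2}}{4\Lambda_{\min}^{3}}\,q$, hence non-degenerate, so critical points are isolated; by the same integral estimate the downward flow from any $p\in \qM$ converges to a critical point in finite time. The basins of attraction of distinct critical points are then disjoint, non-empty, and open, so the connectedness of $\qM$ forces $p_{0}$ to be the only one. A direct computation from the key relation yields $\nabla_{T}T = 0$ for $T = \nabla\Lambda/|\nabla\Lambda|$, so the integral curves of $\nabla\Lambda$ are geodesics emanating from $p_{0}$; combined with the uniqueness of $p_{0}$ they furnish global geodesic polar coordinates $(s,\theta)$, $\theta\in[0,2\pi)$, on all of $\qM$.

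In these coordinates one writes $q = ds^{2} + f(s)^{2}\, d\theta^{2}$ and $\Lambda = \Lambda(s)$. The angular component of the Hessian equation forces $f(s) = \Lambda'(s)/\alpha$ for a constant $\alpha$ pinned by the smoothness condition $f'(0)=1$, and the radial ODE $\Lambda'' = \Omega^{2}/(4\Lambda^{3})$ with data $\Lambda(0)=\Lambda_{\min}$, $\Lambda'(0)=0$ integrates explicitly to $\Lambda^{2} = \Lambda_{\min}^{2} + \Omega^{2}s^{2}/(4\Lambda_{\min}^{2})$. A standard change of radial variable then identifies the resulting data with the cigar (\ref{CIGAR}), with $V_{0} = \ln\Lambda_{\min}$. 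I expect the main obstacle to be the third step: the local normal-form analysis and the ODE integration are routine, but deducing that $\qM$ is globally a warped-product disk requires carefully combining the finite-time convergence of the downward gradient flow with the connectedness of $\qM$ to rule out both additional critical points and extraneous ends.
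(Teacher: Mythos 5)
Your proof is correct and follows essentially the same route as the paper: reduce to $U$ constant, derive the Hessian identity $\nabla\nabla\Lambda=\tfrac{\Omega^{2}}{4\Lambda^{3}}q$ (the paper's $\nabla\nabla\overline{\Lambda}=\overline{\Lambda}^{-3}q$ after rescaling), locate a point where $\nabla\Lambda=0$ via the first integral $|\nabla\Lambda|^{2}+\Omega^{2}/(4\Lambda^{2})=\mathrm{const}$, and reconstruct the metric in geodesic polar coordinates about that point by integrating the radial ODE. The only difference is that you make explicit the global steps (uniqueness of the critical point via basins of attraction, gradient lines being geodesics, injectivity of the exponential map) that the paper compresses into ``it is simple to see that this $q$ indeed represents the metric all over $S$''.
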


\begin{proof} The estimate (\ref{UEST}) shows that $U$ must be constant, i.e. $U=U_{0}$. Hence, making $\overline{\Lambda}=\sqrt{2/\Omega}\ \Lambda$ we have
\begin{align}
\nabla\nabla \overline{\Lambda}=\frac{1}{\overline{\Lambda}^{3}}q,\qquad \kappa=\frac{3}{\overline{\Lambda}^{4}}
\end{align}
The first is an equation of Killing type and can be integrated easily along geodesics. If $\gamma(s)$ is a geodesic parametrised by arc-length then we have $\overline{\Lambda}''=\overline{\Lambda}^{-4}$ (make $\overline{\Lambda}(\gamma(s))=\overline{\Lambda}(s)$) which has the solutions
\be\label{SOLLAM}
\overline{\Lambda}^{2}(s)=\frac{1}{(\overline{\Lambda}{'}_{0}^{2}+1/\overline{\Lambda}_{0}^{2})}\big(1+(\overline{\Lambda}_{0}\overline{\Lambda}'_{0}+(\overline{\Lambda}_{0}'^{2}+1/\overline{\Lambda}_{0}^{2})s)^{2}\big)
\ee
where $\overline{\Lambda}_{0}=\overline{\Lambda}(0)$ and $\overline{\Lambda}_{0}'=\overline{\Lambda}'(0)$. We have thus the bound
\be
\overline{\Lambda}^{2}(s)\geq\frac{1}{(|\nabla \overline{\Lambda}_{0}|^{2}+1/\overline{\Lambda}_{0}^{2})}
\ee
where $|\nabla \overline{\Lambda}_{0}|=|\nabla\overline{\Lambda}|(0)$. This lower bound is achieved only at $s=\overline{\Lambda}_{0}|\nabla \overline{\Lambda}_{0}|/(|\nabla \overline{\Lambda}_{0}^{2}+1/\overline{\Lambda}_{0}^{2})$ on the geodesic that points in the direction of least ${\Lambda}_{0}'$, i.e. when it is equal to $-|\nabla \overline{\Lambda}_{0}|$. Therefore at the point $p$ where the minimum is achieved we have $\nabla\overline{\Lambda}(p)=0$. Hence, along any geodesic $\gamma(s)$ emanating from $p$, (i.e. $\gamma(0)=p$), we have
\be
\overline{\Lambda}^{2}=\overline{\Lambda}^{2}_{0}\bigg(1+\frac{s^{2}}{\overline{\Lambda}_{0}^{4}}\bigg)
\ee
Thus, near $p$ we can write
\be
q=ds^{2}+\ell^{2}d\varphi^{2}
\ee
with $\ell=\ell(s)$ satisfying 
\be
\ell''=-\kappa \ell=-\frac{3}{\overline{\Lambda}^{4}}\ell
\ee
and with $\ell(0)=0$ and $\ell'(0)=1$. The solution is 
\be
\ell^{2}=\frac{s^{2}}{\big(1+s^{2}/\overline{\Lambda}_{0}^{4}\big)}
\ee
recovering (\ref{CIGAR}) at least near $p$. It is simple to see that this $q$ indeed represents the metric all over $S$ which in turn must be diffeomorphic to $\mathbb{R}^{2}$.
\end{proof}

\subsubsection{The cigars as models near high-curvature points}\label{CIGNHCP}

\begin{Lemma}\label{LOCMOD} Let $(S_{i};p_{i};q_{i},V_{i},U_{i})$ be a pointed sequence of metrically complete (reduced) static data sets all having the same $\Omega\neq 0$. Suppose that 
\be
\dist_{q_{i}}(p_{i},\partial S_{i})\geq d_{0}>0
\ee
and that either 
\be
\kappa_{q_{i}}(p_{i})\rightarrow \infty,\quad{\rm or}\quad |\nabla V_{i}|_{q_{i}}(p_{i})\rightarrow \infty
\ee
Then, there are scalings $(\hat{\lambda}_{i},\hat{\nu}_{i},\hat{\mu}_{i})$ such that the
scaled sequence $(S_{i};p_{i};\hat{q}_{i},\hat{V}_{i},\hat{U}_{i})$ converges in $C^{\infty}$ and in the pointed sense to either a flat cylinder or a cigar with the same $\Omega$.
\end{Lemma}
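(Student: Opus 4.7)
The plan is a blow-up (rescaling) argument. First I would rescale the data so that the blowing-up quantity is normalized at $p_i$; second I would extract a pointed $C^\infty$-convergent subsequence using the Cheeger-Gromov theory recalled in subsection \ref{SFCCRM}; and third I would identify the limit via the uniqueness results of subsection \ref{CIGUNIQU}.

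\textbf{Setup of the scalings.} I would set
\[
\hat{\lambda}_i := \max\bigl\{\sqrt{\kappa_{q_i}(p_i)},\ |\nabla V_i|_{q_i}(p_i)\bigr\},
\]
which diverges by hypothesis. Taking $\hat{\nu}_i := \hat{\lambda}_i$ ensures that the twist $\Omega$ is preserved under the scaling transformations (\ref{TRANSSCA}), and $\hat{\mu}_i := -U_i(p_i)$ normalizes $\hat{U}_i(p_i)=0$. After this rescaling one has $\max\bigl(\hat{\kappa}_i(p_i),\ |\hat{\nabla}\hat{V}_i|_{\hat{q}_i}^{2}(p_i)\bigr) = 1$; moreover from (\ref{KAPPAF}) the inequality $\hat{\kappa}_i(p_i)\geq \tfrac{3}{4}\Omega^{2} e^{-4\hat{V}_i(p_i)}$ combined with $\hat{\kappa}_i(p_i)\leq 1$ gives a uniform lower bound $\hat{V}_i(p_i)\geq -\tfrac14\log(4/(3\Omega^{2}))$.

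\textbf{Uniform bounds and extraction of the limit.} To promote the pointwise normalization at $p_i$ into uniform geometric bounds on balls of growing $\hat{q}_i$-radius, I would apply a Hamilton-style point-picking on $\kappa + |\nabla V|^{2}$, replacing $p_i$ by a nearby $\tilde{p}_i$ at which this quantity almost realizes its supremum on a geometrically large ball, and adjusting the scalings accordingly. In the resulting rescaled data one then has $\hat{\kappa}_i + |\hat{\nabla}\hat{V}_i|^{2}\leq 2$ on balls of $\hat{q}_i$-radius $R_i\to\infty$ about $\tilde{p}_i$; elliptic bootstrap on the reduced equations (\ref{ES1})--(\ref{ES3}) upgrades these to uniform $C^{k}$-bounds for every $k$. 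Because $\hat{d}_{\hat{q}_i}(\tilde{p}_i,\partial S_i) = \hat{\lambda}_i\,\dist_{q_i}(\tilde{p}_i,\partial S_i)\to\infty$, the boundary recedes, and the pointed $C^\infty$-convergence theory of subsection \ref{SFCCRM} (in its simpler two-dimensional form) yields a metrically complete boundary-less reduced static data set $(S_\infty,p_\infty;q_\infty,V_\infty,U_\infty)$ solving (\ref{ES1})--(\ref{ES3}) with twist $\Omega$.

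\textbf{Identification.} Passing to a further subsequence so that $\hat{V}_i(p_i)\to\bar{V}$: if $\bar{V}<+\infty$, then $V_\infty(p_\infty)=\bar{V}$ is finite, the twist term $\tfrac{1}{2}\Omega^{2}e^{-4V_\infty}$ is non-degenerate, and Proposition \ref{LLPP} identifies the limit as a cigar with the prescribed $\Omega$. If $\bar{V}=+\infty$, the twist term degenerates in the limit; I would then re-choose the scalings so as to additionally enforce $\hat{V}_i(p_i)=0$ (at the cost of letting $\hat{\nu}_i/\hat{\lambda}_i\to 1$ only asymptotically), obtaining in the limit a flat reduced data set with $U$ and $V$ constant by the proposition preceding Proposition \ref{LLPP}; the isometric $\Sa$-action coming from the original Killing field $\xi$ survives to the limit, yielding a flat cylinder. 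The main technical obstacle is the simultaneous control of $\kappa$ and $|\nabla V|^{2}$ during the point-picking-and-bootstrap step — coupled non-linearly in (\ref{ES1})--(\ref{ES3}) while the admissible scalings form only a one-parameter family once $\Omega$ is fixed — and the careful handling of the degenerate case $\bar{V}=+\infty$.
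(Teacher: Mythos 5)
Your overall strategy (normalize by a scaling that fixes $\Omega$, extract a pointed limit, identify it via subsection \ref{CIGUNIQU}) matches the paper's, and the cigar branch is right in spirit. But there is a genuine gap in the flat-cylinder branch, which is where the content of the lemma lies. Your extraction step assumes pointed $C^{\infty}$-convergence, \emph{without covers}, to a complete boundary-less reduced data set; this presupposes a lower injectivity radius bound at the base points, which you never establish and which fails precisely in the regime $\kappa_{q_i}(p_{i})$ bounded, $|\nabla V_{i}|_{q_i}(p_{i})\rightarrow\infty$. There your normalization gives $|\hat{\nabla}\hat{V}_{i}|(p_{i})=1$ while $\hat{V}_{i}(p_{i})\rightarrow+\infty$ (since $V_{i}(p_{i})$ is bounded below when $\kappa_{q_i}(p_i)$ is bounded), so the fields cannot converge to a smooth limit with twist $\Omega$; and if you renormalize so that $\hat{V}_{i}(p_{i})=0$, then $\hat{\nu}_{i}=e^{-2V_{i}(p_{i})}$ is bounded and $\hat{\nu}_{i}/\hat{\lambda}_{i}\rightarrow 0$ (not $1$ as you assert), the limit twist is zero, and the classification preceding Proposition \ref{LLPP} would force the limit to be flat with $V$ constant — contradicting $|\nabla V_{\infty}|(p_{\infty})=1$. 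The only way out is that the surfaces $S_{i}$ \emph{collapse} near $p_{i}$. The paper proves this by lifting to the three-dimensional data: it shows the rescaled Killing fields $\overline{\xi}_{i}$ converge to the generator of a rotation of $\mathbb{R}^{3}$ with the base point at distance one from the axis, and deduces geodesic loops in $S_{i}$ based at $p_{i}$ of length tending to zero; only after a further rescaling to unit injectivity radius does the flat cylinder appear. Your substitute — that ``the isometric $\Sa$-action coming from $\xi$ survives to the limit, yielding a flat cylinder'' — is not an argument about the reduced surface: the $\Sa$-action lives on $\Sigma_{i}$ and is quotiented away in $S_{i}$, so nothing in your setup distinguishes a cylinder limit from a plane limit.

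A secondary omission: even in the cigar branch you must rule out collapse before invoking Proposition \ref{LLPP}. The paper does this with a short but necessary observation — if the injectivity radius at $p_{i}$ tended to zero, finite covers would converge to a cigar, which admits no non-trivial quotients — which your point-picking-and-bootstrap scheme (curvature bounds only, no injectivity radius control) does not supply.
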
 

{\it Notation}: To simplify notation inside the proof, we will use the notation $\kappa_{i}$ for $\kappa_{q_{i}}$ and $|\nabla V_{i}|$ for $|\nabla V_{i}|_{q_{i}}$, (the index `$i$' is from the sequence and of course does not represent a scaling).

\begin{proof} The proof is divided in various cases.

{\it Case I}. Suppose that $|\nabla V_{i}|(p_{i})$ diverges but that $\kappa_{i}(p_{i})$ remains uniformly bounded. To start on we make scalings $(\overline{\lambda}_{i},\overline{\nu}_{i},\overline{\mu}_{i})$ where 
\be
\overline{\lambda}_{i}=|\nabla V_{i}|(p_{i}),\quad \overline{\nu}_{i}=e^{-2V_{i}(p_{i})},\quad \overline{\mu}_{i}=-U_{i}(p_{i}). 
\ee
Let $(\overline{q}_{i},\overline{V}_{i},\overline{U}_{i})$ be the scaled variables. Observe that $\Omega$ scales to $\overline{\Omega}_{i}=\overline{\nu}_{i}\Omega/\overline{\lambda}_{i}$. We have 
\be\label{LEQ}
\overline{\Lambda}_{i}(p_{i})=1,\quad |\nabla \overline{\Lambda}_{i}|(p_{i})=1,
\ee
where, recall, $\overline{\Lambda}_{i}=e^{\overline{V}_{i}}$. Consider now the three-dimensional static pointed data set $(\Sigma_{i};o_{i};\overline{\hg}_{i},\overline{U}_{i})$ whose reductions are the $(S_{i};p_{i};\overline{q}_{i},\overline{V}_{i},\overline{U}_{i})$. The $o_{i}$ are points in $\Sigma_{i}$ projecting into the $p_{i}$'s. Let $\overline{\xi}_{i}$ be the scaling of $\xi_{i}$. In this context the relations (\ref{LEQ}) are
\be
|\overline{\xi}_{i}|(o_{i})=1,\quad |\nabla |\overline{\xi}_{i}||(o_{i})=1,
\ee
where the norms are with respect to $\overline{\hg}_{i}$. Moreover, $\overline{\Omega}_{i}=\overline{\nu}_{i}\Omega/\overline{\lambda}_{i}\rightarrow 0$ because the $\overline{\nu}_{i}$ are bounded and the $\overline{\lambda}_{i}$ tend to infinity. Let us study now the convergence of the derivatives $(\overline{\nabla}\, \overline{\xi}_{i})(o_{i})$ of the Killings $\overline{\xi}_{i}$ at the points $o_{i}$. For notational simplicity we will remove for a moment the subindexes `i' (but we keep them in mind). For the calculation we consider $\overline{\hg}$-orthonormal basis $\{e_{1},e_{2},e_{3}\}$ around the points $o$, with $e_{3}(o)=\overline{\xi}(o)/|\overline{\xi}|(o)$ and $(\overline{\nabla}_{e_{i}}e_{j})(o)=0$. Then, using the relation $\overline{\Omega}=\overline{\epsilon}_{abc}\overline{\xi}^{a}\overline{\nabla}^{b}\overline{\xi}^{c}$ and the Killing condition $\overline{\nabla}_{a}\overline{\xi}_{b}+\overline{\nabla}_{b}\overline{\xi}_{a}=0$, the components of $\overline{\nabla}\, \overline{\xi}$ are computed as,
\begin{align}
& \langle \nabla_{e_{j}}\overline{\xi}, e_{j}\rangle=0,\\
& \langle \nabla_{e_{1}}\overline{\xi}, e_{2}\rangle=-\langle \nabla_{e_{2}}\overline{\xi},e_{1}\rangle=\frac{\overline{\Omega}}{|\overline{\xi}|},\\
& \langle \nabla_{e_{3}}\overline{\xi}, e_{j}\rangle=-\langle \nabla_{e_{j}}\overline{\xi}, e_{3}\rangle=-\nabla_{e_{j}}|\overline{\xi}|.
\end{align} 
If furthermore $e_{1}(o)$ and $e_{2}(o)$ are chosen such that $\nabla_{e_{1}(o)}|\overline{\xi}|=0$ and $\overline{\nabla}_{e_{2}(o)}|\overline{\xi}|=1$ then, (restoring now the indexing `i'), the components $\langle\overline{\nabla}_{e_{j}}\overline{\xi}_{i},e_{k}\rangle(o_{i})$ are either zero or tend to zero as $i$ goes to infinity except for $\langle\overline{\nabla}_{e_{1}}\overline{\xi}_{i},e_{3}\rangle(o_{i})$ and $\langle \overline{\nabla}_{e_{3}}\overline{\xi}_{i},e_{1}\rangle(o_{i})$ that are constant and equal to one and minus one respectively.  

Now we observe that 
\be
\dist_{\overline{\hg}_{i}}(o_{i},\partial \Sigma_{i})=\overline{\lambda}_{i}\dist_{\hg_{i}}(o_{i},\partial \Sigma_{i})=\overline{\lambda}_{i}\dist_{q_{i}}(p_{i},\partial S_{i})\geq \overline{\lambda}_{i}d_{0}\rightarrow \infty.
\ee
Therefore by Anderson's estimates, the curvature of the $\overline{\hg}_{i}$ over balls of centers $o_{i}$ and any fixed radius tend to zero. Hence, there are neighbourhoods $\mathcal{B}_{i}$ of $o_{i}$ and covers $\tilde{\mathcal{B}}_{i}$ such that the pointed sequence $(\tilde{\mathcal{B}}_{i};\tilde{o}_{i};\overline{\hg}_{i})$ converges in $C^{\infty}$ and in the pointed sense to the Euclidean three-space (for the cover metric we use also $\overline{\hg}_{i}$). We claim that the lift of the Killing fields $\overline{\xi}_{i}$ to the $\tilde{\mathcal{B}}_{i}$, (that we will denote too by $\overline{\xi}_{i}$) converge in $C^{\infty}$ to the generator of a (non-trivial) rotation of $\mathbb{R}^{3}$. To see this recall first that for any Killing field $\chi$ it holds $\nabla_{a} \nabla_{b} \chi_{c}=-Rm_{bca}^{\ \ \ d}\chi_{d}$. Thus, at any point $x$ we can find $\overline{\xi}_{i}(x)$ by integrating a second order linear ODE along a geodesic that extends from $\gamma(0)=\tilde{o}_{i}$ to $x$, given the initial data $\overline{\xi}_{i}(\gamma(0))$ and $\overline{\nabla}_{\gamma'(0)}\overline{\xi}_{i}$. As it was shown earlier that the data $\overline{\xi}_{i}(\tilde{o}_{i})$ and $(\overline{\nabla}\overline{\xi}_{i})(\tilde{o}_{i})$ converges, hence so does $\overline{\xi}_{i}$ and the perpendicular distribution of the limit Killing field $\overline{\xi}_{\infty}$ is integrable because $\lim \overline{\Omega}_{i}=0$. Thus, $\overline{\xi}_{\infty}$ generates a rotation in $\mathbb{R}^{3}$. As $|\overline{\xi}_{\infty}|(\tilde{o}_{\infty})=1$ and $|\nabla |\overline{\xi}_{\infty}||(\tilde{o}_{\infty})=1$ it must be that $\tilde{o}_{\infty}$ is at a distance one from the  rotational axis. In coordinates $(x,y,z)$ of $\mathbb{R}^{3}$ the limit vector field would be, (for instance), $x\partial_{y}-y\partial_{z}$ and the limit point would be, (for instance), $(1,0,0)$. 

This convergence of $\overline{\xi}_{i}$ to the generator of a rotation will be used in the following to extract a pair of relevant informations.

First we show that inside the surfaces $S_{i}$ there are geodesic loops $\ell_{i}$, based at the points $p_{i}$, whose $\overline{q}_{i}$ length tends to zero. Let us see this. For $i$ large enough, the orbit of the Killing $\overline{\xi}_{i}$ inside $\tilde{\mathcal{B}}_{i}$, that starts at the point $\tilde{o}_{i}$, twists around an `axis' and come very close to close up into a circle when it approaches again the point $\tilde{o}_{i}$ (see Figure \ref{S11}). Hence, a small two-dimensional disc formed by short geodesic segments emanating perpendicularly to $\overline{\xi}_{i}(\tilde{o}_{i})$ at $\tilde{o}_{i}$ must intersect the orbit at a nearby point $\tilde{o}'_{i}$. Moreover the geodesic segment joining $\tilde{o}_{i}$ and $\tilde{o}'_{i}$, projects into a geodesic loop $\ell_{i}$ on $S_{i}$ based at $p_{i}$. The length of the loops $\ell_{i}$ clearly tend to zero as i goes to infinity.

Second, for $i\geq i_{0}$ large enough, the norm of the Killings $\overline{\xi}_{i}$ over the balls $B_{\overline{\hg}_{i}}(o_{i},1/2)$ $\subset \tilde{\mathcal{B}}_{i}$ is bounded below by $1/4$. Hence, $\overline{\Lambda}_{i}$ is bounded below by $1/4$ over the balls $B_{\overline{q}_{i}}(p_{i},1/2)$ in $S_{i}$. More importantly the Gaussian curvature $\overline{\kappa}_{i}$ is bounded above by $100\overline{\Omega}_{i}^{2}$ also on $B_{\overline{q}_{i}}(p_{i},1/2)$.

\begin{figure}[h]
\centering
\includegraphics[width=9cm, height=5.5cm]{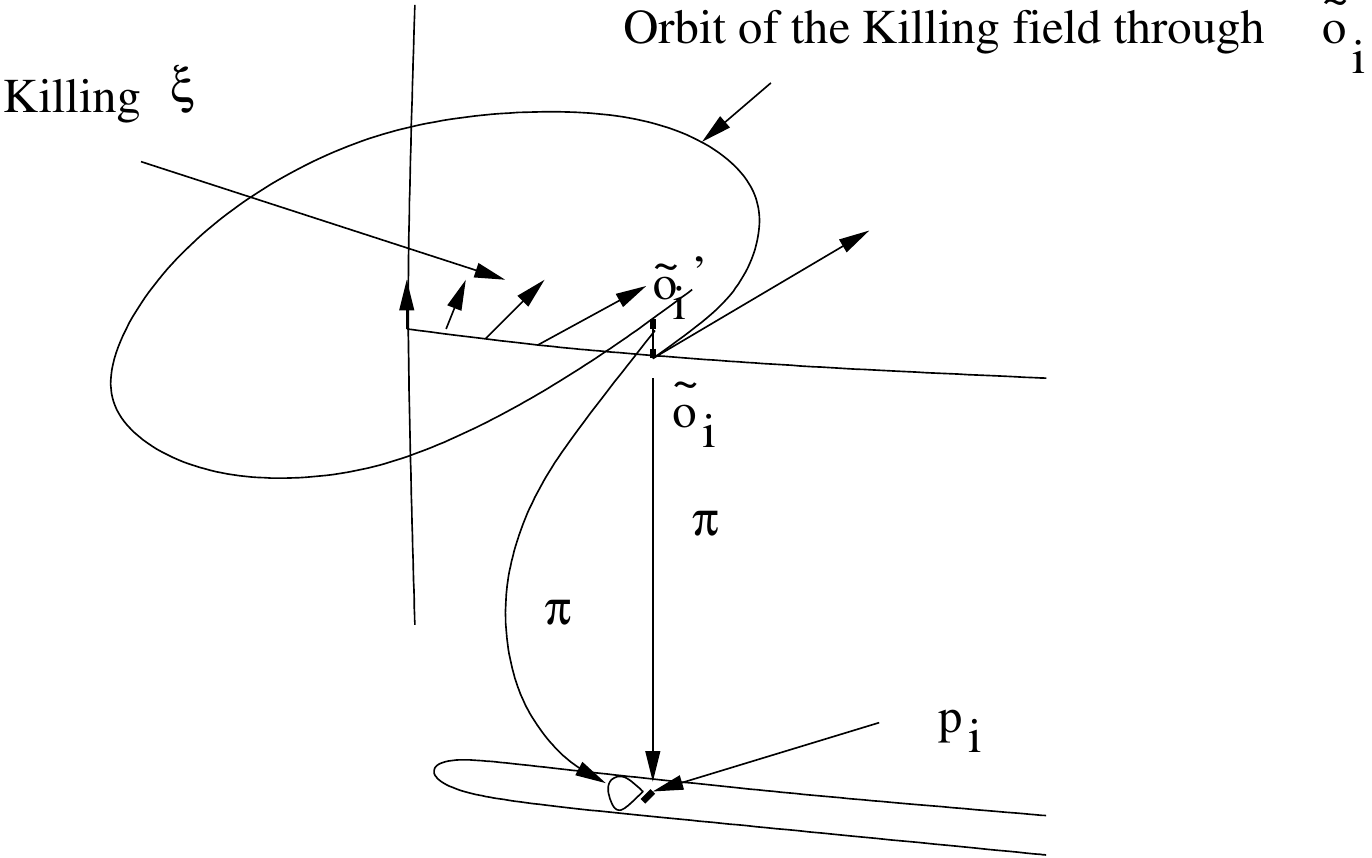}
\caption{}
\label{S11}
\end{figure}

From these two facts we conclude that the geometry near the points $o_{i}$ is collapsing with bounded curvature. This implies that if we scale up $\overline{q}_{i}$ to have the injectivity radius at $o_{i}$ equal to one, then the new scaled spaces converge in the pointed sense to a flat cylinder.
The composition of this last scaling and the one we performed first is the scaling $(\hat{\lambda}_{i},\hat{\nu}_{i},\hat{\mu}_{i})$ we were looking for.

{\it Case II}. Suppose now that both $|\nabla V_{i}|(p_{i})$ and $\kappa_{i}(p_{i})$ are diverging. If the quotient $\kappa_{i}(p_{i})/|\nabla V_{i}|^{2}(p_{i})$ tends to zero, then we can perform a scaling $(\overline{\lambda}_{i},\overline{\nu}_{i},\overline{\mu}_{i})$ 
that leaves $\Omega$ invariant and that makes $\overline{\kappa}_{i}(p_{i})$ bounded and $|\nabla \overline{V}_{i}|(p_{i})$ diverging. We can then repeat the step in {\it Case I} with $(\overline{q}_{i},\overline{V}_{i},\overline{U}_{i})$ instead of $(q_{i},V_{i},U_{i})$ to prove the Lemma in this case too. 

Assume therefore that the quotient $\kappa_{i}(p_{i})/|\nabla V_{i}|^{2}(p_{i})$ remains bounded. Perform again a scaling $(\overline{\lambda}_{i}, \overline{\nu}_{i},\overline{\mu}_{i})$ that leaves $\Omega$ invariant and makes $\overline{\kappa}_{i}(p_{i})=1$ and therefore makes $|\nabla \overline{V}_{i}|(p_{i})$ bounded because $\kappa_{i}(p_{i})/|\nabla V_{i}|^{2}(p_{i})$ is invariant. Note that as $\dist_{\overline{q}_{i}}(p_{i},\partial S_{i})\rightarrow \infty$, the estimate (\ref{UDECESTIM}) impose that $|\nabla \overline{U}_{i}|$ must tend uniformly to zero over balls of centers $p_{i}$ and fixed but arbitrary radius. We claim that the curvature $\overline{\kappa}_{i}$ remains uniformly bounded on balls of centers $p_{i}$ and fixed radius. Let $L>0$, let $x$ be a point in $B_{\overline{q}_{i}}(p_{i},L)$ and let $\gamma(s)$ be a length-minimising geodesic joining $p_{i}$ to $x$. Let $\overline{\Lambda}_{i}(s)=\overline{\Lambda}_{i}(\gamma(s))$. Then, the value of $\overline{\Lambda}_{i}$ at $x$ is found by solving the second order ODE
\be
\overline{\Lambda}''_{i}=\frac{\Omega^{2}}{4\overline{\Lambda}^{3}}+(|\nabla U|^{2}-2U'^{2})\overline{\Lambda}
\ee
subject to the initial data $\overline{\Lambda}_{i}(0)=\overline{\Lambda}_{i}(\gamma(0))$ and $\overline{\Lambda}'_{i}(0)=\nabla_{\gamma'(0)}\overline{\Lambda}_{i}$, and evaluating at $s=\dist_{\overline{q}_{i}}(x,p_{i})$. If $\nabla \overline{U}_{i}$ were identically zero then the solutions would be exactly (\ref{SOLLAM}) and we would have the bound
\be
\overline{\Lambda}_{i}^{2}(s)\geq \frac{1}{(\overline{\Lambda}{'}_{i}(0))^{2}+1/(\overline{\Lambda}_{i}(0))^{2}}
\ee
for all $s\geq 0$. In particular, if $\overline{\Lambda}_{i}(0)$ is bounded below by $A$ and $|\overline{\Lambda}_{i}'(0)|$ is bounded above by $B$ then $\overline{\Lambda}_{i}(s)$ is bounded below by $\sqrt{1/(B^{2}+1/A^{2})}$. But as $|\nabla \overline{U}_{i}|$ tends to zero uniformly over balls or radius $L$, then the solutions to the ODE tend to (\ref{SOLLAM}) with initial data $\overline{\Lambda}_{i}(0)$ and $\overline{\Lambda}'_{i}(0)$. Now, as $\overline{\kappa}_{i}(p_{i})=1$ and $|\nabla \overline{V}_{i}|(p_{i})$ is bounded, there are constants $A$ and $B$ such that
\be
\overline{\Lambda}_{i}(0)\leq A,\quad {\rm and}\quad |\overline{\Lambda}'_{i}(0)|\leq B
\ee
no matter which the geodesic $\gamma$ is. Therefore if $i\geq i_{0}(L)$ is big enough then $\overline{\Lambda}_{i}(x)\leq 2\sqrt{1/(B^{2}+1/A^{2})}$. Hence, $\overline{\kappa}_{i}\geq 3\overline{\Omega}_{i}(B^{2}+1/A^{2})^{2}/32$ everywhere on $B_{\overline{q}_{i}}(p_{i},L)$.

The bound we proved for the curvature implies that if for a certain subsequence the injectivity radius at the points $p_{i}$ tends to zero then there are finite covers that converge to a cigar. But this is impossible because the cigars do not admit any non-trivial quotient. Hence the injectivity radius remains bounded away from zero and the pointed sequence $(S_{i};p_{i};\overline{q}_{i},\overline{V}_{i},\overline{U}_{i})$ must sub-converge in the pointed sense to a solution with $U$ constant. By uniqueness it is always a cigar and we are done.
\end{proof}

Let us make an extra observation about a construction made inside the proof. Recall that the spaces $(\tilde{\mathcal{B}}_{i},\overline{\hg}_{i})$ converge to $\mathbb{R}^{3}$ and the Killings $\overline{\xi}_{i}$ converge to the generator of a rotation. Let $z_{i}$ be points where $(\nabla |\overline{\xi}_{i}|)(z_{i})=0$. These points one can think that lie in the `axis' of rotation. Naturally if we quotient the balls of centers $z_{i}$ and radius two we obtain a two-disc. This disc projects into a `cup' on $S_{i}$ containing $p_{i}$ (see Figure \ref{S11}). In the metric $q_{i}$, the `radius' of this cup (i.e. the maximum distance from a point to the boundary) goes to zero. 

The Lemma \ref{LOCMOD} provides models for the scaled geometry near points of high curvature or high $V$-gradient, but it does not say how such points affect the unscaled geometry nearby. This is an important information that we will need later. In rough terms, what occurs is that at any finite distance from such a point the (unscaled) geometry becomes one dimensional, pretty much like a cigar highly scaled down. The next Lemma \ref{66} explains the phenomenon. In few words it explains how the geometry looks like near geodesics that join points of high curvature or high $V$-gradient and the boundary of the surfaces $S_{i}$. This basic information will be sufficient to extract conclusions later. 

The scaled geometry around points in such geodesics will be model essentially as regions of the cigar whose curvature at the origin is conventionally $\kappa_{0}=3(2\pi)^{2}$ and therefore whose metric is
\be
q_{0}=\frac{1}{(2\pi)^{2}}\big(dr^{2}+\frac{r^{2}}{1+r^{2}}d\varphi^{2}\big)
\ee
where $r\geq 0$. Let us describe the models more explicitly. A pointed space $(\{0\leq r\leq 40\};x;q_{0})$, where $x$ be a point in this cigar with $r(x)\leq 25$, is a model of type ${\rm Ci}$ (from `cigar"). A pointed space $(\{r(x)-10\leq r\leq r(x)+10\};x;q_{0})$, where $x$ be a point with $r(x)>25$, is a model of type ${\rm Cy}$ (from `cylinder"). The Figure \ref{S12} sketches these two types of models.

\begin{figure}[h]
\centering
\includegraphics[width=9cm, height=5cm]{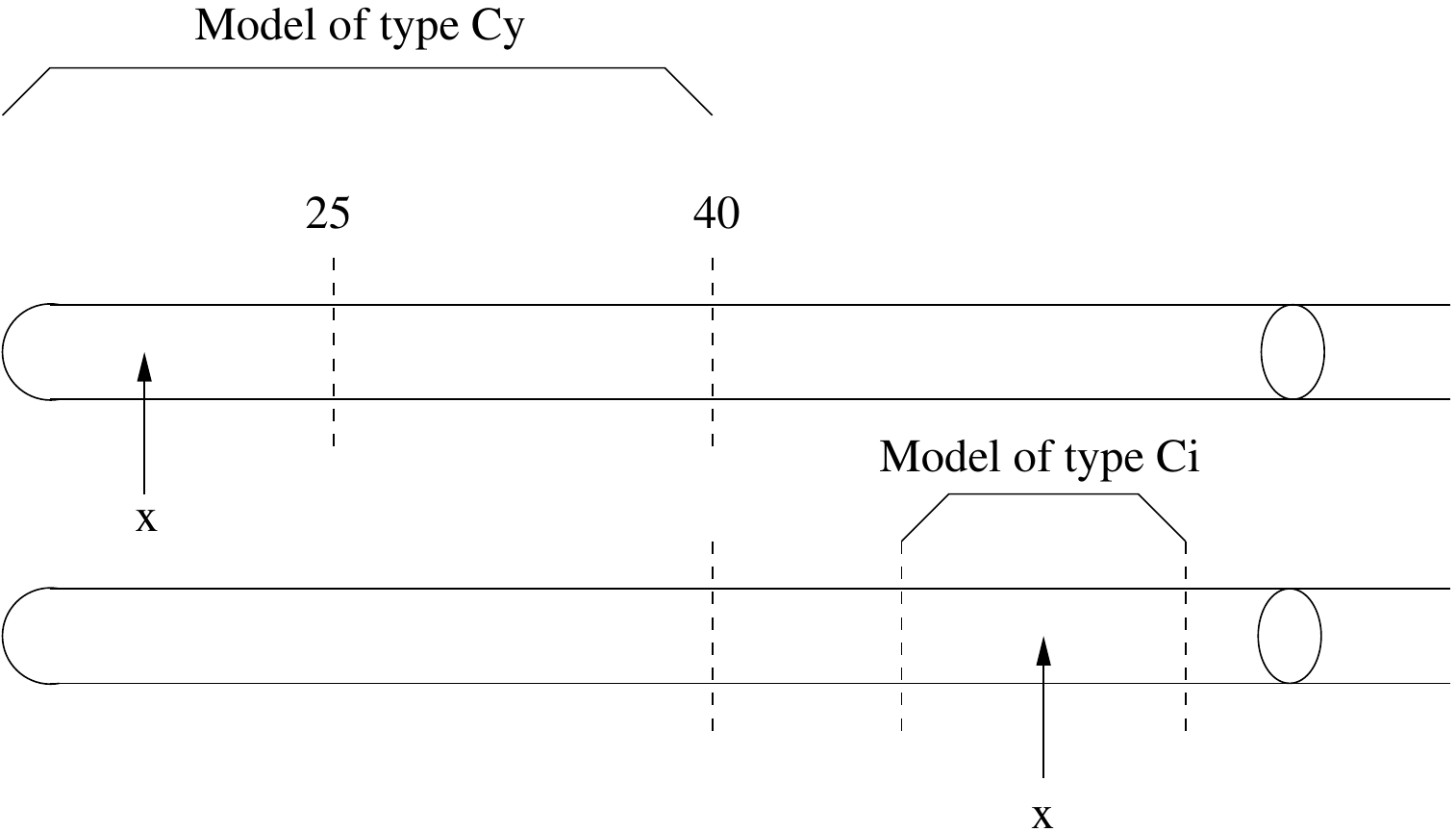}
\caption{}
\label{S12}
\end{figure}

\begin{Lemma}\label{66} Let $(S_{i};p_{i};q_{i},V_{i},U_{i})$ be a pointed sequence of metrically complete (reduced) static data sets all having the same $\Omega\neq 0$ and suppose that 
\be
\dist_{q_{i}}(p_{i},\partial S_{i})\geq d_{0}>0
\ee
and that either 
\be
\kappa_{q_{i}}(p_{i})\rightarrow \infty,\quad{\rm or}\quad |\nabla V_{i}|_{q_{i}}(p_{i})\rightarrow \infty.
\ee 
For every $i$ let $\gamma_{i}$ be a geodesic segment joining $p_{i}$ to $\partial S_{i}$ and minimising the distance between them (if $\partial S_{i}=\emptyset$ let $\gamma_{i}$ be an infinite ray). Fix a positive $d_{1}$ less than $d_{0}$.

Then, for every $k\geq 1$, $\epsilon>0$ there exists $i_{0}$ such that for any $i\geq i_{0}$ and for any $x_{i}\in \gamma_{i}$ with $\dist_{q_{i}}(x_{i}, p_{i})\leq d_{1}$ there exist a neighbourhood $\mathcal{B}_{i}$ of $x_{i}$ and a scaled metric $\overline{q}_{i}=\overline{\lambda}_{i}^{2}q_{i}$ such that $(\mathcal{B}_{i};x_{i}; \overline{q}_{i})$ is $\epsilon$-close in $C^{k}$ to either a model space ${\rm Ci}$ or a model space ${\rm Cy}$.
\end{Lemma}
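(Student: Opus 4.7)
The plan is to bootstrap the pointwise result of Lemma \ref{LOCMOD} at $p_i$ to a statement uniform along the portion of $\gamma_i$ inside $\overline{B}_{q_i}(p_i,d_1)$, by exploiting the rigid global geometry of the cigar. Applying Lemma \ref{LOCMOD} at $p_i$ produces scalings $(\hat\lambda_i,\hat\nu_i,\hat\mu_i)$ under which $(S_i;p_i;\hat q_i,\hat V_i,\hat U_i)$ converges in $C^\infty$ and in the pointed sense to either a cigar with the same $\Omega$ or a flat cylinder. Since $\dist_{\hat q_i}(p_i,\partial S_i)\ge \hat\lambda_i d_0\to\infty$, the scaled minimizing geodesics $\gamma_i$ sub-converge to an infinite radial ray $\gamma_\infty$ emanating from $p_\infty$, and the $C^\infty$-convergence extends (by Anderson's estimates and the orbifold-free collapse theory recalled in subsection \ref{SFCCRM}) over any compact region containing a bounded piece of $\gamma_\infty$.

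Fix $k$ and $\epsilon$. I would normalize the limit cigar by an additional harmless scaling so that its Gaussian curvature at the tip equals $3(2\pi)^2$ and its radial coordinate is the $r$ appearing in the model definitions. For each $x_i\in\gamma_i$ with $\dist_{q_i}(x_i,p_i)\le d_1$, let $r_i$ denote the radial coordinate of the image of $x_i$ in this conventional cigar under the approximating diffeomorphism, and let $\overline\lambda_i$ be the scale factor of $q_i$ that produces that normalization at $x_i$. If $r_i\le 25$, I take $\mathcal B_i$ to be the preimage of the disc $\{0\le r\le 40\}$ and obtain an $\epsilon$-close Ci model; if $r_i>25$, I take $\mathcal B_i$ to be the preimage of the annulus $\{r_i-10\le r\le r_i+10\}$ and obtain an $\epsilon$-close Cy model. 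Both candidate regions have uniformly bounded size in the limit cigar, so the claimed $C^k$-closeness transfers from the convergence $(\hat q_i,\hat V_i,\hat U_i)\to (q_\infty,V_\infty,U_\infty)$ for $i$ large enough.

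The main obstacle is the case in which Lemma \ref{LOCMOD} produces a \emph{flat cylinder} limit rather than a genuine cigar: the final rescaling performed there sends the rescaled $\Omega$ to zero, so a conventional cigar with the right $\Omega$ is not directly visible in the scaled limit, and the Ci-type model near the tip cannot be read off from $(\hat q_i,\hat V_i,\hat U_i)$ alone. To bypass this, I would go back to the intermediate stage of the proof of Lemma \ref{LOCMOD}, where the three-dimensional scaled metric $\overline{\hg}_i$ converges to a piece of Euclidean space and the scaled Killing field $\overline{\xi}_i$ converges to the generator of a rotation. At that intermediate scale, the cigar structure of the quotient $S_i$ is visible at the scale of the original $\Omega$, and the $\hg$-distance from the axis of rotation (which at the reduced level equals the point where $\nabla V$ vanishes along a cross-section) is exactly what determines the conventional cigar coordinate $r_i$ at each $x_i\in\gamma_i$. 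The Ci/Cy dichotomy is then read off as before, depending on whether $r_i\le 25$ or $r_i>25$.

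Finally, uniformity in $x_i$ (hence the existence of a single $i_0$ valid for all $x_i\in\gamma_i\cap\overline{B}_{q_i}(p_i,d_1)$) is obtained by a standard compactness-contradiction argument: were the conclusion to fail, one could extract a subsequence $x_{i_j}$ violating the Ci/Cy closeness; after re-centering at $x_{i_j}$ and applying Anderson's estimates to guarantee bounded curvature and derivatives in all relevant balls, one extracts a limit geometry which, being a connected piece of the same global cigar (or a flat-cylinder degeneration thereof), must itself be either a Ci or a Cy model, a contradiction. The most delicate point in executing this program is keeping careful track of how the three scaling parameters $(\overline\lambda_i,\overline\nu_i,\overline\mu_i)$ must be chosen jointly to realize the conventional normalization at $x_i$ while respecting the invariance (\ref{TRANSSCA}) of the reduced equations.
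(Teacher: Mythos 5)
There is a genuine gap. The scalings $(\hat{\lambda}_{i},\hat{\nu}_{i},\hat{\mu}_{i})$ produced by Lemma \ref{LOCMOD} have $\hat{\lambda}_{i}\rightarrow\infty$, so the pointed $C^{\infty}$-convergence of $(S_{i};p_{i};\hat{q}_{i})$ to a cigar or cylinder only controls balls of \emph{unscaled} $q_{i}$-radius $O(1/\hat{\lambda}_{i})\rightarrow 0$ around $p_{i}$. A point $x_{i}\in\gamma_{i}$ at a fixed unscaled distance $d\in(0,d_{1}]$ from $p_{i}$ sits at scaled distance $\hat{\lambda}_{i}d\rightarrow\infty$ from the basepoint, hence escapes to infinity in the rescaled pointed limit: it is not in the image of the approximating diffeomorphisms from any compact exhaustion of the limit cigar, so its ``radial coordinate $r_{i}$ in the conventional cigar'' is not defined, and the $C^{\infty}$-convergence does not ``extend over any compact region containing a bounded piece of $\gamma_{\infty}$'' in the sense you need. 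Your closing compactness-contradiction argument then begs the question: after re-centering at $x_{i_{j}}$ you assert the limit is ``a connected piece of the same global cigar,'' but at such points both $\kappa_{i}(x_{i})$ and $|\nabla V_{i}|(x_{i})$ may remain bounded, so Lemma \ref{LOCMOD} gives no information there and nothing forces the re-centered limit to be one-dimensional rather than, say, a genuinely two-dimensional pointed limit with injectivity radius bounded below (which would be neither Ci nor Cy). The paper explicitly flags this as the content of Lemma \ref{66}: Lemma \ref{LOCMOD} ``does not say how such points affect the unscaled geometry nearby.''

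What is missing is the mechanism that propagates the collapse from $p_{i}$ out to unscaled distance $d_{1}$ along $\gamma_{i}$. The paper does this in two steps for points with $\kappa_{i}(x_{i})\leq K_{0}$ and $|\nabla V_{i}|(x_{i})\leq K_{0}$: first, an ODE analysis of $\hat{V}_{i}(s)=V_{i}(\alpha_{i}(s))-V_{i}(0)$ along short geodesics (using equation (\ref{ES1}), the bound on the coefficient $\Omega^{2}e^{-4V_{i}(0)}/2$ by $\kappa_{i}(x_{i})$, and the decay (\ref{UEST}) of $|\nabla U|$) yields a uniform curvature bound $\kappa_{i}\leq K_{1}$ on a fixed ball $B_{q_{i}}(x_{i},r_{0})$; second, the Bishop--Gromov monotonicity of $s\mapsto\area(B_{q_{i}}(p_{i},s))/s^{2}$, combined with the fact (from Lemma \ref{LOCMOD}) that this ratio is already small at scales $\delta_{i}\rightarrow 0$ near $p_{i}$, forces $\mathrm{inj}_{q_{i}}(x_{i})\rightarrow 0$; collapse with bounded curvature then yields the Cy model. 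This volume-comparison step is the heart of the proof and is absent from your proposal; without it the conclusion at intermediate points of $\gamma_{i}$ does not follow.
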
  

{\it Notation}: Again to simplify notation inside the proof, we will use the notation $\kappa_{i}$ for $\kappa_{q_{i}}$ and $|\nabla V_{i}|$ for $|\nabla V_{i}|_{q_{i}}$.

\begin{proof} Half of the work has been done essentially already in Lemma \ref{LOCMOD} because the geometry near points of high curvature or high $V$-gradient are model locally (at a right scale) by a space Ci or a space Cy. We say this formally as follows: given $\epsilon>0$ and $k\geq 1$ there are $K_{0}>0$ and $i_{1}>0$ such that for any $i\geq i_{1}$ and $x_{i}\in \gamma_{i}$ such that $\dist_{i}(x_{i},p_{i})\leq d_{1}$ and either $\kappa_{i}(x_{i})\geq K_{0}$ or $|\nabla V_{i}|(x_{i})\geq K_{0}$, then the conclusions of the Lemma hold. Thus, it is left to show that the conclusions hold too for points on $\gamma_{i}$ that do not have `high" curvature or high gradient, that is for which $\kappa_{i}(x_{i})\leq K_{0}$ and $|\nabla V_{i}|(x_{i})\leq K_{0}$. We prove that in what follows.

We will show that there is $i_{2}\geq i_{1}$ such that for any $i\geq i_{2}$  and for any $x_{i}\in \gamma_{i}$ such that $\dist_{i}(x_{i},p_{i})\leq d_{1}$, $\kappa_{i}(x_{i})\leq K_{0}$ and $|\nabla V_{i}|(x_{i})\leq K_{0}$, the conclusion of the Lemma also holds and the local model is of type Cy. 

Given $i$, let $x_{i}$ be a point such that $x_{i}\in \gamma_{i}$ such that $\dist_{i}(x_{i},p_{i})\leq d_{1}$, $\kappa_{i}(x_{i})\leq K_{0}$ and $|\nabla V_{i}|(x_{i})\leq K_{0}$. We begin claiming that there are $r_{0}<(d_{0}-d_{1})/2$ and $K_{1}>0$ independent of $i$ such that $\kappa_{i}(x)\leq K_{1}$ for all $x\in B_{q_{i}}(x_{i},r_{0})$. Let $r_{0}$ be any number less than $(d_{0}-d_{1})/2$ and let $x$ be a point such that $\dist(x,x_{i})\leq r_{0}$. Let $\alpha_{i}(s)$ be a length minimising geodesic joining $x_{i}$ to $x$ ($\alpha_{i}(0)=x_{i}$). Denote $V_{i}(s):=V_{i}(\alpha_{i}(s))$. Let,
\be
\hat{V}_{i}(s)=V_{i}(s)-V_{i}(0)
\ee
Then we have,
\be\label{INITIALDATAC}
\hat{V}_{i}(0)=0,\quad {\rm and}\quad |\hat{V}_{i}'(0)|\leq K_{0}
\ee
where the first equation is by the definition of $\hat{V}_{i}(0)$ and the second follows by assumption. On the other hand $\hat{V}_{i}(s)$ satisfies the differential equation (\ref{ES1}), namely,
\be\label{ODEFORV}
\hat{V}_{i}''+\hat{V}_{i}{'}^{2}=\big(\frac{1}{2}\Omega^{2}e^{-4V_{i}(0)}\big)e^{-4\hat{V}_{i}}+(|\nabla U|^{2}-2U'^{2})
\ee
where the last expression in parenthesis is evaluated of course on $\alpha_{i}(s)$. 

Let us make two comments on this equation. First, the coefficient $\Omega^{2}e^{-4V_{i}(0)}/2$ is less or equal than $\kappa_{i}(x_{i})$ and thus less or equal than $K_{0}$ by assumption. Second, the summand $(|\nabla U|^{2}-2U'^{2})(s)$ is uniformly bounded, say by $K_{2}>0$, independently of $s$, $x$, $x_{i}$ and $i$. This follows from the estimate (\ref{UEST}) and $\dist_{q_{i}}(\alpha_{i}(s),\partial S_{i})\geq (d_{1}-d_{0})/2$; This last inequality is due to,
\be
\dist_{q_{i}}(\alpha_{i}(s),\partial S_{i})\geq \dist_{q_{i}}(x_{i},\partial S_{i})-\dist_{q_{i}}(\alpha_{i}(s),x_{i})
\ee
and the inequalities $\dist_{q_{i}}(x_{i},\partial S_{i})\geq (d_{1}-d_{0})$ and $\dist_{q_{i}}(\alpha_{i}(s),x_{i})\leq \dist_{q_{i}}(x,x_{i})\leq (d_{1}-d_{0})/2$. 

Until now we have shown control on the ODE (\ref{ODEFORV}) and the initial data (\ref{INITIALDATAC}). Therefore by standard ODE analysis, it follows that one can chose $r_{0}$ small enough such that $|\hat{V}_{i}(s)|\leq K_{1}$, (i.e. preventing blow up), for a $K_{1}$ independent on $s$, $x$, $x_{i}$ and $i$. This bound on $V_{i}(x)$ (we removed the hat now) and the bound on $|\nabla U|^{2}(x)$ gives the desired bound on $\kappa_{i}(x)$.

We have proved a curvature bound $\kappa_{i}(x)\leq K_{1}$ for all $x\in B_{q_{i}}(x_{i},r_{0})$. Using this bound we are going to show that the injectivity radius at $x_{i}$, namely $inj_{q_{i}}(x_{i})$, tends to zero as $i$ tends to infinity. Indeed, if on the contrary $inj_{q_{i}}(x_{i})\geq r_{1}>0$ for some $r_{1}>0$, then because the curvature is bounded on $B_{q_{i}}(x_{i}, r_{0})$, there is $v>0$ and $r_{2}\leq \min\{r_{0},r_{1}\}/2$ such that the area of the ball $B_{q_{i}}(x_{i},r_{2})$ is greater or equal than $v$. As $B_{q_{i}}(x_{i},r_{2})\subset B_{q_{i}}(p_{i},d_{0})$ then we have
\be
\frac{\area_{i}(B_{q_{i}}(p_{i},d_{0}))}{d_{0}^{2}}\geq \frac{v}{d_{0}^{2}}
\ee
On the other hand observe that by Lemma \ref{LOCMOD} the geometry near the points $p_{i}$ is locally collapsing (at a right scale) to a line or to half a line. Thus, there is $i_{3}$ such that for $i\geq i_{3}$ there is $\delta_{i}\rightarrow 0$, such that the quotient 
\be
\frac{\area_{i}(B_{q_{i}}(p_{i},\delta_{i}))}{\delta_{i}^{2}}
\ee
is less or equal than $v/(2d_{0}^{2})$ (in fact the quotient tends to zero). But by Bishop-Gromov the function 
\be
s\rightarrow \frac{\area_{i}(B_{q_{i}}(p_{i},s))}{s^{2}}
\ee
is monotonically decreasing and therefore we should have
\be
\frac{v}{2d_{0}^{2}}\geq \frac{\area_{i}(B_{q_{i}}(p_{i},d_{0}))}{d_{0}^{2}}\geq \frac{v}{d_{0}^{2}}
\ee
which is impossible. Thus the injectivity radius at $x_{i}$ tends to zero. Therefore the balls $B_{q_{i}}(x_{i},r_{0})$ collapse with bounded curvature 
 and the existence of a scaling whose limit is a cylinder (Cy) is now direct.
\end{proof}

The Lemma \ref{66} gives a local model for the collapsed geometry around points on the geodesics $\gamma_{i}$. The concatenation of the local models provide a global picture that is summarised in the next corollary (whose proof is now direct), see Figure \ref{S13}. 

\begin{figure}[h]
\centering
\includegraphics[width=7cm, height=5cm]{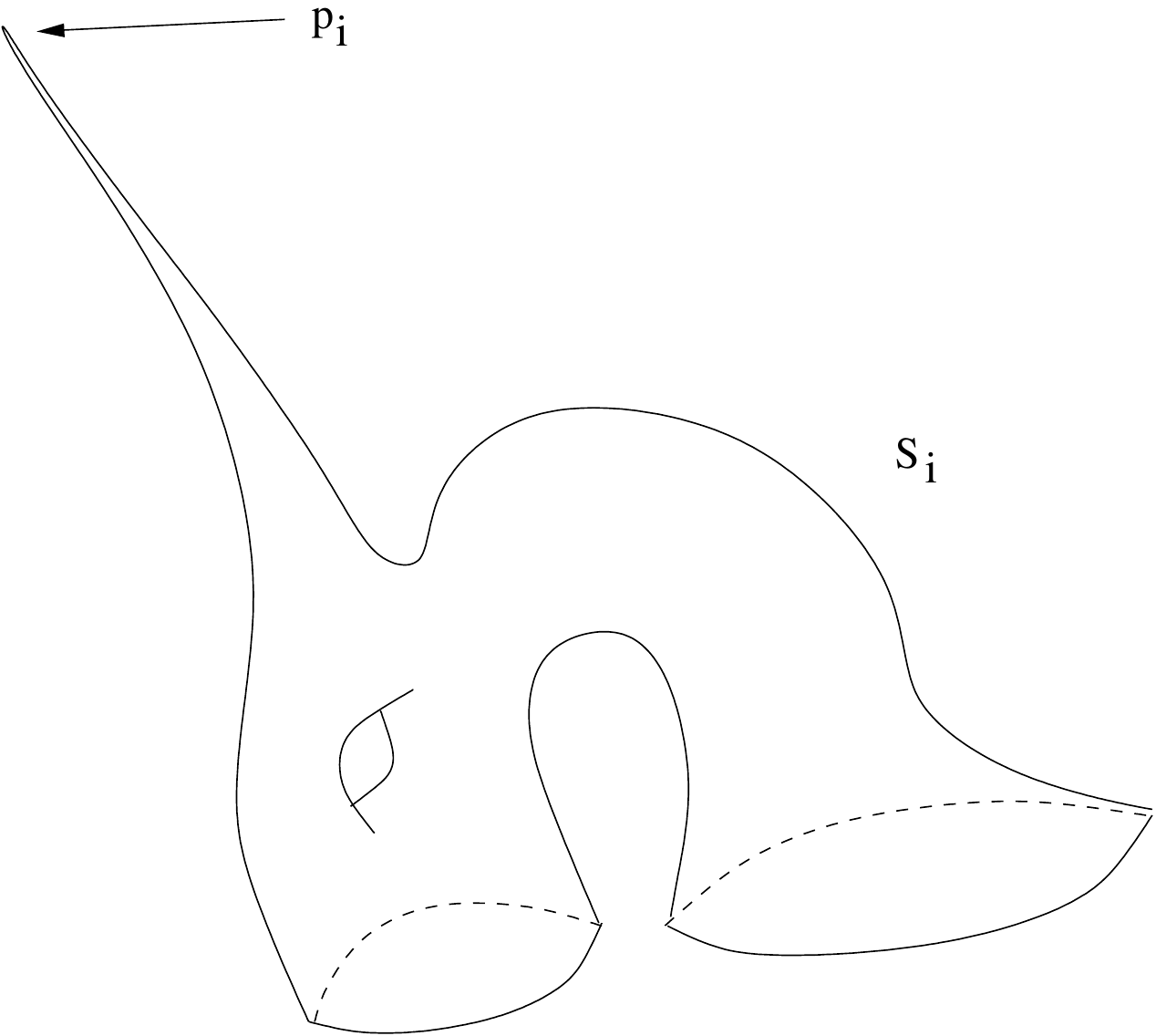}
\caption{}
\label{S13}
\end{figure}

\begin{Corollary}\label{SINGMODELUNDER} Let $(S_{i};p_{i};q_{i},V_{i},U_{i})$ be a pointed sequence of metrically complete (reduced) static data sets all having the same $\Omega\neq 0$ and suppose that 
\be
\dist_{q_{i}}(p_{i},\partial S_{i})\geq d_{0}>0
\ee
and that either 
\be
\kappa_{q_{i}}(p_{i})\rightarrow \infty,\quad{\rm or}\quad |\nabla V_{i}|_{q_{i}}(p_{i})\rightarrow \infty.
\ee 
For every $i$ let $\gamma_{i}$ be a geodesic segment joining $p_{i}$ to $\partial S_{i}$ and minimising the distance between them (if $\partial S_{i}=\emptyset$ let $\gamma_{i}$ be an infinite ray). Fix a positive $d_{1}$ less than $d_{0}$.

Then there is $i_{0}$ such that for any $i\geq i_{0}$ there is a neighbourhood $\mathcal{B}_{i}$ of the ball $B_{q_{i}}(p_{i},d_{1})$, diffeomorphic to a disc and metrically collapsing to a segment of length $d_{1}$ as $i$ goes to infinity.
\end{Corollary}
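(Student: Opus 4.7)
The plan is to deduce the corollary as a concatenation of the local models provided by Lemma~\ref{66} along the geodesic segment $\gamma_i \cap \overline{B_{q_i}(p_i,d_1)}$. Since both the point $p_i$ (which lies on $\gamma_i$, trivially parametrising it from $s=0$) and every $x_i \in \gamma_i$ with $\dist_{q_i}(x_i,p_i) \le d_1$ satisfy the hypotheses of Lemma~\ref{66}, around each such $x_i$ we can find (for $i \ge i_0$) a neighbourhood $\mathcal{B}_i(x_i)$ and a scaling $\bar{q}_i = \bar\lambda_i^2 q_i$ such that $(\mathcal{B}_i(x_i); x_i; \bar q_i)$ is $\epsilon$-close in $C^k$ to either a model ${\rm Ci}$ or a model ${\rm Cy}$.

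First, I would choose a finite cover of $\gamma_i|_{[0,d_1]}$ by points $x_i^{(1)} = p_i, x_i^{(2)}, \ldots, x_i^{(M)}$, spaced along $\gamma_i$ in a controlled way (e.g.\ at $q_i$-spacing bounded from below and above by fixed constants) so that the neighbourhoods $\mathcal{B}_i(x_i^{(j)})$ from Lemma~\ref{66} overlap consecutively. Near $p_i$ the neighbourhood is of type ${\rm Ci}$ (a topological disc), whereas at every subsequent point $x_i^{(j)}$ the neighbourhood is either of type ${\rm Ci}$ or of type ${\rm Cy}$ (an annular strip collapsing to a short segment); observe the projection of each of these local models onto $\gamma_i$ is an interval. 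Then I would define $\mathcal{B}_i$ as the union $\bigcup_{j=1}^{M} \mathcal{B}_i(x_i^{(j)})$. The disc ${\rm Ci}$ centred at $p_i$ glued to a sequence of annular pieces of type ${\rm Cy}$ along $\gamma_i$ yields a neighbourhood which is again diffeomorphic to a disc, since attaching an annulus to the boundary of a disc produces a disc.

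The metric-collapse statement follows at once: in each local piece the scaled metric $\bar{q}_i$ is $\epsilon$-close to a cigar/cylinder model whose transverse direction has length $\sim 1/\bar\lambda_i \to 0$, while the length of $\gamma_i$ from $p_i$ to $x_i^{(M)}$ is exactly $d_1$ in the unscaled metric $q_i$. Hence in Gromov--Hausdorff sense $(\mathcal{B}_i; q_i)$ converges to the segment $[0,d_1]$ via the projection onto $\gamma_i$.

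The only mildly delicate point is the concatenation itself, namely ensuring that consecutive models glue compatibly so that the global topology is a disc and not something more complicated (e.g.\ with extra handles produced by a mismatch between the locally covering neighbourhoods). This is handled by noting that each local model is a tubular neighbourhood of a sub-arc of $\gamma_i$, and that two such tubes whose arcs overlap must coincide (up to the $C^k$-closeness) in the overlap region because both are $\epsilon$-close to the corresponding model piece, which has a single topological type. Hence the union is a tubular neighbourhood of the full sub-arc $\gamma_i|_{[0,d_1]}$, i.e.\ a disc, concluding the proof.
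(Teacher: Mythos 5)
Your proposal is correct and follows essentially the same route as the paper, which declares the corollary ``direct'' from Lemma \ref{66} precisely by this concatenation of the local models ${\rm Ci}$/${\rm Cy}$ along $\gamma_{i}$. The one point worth flagging is that Lemma \ref{66} does not by itself guarantee that the model at $p_{i}$ is of type ${\rm Ci}$; the capping-off near $p_{i}$ that makes $\mathcal{B}_{i}$ a disc (rather than an annulus) comes from the ``cup'' observation following the proof of Lemma \ref{LOCMOD}, which your argument uses implicitly.
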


\subsection{Decay of the fields at infinity and asymptotic topology}\label{DFIAT}

We know already that the gradient of $U$ decays quadratically at infinity. In this section we show that also de gradient of $V$ and the Gaussian curvature $\kappa$ decay quadratically. The proof depends on whether $\Omega$ is zero or not. The case $\Omega=0$ is simple and relies only on the techniques a la Bakry-\'Emery used earlier. As a by product we re-prove the quadratic decay of the gradient of $U$, valid when $\Omega=0$ or not. When $\Omega\neq 0$, the proof requires the use of Corollary \ref{SINGMODELUNDER}. 

\subsubsection{Case $\Omega=0$}\label{DAIFS}

\begin{Proposition}\label{REDCUR} There is a constant $\eta>0$ such that for every metrically complete (reduced) static data set we have
\be\label{DEC1}
|\nabla U|^{2}(p)\leq\frac{\eta}{\dist^{2}(p,\partial \qM)}.
\ee
Moreover when $\Omega=0$ we have
\be\label{DEC2}
|\nabla V|^{2}(p)\leq\frac{\eta}{\dist^{2}(p,\partial \qM)},
\ee
hence also 
\be
\gcur(p) \leq\frac{\eta}{\dist^{2}(p,\partial \qM)}.
\ee
\end{Proposition}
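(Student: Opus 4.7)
The strategy is to exploit a Bakry--\'Emery structure hidden in the reduced system, then to apply a Cheng--Yau style cutoff maximum principle. Introduce the drift Laplacian $L := \Delta + \langle \nabla V,\nabla\cdot\rangle$ and the Bakry--\'Emery Ricci tensor $Ric_{V} := Ric - \nabla\nabla V$. Equation \eqref{ES3} is exactly $LU=0$, and when $\Omega=0$ equation \eqref{ES2} becomes $LV=0$. Equation \eqref{ES1} rewrites as
\[
Ric_{V} \;=\; \nabla V\otimes\nabla V + \tfrac{1}{2}\Omega^{2}e^{-4V}q + 2\,\nabla U\otimes\nabla U \;\geq\; 0,
\]
so that $L$ satisfies a $CD(0,\infty)$ curvature--dimension condition. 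The drift--Bochner identity
\[
\tfrac{1}{2}L\,|\nabla f|^{2} \;=\; |\nabla\nabla f|^{2} + Ric_{V}(\nabla f,\nabla f) + \langle \nabla f,\nabla Lf\rangle
\]
applied to $f=U$ yields $L\,|\nabla U|^{2}\geq 4|\nabla U|^{4}$, and (when $\Omega=0$) applied to $f=V$ yields $L\,|\nabla V|^{2}\geq 2|\nabla V|^{4}$. Setting $F := |\nabla U|^{2}+|\nabla V|^{2}$ in the case $\Omega=0$, and $F := |\nabla U|^{2}$ otherwise, I would deduce the differential inequality $LF\geq F^{2}$.

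Next I would run a Cheng--Yau cutoff argument on the ball $B_{q}(p_{0},R)$ with $R := \tfrac{1}{2}\dist(p_{0},\partial S)$, using $\varphi := \eta(\dist(\cdot,p_{0})/R)^{2}$ for a standard smooth $\eta$ with $\eta\equiv 1$ on $[0,1/2]$ and supported in $[0,1)$. Formula \eqref{KAPPAF} gives $\kappa\geq 0$, so the two-dimensional Laplacian comparison $\Delta \dist(\cdot,p_{0})\leq 1/\dist$ holds (distributionally on the cut locus). At an interior maximum $p^{\ast}$ of $\Phi := \varphi F$, the relations $\nabla\Phi(p^{\ast})=0$ and $L\Phi(p^{\ast})=\Delta\Phi(p^{\ast})\leq 0$, combined with $LF\geq F^{2}$, produce an estimate of the shape
\[
\varphi(p^{\ast})F(p^{\ast})^{2} \;\leq\; \tfrac{C}{R^{2}}F(p^{\ast}) + \tfrac{C}{R}|\nabla V|(p^{\ast})F(p^{\ast}),
\]
the last term being the drift contribution $\langle\nabla V,\nabla\varphi\rangle$ in $L\varphi(p^{\ast})$. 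When $\Omega=0$, the bound $|\nabla V(p^{\ast})|\leq\sqrt{F(p^{\ast})}$ together with Young's inequality absorb the drift term into the left side, producing $\varphi(p^{\ast})F(p^{\ast})\leq C'/R^{2}$ and hence $F(p_{0})\leq 4C'/\dist(p_{0},\partial S)^{2}$. This proves \eqref{DEC1} and \eqref{DEC2} when $\Omega=0$; the $\kappa$-decay is then immediate from \eqref{KAPPAF}, which reduces to $\kappa=|\nabla U|^{2}$.

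The main obstacle is extending \eqref{DEC1} to the general case $\Omega\neq 0$, where $V$ fails to be $L$-harmonic and $|\nabla V|$ cannot be controlled uniformly on the cutoff ball, so that the drift absorption above breaks down. I would handle this case either (i) by lifting to the three-dimensional $\Sa$-cover $(\Sigma,\hg,U)$, which is metrically complete with compact boundary since its fibers are circles, invoking Anderson's decay from Part~I to obtain $|\nabla U|^{2}_{\hg}\leq\eta/\dist_{\hg}^{2}$, and then descending via $|\nabla U|_{q}(p)=|\nabla U|_{\hg}(\tilde p)$ (as $U$ is $\xi$-invariant) together with $\dist_{q}(p,\partial S)\leq \dist_{\hg}(\tilde p,\partial\Sigma)$ (the projection being a Riemannian submersion and thus $1$-Lipschitz); or (ii) by a contradiction argument using the scaling invariance \eqref{TRANSSCA} and the classification of complete boundary-less reduced data given by Proposition~\ref{LLPP} and its $\Omega=0$ counterpart, with the collapsing case reduced to the cigar model via Lemma~\ref{LOCMOD}.
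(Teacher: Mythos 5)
Your Bochner computation is exactly the paper's: the paper also sets $f=-V$, notes $\Delta_{f}U=0$ from (\ref{ES3}) and $\Delta_{f}V=0$ from (\ref{ES2}) when $\Omega=0$, and derives $\Delta_{f}|\nabla U|^{2}\geq 4|\nabla U|^{4}$ and $\Delta_{f}|\nabla V|^{2}\geq 2|\nabla V|^{4}$. Where you diverge is the final maximum-principle step. The paper does not re-run a cutoff argument by hand; it feeds these inequalities into a prepared gradient-estimate lemma from Part I (Lemma \ref{LEMMAME}), whose hypothesis is non-negativity of the \emph{$\alpha$-weighted} tensor $Ric_{f}^{\alpha}=Ric+\nabla\nabla f-\alpha\,\nabla f\nabla f$ with $\alpha=1$, i.e.\ $Ric-\nabla\nabla V-\nabla V\nabla V\geq 0$, which is what (\ref{ES1}) actually gives. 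That extra $-\nabla V\nabla V$ term is the whole point: it yields a weighted Laplacian comparison $\Delta_{f}\dist\leq C/\dist$, so the drift contribution $\langle\nabla V,\nabla\varphi\rangle$ in your cutoff is absorbed into the comparison and never needs a pointwise bound on $|\nabla V|$. Because you work instead with the $\infty$-tensor $Ric-\nabla\nabla V$ and the unweighted comparison, you manufacture an obstruction at $\Omega\neq 0$ that is not present in the paper's argument: (\ref{DEC1}) follows for \emph{all} $\Omega$ by the same two lines, since $\Delta_{f}U=0$ and $Ric_{f}^{1}\geq 0$ hold regardless of $\Omega$.

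Your fallback (i) for $\Omega\neq 0$ is nonetheless legitimate — it is the ``direct from Anderson's estimate'' proof the paper itself mentions at (\ref{UEST}) — with two caveats. First, drop the claim that $\partial\Sigma$ is compact: the proposition makes no compactness assumption on $\partial\qM$, and none is needed because Anderson's gradient estimate is local (an interior-ball estimate), not a consequence of the ball-covering property. Second, fallback (ii) (scaling plus the cigar classification) is the strategy the paper reserves for the genuinely harder statement — quadratic decay of $|\nabla V|^{2}$ and $\kappa$ when $\Omega\neq 0$, Lemma \ref{SUPO2}, which does require $\partial\qM$ compact — and is not needed here, since Proposition \ref{REDCUR} only asserts the $V$- and $\kappa$-decay when $\Omega=0$. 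I would recommend replacing your cutoff computation by the $\alpha=1$ Bakry--\'Emery comparison (or simply citing Lemma \ref{LEMMAME}), which makes the whole proof uniform in $\Omega$ and removes the case split.
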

\vs

\begin{proof} Write (\ref{ES1}) as
\be
Ric_{f}^{\alpha}=\frac{1}{2}\Omega^{2}e^{-4V}q+2\nabla U\nabla U\geq  0
\ee
with $f=-V$, $\alpha=1$, and recall from (\ref{ES3}) that $\Delta_{f}U=0$. Then, using (\ref{BOCHNERF}) with $\psi=U$ we obtain
\be
\Delta_{f} |\nabla U|^{2}\geq 4|\nabla U|^{4}
\ee
and hence (\ref{DEC1}) by Lemma \ref{LEMMAME}. 

Similarly, if $\Omega=0$ we have $\Delta_{f}V=0$ and using (\ref{BOCHNERF}) again but with $\psi=V$ we obtain 
\be
\Delta_{f} |\nabla V|^{2}\geq 2|\nabla V|^{4}
\ee
and hence (\ref{DEC2}) by Lemma \ref{LEMMAME}. 
\end{proof}

The next proposition describes in simple form the asymptotic topology of data sets $(\qM;q,U,V)$ when $\Omega=0$. Observe however that we require $\partial \qM$ compact and of course $(\qM;q)$ metrically complete. 

\begin{Proposition}\label{SUPO} Let $(\qM;q,U,V)$ be a metrically complete (reduced) static data set with $\Omega=0$, $\qM$ non-compact and $\partial \qM$ compact. Then there is a set $K$ with compact closure, such that 
\be
\qM=K\cup\big(\cup_{i=1}^{i=n} E_{i}\big)
\ee
where every $E_{i}$ is diffeomorphic to $[0,\infty)\times \Sa$.
\end{Proposition}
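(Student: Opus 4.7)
The plan is to exploit the non-negativity and quadratic decay of the Gaussian curvature given by Proposition \ref{REDCUR}: under $\Omega=0$, equation (\ref{KAPPAF}) yields $\gcur=|\nabla U|^2$ with $0\le\gcur\le\eta/\rho^2$, where $\rho(p):=\dist_q(p,\partial\qM)$. Metric completeness of $q$ together with compactness of $\partial\qM$ makes $\rho$ proper, so $\overline{B_q(\partial\qM,r)}$ is compact for every $r\ge 0$; by Sard applied to a smooth approximation of $\rho$, for a.e.\ $r>0$ the level set $S_r:=\rho^{-1}(r)$ is a smooth compact one-submanifold, i.e.\ a finite disjoint union of embedded circles.

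First I would show that, for every $k>1$ and every $R$ sufficiently large, each connected component of $\mathcal{A}_q(R,kR):=\rho^{-1}((R,kR))$ is diffeomorphic to $[R,kR]\times\Sa$. Rescaling by $q_R:=q/R^2$ turns the curvature bound into $\gcur_{q_R}\le\eta$ and, by (\ref{DEC2}), $|\nabla V|_{q_R}\le\sqrt{\eta}$, uniformly in $R$. The collapse-with-bounded-curvature framework of subsection \ref{SFCCRM} then applies: along any sequence $R_i\to\infty$, some subsequence of the connected components of $\mathcal{A}_q(R_i,kR_i)$ converges either smoothly (up to diffeomorphism) to a flat two-dimensional annulus, or by metric collapse to the interval $[1,k]$ with a finite cover converging to a flat $\Sa$-invariant cylinder $[1,k]\times\Sa$. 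Either outcome identifies the component as a topological cylinder $[R,kR]\times\Sa$ for $R$ large.

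Next I would assemble these cylindrical annuli into ends. Choose $R_0$ generic and large enough that the first step applies for all $R\ge R_0$. Let $K$ be the union of $\overline{B_q(\partial\qM,R_0)}$ with every bounded connected component of its complement; by properness of $\rho$, $K$ is compact. The unbounded connected components of $\qM\setminus K$ number at most the (finite) number of circles in $S_{R_0}$; call them $E_1,\ldots,E_n$. For each $i$ and $j\ge 0$, Step 1 applied to $\mathcal{A}_q(k^jR_0,k^{j+1}R_0)\cap E_i$ identifies each connected component as a cylinder joining exactly one circle of $S_{k^jR_0}$ to exactly one circle of $S_{k^{j+1}R_0}$, so successive boundary circles are matched in bijection under concatenation. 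Since $E_i$ is connected but naturally decomposes into infinite chains of cylinders, one chain per boundary circle, there must be a single circle at every level, forcing $E_i\cong[0,\infty)\times\Sa$ and completing the decomposition.

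The main obstacle is the topological classification in Step 1, ruling out non-cylindrical topology (higher genus or extra boundary circles) of a rescaled annular component $C$. After rescaling, both the smooth and the collapsed limits are flat two-dimensional, so on compact subsets of the limit $\gcur_{q_R}\to 0$ and $\int_C\gcur_{q_R}\,dA\to 0$; moreover, the boundary geodesic-curvature integrals $\int_{\partial C}k_g\,ds$ tend to zero, since $\partial C$ is a level set of $\rho$ whose geodesic curvature is governed by the Riccati equation along $\rho$-geodesics driven by the vanishing $\gcur$. Gauss--Bonnet then forces $\chi(C)=0$, and orientability of $\qM$ (inherited from the orientability of the ambient three-dimensional static data) leaves the cylinder as the only possibility.
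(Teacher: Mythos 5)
Your overall strategy (rescale annuli, invoke bounded-curvature convergence/collapse, assemble cylinders) is close in spirit to the paper's, but the step you yourself flag as the main obstacle --- identifying each annular component as a cylinder --- is where the argument breaks. First, the claim that the boundary integrals $\int_{\partial C}k_{g}\,ds$ tend to zero is false precisely in the non-collapsed case: if the end is asymptotic to the flat cone $dr^{2}+4\mu^{2}r^{2}d\varphi^{2}$ with $\mu>0$, the level circle at distance $R$ has length $\sim 4\pi\mu R$ and geodesic curvature $\sim 1/R$, so its total geodesic curvature converges to $\pm 4\pi\mu\neq 0$ (the two boundary contributions cancel in the sum, but neither vanishes); your Riccati heuristic only bounds $k_{g}$ by $O(1/r)$, not $o(1/r)$, and in any case the level sets of $\dist(\cdot,\partial\qM)$ need not be smooth near the cut locus, while replacing them by level sets of a smooth approximation destroys the Riccati control. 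So $\chi(C)=0$ is not established. Second, the convergence/collapse dichotomy is stated in subsection \ref{SFCCRM} only for annuli about points on a ray (the paper explicitly warns that the cover structure can fail otherwise), yet you apply it to every connected component of $\mathcal{A}_{q}(R,kR)$, including components meeting no ray; and in the non-collapsed alternative, ``smooth convergence to a flat two-dimensional annulus'' presupposes the topology you are trying to determine, since Cheeger--Gromov convergence produces a flat limit diffeomorphic to the component, whatever that is, and compact flat surfaces with boundary exist for any $\chi\leq 1$. Finally, the assembly step assumes each cylindrical component has exactly one boundary circle on each of the two level sets, which is asserted but not argued.

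The paper avoids all of this by working globally on each end: finiteness of the number of ends follows from Liu's ball-covering property (valid since $\gcur\geq 0$), and then the Bishop--Gromov monotonicity $A(B(\partial E,r))/r^{2}\searrow\mu$ gives a clean dichotomy. If $\mu=0$, the annuli along a ray collapse in area with bounded curvature and the end is an infinite concatenation of thin cylinders; if $\mu>0$, nonnegativity and quadratic decay of $\gcur$ force the whole end to be asymptotic to a flat cone, which settles the topology with no boundary-term bookkeeping. To salvage your route you would need either that global per-end dichotomy, or an honest control of the Euler characteristic of annular regions (e.g.\ Gauss--Bonnet applied between $\partial\qM$ and a far level set together with $\int\gcur\,dA<\infty$), rather than the pointwise vanishing of boundary terms.
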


\begin{proof} First we observe that as $\kappa\geq 0$, the ball covering property holds (indeed regardless of whether $\Omega=0$ or not). Hence, $\mathcal{S}$ has a finite number of ends. In particular we can write $\mathcal{S}$ as the union of a set with compact closure and a finite number of surfaces $E_{i}$, $i=1,\ldots,i_{S}$, each with compact boundary and containing only one end. 

It is sufficient to work with the surfaces $E_{i}$, that we denote generically as $E$. By Bishop-Gromov we have $\frac{A(B(\partial E,r))}{r^{2}}\searrow\ \mu$. The analysis depends on whether $\mu=0$ or $\mu>0$.

{\it Case $\mu=0$.} Let $\gamma$ be a ray from $\partial E$ and let $p_{i}\in \gamma$ with $r(p_{i})=r_{i}=2^{i}$, for $i=0,1,2,\ldots$. If $\mu=0$, then the sequence of annuli $(\mathcal{A}^{c}_{r_i}(p_{i},1/4,4);q_{r_i})$ collapses in volume (in area) with bounded curvature. As we have explained earlier, this type of collapse is only through thin (finite) cylinders. Thus, (outside a compact set) $E$ is formed by an infinite concatenation of finite cylinders, (i.e. each diffeomorphic to $[0,1]\times {\rm S}^{1}$). 

{\it Case $\mu>0$.} As $\gcur\geq 0$ and $\gcur$ has quadratic decay, if $\mu>0$ then $(E;q)$ is asymptotic to a flat cone $(\mathcal{C};q_{\mu})$ where
\be\label{FLATANN}
\mathcal{C}:=\mathbb{R}^{2}\setminus \{(0,0)\},\qquad q_{\mu}=dr^{2}+4\mu^{2}r^{2}d\varphi^{2}
\ee 
($r$ is the radius and $\varphi$ is the polar angle in $\mathbb{R}^{2}$). It then follows that, outside a compact set of compact closure, $E$ is diffeomorphic to $[0,\infty)\times \Sa$ as wished.
\end{proof}

\subsubsection{Case $\Omega\neq 0$}

The following lemma is the analogous to Lemma \ref{REDCUR} in the case $\Omega=0$. Note however that, contrary to the case $\Omega=0$, we assume that $\partial \qM$ is compact. We do not know if this condition can be removed or not.

\begin{Lemma}\label{SUPO2} Let $(\qM;q,U,V)$ be a metrically complete (reduced) static data set with $\Omega\neq 0$, $\qM$ non-compact and $\partial \qM$ compact. Then, 
\begin{align}
\label{ONZD3} |\nabla U|^{2}(p)\leq \frac{\eta}{\dist^{2}(p,\partial S)}, \qquad |\nabla V|^{2}(p)\leq \frac{\eta}{\dist^{2}(p,\partial S)},
\end{align}
and,
\be
\kappa(p)\leq \frac{\eta}{\dist^{2}(p,\partial S)}
\ee
where $\eta>0$ is independent on the data. In particular
\be
\Lambda^{2}(p)\geq \eta'\Omega\, \dist(p,\partial S)
\ee
where $\eta'>0$ is also independent on the data.
\end{Lemma}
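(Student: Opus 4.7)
The bound $|\nabla U|^{2}\leq\eta/d^{2}$ is already established by Proposition \ref{REDCUR}: the Bakry--\'Emery argument there (applied with $f=-V$, $\alpha=1$ and $\psi=U$) makes no use of $\Omega=0$. By the identity (\ref{KAPPAF}), namely $\gcur=\tfrac{3}{4}\Omega^{2}e^{-4V}+|\nabla U|^{2}$, the upper bound on $\gcur$ is equivalent, modulo the already-known decay of $|\nabla U|^{2}$, to the lower bound $\Lambda^{2}\geq\eta'\Omega\,\dist(p,\partial\qM)$. Moreover, once the latter is in hand, equation (\ref{ES2}) rewrites as $\Delta V+|\nabla V|^{2}=\tfrac{1}{2}\Omega^{2}e^{-4V}\leq C/d^{2}$, and a Bochner/Cheng--Yau type estimate (available since $\gcur\geq 0$) would yield the decay of $|\nabla V|^{2}$. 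Hence the heart of the proof is the lower bound on $\Lambda^{2}$.

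I would prove this lower bound by contradiction, using the cigar singularity model. Suppose it fails uniformly: there exist metrically complete reduced data sets $(S_{i};q_{i},V_{i},U_{i})$ sharing a common value $\Omega\neq 0$ (arranged via the scaling invariance (\ref{TRANSSCA})) and points $p_{i}\in S_{i}$ with $d_{i}:=\dist_{q_{i}}(p_{i},\partial S_{i})\to\infty$ and $\Lambda_{i}^{2}(p_{i})/d_{i}\to 0$. Apply the $\Omega$-preserving scaling $(\lambda_{i},\nu_{i},0)=(1/d_{i},1/d_{i},0)$; in the rescaled data $(S_{i};\bar q_{i},\bar V_{i},\bar U_{i})$ we have $\dist_{\bar q_{i}}(p_{i},\partial S_{i})=1$, $\bar\Lambda_{i}^{2}(p_{i})=\Lambda_{i}^{2}(p_{i})/d_{i}\to 0$ and hence $\bar V_{i}(p_{i})\to-\infty$. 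By (\ref{KAPPAF}) this forces the scaled Gaussian curvature $\bar\gcur_{i}(p_{i})\to\infty$, so Corollary \ref{SINGMODELUNDER} applies with $d_{0}=1$: for any fixed $d_{1}\in(0,1)$ and $i$ large, a neighbourhood $\mathcal{B}_{i}$ of $B_{\bar q_{i}}(p_{i},d_{1})$ is diffeomorphic to a disc and metrically collapses to a segment of length $d_{1}$.

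To extract the contradiction I would then combine this collapse with Case II of Lemma \ref{LOCMOD}: the $\Omega$-preserving further rescaling that normalises $\bar\gcur_{i}(p_{i})$ to $1$ produces, by the uniqueness Proposition \ref{LLPP}, a pointed $C^{\infty}$ limit equal to a cigar with the fixed value of $\Omega$. The explicit form (\ref{CIGAR}) shows that in such a cigar $\Lambda^{2}$ attains its minimum $(3\Omega^{2}/4)^{1/2}$ at the tip and grows quadratically outward; moreover $\Lambda$ is subharmonic on $\mathcal{B}_{i}$ because $\Delta\Lambda=\Omega^{2}/(2\Lambda^{3})>0$. Concatenating the cigar at $p_{i}$ with the cylindrical models of Lemma \ref{66} along the minimising geodesic from $p_{i}$ to $\partial S_{i}$, and using the compactness of $\partial S_{i}$ to anchor the scaling, should force $\bar\Lambda_{i}(p_{i})$ to stay bounded below by a universal positive constant, contradicting $\bar\Lambda_{i}^{2}(p_{i})\to 0$. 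Unscaling then gives the claimed bound $\Lambda^{2}(p)\geq\eta'\Omega\,\dist(p,\partial S)$, and the remaining bounds follow as indicated above.

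The main obstacle will be this last step: making quantitative how the collapse of the disc $\mathcal{B}_{i}$, the subharmonicity of $\Lambda$, and the compactness of $\partial S_{i}$ combine to produce a universal positive lower bound on $\bar\Lambda_{i}(p_{i})$. A careful bookkeeping of the composition of the two scalings permitted by (\ref{TRANSSCA}), together with the explicit cigar geometry (\ref{CIGAR}) and its asymptotic $\Lambda^{2}\sim\Omega^{2}r^{2}/(4e^{2V_{0}})$, should yield a universal constant $\eta'$ depending only on the normalisation and close the argument.
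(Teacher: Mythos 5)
Your reduction is sound as far as it goes: the $|\nabla U|^{2}$ decay is indeed already contained in Proposition \ref{REDCUR}, and via (\ref{KAPPAF}) the curvature bound and the lower bound on $\Lambda^{2}$ are equivalent. Your blow-up setup (rescale by the $\Omega$-preserving scaling so that the bad point sits at unit distance from $\partial S$, observe that $\bar\kappa_{i}(p_{i})\to\infty$, and invoke Corollary \ref{SINGMODELUNDER} to produce a disc $\mathcal{B}_{i}$ collapsing to a segment) also matches the paper. The gap is in how you close the contradiction. The cigar only appears after a \emph{second} rescaling that normalises $\bar\kappa_{i}(p_{i})$; the factor of that rescaling is precisely the blow-up rate, so the fact that the limit cigar has a definite tip value of $\Lambda$ determined by $\Omega$ says nothing about the once-rescaled quantity $\bar\Lambda_{i}(p_{i})$ --- it is perfectly consistent with $\bar\Lambda_{i}(p_{i})\to 0$. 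Nor does subharmonicity of $\Lambda$ (from (\ref{ES2}), $\Delta\Lambda=\Omega^{2}/(2\Lambda^{3})>0$) anchored at $\partial S_{i}$ help: subharmonicity gives a \emph{minimum} principle, and after your rescaling $\bar\Lambda_{i}^{2}=\Lambda^{2}/d_{i}$ tends to zero on $\partial S_{i}$ as well, while across a sequence of distinct data sets the boundary values are not universally controlled in any case. So the step you flag as ``the main obstacle'' is not a matter of bookkeeping; the mechanism you propose cannot produce the bound.

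The paper's contradiction is topological rather than quantitative. It places the bad points $x_{j_{i}}$ inside annuli $\mathcal{A}^{c}_{r_{j_{i}}}(p_{j_{i}};1/8,8)$ centred on a ray $\gamma$, so that each $x_{j_{i}}$ is joined to infinity by a curve $\hat\alpha_{i}$ (a path inside the annulus concatenated with the tail of $\gamma$) which never enters $B(\partial S,1/8)$ in the rescaled metric. Corollary \ref{SINGMODELUNDER} then yields a disc $\mathcal{B}_{i}$ containing $B_{i}(x_{j_{i}},d_{1})$ that wraps around the minimising geodesic from $x_{j_{i}}$ to $\partial S$, and whose single, small boundary circle lies inside $B_{i}(\partial S,1/8)$ for large $i$; any curve from $x_{j_{i}}$ to infinity must exit $\mathcal{B}_{i}$ through that boundary, contradicting the existence of $\hat\alpha_{i}$. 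Note also that the same argument delivers the decay of $|\nabla V|^{2}$ directly, since the hypothesis of Corollary \ref{SINGMODELUNDER} admits $|\nabla V_{i}|(p_{i})\to\infty$ in place of $\kappa_{i}(p_{i})\to\infty$; the paper does not route $|\nabla V|^{2}$ through a Cheng--Yau type estimate, and indeed remarks that the Bochner/Bakry--\'Emery technique does not fully apply when $\Omega\neq 0$, so that part of your plan would need independent justification. Finally, the paper requires one further step you omit: the contradiction only gives decay along the annuli on the ray, and a separate dichotomy (collapse to thin cylinders versus convergence to a flat annulus) is needed to propagate it to the whole end.
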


\begin{proof} The proof requires using Corollary \ref{SINGMODELUNDER}. Without loss of generality assume that $S$ is an end. Let $\gamma$ be a ray from $\partial S$. For every $j\geq 0$ let $r_{j}=2^{2j}$ and let $p_{j}\in \gamma$ be such that $\dist(p_{j},\partial S)=r_{j}$. 

The first goal will be to prove that $\kappa$ and $|\nabla V|^{2}$ decay quadratically along the union of annuli $\cup_{j\geq 0} \mathcal{A}^{c}_{r_{j}}(p_{j};1/8,8)$. We will prove later that this union covers $\gamma$ except for a finite segment of it (a priori that may not be the case). 

Let $x_{j_{i}}$ be any sequence of points such that $x_{j_{i}}\in \mathcal{A}^{c}_{r_{j_{i}}}(p_{j_{i}};1/8,8)$ for every $i\geq 0$. Each $x_{j_{i}}$ can be joined to $p_{j_{i}}$ through a continuous curve $\alpha_{i}$ entirely inside the annulus $\mathcal{A}^{c}_{r_{j_{i}}}(p_{j_{i}};1/8,8)$. Concatenating $\alpha_{i}$ with the part of $\gamma$ extending from $p_{j_{i}}$ to infinity, we obtain a curve, say $\hat{\alpha}_{i}$, extending from $x_{j_{i}}$ to infinity, and never entering the ball $B(\partial S,1/8)$, namely, keeping at a $q_{r_{j}}$-distance of $1/8$ from $\partial S$. We will use the existence of this curve below to reach a contradiction.

Suppose now that either, 
\be\label{DIVERGENCEP}
\kappa(x_{j_{i}})\dist^{2}(x_{j_{i}},\partial S)\rightarrow \infty,\quad {\rm or}\quad |\nabla V|^{2}(x_{j_{i}})\dist^{2}(x_{j_{i}},\partial S)\rightarrow \infty
\ee
We perform a sequence of scalings $(\lambda_{i},\nu_{i},\mu_{i})=(r_{j_{i}},r_{j_{i}},0)$ leading to the new fields,
\be
q\rightarrow q_{i}=\frac{1}{r_{j_{i}}^{2}}q,\quad V\rightarrow V_{i}=V+\frac{1}{2}\ln r_{j_{i}},\quad U\rightarrow U_{i}=U
\ee
With this scaling we obtain then a sequence of reduced data $(S;q_{i},V_{i},U_{i})$ all having the same $\Omega$ (recall $\Omega\rightarrow \Omega_{i}=(\nu_{i}/\lambda_{i})\Omega=\Omega$). At the same time we have $1/8\leq \dist_{i}(x_{j_{i}},\partial S)\leq 8$. Because of this, we can rewrite (\ref{DIVERGENCEP}) as,
\be
\kappa_{i}(x_{j_{i}})\rightarrow \infty,\quad {\rm or}\quad |\nabla V_{i}|_{i}^{2}(x_{j_{i}})\rightarrow \infty,
\ee
(where $\kappa_{i}=\kappa_{q_{i}}$ and $|\nabla V_{i}|=|\nabla V_{i}|_{q_{i}}$). Taking a subsequence if necessary we can assume that $\dist_{i}(x_{j_{i}},\partial S)\rightarrow d_{*}$ (where $d_{i}=d_{q_{i}}$). 

We are clearly in the hypothesis of Corollary \ref{SINGMODELUNDER}. Choosing $d_{1}$ (see the hypothesis of Corollary \ref{SINGMODELUNDER}) as $d_{1}=d_{*}+(d_{*}-1/8)/2$, we conclude that there is a sequence of neighbourhoods $\mathcal{B}_{i}$ containing $B_{i}(x_{j_{i}},d_{1})$ such that $(\mathcal{B}_{i};q_{i})$ metrically collapses to a segment of length $d_{1}$ (where $B_{i}=B_{q_{i}}$). The neighbourhood $\mathcal{B}_{i}$ essentially wraps around a geodesic $\beta_{i}$ joining $x_{j_{i}}$ and $\partial S$ and minimising the distance between them, and `covering' the part of it at a distance less or equal than $d_{1}$ from $x_{j_{i}}$. Hence, for $i$ large enough, the boundary of the $\mathcal{B}_{i}$ is inside the ball $B_{i}(\partial S,1/8)$. Therefore for $i$ large enough, the curve $\hat{\alpha}_{i}$ must enter $B_{i}(\partial S,1/8)$ before going to infinity. We reach thus a contradiction.

We have then that for each $j$, the scaled curvature $\kappa_{r_{j}}$ is bounded on each of the annuli $\mathcal{A}^{c}_{r_{j}}(p_{j};1/8,8)$. Consider the areas $A_{r_{j}}$ of the annuli $\mathcal{A}^{c}_{r_{j}}(p_{j};1/8,8)$ with respect to $q_{r_{j}}$. If $A_{r_{j}}$ tend to zero then the annuli $(\mathcal{A}^{c}_{r_{j}}(p_{j};1/8,8),q_{r_{j}})$ collapse with bounded curvature and thus become thiner and thiner finite cylinders. The end $S$ is then (except for a set of compact closure) a concatenation of the annuli $\mathcal{A}^{c}_{r_{j}}(p_{j};1/8,8)$ and the quadratic curvature decay in the whole end follows as well as the quadratic decay of $|\nabla V|^{2}$ follows. If instead a sequence $A_{r_{j_{i}}}$ of the areas is bounded below away from zero then, due to the Bishop-Gromov monotonicity $A(B(\partial S,r))/r^{2}\searrow$ and the curvature bound, the geometry of the annuli $(\mathcal{A}^{c}_{r_{j}}(p_{j};1/8,8); q_{r_{j}})$ becomes more and more that of a flat annulus. Once a piece sufficiently close to a flat annulus forms then the whole end must be asymptotic to a flat annulus (for a detailed proof in dimension three see \cite{MR3233267}). Again, the quadratic decay of $\kappa$ and $|\nabla V|^{2}$ on the whole end follows.
\end{proof}

The following version of Proposition \ref{SUPO} but when $\Omega\neq 0$ is now straight forward after Proposition \ref{SUPO2} and the proof of Proposition \ref{SUPO} itself.

\begin{Proposition}\label{SUPO3} Let $(\qM;q,U,V)$ be a metrically complete (reduced) static data set with $\Omega\neq0$, $\qM$ non-compact and $\partial \qM$ compact. Then there is a set $K$ with compact closure, such that 
\be
\qM=K\cup\big(\cup_{i=1}^{i=n} E_{i}\big)
\ee
where every end $E_{i}$ is diffeomorphic to $[0,\infty)\times \Sa$.
\end{Proposition}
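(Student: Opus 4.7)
The plan is to mirror the proof of Proposition \ref{SUPO}, replacing the use of Proposition \ref{REDCUR} by the just-established Proposition \ref{SUPO2}, which supplies the required quadratic decay of $\kappa$ and $|\nabla V|^{2}$ in the $\Omega\neq 0$ setting.

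First I would record that, by (\ref{KAPPAF}), the Gaussian curvature $\kappa$ is non-negative regardless of whether $\Omega$ vanishes. Hence Liu's ball-covering property applies to $(\qM; q)$ (metrically complete with $\partial\qM$ compact), and this immediately forces $\qM$ to have only finitely many ends. Therefore one can write $\qM = K\cup \big(\cup_{i=1}^{n} E_{i}\big)$ where $K$ has compact closure and each $E_{i}$ has compact boundary and only one end. It then suffices to show that any such one-ended surface $E$ is diffeomorphic to $[0,\infty)\times \Sa$.

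Next, for a fixed end $E$ I would use the Bishop--Gromov monotonicity $\area(B(\partial E,r))/r^{2}\searrow \mu$ and split on whether $\mu=0$ or $\mu>0$, exactly as in the proof of Proposition \ref{SUPO}. Pick a ray $\gamma$ from $\partial E$ and points $p_{j}\in\gamma$ with $\dist(p_{j},\partial E)=r_{j}=2^{2j}$. By Proposition \ref{SUPO2}, the rescaled curvature $\kappa_{r_{j}}$ and the rescaled gradient $|\nabla V|_{r_{j}}^{2}$ are uniformly bounded on each connected annulus $\mathcal{A}^{c}_{r_{j}}(p_{j};1/8,8)$. If $\mu=0$, then the areas $A_{r_{j}}$ of these annuli tend to zero; bounded curvature together with volume collapse forces the annuli $(\mathcal{A}^{c}_{r_{j}}(p_{j};1/8,8); q_{r_{j}})$ to collapse to a segment, i.e.\ to become arbitrarily thin finite cylinders. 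Concatenating the annuli yields that the end is (outside a compact set) diffeomorphic to $[0,\infty)\times \Sa$. If instead $\mu>0$, the combination of the curvature bound with Bishop--Gromov shows that the annuli are close to a flat annulus; once one piece is sufficiently close to a flat annulus the whole end is asymptotic to the flat cone (\ref{FLATANN}), which is again diffeomorphic to $[0,\infty)\times \Sa$.

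The two dichotomous cases together establish the topological conclusion. The arguments are in essence the same two bullets as in the proof of Proposition \ref{SUPO}, the only real input specific to $\Omega\neq 0$ being the quadratic decay of $\kappa$ and $|\nabla V|^{2}$ provided by Proposition \ref{SUPO2}; in particular, the delicate part of the story (ruling out sequences of points along which the scale-invariant quantities $\kappa \dist^{2}$ or $|\nabla V|^{2}\dist^{2}$ blow up, which was done via Corollary \ref{SINGMODELUNDER}) has already been absorbed into that preceding proposition. Hence no new obstacle arises here, and the argument is essentially a bookkeeping exercise on top of Proposition \ref{SUPO2}.
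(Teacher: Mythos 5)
Your proposal is correct and matches the paper's intent exactly: the paper states that Proposition \ref{SUPO3} is "straightforward after Proposition \ref{SUPO2} and the proof of Proposition \ref{SUPO} itself," which is precisely your strategy of rerunning the $\mu=0$ / $\mu>0$ dichotomy of Proposition \ref{SUPO} with the quadratic decay now supplied by Proposition \ref{SUPO2}. Indeed, the collapse-to-cylinders and asymptotically-flat-annulus alternatives are already established within the proof of Proposition \ref{SUPO2}, so your argument is, as you say, a bookkeeping exercise consistent with the paper.
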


Taking into account the previous proposition we say that $(E;q,U,V)$ is a (reduced) static ends if $E\sim [0,\infty)\times \Sa$ and $(E;q)$ is metrically complete.  
\vs

From the description of the asymptotic geometry of (reduced) static ends $(E;q,U,V)$, $(E\sim [0,\infty)\times \Sa$), we can easily find a simple end cut $\{\ell_{j};j=1,2,\ldots\}$. Each $\ell_{j}$ is of course isotopic to $\partial E$ and embedded in $\mathcal{A}(2^{1+2j},2^{2+2j})$. Let us be a bit more precise. Let $r_{j}=2^{1+2j}$ and as usual let $q_{r_{j}}=q/r_{j}^{2}$. If $\mu=0$ then the annuli $(\mathcal{A}_{r_{j}}(1,2);q_{r_{j}})$ metrically collapse to the segment $[1,2]$ and therefore the loops $\ell_{j}$ can be chosen to have $q_{r_{j}}$-length tending to zero. If instead $\mu>0$ then the loops can be chosen to converge to the radial circle $\{x=3/2\}$ as the annuli $(\mathcal{A}_{r_{j}}(1,2);q_{r_{j}})$ converge to the annulus $([1,2]\times \Sa;dx^{2}+4\mu^{2}x^{2}d\varphi^{2})$ as explained earlier.   

Let $\Sigma$ be the three-manifold whose quotient by the $\Sa$-Killing field is $E$. Let $\pi:\Sigma\rightarrow E$ be the projection. The tori $S_{j}:=\pi^{-1}(\ell_{j})$ form obviously a simple cut of $(\Sigma;\hg)$. Let us state this in a proposition that will be recalled later.

\begin{Proposition}\label{SIMPLECUTUS} Let $(\Sigma;\hg,U)$ be a free $\Sa$-symmetric metrically complete static data set such that the reduced data set $(E;q,U,V)$ is a reduced end. Then, $\Sigma$ and $E$ admit simple cuts.
\end{Proposition}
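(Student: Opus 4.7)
The plan is to make rigorous the construction sketched in the paragraph immediately preceding the statement, by producing loops in the reduced end and lifting them. First, I would apply Propositions \ref{SUPO} and \ref{SUPO3} together with their corollaries on the asymptotic shape of reduced ends. Setting $r_j = 2^{1+2j}$, for all sufficiently large $j$ the rescaled annulus $(\mathcal{A}_{q_{r_j}}(1,2); q_{r_j})$ either metrically collapses to the segment $[1,2]$ (when the Bishop--Gromov limit $\mu$ vanishes) or converges smoothly to a flat two-dimensional annulus (when $\mu > 0$); in either regime I can select a regular value near $2^{(3/2)+2j}$ of a smoothed version of $\dist_q(\cdot, \partial \qM)$ so as to obtain a single embedded circle $\ell_j \subset \mathcal{A}_{q}(2^{1+2j}, 2^{2+2j})$ isotopic to $\partial \qM$.

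Next, I would lift each loop to $S_j := \pi^{-1}(\ell_j) \subset \Sigma$, where $\pi:\Sigma \to \qM$ is the quotient by the $\Sa$-action. Since the action is free and $\ell_j$ is a smooth embedded circle, $S_j$ is a smooth embedded compact surface, in fact a two-torus because an orientable $\Sa$-bundle over $\Sa$ is trivial. The identity $\dist_{\hg}(p, \partial \Sigma) = \dist_q(\pi(p), \partial \qM)$, which follows from $\pi:(\Sigma;\hg)\to (\qM;q)$ being a Riemannian submersion with horizontal distribution $\xi^{\perp}$ (together with $\partial \Sigma = \pi^{-1}(\partial \qM)$), guarantees $S_j \subset \mathcal{A}_{\hg}(2^{1+2j}, 2^{2+2j})$.

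For the simple-end-cut property, I would note that $\ell_j$ separates $\qM$ into a compact region adjacent to $\partial \qM$ and an unbounded region, and since $\pi$ has compact connected fibers, $S_j$ separates $\Sigma$ correspondingly with $\partial \Sigma$ on the bounded side. Removing the entire family $\{S_j\}$ from $\Sigma$ leaves every component of $\partial \Sigma$ in the bounded component enclosed by $S_{j_0}$, while removing all but the innermost cut places $\partial \Sigma$ in an unbounded component; since only one surface appears at each index $j$, the family $\{S_j\}$ (and correspondingly $\{\ell_j\}$ in $\qM$) is a simple end cut.

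The main subtlety is the collapsed case $\mu = 0$, where one must certify that a smooth loop isotopic to $\partial \qM$ can still be placed inside the prescribed annulus despite the near-one-dimensional geometry. The required regularity is supplied by Anderson's curvature bounds applied in the harmonic presentation upstairs in $\Sigma$ (and transferred downstairs to $\qM$ via the Riemannian submersion), which control the smooth geometry of the rescaled pieces and thus the regular-value structure of the distance function, even as the annulus metrically collapses to a segment.
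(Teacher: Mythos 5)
Your proposal is correct and follows essentially the same route as the paper: the paper's justification is precisely the paragraph preceding the statement, which constructs the loops $\ell_{j}$ in the annuli $\mathcal{A}(2^{1+2j},2^{2+2j})$ of the reduced end (collapsing to points when $\mu=0$, converging to radial circles when $\mu>0$) and then takes $S_{j}=\pi^{-1}(\ell_{j})$ as the tori of the simple cut of $\Sigma$. Your additional remarks on the Riemannian-submersion distance identity and the triviality of the orientable $\Sa$-bundle over $\ell_{j}$ merely make explicit what the paper treats as obvious.
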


The next proposition shows that $U$ tends uniformly to a constant $U_{\infty}$, on any (reduced) static end $(E;q,U,V)$. The constant $U_{\infty}$ satisfies $-\infty\leq U_{\infty}\leq \infty$. The proposition will be used in Section \ref{FTKASS}.

\begin{Proposition}\label{LPRO} Let $(E; \hqg,U,V)$ be a metrically complete reduced static end. Then, $U\rightarrow U_{\infty}$ where the arrow signifies uniform convergence and the constant $U_{\infty}$ satisfies $-\infty\leq U_{\infty}\leq \infty$.
\end{Proposition}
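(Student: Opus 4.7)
The plan is to combine the maximum principle for $U$ with a rescaling/compactness argument on annuli.

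Equation (\ref{ES3}) rewrites as $\mathrm{div}(e^{-V}\nabla U)=0$, so $U$ obeys the maximum principle on $E$. Setting
\[ M(R):=\sup\{U(p):\dist(p,\partial E)\ge R\},\qquad m(R):=\inf\{U(p):\dist(p,\partial E)\ge R\},\]
and applying the max principle to the sub-regions $\{R\le \dist(\cdot,\partial E)\le R'\}$ with $R'\to\infty$, one obtains
\[M(R)=\max\bigl(\sup_{\{\dist=R\}}U,\,L_+\bigr),\qquad m(R)=\min\bigl(\inf_{\{\dist=R\}}U,\,L_-\bigr),\]
where $L_+:=\limsup_{\dist(p)\to\infty}U(p)$ and $L_-:=\liminf_{\dist(p)\to\infty}U(p)$ in $[-\infty,\infty]$. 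Hence $M$ is non-increasing with $M\searrow L_+$, $m$ is non-decreasing with $m\nearrow L_-$, and the proposition reduces to showing $L_+=L_-$.

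The key step is that for every fixed $N>1$,
\[ \sup_{\mathcal{A}(R,NR)}U-\inf_{\mathcal{A}(R,NR)}U\longrightarrow 0\qquad (R\to\infty),\]
which via the max principle yields $\sup_{\{\dist=R\}}U-\inf_{\{\dist=R\}}U\to 0$. I would split according to the Bishop-Gromov density $\mu:=\lim_{r\to\infty}A(B(\partial E,r))/r^{2}$. If $\mu=0$, the analysis in the proof of Propositions \ref{SUPO}/\ref{SUPO3} shows $\mathrm{diam}\{\dist=R\}=o(R)$, which combined with $|\nabla U|^{2}\le\eta/R^{2}$ from Proposition \ref{REDCUR} and Lemma \ref{SUPO2} immediately gives $\sup_{\{\dist=R\}}U-\inf_{\{\dist=R\}}U\to 0$. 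If $\mu>0$, the end is asymptotic to a flat cone $(\mathcal{C},q_\mu)$, so the rescaled annulus $(\mathcal{A}_{R}(1,N);q_{R})$ converges in $C^\infty$ to the flat conical annulus $([1,N]\times\Sa;q_\mu)$; picking a basepoint $p_R\in\mathcal{A}(R,2R)$ and applying the scaling (\ref{TRANSSCA}) with $(\lambda,\nu,\mu_0)=(R,\Lambda(p_R)^{-2},-U(p_R))$, the bound $\Lambda^{2}\ge\eta'\Omega\,\dist$ from Lemma \ref{SUPO2} (trivial when $\Omega=0$) forces $\Omega_R=(\nu/\lambda)\Omega=O(R^{-2})\to 0$, while Anderson's estimates from Part I together with Lemma \ref{SUPO2} and Proposition \ref{MAXMINU} make the normalised $V_R=V-V(p_R)$ and $U_R=U-U(p_R)$ uniformly $C^k$-bounded on $\mathcal{A}_R(1,N)$ for every $k$. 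A subsequential $C^\infty$-limit then produces a reduced static data set on $([1,N]\times\Sa;q_\mu)$ with $\Omega_\infty=0$, and equation (\ref{KAPPAF}) at the limit reads $|\nabla U_\infty|^{2}=\kappa_\infty=0$, forcing $U_\infty$ constant; the normalisation $U_R(p_R)=0$ identifies this constant as $0$, so $U-U(p_R)\to 0$ uniformly on $\mathcal{A}(R,NR)$ by uniqueness of subsequential limits.

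Combining both cases, $\sup_{\mathcal{A}(R,NR)}U-\inf_{\mathcal{A}(R,NR)}U\to 0$ for every fixed $N$; together with the monotone convergence of $M,m$ this forces $L_+=L_-=:U_\infty$ and hence the claimed uniform convergence $U\to U_\infty\in[-\infty,\infty]$. The main obstacle I expect is the $\mu>0$ case: the scaling (\ref{TRANSSCA}) must be chosen so that $V_R$ as well as $U_R$ remain $C^k$-bounded across the whole annulus $\mathcal{A}_R(1,N)$ (balancing the gradient bounds of Lemma \ref{SUPO2} with the Harnack-type oscillation bound of Proposition \ref{MAXMINU}), and one must verify that the reduced equations pass to the $C^\infty$ limit so that (\ref{KAPPAF}) forces $U_\infty$ constant; a secondary subtle point is that upgrading the annular oscillation estimate to $M(R)-m(R)\to 0$ requires $N$ to be allowed to grow with $R$ at the rate afforded by the $C^\infty$-compactness, in order to rule out slow oscillations of $\sup_{\{\dist=R\}}U$ between $L_+$ and $L_-$.
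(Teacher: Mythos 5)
Your central claim --- that for every fixed $N>1$ the full annular oscillation $\sup_{\mathcal{A}(R,NR)}U-\inf_{\mathcal{A}(R,NR)}U$ tends to $0$ --- is false, and it is exactly the case you treat first ($\mu=0$) that kills it. Take the reduced Kasner end of subsection \ref{KSOL}: $E=[1,\infty)\times\Sa$, $q=dx^{2}+x^{2a}d\varphi^{2}$, $U=U_{1}+c\ln x$ with $c\neq 0$ (so $a\in(0,1)$ and $\mu=0$). This is a metrically complete reduced static end, yet the oscillation of $U$ over $\mathcal{A}(R,NR)$ is $\approx|c|\ln N$ for all $R$, and $U\rightarrow\pm\infty$. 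The quadratic gradient decay $|\nabla U|\leq\sqrt{\eta}/R$ only controls the variation of $U$ over sets of $q_{R}$-diameter $o(1)$; the annulus $\mathcal{A}(R,NR)$ has $q_{R}$-diameter of order $N$, so integrating the gradient across it gives an $O(1)$ bound, not $o(1)$. What your $\mu=0$ computation actually establishes (correctly) is the much weaker statement that the oscillation of $U$ over each cross-sectional circle $\{\dist=R\}$ (equivalently, over the loops $\ell_{j}$ of a simple cut) tends to zero, because those have diameter $o(R)$. The radial variation of $U$ need not be small and in general is not.

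This is repairable, and the repaired argument is essentially the paper's proof: one only proves transversal oscillation decay --- $\max_{\ell_{j}}U-\min_{\ell_{j}}U\to 0$, using small rescaled length of $\ell_{j}$ plus bounded $|\nabla U|_{r_{j}}$ when $\mu=0$, and bounded rescaled length plus $|\nabla U|_{r_{j}}\to 0$ (from $|\nabla U|^{2}\leq\kappa$ in (\ref{KAPPAF}) and conical asymptotics) when $\mu>0$ --- and then feeds this into the maximum principle applied to the regions bounded by two loops. Transversal decay alone already forces $L_{+}=L_{-}$: if $\limsup\sup_{\ell_{j}}U>\liminf\sup_{\ell_{j}}U$ one finds $j_{1}<j_{2}<j_{3}$ with $\sup_{\ell_{j_{2}}}U>\max(\sup_{\ell_{j_{1}}}U,\sup_{\ell_{j_{3}}}U)$, an interior maximum in the region between $\ell_{j_{1}}$ and $\ell_{j_{3}}$, which is forbidden. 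So you do not need the annular oscillation statement at all, nor the "let $N$ grow with $R$" refinement you flag at the end. Your $\mu>0$ rescaling/compactness limit is also heavier machinery than needed for this step (a direct integration of $|\nabla U|_{r_{j}}\to 0$ along the loops suffices). A minor point: the divergence form of (\ref{ES3}) is $\mathrm{div}(e^{V}\nabla U)=0$, not $e^{-V}$; this does not affect the maximum principle.
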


\begin{proof} Note that the maximum principle is also applicable to $U$ because (\ref{ES3}) can be written as $div(e^{V}\nabla U)=0$. We will use this several times below.

Let $\{\ell_{j},j=0,1,2,\ldots\}$ be a simple cut of $E$ as described above. Let $r_{j}=2^{1+2j}$. 

Assume that $\mu=0$. Then, as said, the $q_{r_{j}}$-length of the loops $\ell_{j}$ tends to zero. At the same time the norm $|\nabla U|_{r_{j}}$ restricted to the loops $\ell_{j}$ remains uniformly bounded. Therefore, by a simple integration along the $\ell_{j}$ it is deduced that,
\be\label{ONCEMORE}
(\max\{U(q):q\in \ell_{j}\}-\min\{U(q):q\in \ell_{j}\})\rightarrow 0
\ee  
If instead $\mu>0$ then the $q_{r_{j}}$-length of the loops $\ell_{j}$ remains uniformly bounded while the norm $|\nabla U|_{r_{j}}$, over the loops $\ell_{j}$, tends to zero. So by a simple integration along the loops $\ell_{j}$ we deduce again (\ref{ONCEMORE}). 

Now suppose that for a certain sequence $p_{i}\in \ell_{j_{i}}$, $U(p_{i})$ tends to a constant $-\infty\leq U_{\infty}\leq \infty$. Then by (\ref{ONCEMORE}), the maximum and the minimum of $U$ over $\ell_{j_{i}}$ also tend to $U_{\infty}$. We use now the maximum principle to write for any $i<i'$
\begin{align}
\max\{U(q):q\in \ell_{j_{i}}\cup \ell_{j_{i'}}\}\geq & \max\{U(q):q\in \mathcal{L}_{j_{i},j_{i'}}\}\geq \\
\geq & \min\{U(q):q\in \mathcal{L}_{j_{i},j_{i'}}\}\geq \\
\geq & \min\{U(q):q\in \ell_{j_{i}}\cup \ell_{j_{i'}}\}
\end{align}
where $\mathcal{L}_{j_{i},j'_{i}}$ is the compact region enclosed by $\ell_{j_{i}}$ and $\ell_{j_{i'}}$. Letting $i'$ tend to infinity we deduce,
\begin{align}
\label{COSAP1}\max\{\max\{U(q):q\in & \ell_{j_{i}}\},U_{\infty}\}\geq \max\{U(q):q\in \mathcal{L}_{j_{i},\infty}\}\geq \\
\label{COSAP2}\geq & \min\{U(q):q\in \mathcal{L}_{j_{i},\infty}\}\geq \min\{\min\{U(q):q\in \ell_{j_{i}}\},U_{\infty}\}
\end{align}
where $\mathcal{L}_{j_{i},\infty}$ is the region enclosed by $\ell_{j_{i}}$ and infinity. As the left hand side of (\ref{COSAP1}) and the right hand side of (\ref{COSAP2}) tend to $U_{\infty}$ then $U$ must tend also uniformly to $U_{\infty}$.
\end{proof}

\subsection{Reduced data sets arising as collapsed limits}\label{RDSACL}

In this last subsection about $\Sa$-symmetric data sets, it is worth to discuss the geometry of reduced data arising from scaled limit of data sets. This discussion will be recalled later in Section \ref{POKA} where we prove that the asymptotic of static black hole data sets with sub-cubic volume growth is Kasner.

Let $(\Sigma;\hg,U)$ be a data set, and let $\gamma$ be a ray from $\partial \Sigma$. Let $p_{n}\in \gamma$ be a divergent sequence of points. Suppose there are neighbourhoods $\mathcal{B}_{n}$ of $\mathcal{A}^{c}_{r_{n}}(p_{n}, 1/2,2)$ such that $(\mathcal{B}_{n};\hg_{r_{n}})$ collapses to a two-dimensional orbifold. Having this, by a diagonal argument, one can find a subsequence of it (also indexed by $n$) and neighbourhoods $\mathcal{B}_{n}$ of $\mathcal{A}^{c}_{r_{n}}(p_{n};1/2,2^{k_{n}})$, with $k_{n}\rightarrow \infty$, and collapsing to a two-dimensional orbifold $(S_{\infty};q_{\infty})$. As the collapse is along $\Sa$-fibers (hence defining asymptotically a symmetry), we obtain, in the limit, a well defined reduced data $(S;q,\bar{U},V)$ where $\overline{U}$ is obtained as the limit of $U_{n}:=U-U(p_{n})$. On smooth points the scalar curvature $\kappa$ is non-negative. Orbifold points are conical with total angles an integer fraction of $2\pi$ ($2\pi/2$, $2\pi/3$, $2\pi/4$, etc) hence can be thought as having also non-negative curvature (they can be rounded off to have a smooth metric with $\kappa\geq 0$). Therefore $(S;q)$ has only a finite number of ends. Note that it has at least one end containing a limit, say $\overline{\gamma}$, of the ray $\gamma$. Let us denote that end by $S_{\overline{\gamma}}$. 

We claim that every end has only a finite number of orbifold points. This is the result of a simple application of Gauss-Bonnet. Indeed, let $S$ be an end. Let $\ell_{j}$, $j=1,2,3,\ldots$, be one-manfiolds embedded for each $j$ in $\mathcal{A}(2^{2j},2^{2j+3})$ such that $\ell_{1}$ and $\ell_{j}$ enclose a connected manifold $\Omega_{1j}$. Let $\mathcal{O}$ be the set of orbifold points in $S$. By Gauss-Bonnet we have
\be
-\int_{\ell_{1}}kdl-\int_{\ell_{j}}kdl=\int_{\Omega_{1j}\setminus \mathcal{O}}\kappa dA+\sum_{p\in \Omega_{1j}\cap \mathcal{O}}2\pi\bigg(\frac{i(p)-1}{i(p)}\bigg)
\ee
where $k$ is the mean-curvature (or first variation of logarithm of length) on the one-manifolds $\ell_{j}$ and the angle at each orbifold point $p\in \mathcal{O}$ is $2\pi/i(p)$. As the right hand side is greater or equal than the number of orbifold points in $\Omega_{1j}$, that is $\sharp\{\Omega_{1j}\cap \mathcal{O}\}$. Thus, if the left hand side remains bounded as $j\rightarrow \infty$ then the number of orbifold points must be finite. To see the existence of such one-manifolds $\ell_{j}$ for which the left hand side remains bounded just argue as follows. First note that the left hand side is scale invariant. Second observe that as for each $j$ the scaled annuli $(\mathcal{A}(2^{2j},2^{2j+3});q_{2^{2j}})$ in $S$ are scaled limits of annuli in $(\Sigma;\hg)$, (which has quadratic curvature decay), then one can always chose a suitable subsequence $j_{i}$ such that as $i\rightarrow \infty$ the annuli $(\mathcal{A}(2^{2j_{i}},2^{2j_{i}+3});q_{2^{2j_{i}}})$ either converge of collapse to a segment. The selection of the $\ell_{i}$ is then evident.   
 
\section{Volume growth and the asymptotic of ends}\label{VWAE}

The asymptotic of ends is markedly divided by the volume growth. We discuss first cubic volume growth, which is the simplest and that implies AF. Then we discuss sub-cubic volume growth which implies (under certain hypothesis) AK. This last case requires an elaborated and long analysis.

\subsection{Cubic volume growth and asymptotic flatness}\label{ENDSAF}

Suppose $(\Sigma;\hg,U)$ is a static end with cubic volume growth. Cubic volume growth, non-negative Ricci curvature and quadratic curvature decay, implies that the end is asymptotically conical, (i.e. the metric is asymptotic to a metric of the form $dr^{2}+a^{2}r^{2}d\Omega^{2}$ in $\mathbb{R}^{3}$). Hence, outside an open set of compact closure, $\Sigma$ is diffeomorphic to $\mathbb{R}^{3}$ minus a ball. It was proved in \cite{MR3233266},\cite{MR3233267} that the data is then asymptotically flat (indeed asymptotically Schwarzschild). 

\subsection{Sub-cubic volume growth and Kasner asymptotic}\label{ENDSAK}

The goal of this section will be to prove that the asymptotic of any static black hole data set with sub-cubic volume growth is Kasner different from a Kasner $A$ or $C$. Observe that the claim is for the asymptotic of black hole data sets, and not just that of any end with sub-cubic volume growth.

We aim therefore to prove the following theorem.
\begin{Theorem}\label{KAFR} Let $(\Sigma;\hg,U)$ be a static black hole data set with sub-cubic volume growth. Then the data is asymptotically Kasner, different from a Kasner $A$ or $C$.
\end{Theorem}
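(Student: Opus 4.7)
The strategy tracks the outline sketched in subsection \ref{TCSTA}: reduce the problem to the dichotomy supplied by Theorem \ref{DFNKA}, and then rule out the sub-quadratic decay alternative by exploiting the rigid global structure of \emph{static black hole} ends (as opposed to arbitrary static data ends). Concretely, since a static black hole data set has only one end (established in Part I), it is itself a static data end in the sense of Definition \ref{SDE}, and by the Bishop--Gromov comparison applied to $(\Sigma;\hg)$ (whose Ricci curvature is non-negative) the volume growth is at most cubic. The hypothesis of sub-cubic volume growth therefore places us outside the asymptotically flat regime of subsection \ref{ENDSAF} and squarely inside the setting of Theorem \ref{DFNKA}.

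First, I would invoke Theorem \ref{DFNKA} directly. It yields a dichotomy: either the end is asymptotic to a Kasner different from $A$ or $C$, in which case the theorem is proved (the exclusion of Kasner with $\gamma<0$, and hence of anything but $c>0$, follows from the maximum principle applied to $N$, since $N$ is strictly positive on $\Sigma^{\circ}$ and cannot decay to zero at infinity), or there exists a ray $\gamma$ and a simple cut along which the curvature of $\hg$ decays sub-quadratically. Thus the remainder of the proof is devoted to showing that this second alternative is incompatible with being a static \emph{black hole} data set of sub-cubic volume growth.

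Second, I would establish that any such black hole end is $\star$-static in the sense of Definition \ref{DEFSSS}. The argument combines three ingredients already in hand: the horizon $\partial \Sigma$ is weakly outermost (Part I); the Harnack-type estimate of Proposition \ref{MAXMINU} controlling $U$ on connected components of rescaled annuli; and the sub-cubic hypothesis, which forces the rescaled annuli $(\mathcal{A}_{\hg_{r}}(2^{-k},2^{k});\hg_{r})$ to metrically collapse (otherwise, combined with the Bishop--Gromov monotonicity and the quadratic curvature decay, non-collapse would give cubic growth). Collapsed rescaled annuli along a ray are either one-dimensional limits (with $\T^2$ cover) or two-dimensional orbifold limits (with $\Sa$ cover), as recalled in subsection \ref{SFCCRM} and elaborated in subsection \ref{RDSACL}. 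In either case the level sets $\{U=\textrm{const}\}$ far out are forced to be connected, compact, and—by the weak outermost property of the horizon together with the topology of the collapsing fibers—of positive genus. This gives the $\star$-static property. Together with Proposition \ref{LPRO} applied in the reduced/collapsed pictures, the lapse has a limit $U_{\infty}$ and the level sets of $U$ are the relevant transversal surfaces.

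Finally, with the end now known to be $\star$-static, Proposition \ref{CORONA} directly forbids the sub-quadratic curvature decay produced by the bad branch of Theorem \ref{DFNKA}, because the monotonic quantity (\ref{GMONOTONIC}) along the level sets of $U$ (which are not spheres, by positive genus) rules it out. This contradicts the assumption we were in the second alternative, so we must be in the first: the end is asymptotically Kasner different from $A$ and $C$, completing the proof. The principal difficulty—and the step I expect to absorb most of the work—is the verification that a sub-cubic static black hole end is genuinely $\star$-static; this is where the global topological input (weakly outermost horizon, single end) must be combined with the local/semi-global collapse picture of rescaled annuli in order to control the topology of the level sets of $U$ at infinity.
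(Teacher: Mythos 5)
Your overall skeleton coincides with the paper's: assume the data is not AK, invoke Theorem \ref{DFNKA} to obtain sub-quadratic curvature decay along $\gamma\cup(\cup_{j}\mathcal{S}_{j})$, show the end is $\hgls$-static, and let Proposition \ref{CORONA} produce the contradiction. The gap lies precisely in the step you flag as the principal difficulty. Your proposed mechanism for the positive-genus condition of Definition \ref{DEFSSS} --- ``the weak outermost property of the horizon together with the topology of the collapsing fibers'' --- is not an argument, and it is not what the paper does. Collapse of rescaled annuli constrains the local geometry far out, but it does not by itself determine the genus of the level sets of $U$: a priori these could still be spheres (think of a long thin capped tube), and the weakly outermost property of the horizon plays no role in this part of the proof.

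What the paper actually does (Propositions \ref{TOROROS}, \ref{LALA} and \ref{REGVALUO}) is: (i) construct a divergent sequence of disconnecting tori with positive outward mean curvature with respect to $g$ --- delicate because the candidate tori produced by the collapse could be minimal, a case that must be excluded using Proposition \ref{SSEU}, Proposition \ref{CORONA} and the free $\Sa$-symmetric asymptotic Theorem \ref{SSKAA}; (ii) apply Galloway's theorem \cite{MR1201655} to conclude that these tori enclose solid tori in $\Sigma\#{\rm H}$; and (iii) observe that if infinitely many level sets of $U$ were spheres, each would bound a ball inside such a solid torus, forcing $\Sigma\#{\rm H}$ to be diffeomorphic to $\mathbb{R}^{3}$ and hence, by \cite{MR3302042}, cubic volume growth --- contradicting the hypothesis. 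None of these three ingredients (the mean-curvature construction, the solid-torus theorem, the volume-growth exclusion of spheres) appears in your sketch, and without them the $\hgls$-static property is not established. You also need the uniform limit $U\rightarrow U_{\infty}$ for the three-dimensional end itself (Proposition \ref{EXISTLIM}, via Corollary \ref{PARAFINA}), not only Proposition \ref{LPRO} for the reduced collapsed limits.
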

To achieve this we provide first a necessary and sufficient condition for Kasner asymptotic different from $A$ or $C$. This is the content of Proposition \ref{KASYMPTOTIC} for which we dedicate the whole subsection \ref{SPTWOP1}. In second place, we analise the asymptotic of free $\Sa$-symmetric static ends $(\Sigma;\hg,U)$ under the natural condition that $U(p)\leq U_{\infty}$ (recall that $U_{\infty}$, the limit of $U$ at $\infty$, exists by Proposition \ref{LPRO}). We dedicate subection \ref{FTKASS} to show Theorem \ref{SSKAA} claiming that, for such a data, either the asymptotic is Kasner different from $A$ or $C$, or the whole data is flat and $U$ is constant. The proof requires the results we have obtained for reduced data sets in section \ref{S1S}, as well as the development of an interesting monotonic quantity along the leaves of the level sets of $U$, that in turn will be used again in the proof of Theorem \ref{KAFR}. Finally, subsection \ref{POKA} uses the results of the previous two subsections to prove the desired Theorem \ref{KAFR}. 

\subsubsection{Preliminaries, $C^{k}$-norms on two-tori} \label{SPTWOP0}

We prove here a series of results on the $C^{k}$-norm of tensor field on two-tori that will be used in section \ref{SPTWOP1}.

We begin recalling the definition of the $C^{k}$-norms of a tensor with respect to a background metric. Let $(M;g)$ be a smooth Riemannian manifold. Let $W$ be a smooth tensor of any valence. We denote by $|W|_{g}(x)$ the $g$-norm of $W$ at $x\in M$. Given $k\geq 0$, the $C^{k}$-norm of $W$ with respect to $g$ is defined as
\be
\|W\|^{2}_{C^{k}_{g}}:=\sup_{x\in M}\bigg\{\sum_{i=0}^{i=k}|\nabla^{(i)}W|^{2}_{g}(x)\bigg\} \quad \text{where}\quad \nabla^{(i)}W=\underbrace{\nabla\ldots\nabla}_{\text{i-times}} W
\ee

\begin{Proposition}\label{ENDK1} Let $(T;h_{F})$ be a flat  two-torus. Let $W$ be a smooth tensor field (of any valence), equal to zero at some point. Then for any $0\leq j\leq k$ we have
\be\label{TOP}
\|W\|_{C^{j}_{h_F}}\leq c(k_{ij})\ \diam_{h_F}^{k-j}(T)\ \|W\|_{C^{k}_{h_F}}
\ee
\end{Proposition}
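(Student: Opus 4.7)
The plan is to exploit the global parallel structure of the flat torus. Identifying $(T; h_F)$ isometrically with $\mathbb{R}^2/\Lambda$ for a rank-two lattice $\Lambda$, one obtains a globally defined $h_F$-orthonormal parallel frame. In such a frame, covariant differentiation of $W$ reduces to componentwise partial differentiation, and $|\nabla^{(i)} W|^2_{h_F}$ is equivalent (up to combinatorial constants depending on $i$ and on the valence of $W$) to the sum of squared partial derivatives of order $i$ of the tensor components $W_I$. It therefore suffices to prove the componentwise pointwise estimate
\[
|\partial^\alpha W_I(q)|\leq c(k,j)\,\diam_{h_F}^{k-|\alpha|}(T)\,\|W\|_{C^k_{h_F}}
\]
for every $q \in T$, every tensor-index tuple $I$, and every multi-index $\alpha$ with $|\alpha|\leq j$, from which (\ref{TOP}) will follow by summing squares over $I$ and $\alpha$.

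The estimate is obtained by downward induction on $|\alpha|$, starting from the trivial case $|\alpha|=k$. Two observations drive the induction. First, the hypothesis that $W$ vanishes at some $p_0$ supplies a common zero of every component $W_I$ at the base order $|\alpha|=0$. Second, for every $\alpha$ with $|\alpha|\geq 1$, writing $\alpha=e_\ell+\alpha'$ and applying the divergence theorem on the closed manifold $T$ yields $\int_T \partial^\alpha W_I\,dV = \int_T \partial_\ell(\partial^{\alpha'}W_I)\,dV = 0$, so by continuity there exists at least one point $p_{\alpha,I}\in T$ at which $\partial^\alpha W_I$ vanishes. Combining with the fundamental theorem of calculus along a minimising geodesic (of length at most $\diam_{h_F}(T)$) from that zero to $q$ gives $|\partial^\alpha W_I(q)|\leq \diam_{h_F}(T)\,\max_\ell \sup_T |\partial^{\alpha+e_\ell}W_I|$. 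Inserting the inductive bound at order $|\alpha|+1$ completes the descent, picking up only a combinatorial factor depending on $k$ and $j$.

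The only mild obstacle is bookkeeping this constant through the $k-|\alpha|$ iterations and keeping track of the fact that the vanishing point $p_{\alpha,I}$ in general depends on $\alpha$ and $I$; no geometric subtlety arises, because $T$ is a closed flat manifold admitting a global parallel orthonormal frame, so the divergence theorem has no boundary contribution and every pair of points is joined by a minimising geodesic of length bounded by $\diam_{h_F}(T)$.
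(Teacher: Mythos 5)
Your argument is correct, and it reaches the conclusion by a route that is genuinely different from the paper's in its two key mechanisms, even though both proofs share the same opening move of trivialising $W$ in a global parallel orthonormal frame so that covariant derivatives become component-wise directional derivatives. Where you differ is in (a) how the zero of each iterated derivative is produced and (b) along which path one integrates. The paper constructs a special coordinate system $(\overline{x}_{1},\overline{x}_{2})$ on the universal cover whose directions project to closed circles of length at most $6\,\diam_{h_F}(T)$ and whose Gram matrix is uniformly controlled; the zero of $\overline{\partial}_{I}\psi$ is then found by observing that it is the derivative of a periodic function along such a circle, and the integration in (\ref{ITER}) runs along coordinate segments of controlled length. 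You instead obtain the zero of $\partial^{\alpha}W_{I}$ from $\int_{T}\partial_{\ell}(\partial^{\alpha'}W_{I})\,dV=0$ (valid because the frame fields are parallel, hence divergence-free, and $T$ is closed), and you integrate along a minimising geodesic, whose length is bounded by $\diam_{h_F}(T)$ with no further work. This eliminates the most delicate part of the paper's proof --- the construction of the almost-orthogonal periodic coordinates and the verification of the period bound --- at the modest cost of invoking the divergence theorem and the existence of minimising geodesics on a closed manifold. One small point you share with the paper and could make explicit in one line: summing your component-wise bounds over all $|\alpha|\le j$ produces the factor $\max_{0\le i\le j}\diam^{\,k-i}$, which equals $\diam^{\,k-j}$ only when $\diam_{h_F}(T)\le 1$; for $\diam_{h_F}(T)>1$ the stated inequality (\ref{TOP}) is anyway trivial because $\|W\|_{C^{j}_{h_F}}\le\|W\|_{C^{k}_{h_F}}$, so nothing is lost, but the case split deserves a sentence.
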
     

\begin{proof} We will prove the inequality for functions. To prove it for tensors use the expansion $W=\sum f_{I}\omega_{I}$, where $\omega_{I}$ is an orthonormal and parallel basis (i.e. $\delta_{II'}=<\omega_{I},\omega_{I'}>_{g}$ and $\nabla \omega_{I}=0$), and then use the result obtained for functions.   

We will work in $(\mathbb{R}^{2}; g_{\mathbb{R}^{2}})$ thought as the universal cover of $(T;h_F)$. In particular $\pi^{*}h_{F}=g_{\mathbb{R}^{2}}$ where $\pi:\mathbb{R}^{2}\rightarrow T$ is the projection. On a Cartesian coordinate system $(x_1,x_2)$ 
we have
\be\label{EUCMETR}
g_{\mathbb{R}^{2}}=dx_{1}^{2}+dx_{2}^{2}
\ee
and 
\be\label{NOSI}
\|f\|^{2}_{C^{j}_{h_{F}}}=\|f\|^{2}_{C^{j}_{g_{\mathbb{R}^{2}}}}=\sup_{x\in \mathbb{R}^{2}}\bigg\{\sum_{|I|=0}^{|I|=j} |\partial_{I} f|^{2}(x)\bigg\}
\ee
where for any multi-index $I=(i_{1},\ldots,i_{|I|})$, $i_{l}\in \{1,2\}$, we denote $\partial_{I}=\partial_{x_{i_1}}\ldots\partial_{x_{i_{|I|}}}$.

We will need to rely on the existence of a coordinate system $(\overline{x}_{1},\overline{x}_{2})$ on $\mathbb{R}^{2}$ on which the metric $g_{\mathbb{R}^{2}}$ is written as
\be\label{METRIC}
g_{\mathbb{R}^{2}}=d\overline{x}_{1}^{2}+\alpha (d\overline{x}_{1}d\overline{x}_{2}+d\overline{x}_{2}d\overline{x}_{1})+d\overline{x}_{2}^{2},
\ee  
where $\alpha$ is a constant such that $|\alpha|\leq 1/2$, and where the directions $\partial_{\overline{x}_{1}}$ and $\partial_{\overline{x}_{2}}$ are periodic of period less than $6\diam_{h_F}(T)$, that is, any line in the direction of either $\partial_{x_{1}}$ or $\partial_{x_{2}}$ projects into a circle in $T$ of length less that $6\diam_{h_F}(T)$. For the calculations that follow we assume that the coordinates $(\overline{x}_{1},\overline{x}_{2})$ are given. We will prove their existence at the end. 

Observe that the norm (\ref{NOSI}), which is defined with respect to the metric (\ref{EUCMETR}) and the norm 
\be\label{NOSII}
\|f\|^{2}_{C^{j}_{\overline{g}_{\mathbb{R}^{2}}}}=\sup\bigg\{\sum_{|I|=0}^{|I|=j} |\overline{\partial}_{I} f|^{2}\bigg\},\qquad \overline{\partial}_{I}=\partial_{\overline{x}_{i_{1}}}\ldots\partial_{\overline{x}_{i_{|I|}}},
\ee
which is defined with respect to the metric 
\be
\overline{g}_{\mathbb{R}^{2}}=d\overline{x}^{2}_{1}+d\overline{x}^{2}_{2},
\ee
are equivalent, namely $c_{1}(j) \|f\|_{C^{j}_{g_{\mathbb{R}^{2}}}} \leq \|f\|_{C^{j}_{\overline{g}_{\mathbb{R}^{2}}}}\leq c_{2}(j) \|f\|_{C^{j}_{g_{\mathbb{R}^{2}}}}$. This is proved by noting that the family of metrics (\ref{METRIC}) with $|\alpha|\leq 1/2$ is compact. Thus, to prove (\ref{TOP}) it is enough to prove 
\be\label{TOPP}
\|W\|_{C^{j}_{\overline{g}_{\mathbb{R}^{2}}}}\leq c(k)\ \diam_{h_F}^{k-j}(T)\ \|W\|_{C^{k}_{\overline{g}_{\mathbb{R}^{2}}}}
\ee
Again we prove this for functions. We show it in what follows. 

We claim first that for any function $\psi$ which is zero at some point, say $(\overline{x}_{1}^{0},\overline{x}_{2}^{0})$, we have
\be\label{ITER}
\sup\big\{|\psi|\big\}\leq 12\diam_{h_F}(T)\sup\big\{|\partial_{\overline{x}_{1}} \psi|, |\partial_{\overline{x}_{2}} \psi|\big\}
\ee
This is seen by just writing
\be
\psi(\overline{x}_{1},\overline{x}_{2})=\int_{0}^{\overline{x}_{1}-\overline{x}_{1}^{0}}\partial_{\overline{x}_{1}} \psi\bigg|_{(\overline{x}_{1}^{0}+s,\overline{x}_{2}^{0})}ds  + \int_{0}^{\overline{x}_{2}-\overline{x}_{2}^{0}} \partial_{\overline{x}_{2}} \psi\bigg|_{(\overline{x}_{1},\overline{x}_{2}^{0}+s)}ds
\ee
and using that $|\overline{x}_{1}-\overline{x}^{0}_{1}|$ and $|\overline{x}_{2}-\overline{x}_{2}^{0}|$ are less or equal than $6\diam_{h_F}(T)$. 

We use (\ref{ITER}) to prove (\ref{TOPP}). We observe first that, for any $\psi$ and multi-index $I$, ($|I|\geq 1$), the function $\overline{\partial}_{I}\psi$ has also a zero. To see this just fix $\overline{x}_{i}$, for all $i\neq i_{1}$ (at any values), and observe that the function $\psi$ as a function of $\overline{x}_{i_{1}}$ is the $\overline{x}_{i_{1}}$-derivative of a periodic function. Having the observation at hand, start with $\psi=f$ and use (\ref{ITER}) repeatedly to obtain (\ref{TOPP}).

It remains to show the existence of the coordinates $(\overline{x}_{1},\overline{x}_{2})$. In the cartesian system $(x_{1},x_{2})$, the balls $B((4\diam_{\mathbb{R}^{2}},0),\diam_{\mathbb{R}^{2}}(T))$ and $B((0,4\diam_{\mathbb{R}^{2}}),\diam_{\mathbb{R}^{2}}(T))$, possess points $q_{1}$ and $q_{2}$ projecting (in $T$) to the same point as the point $q_{0}=(0,0)$ does. Define the directions $\partial_{\overline{x}_{1}}$ and $\partial_{\overline{x}_{1}}$ as, respectively, those defined by $q_{0}, q_{1}$ and $q_{0}, q_{2}$, and finally define the origin of the coordinates $(\overline{x}_{1},\overline{x}_{2})$ to be $(x_{1},x_{2})=(0,0)$. It is direct to check that the coordinates $(\overline{x}_{1},\overline{x}_{2})$ thus constructed enjoy the required properties.  
\end{proof}

\begin{Proposition}\label{FLATTGA} Let $(T;h)$ be a Riemannian two-torus and let $p\in T$. Then there is a unique flat metric $h_{F}$, conformally related to $h$ and equal to $h$ at $p$. Moreover, for any integer $k\geq 1$, and reals $K_{1}>0$ and $K_{k}>0$ there is $D(K_{1})>0$ (small enough) and $C(k,K_{k})>0$ such that if
\be
\|\gcur\|_{C^{1}_{h}}\leq K_{1},\quad \|\gcur\|_{C^{k}_{h}}\leq K_{k},\quad \text{and}\quad \diam_{h}(T)\leq D
\ee
where $\gcur$ is the Gaussian curvature, then,
\be\label{FLATTGAEQQ}
e^{-C}h_{F}\leq h\leq e^{C}h_{F}
\ee
and
\be\label{FLATTGAEQ}
\|h\|_{C^{k}_{h_F}}\leq C.
\ee
\end{Proposition}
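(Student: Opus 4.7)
The existence and uniqueness of $h_F$ follow from the uniformization theorem on a genus-one surface: every conformal class on $T$ contains a flat metric, unique up to a positive multiplicative constant, and this constant is pinned down by imposing $h_F(p)=h(p)$. Writing $h=e^{2\phi}h_F$ with $\phi(p)=0$, the two-dimensional conformal change-of-curvature identity gives
\be
\Delta_{h_F}\phi=-e^{2\phi}\gcur_h,\quad\text{equivalently}\quad \Delta_h\phi=-\gcur_h,
\ee
which is compatible with Gauss--Bonnet, $\int_T\gcur_h\,dA_h=0$. Since $h_F$ is parallel for its own connection, the estimates (\ref{FLATTGAEQQ})--(\ref{FLATTGAEQ}) reduce to bounding $\phi$ and $|\nabla^{(j)}\phi|_{h_F}$ for $0\leq j\leq k$.

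The plan for the quantitative estimates is a scaling/compactness argument. First reduce to $\diam_h(T)=1$ by the homothety $h\mapsto h/D^2$, which leaves $\phi$ unchanged, rescales $h_F$ uniformly, and shrinks the curvature norms to $\|\gcur\|_{C^j_h}\leq c(j)\,D^{j+2}\,K_j\to 0$ as $D\to 0$. It then suffices to show: on a torus of unit diameter with $\|\gcur\|_{C^k_h}$ sufficiently small, $\phi$ is small in $C^k_{h_F}$. I would argue by contradiction, supposing a sequence $(T_i;h_i;p_i)$ with $\diam_{h_i}(T_i)=1$ and $\|\gcur_{h_i}\|_{C^k_{h_i}}\to 0$, but with $\phi_i$ not uniformly small in $C^k_{h_{F,i}}$.

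Pulling back to the universal covers $(\mathbb{R}^{2};\tilde h_i;\tilde p_i)$, Klingenberg's bound for simply connected surfaces supplies a uniform positive lower bound on the injectivity radius at $\tilde p_i$ from the uniform curvature bound. Standard Cheeger--Gromov compactness then yields, along a subsequence, convergence in $C^{k+1}_{\mathrm{loc}}$ to the flat Euclidean plane $(\mathbb{R}^{2};h_E)$. The flat representatives $\tilde h_{F,i}$, being conformal to $\tilde h_i$ and agreeing with them at $\tilde p_i$, likewise converge to $h_E$, so $\tilde\phi_i\to 0$ in $C^{k+1}_{\mathrm{loc}}$. The deck groups $\Gamma_i$ act by $h_{F,i}$-isometries with fundamental domains of $h_{F,i}$-diameter bounded (since $\phi_i$ is bounded, which follows from the $C^0$-part of the cover convergence); Proposition \ref{ENDK1} applied to the flat fundamental domain of $h_{F,i}$ then transfers the $C^k$-smallness of $\tilde\phi_i$ upstairs into a $C^k_{h_{F,i}}$-bound for $\phi_i$ on $T_i$, contradicting the assumption. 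The main obstacle will be the possible metric collapse of $(T_i;h_i)$: the injectivity radius of the torus itself may vanish, so direct compactness on $T_i$ is unavailable and one must pass to the universal cover, with Proposition \ref{ENDK1} serving as the bridge back to the torus.
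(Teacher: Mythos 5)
Your starting point---uniformization gives a unique flat $h_F$ in the conformal class normalized at $p$, and the conformal factor solves $\Delta_h\phi=-\gcur$---is consistent with the paper, which instead constructs $h_F$ by solving $\Delta_{\tilde h}\tilde\phi=\tilde\gcur$ with zero mean on a controlled \emph{finite} cover and descending by uniqueness. But your quantitative step has genuine gaps. The central one is the injectivity-radius bound on the universal cover: ``Klingenberg's bound for simply connected surfaces'' is not a valid principle here. A complete, simply connected surface with $|\gcur|\leq\epsilon$ can have arbitrarily short geodesic loops (join a large cap of curvature $\approx\epsilon$ to a thin cylinder through a negatively curved funnel; the short closed geodesic around the neck bounds a disc of area $\geq 2\pi/\epsilon$, so neither Gauss--Bonnet nor Bol's isoperimetric inequality is violated). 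The statement you actually need---that the universal cover of a unit-diameter torus with small curvature has injectivity radius bounded below at $\tilde p_i$---is true, but it is precisely the nontrivial ``unwrap a collapsed almost-flat manifold'' input. The paper supplies it by passing to a \emph{finite} cover $(\tilde T;\tilde h)$ with $\diam_{\tilde h}(\tilde T)\leq 1$ and ${\rm inj}\geq i_0(K_1)$, quoting standard collapse theory; without some such input your Cheeger--Gromov step has no foundation, since the local limit could a priori be lower dimensional.

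Second, the assertion that $\tilde h_{F,i}\to h_E$, hence $\tilde\phi_i\to0$, because the $\tilde h_{F,i}$ are ``conformal to $\tilde h_i$ and agree with them at $\tilde p_i$'' is circular: on $\mathbb{R}^2\cong\mathbb{C}$ the flat metrics conformal to the Euclidean one and equal to it at the origin form an infinite-dimensional family ($e^{2\,{\rm Re}(az)}|dz|^2$ for any $a$), so local convergence of $\tilde h_i$ pins down nothing about the lifted flat representatives. What singles out the correct one is the $\Gamma_i$-periodicity of $\phi_i$, and controlling a periodic solution of $\Delta_h\phi=-\gcur$ normalized at one point is exactly the Poincar\'e-plus-elliptic estimate the paper performs on the compact cover; it yields the required bound $C(k,K_k)$ directly, with no smallness needed. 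Relatedly, your rescaling to unit diameter only gives $\|\gcur\|_{C^k}\leq c\,D^{2}K_k$ for the rescaled metric, which is small only if $D$ is allowed to depend on $K_k$, whereas the proposition takes $D=D(K_1)$ with $K_k$ arbitrary and asks only for boundedness, not smallness; your contradiction scheme therefore does not match the quantifiers as stated. The fix is to abandon the compactness argument in favour of the direct route: pass to the finite cover with controlled geometry, solve $\Delta_{\tilde h}\tilde\phi=\tilde\gcur$ with $\int_{\tilde T}\tilde\phi\,dA_{\tilde h}=0$, bound $\|\tilde\phi\|_{C^k_{\tilde h}}$ by Poincar\'e and elliptic estimates, and descend by averaging over the deck group before normalizing at $p$.
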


\begin{proof} We will use that there is $D(K_{1})$, (small enough), such that if $\diam_{h}(T)\leq D(K_{1})$ then there is a finite cover $\pi:(\tilde{T};\tilde{h})\rightarrow (T;h)$, (i.e. $\pi:\tilde{T}\rightarrow T$ and $\tilde{h}=\pi^{*}h$), such that, (i) $\diam_{\tilde{h}}(\tilde{T})\leq 1$, and (ii) $inj_{\tilde{h}}(p)\geq i_{0}(K_{1})$ for all $p\in \tilde{T}$. Because $(\tilde{T};\tilde{h})$ is a cover of $(T;h)$ we also have (iii) $\|\tilde{\gcur}\|_{C^{k}_{\tilde{h}}}\leq K_{k}$. The claims, (i) and (ii), are well known from the standard theory of diameter-collapse with bounded curvature. In simple terms they follow easily from the fact that the exponential map $exp:\mathcal{T}_{p}T\rightarrow T$ restricted to a small ball in $\mathcal{T}_{p}T$ is an immersion and then finding an appropriate fundamental domain on $\mathcal{T}_{p}T$ around $p$ that will define $\tilde{T}$. We will not discuss this further, rather we will use it from now on.  

The properties (i) and (ii) imply that the geometry of $(\tilde{T};\tilde{h})$ is controlled\footnote{To be precise: A geometry is controlled in $C^{k}$ by $K$ if there is a cover of $\tilde{T}$ by $n(K)$-harmonic charts, with Lebesgue number $\delta(K)$, such that, on each chart $(x_{1},x_{2})$, we have (i) $e^{K'(K)}\delta_{ij}\leq \tilde{h}_{ij}\leq e^{K'(K)}\delta_{ij}$ and (ii) $\|\tilde{h}_{ij}\|_{C^{k}_{\delta_{ij}}}\leq K'(K)$. See \cite{MR2243772}.} in $C^{2}$ by $K_{1}$. Moreover if the geometry of $(\tilde{T};\tilde{h})$ is controlled in $C^{2}$ by $K_{1}$ and (iii) above holds, then the geometry of $(\tilde{T};\tilde{h})$ is controlled in $C^{k+1}$ by $K_{k}$. This allows us to make standard elliptic analysis in $(\tilde{T};\tilde{h})$ as if working in a fixed manifold.

Let $\tilde{\phi}$ be the solution to
\be\label{LAPPHI}
\Delta_{\tilde{h}}\tilde{\phi}=\tilde{\gcur},\quad \text{with}\quad \int_{\tilde{T}}\tilde{\phi}\, dA_{\tilde{h}}=0
\ee 
With such $\tilde{\phi}$, the conformal metric $\tilde{h}_{F}=e^{2\tilde{\phi}}\tilde{h}$ is flat. Multiply (\ref{LAPPHI}) by $\tilde{\phi}$, integrate and use Cauchy-Schwarz to obtain
\be\label{RRRR}
\int_{\tilde{T}}|\nabla \tilde{\phi}|^{2}_{\tilde{h}}\, dA_{\tilde{h}}\leq \big(\int_{\tilde{T}}\tilde{\gcur}^{2}\, dA_{\tilde{h}}\big)^{\frac{1}{2}}\big(\int_{\tilde{T}}\tilde{\phi}^{2}\, dA_{\tilde{h}}\big)^{\frac{1}{2}}
\ee  
Now, we can use the Poincar\'e inequality
\be\label{POINCA}
\int_{\tilde{T}}\tilde{\phi}^{2}\, dA_{\tilde{h}}\leq I(K_{1})\int_{\tilde{T}}|\nabla \tilde{\phi}|^{2}\, dA_{\tilde{h}}
\ee
in the right hand side of (\ref{RRRR}) to obtain an upper bound on $\|\nabla \tilde{\phi}\|_{L^{2}_{\tilde{h}}}$, (that $I=I(K_{1})$ is justified because the geometry of $\tilde{T}$ is controlled in $C^{2}$). Such bound can be used in turn again in (\ref{POINCA}) to obtain $\|\tilde{\phi}\|_{L^{2}_{\tilde{h}}}\leq B_{1}(K_{1})$. Using this $L^{2}$-bound together with standard elliptic estimates on (\ref{LAPPHI}) we obtain 
\be\label{BNEC}
\|\tilde{\phi}\|_{C^{k}_{\tilde{h}}}\leq B_{2}(k,K_{k}).
\ee 
As $k\geq 1$, we deduce 
\be\label{BOUNNO}
|\tilde{\phi}|\leq B_{2}(k,K_{k}), 
\ee
This implies that for a $C_{1}(k,K_{k})>0$ we have
\be\label{BNECI}
e^{-C_{1}}\tilde{h}\leq \tilde{h}_{F}\leq e^{C_{1}}\tilde{h}
\ee
Moreover the covariant derivative $\partial$ of $\tilde{h}_{F}$ is related to the covariant derivative $\nabla$ of $\tilde{h}$ by
\be\label{BNECII}
\partial_{A}=\nabla_{A}+(\nabla_{A}\phi) h_{B}^{C}+(\nabla_{B}\phi) h_{A}^{C}-(\nabla^{C}\phi) h_{AB}
\ee
Now (\ref{BNEC}), (\ref{BNECI}) and (\ref{BNECII}) (to compute $\partial^{(j)}$) imply the bound
\be\label{BOUNN}
\|\tilde{\phi}\|_{C^{k}_{\tilde{h}_{F}}}\leq B_{3}(k,K_{k}).
\ee
By the uniqueness of solutions to (\ref{LAPPHI}), $\tilde{\phi}$ has to coincide with its average by the Deck-transformations. Hence, $\tilde{\phi}$ and  $\tilde{h}_{F}$ descend respectively to a function $\overline{\phi}$ and a flat metric $h_{F}$. As the bound (\ref{BOUNN}) is local we also have
\be\label{BOUNN22}
\|\overline{\phi}\|_{C^{k}_{h_{F}}}\leq B_{3}(k,K_{k}).
\ee
Finally define $\phi=\overline{\phi}-\overline{\phi}(p)$. With this definition we have $h(p)=h_{F}(p)$. From the bound $|\overline{\phi}|\leq B_{2}(k,K_{k})$ we obtain the bound $|\phi|\leq 2B_{2}(k,K_{k})$ hence (\ref{FLATTGAEQQ}). Also from $|\phi|\leq 2B(k,K_{k})$ and (\ref{BOUNN22}) we deduce,
\be\label{BOUNN3}
\|\phi\|_{C^{k}_{h_{F}}}\leq B_{4}(k,K_{k}).
\ee 
hence (\ref{FLATTGAEQ}).
\end{proof}

\subsubsection{Necessary and sufficient condition for KA different from $A$ or $C$.}\label{SPTWOP1}

Before passing to the next crucial propositions we make a pair of geometric observations about the Kasner solutions and introduce some terminology. 
 
On $\mathbb{R}^{+}\times \mathbb{R}^{2}$ consider a Kasner solution
\begin{align}
& \hg=dx^{2}+x^{2a}dy^{2}+x^{2b}dz^{2},\\
& U=c\ln x
\end{align}
and assume that $c\in (0,1/2]$. Quotient the space by a $\mathbb{Z}\times \mathbb{Z}$ action to obtain a Kasner solution on $\mathbb{R}^{+}\times \T^{2}$. For every $x$ let $T_{x}$ be the two-torus $\{x\}\times \T^{2}$. Fixed $c$, there are two possibilities for $(a,b)$, $(a_{-},b_{-})$ and $(a_{+},b_{+})=(b_{-},a_{-})$. In either case, and because $c\in (0,1/2]$, we have $0<a<1$, $0<b<1$. Let 
\be\label{ASDEF}
a_{*}=\max\big\{e^{2/(1-a)},e^{2/(1-b)}\big\}
\ee
Observe that, as $a+b=1$ we have $a^{*}\geq 4$. Note that if $\overline{a}\geq a_{*}$ then
\be\label{DIAMHAL}
\diam_{\hg_{\overline{a}}}(T_{\overline{a}})\leq \frac{1}{e^{2}}\diam_{\hg}(T_{1})
\ee
where, recall former notation, $\hg_{\overline{a}}=\hg/\overline{a}^{2}$. To see this simply note that
\be\label{ASTRA}
\frac{1}{\overline{a}^{2}}(\overline{a}^{2a}dy^{2}+\overline{a}^{2b}dz^{2})=(\overline{a}^{2a-2}dy^{2}+\overline{a}^{2b-2}dz^{2})\leq \frac{1}{e^{4}}(dz^{2}+dy^{2})
\ee 
so (\ref{DIAMHAL}) holds no matter how we quotient $\mathbb{R}^{2}$. Thus, the diameter of $T_{\overline{a}}$ with respect to $\hg_{\overline{a}}$, is at least $1/e^{2}$ of the diameter of $T_{1}$ with respect to $\hg$. 
\vs

In the following propositions we will use the notation $\rho=|\nabla U|$, $\rho_{r}=|\nabla U|_{r}$ and $\lambda=1/\rho$, $\lambda_{r}=1/\rho_{r}$. Also, given $0<\rho^{*}\leq 1/2$ we let 
\be
a^{*}=\max\{a_{*}(c): c \in [\rho^{*}/4, 1/2]\}. 
\ee
The reader must keep this notation in mind.

We will also use the following definition. Let $W$ be a tensor of any valence defined at just one point $x$ of a flat torus $(T;h_{F})$. Then the {\it $h_{F}$-extension} of $W$ is the tensor field defined by translating $W$ to all $T$ by its isometry group. 

\begin{Proposition}\label{PORSI} Let $(\Sigma; \hg, U)$ be a static end, and let $\gamma$ be a ray emanating from $\partial \Sigma$. Let $0<\rho^{*}\leq 1/2$ and let integers $j^{*}\geq 0$ and $m^{*}\geq 1$. Then, there exist positive constants $\epsilon^{*}, \mu^{*}\leq \rho^{*}/2, r^*, c^*$, such that if at a point $p\in \gamma$ with $r=r(p)\geq r^*$ we have,
\begin{enumerate}[labelindent=\parindent, leftmargin=*, label={\rm (\alph*)}, widest=a, align=left]
\item\label{PORSI-a} $\dist_{GH}\big(\big(\mathcal{A}^{c}_{r}(p;1/2,2); d_{r}\big),\big([1/2,2];|\ldots|\big)\big)\leq \epsilon^*$, and,
\item\label{PORSI-b} $|\rho_{r}(p)-\rho^*|\leq \mu^{*}$,
\end{enumerate}
then,
\begin{enumerate}[labelindent=\parindent, leftmargin=*, label={\rm (\Roman*)}, widest=II, align=left]
\item\label{PORSI-I} there is a neighbourhood $\mathcal{U}_{p}$ of $\mathcal{A}_{r}^{c}(p;1/(2a^{*}),2a^{*})$ foliated by level sets of $U$ each of which is a two-torus, and, 
\item\label{PORSI-II} there is a Kasner space $(\mathcal{U}^{\mathbb{K}};\hg^{\mathbb{K}},U^{\mathbb{K}})$, $\mathbb{Z}^{2}$-quotient of the Kasner,
\be
\tilde{\mathcal{U}}^{\mathbb{K}}=I\times \mathbb{R}^{2},\quad \tilde{\hg}^{\mathbb{K}}=dx^{2}+x^{2a}dy^{2}+x^{2b}dz^{2},\quad \tilde{U}^{\mathbb{K}}=d+ c\ln x 
\ee
($I$ is some interval) and a smooth diffeomorphism $\phi:\mathcal{U}_{p}\rightarrow \mathcal{U}^{\mathbb{K}}=I\times \T^{2}$ such that
\begin{align}\label{ESTIMATE}
& \phi_{*}U=U^{\mathbb{K}},\\
& \|\phi_{*} \hg_r - \hg^{\mathbb{K}}\|_{c^{j^{*}}_{\hg^{\mathbb{K}}}} \leq C^{*} \diam_{\hg^{\mathbb{K}}}^{m^*}\big(\phi(T_{p})\big)
\end{align}
where $T_p$ is the level set of $U$ containing $p$.
\end{enumerate}
\end{Proposition}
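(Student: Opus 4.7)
The plan is to combine the metric-collapse structure around $p$ with the Kasner uniqueness argument of subsection \ref{UNIQ}, and then use the torus-flattening tools of subsection \ref{SPTWOP0} to upgrade qualitative $C^0$-closeness to a quantitative $C^{j^*}$-estimate with the power $m^*$ of $\diam$.

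First I would exploit hypotheses \ref{PORSI-a} and \ref{PORSI-b}. By Anderson's decay estimates the rescaled annulus $(\mathcal{A}^c_r(p;1/2,2);\hg_r)$ has uniformly bounded curvature and derivatives, so near-one-dimensional Gromov--Hausdorff collapse forces metric collapse with bounded curvature along finite covers, as recalled in subsection \ref{SFCCRM}. Passing to such a cover produces a $T^2$-symmetric static datum on $[1/2,2]\times T^2$ which, by the uniqueness discussion of subsection \ref{UNIQ}, must be a Kasner. Hypothesis \ref{PORSI-b}, together with smallness of $\mu^*$, pins the Kasner parameter $c$ to the range $[\rho^*/4,1/2]$, so the limit is not $A$ or $C$ and $|\nabla U|_r\geq \rho^*/2$ throughout a thickening of $\mathcal{A}^c_r(p;1/2,2)$; in particular the level sets of $U$ there are smooth tori. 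Propagating the approximation from $[1/2,2]$ to $[1/(2a^*),2a^*]$ uses the computation (\ref{ASTRA})--(\ref{DIAMHAL}): in a true Kasner with $c\in[\rho^*/4,1/2]$ the transversal tori shrink by at least a factor $1/e^2$ over each $a^*$-dilation in $x$, so the near-collapse regime is stable under iterating Kasner steps, and a bounded number of applications (with $\epsilon^*, \mu^*$ chosen small enough at the outset) covers the enlarged annulus, delivering the foliation of item \ref{PORSI-I} together with a diffeomorphism $\phi$ onto a Kasner chart $(\mathcal{U}^{\mathbb{K}};\hg^{\mathbb{K}},U^{\mathbb{K}})$ matching $U$-levels.

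For the quantitative bound in \ref{PORSI-II}, I would parametrise $\mathcal{U}_p$ by the harmonic coordinate $U$, writing $\hg_r=\lambda^2\,dU^2+h(U)$ as in (\ref{UDECESTIM}), and use the reduced system (\ref{RTE1})--(\ref{RTE4}). On each level torus $T_U$, Anderson's bounds on $\gcur$ and its derivatives allow Proposition \ref{FLATTGA} to produce a conformally related flat metric $h_F(U)$ with uniform $C^{k}_{h_F}$ bounds on $h(U)$ and $\Theta(U)$ for $k=j^*+m^*$. Decompose these tensors into their $h_F$-translation-invariant averages (depending only on $U$) plus fluctuations $W(U)$ with zero integral over $T_U$; the averages, satisfying the autonomous reduction (\ref{RTE31})--(\ref{RTE32}), solve an exact Kasner ODE with initial data dictated by $\phi(p)$ and $\rho^*$, pinning down $\hg^{\mathbb{K}}$. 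Each fluctuation vanishes at some point of $T_U$ (expand in a parallel $h_F$-orthonormal frame and apply the intermediate-value theorem to each component) and, having uniformly bounded $C^k_{h_F}$-norm, obeys $\|W\|_{C^{j^*}_{h_F}}\leq c(k,j^*)\,\diam_{h_F}^{m^*}(T_U)$ by Proposition \ref{ENDK1}. Transferring back via (\ref{FLATTGAEQQ}) and comparing $h_F$- with $\hg^{\mathbb{K}}$-diameters of $\phi(T_U)$ yields the desired estimate on $\|\phi_*\hg_r-\hg^{\mathbb{K}}\|_{C^{j^*}_{\hg^{\mathbb{K}}}}$.

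The main obstacle I anticipate is the ODE propagation of these estimates along the full interval $[1/(2a^*),2a^*]$: the Kasner ODE (\ref{RTE31})--(\ref{RTE32}) is not uniformly contracting, so a naive iteration over the $O(\log a^*)$ doublings could degrade the constants and the effective power of $\diam$. Exploiting that the linearised fluctuation equations around the Kasner background have coefficients controlled to arbitrary order (again by Anderson's estimates) should close the loop via a Gronwall-type argument, but this stability analysis, together with the need to arrange $\phi_*U=U^{\mathbb{K}}$ \emph{globally} on $\mathcal{U}_p$ (not merely at $p$), is the delicate technical core of the argument.
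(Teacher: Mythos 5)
Your architecture matches the paper's: part \ref{PORSI-I} via collapse with bounded curvature along the ray, finite covers, and the Kasner uniqueness of subsection \ref{UNIQ} with hypothesis \ref{PORSI-b} pinning $c$ away from $A$ and $C$; part \ref{PORSI-II} via the foliation $\hg_r=\lambda^2dU^2+h$, the flat reference metric of Proposition \ref{FLATTGA}, a $\T^{2}$-symmetric approximant whose deviation from the true fields vanishes somewhere on each torus and has bounded higher derivatives (so Proposition \ref{ENDK1} converts $C^{j^*+m^*}$-boundedness into a $C^{j^*}$-bound of size $\diam^{m^*}$), and finally an ODE comparison with an exact Kasner. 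The one genuine divergence is your symmetrisation: you take fiberwise averages over $T_U$ and get the needed zero from the intermediate value theorem, whereas the paper takes the $h_F$-extension of $\lambda$ and $h$ along a single gradient line $\zeta_{p_0}$ through a point $p_0\in T_p$ where the Gaussian curvature vanishes, so the difference is identically zero along that curve. Both feed Proposition \ref{ENDK1} equally well, but two points in your version need repair. First, the averaged fields do \emph{not} satisfy the Kasner system (\ref{RTE41})--(\ref{RTE43}) exactly: averaging does not commute with the quadratic terms, and the true evolution (\ref{SSA2}) of $\Theta$ carries $-D_AD_B\lambda$ and $2\kappa h_{AB}$ terms absent from the Kasner ODE. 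The correct move — which the paper makes and which your closing Gronwall remark gestures at — is to define $\hg^{\mathbb{K}}$ as the solution of the \emph{exact} Kasner ODE with initial data matched to the symmetrised fields at $T_p$, and bound the deviation by an ODE perturbation estimate with source $O(\diam^{m^*})$; since $\rho_r$ is pinned near $\rho^*$ the $U$-range over $\mathcal{A}^c_r(p;1/(2a^*),2a^*)$ is uniformly bounded, so a single Gronwall on a fixed interval suffices and your worry about degradation over $O(\log a^*)$ doublings does not arise. Second, for the comparison solution to be an honest Kasner its initial data must satisfy the Kasner constraint (\ref{RTE23}) exactly; the paper secures this by evaluating the constraint (\ref{SSA3}) at the point $p_0$ where $\kappa=0$ (such a point exists on a torus by Gauss--Bonnet), whereas with averages the constraint holds only up to fluctuation-squared errors and requires an extra normalisation. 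Finally, $\phi_*U=U^{\mathbb{K}}$ is automatic once $U$ is taken as the coordinate on the interval factor in both spaces, so it is not a delicate point.
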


\begin{proof}\

\ref{PORSI-I} Proceeding by contradiction, assume that for every $\epsilon^{*}_{i}=1/i$, $\mu^{*}_{i}=1/i$ and $r^{*}=i$, with $i\geq i_{0}$, there is $p_{i}\in\gamma$ with $r_{i}=r(p_{i})\geq r^{*}_{i}$ for which \ref{PORSI-a} and \ref{PORSI-b} hold but for which the neighbourhood ${\mathcal{U}}_{p}$ with the desired properties does not exist. But if \ref{PORSI-a} holds and $p_{i}$ belongs to a ray then the space $(\mathcal{A}_{r_{i}}^{c}(p_{i};1/(2a^{*}),2a^{*});d_{r_{i}})$ necessarily metrically collapses to a segment of length $2a^{*}-1/(2a^{*})$. Thus there are neighbourhoods $\mathcal{B}_{i}$ of $\mathcal{A}_{r}^{c}(p_{i};1/(2a^{*}),2a^{*})$ and covers $\pi_{i}:\tilde{\mathcal{B}}_{i}\rightarrow \mathcal{B}_{i}$ such that $(\tilde{\mathcal{B}}_{i};\tilde{g}_{r_{i}},\tilde{U}_{i})$ converges to a $\Sa\times \Sa$-symmetric data set. The limit data set has non-constant $\rho$ because by \ref{PORSI-b} it must be $\tilde{\rho}_{r_{i}}(p_{i})\rightarrow \rho^{*}$ and $0<\rho^{*}\leq 1/2$. Hence the limit space is a Kasner space different from $A$ and $C$. Therefore for $i$ large enough the level sets of $\tilde{U}$ foliate $\tilde{\mathcal{B}}_{i}$ and hence $\mathcal{B}_{i}$. Thus the neighbourhoods $\mathcal{U}_{p_{i}}$ with the desired properties exist for $i$ large enough, which is a contradiction. 
\vs 

It is direct to see from the argumentation above that, after choosing $\epsilon^{*}$ smaller if necessary and $r^{*}$ bigger if necessary, $\rho_{r}$ is uniformly bounded above and below away from zero; that is, for some $0<\underline{\rho}<\overline{\rho}$, the bound $0<\underline{\rho}\leq \rho_{r}\leq \overline{\rho}$ holds on $\mathcal{U}_{p}$, for any $p\in \gamma$ with $r(p)\geq r^{*}$ for which \ref{PORSI-a} and (b) hold. In the proof of part \ref{PORSI-II} we will assume that $\epsilon^{*}$ and $r^{*}$ were chosen accordingly. As the proof progresses the values of $\epsilon^{*}$ and $r^{*}$ will be adjusted a few times.

Note that the estimates (\ref{ESTCHEC}) of Part I and the uniform bound for $\rho_{r}$ show that for any $i\geq 0$, $|\nabla^{(i)}\rho_{r}|_{r}$ is uniformly bounded (without the need to adjust $\epsilon^{*}$ or $r^{*}$ for each $i$). Similarly for any $i\geq 0$, $|\nabla^{(i)}\lambda_{r}|_{r}$ is uniformly bounded. 

{\it Terminology}: It is natural then to introduce the following terminology that will be used throughout the proof of \ref{PORSI-II} below. Let $\mathcal{G}$ be a geometric quantity defined on each of the neighbourhoods $\mathcal{U}_{p}$ (for instance $\mathcal{G}=\lambda_{r}$). Then $\mathcal{G}$ is {\it uniformly bounded} if one can find a constant $C>0$ such that $\mathcal{G}\leq C$ holds on $\mathcal{U}_{p}$, for any $p$ with $r(p)\geq r^{*}$ for which \ref{PORSI-a} and (b) hold.
\vs

\ref{PORSI-II} The construction of the Kasner space and the map $\phi$ is done in the three progressive steps \ref{PORSI-II}-A, \ref{PORSI-II}-B and \ref{PORSI-II}-C below. In (II)-A we define a map $\phi$ from $\mathcal{U}$ into a product space $I\times \T^{2}$. Then, also in \ref{PORSI-II}-A, we define a product metric $\hg_{F}$ on $I\times \T^{2}$ that will be used as a support metric, and prove its main properties. In \ref{PORSI-II}-B, we use $\hg_{F}$ to define a good $\Sa\times \Sa$-symmetric approximation $\breve{\hg}$ to $\phi_{*}\hg$. Finally in \ref{PORSI-II}-C we show that $(\breve{\lambda},\breve{\hg})$ 'almost' satisfy the ODEs defining Kasner metrics that were discussed in subsection \ref{UNIQ} and make the error explicit. We show that the Kasner solution defined out of such ODEs with an initial data equal to that of $(\breve{\lambda},\breve{\hg})$ at an initial slice, gives the wished Kasner metric $\hg^{\mathbb{K}}$ and $U^{\mathbb{K}}$. 
\vs

{\it Notation}: Throughout this part \ref{PORSI-II} we will be working on the neighbourhoods $\mathcal{U}_{p}$ and at the scaled geometry, namely dealing with $\hg_{r}$ rather than $\hg$. However to prevent a cumbersome notation we will omit the subindex $r$ everywhere. The reader should be aware of that.

\ref{PORSI-II}-A. {\sc The trivialisation $\phi$ and the flat metric $\hg_{F}$.} Given $q\in \mathcal{U}_{p}$ let $\zeta_{q}(U)$ be the integral curve of the vector field $\nabla^{a} U$, extending throughout $\mathcal{U}_{p}$ and parametrised by $U$. Then define $\phi:\mathcal{U}_{p}\rightarrow T_{p}\times I$ by $\phi(q)=(T_{p}\cap \zeta_{q},U(q))$. We will be identifying $\mathcal{U}_{p}$ with $T_{p}\times I$ via the diffeomorphism $\phi$.

On $T_{p}\times I$ the metric $\hg$ is written as
\be
\hg=\lambda^{2} dU^{2}+h
\ee
where $\lambda=1/\rho$ and $h$ is the induced metric on the tori $T_{U}:=T_{p}\times U$. Denote by $D$ the intrinsic covariant derivative on the $T_U$'s. As $T_{U(p)}$ will appear often we will use the simpler notation $T_{p}$.

Let $h_{F}$ be the metric on $T_{p}$ that is conformally related to $h|_{T_{p}}$ and that is equal to $h$ at $p$ (Proposition \ref{FLATTGA}). On $T_{p}\times I$ define
\be
\hg_{F}=dU^{2}+h_{F}.
\ee
Around any point $q\in T_{p}$ we can consider coordinates $(z_{1},z_{2})$ such that $h_{F}=dz_{1}^{2}+dz_{2}^{2}$. On every patch $(z_{1},z_{2},U)$ we have $\hg_{F}=dU^{2}+dz_{1}^{2}+dz_{2}^{2}$. For this reason the $h_{F}$-covariant derivative on the tori $T_{U}$ will be denoted by $\partial_{A}$ or simply $\partial$.

We claim that,
\begin{enumerate}[labelindent=\parindent, leftmargin=*, label=(\roman*), widest=IV, align=left]
\item\label{PORSI-i} $e^{-C_{0}}h_{F}\leq h\leq e^{C_{0}}h_{F}$, where $C_{0}>0$ is uniform,
\item\label{PORSI-ii} for any $i\geq 0$ and $l\geq 0$, $|\partial^{l}_{U}\partial^{(i)} h|_{h_{F}}$ and $|\partial^{l}_{U}\partial^{(i)}\lambda|_{h_{F}}$ are uniformly bounded. 
\end{enumerate}

Of course these uniform bounds should be understood to hold at every point of every $T_{U}$ in $\mathcal{U}_{p}$. 

We prove first \ref{PORSI-i}. We start showing that for every $i\geq 0$, $|D^{i}\lambda|_{h}$, $|D^{(i)}\Theta|_{h}$ and $|D^{i}\theta|_{h}$ are uniformly bounded. Let $v$ and $w$ be two unit vectors tangent to a $T_{U}$ at one point. A normal unit vector to $T_{U}$ is $n^{a}=\lambda\nabla^{a}U$. Then we compute,
\be\label{SECFUNDBC}
\Theta(v,w)=\langle \nabla_{v} (\lambda\nabla U),w\rangle = \big(\lambda\nabla_{a}\nabla_{b}U+(\nabla_{a}\lambda) \nabla_{b}U\big)v^{a}w^{b}
\ee
By the estimates (\ref{ESTCHEC}) of Part I, $|\nabla_{a}U|_{\hg}$ and $|\nabla_{a}\nabla_{b}U|_{\hg}$ are uniformly bounded. Similarly, as mentioned in \ref{PORSI-I}, $\lambda$ and $|\nabla \lambda|_{\hg}$ are uniformly bounded. Hence $|\Theta|_{h}$ is uniformly bounded. For the same reason $\nabla$-derivatives of $\Theta$ are uniformly bounded, and therefore are the $D$-derivatives because $\nabla$ and $D$ differ from each other in $\Theta$. These bounds imply the uniform bounds also for $|D^{i}\lambda|_{h}$ and $|D^{i}\theta|_{h}$.

Recall that the Gaussian curvature $\kappa$ of the metric $h$ on a slice $T_{U}$ is given by
\be
2\kappa=-|\Theta|^{2}-\theta^{2}-\frac{2}{\lambda^{2}}.
\ee
The previous estimates then show that for every $i\geq 0$, $|D^{(i)}\kappa|_{h}$ is also uniformly bounded.

So far these uniform bounds hold without the need to adjust $\epsilon^{*}$ or $r^{*}$, because they are due essentially to the bounds (\ref{ESTCHEC}) of Part I and the uniform bounds for $\rho$. In the sequel we may need however further adjustment. Chose then $\epsilon^{*}$ sufficiently small such that $\diam_{h}(T_{p})$ is small enough that we can use Proposition \ref{FLATTGA} on $T_{p}$ to conclude first that
\be\label{INCLUDE}
e^{-K_{0}}h_{F}\leq h\big|_{T_{p}}\leq e^{K_{0}}h_{F}
\ee
where $K_{0}>0$ is uniform and second that for any $i\geq 1$, $|\partial^{(i)} h|_{T_{p}}|_{h_{F}}$ is uniformly bounded. 

Now we explain how \ref{PORSI-i} is a simple consequence of the boundedness of the second fundamental forms. Recall that
\be\label{GGQQ}
\partial_{U}h=2\lambda\Theta
\ee
As $\lambda$ is uniformly bounded and as $e^{-K_{1}}h\leq \Theta\leq e^{K_{1}}h$ at every $T_{U}$ and for some uniform $K_{1}>0$, we deduce that $e^{-K_{2}}h\leq \partial_{U} h\leq e^{K_{2}}h$ for some uniform $K_{2}>0$. After integration in $U$ we obtain $e^{-K_{3}}h|_{T_{p}}\leq h \leq e^{K_{3}}h|_{T_{p}}$ for some uniform $K_{3}>0$, which is equivalent to $e^{-C_{0}}h_{F}\leq h \leq e^{C_{0}}h_{F}$ for a uniform $C_{0}>0$ because of (\ref{INCLUDE}). 

We turn to prove \ref{PORSI-ii}. We have mentioned already that $|\nabla \lambda|_{\hg}$ is uniformly bounded. Thus, $|\partial_{U} \lambda|(=|\rho^{2}\langle\nabla U,\nabla \lambda\rangle|)$ is uniformly bounded and so is $|\partial \lambda|_{h_{F}}$ by (\ref{INCLUDE}). We prove then that $|\partial_{U}h|_{h_{F}}$ and $|\partial h|_{h_{F}}$ are uniformly bounded. The uniform bound for $|\partial_{U}h|_{h_{F}}$ follows directly from the formula (\ref{GGQQ}), the uniform bound of $\lambda$ and of $|\Theta|_{h}$, and \ref{PORSI-i}. Let us turn now to prove the uniform bound for $|\partial h|_{h_{F}}$. We work in coordinates. We compute
\be
\partial_{U}\partial_{C}h_{AB}=2(\partial_{C}\lambda)\Theta_{AB}+2\lambda \partial_{C}\Theta_{AB}
\ee
where we can write
\be
\partial_{C}\Theta_{AB}=D_{C}\Theta_{AB}+\Gamma_{CA}^{M}\Theta_{MB}+\Gamma_{CB}^{M}\Theta_{AM}
\ee
with the Levi-Civita connection $\Gamma_{AB}^{C}$ being
\be
\Gamma_{AB}^{C}=\frac{1}{2}\{\partial_{A}h_{MB}+\partial_{B}h_{AM}-\partial_{M}h_{AB}\}h^{MC}
\ee
Hence, relying on the estimates previously obtained we can write
\be
\partial_{U}(\partial_{C}h_{AB})=X_{CAB}^{\ \ \ C'A'B'}(\partial_{C'}h_{A'B'})+Y_{CAB}
\ee
where $|X_{CAB}^{\ \ \ C'A'B'}|$ and $|Y_{CAB}|$ are uniformly bounded. Using this system of first order ODEs and the uniform bound for $|\partial h|_{T_{p}}|_{h_{F}}$ at the initial slice $T_{p}$, we get directly the desired uniform boundedness of $|\partial_{C}h_{AB}|$. 

Proving that  for every $i\geq 0$ and $l\geq 0$, $|\partial^{l}_{U}\partial^{(i)}\lambda|_{h_{F}}$ and $|\partial_{U}^{l}\partial^{(i)} h|_{h_{F}}$ are uniformly bounded, is done by the iteration of the same arguments. 
\vs

\ref{PORSI-II}-B. {\sc A `good' $\Sa\times \Sa$-symmetric approximation $\breve{\hg}$ of $\hg$.} We explain first how to define $\breve{\hg}$ and then we explain how well it does approximate $\hg$. Let $p_{0}$ be a point in $T_{p}$ where the Gaussian curvature is zero. The choice of $p_{0}$ will play some role that we will explain later. Then define
\be
\breve{\hg}=\breve{\lambda}^{2}dU^{2}+\breve{h}
\ee
where $\breve{\lambda}$ and $\breve{h}$ are, at every leaf $T_U$, simply the $h_{F}$-extensions of $\lambda(\zeta_{p_{0}}(U))$ and $h|_{\zeta_{p_{0}}(U)}$ respectively (recall the notion of $h_{F}$-extension right before the statement of the proposition). Note, in particular, that $h-\breve{h}$ and $\lambda-\breve{\lambda}$ are zero all over $\zeta_{p_{0}}(U)$.

We prove now that for every $i\geq 0$ and $l\geq 0$ there is a uniform $C>0$ such that 
\begin{gather}
\label{FOR1}
|\partial_{U}^{l}\partial^{(i)}(h-\breve{h})|_{h_{F}}\leq C\diam^{m^{*}}_{h_{F}}(T_{p}),\\ 
\label{FOR11}
|\partial_{U}^{l}\partial^{(i)}(\lambda -\breve{\lambda})|_{h_{F}}\leq C\diam^{m^{*}}_{h_{F}}(T_{p})
\end{gather}
Fix $i$ and $l$. In a coordinate patch $(z_{1},z_{2},U)$ around $\zeta_{p_{0}}=p_{0}\times I$, ($p_{0}=(0,0)$), we have
\be
\breve{h}_{AB}(z_{1},z_{2},U)=h_{AB}(0,0,U),\quad \breve{\lambda}(z_{1},z_{2},U)=\lambda(0,0,U)
\ee
for all $(z_{1},z_{2},U)$. Taking $\partial_{U}$-derivatives we deduce that for every $l'\geq 0$, also $(\partial_{U}^{l'}\breve{h})|_{T_{U}}$ and $(\partial_{U}^{l'}\breve{\lambda})|_{T_{U}}$ are the $h_{F}$-extensions of $(\partial_{U}^{l'}h)|_{\zeta_{p_{0}}(U)}$ and $(\partial_{U}^{l'}\lambda)|_{\zeta_{p_{0}}(U)}$ respectively. Therefore $\partial^{l'}_{U}(h-\breve{h})$ and $\partial^{l'}_{U}(\lambda-\breve{\lambda})$ are zero at every point on $\zeta_{p_{0}}(U)$. If we prove that in addition for every $i'\geq 0$ and $l'\geq 0$, $|\partial^{(i')}\partial_{U}^{l'} (h-\breve{h})|_{h_{F}}$ is uniformly bounded then the $C^{i+m^{*}}_{h_{F}}$-norm of $\partial_{U}^{l}(h-\breve{h})$ on every $T_{U}$ would be uniformly bounded. We could then use Proposition \ref{ENDK1} at every tori $T_{U}$, (in Proposition \ref{ENDK1} use $W=\partial^{l}_{U}(h-\breve{h})$, $k=i+m^{*}$ and $j=i$), to conclude (\ref{FOR1}) and (\ref{FOR11}). Let us prove then these bounds.

First, as $(\partial_{U}^{l'}\breve{h})|_{T_{U}}$ is the $h_{F}$ extension of $(\partial_{U}^{l'}h)|_{\zeta_{p_{0}}(U)}$, then at every point $q$ in a torus $T_{U}$ we have $|\partial^{l'}_{U}\breve{h}|_{h_{F}}(q)=|\partial^{l'}_{U}\breve{h}|_{h_{F}}(\zeta_{p_{0}}(U))=|\partial^{l'}_{U}h|_{h_{F}}(\zeta_{p_{0}}(U))$. But by (ii), for every $l'\geq 0$, $|\partial^{l'}_{U}h|_{h_{F}}$ is uniformly bounded, hence $|\partial^{l'}_{U}(h-\breve{h})|_{h_{F}}(\leq |\partial^{l'}_{U}h|_{h_{F}}+|\partial^{l'}_{U}\breve{h}|_{h_{F}})$, is uniformly bounded.

In second place, as $(\partial_{U}^{l'}\breve{h})|_{T_{U}}$ and $(\partial_{U}^{l'}\breve{\lambda})|_{T_{U}}$ are the $h_{F}$-extensions of $(\partial_{U}^{l'}h)|_{\zeta_{p_{0}}(U)}$ and $(\partial_{U}^{l'}\lambda)|_{\zeta_{p_{0}}(U)}$ respectively then for any $i'\geq 1$ we have $\partial^{(i')}\partial^{l'}_{U}\breve{h}=0$ and $\partial^{(i')}\partial^{l'}_{U}\breve{\lambda}=0$. Therefore,
\be
\partial^{l}_{U}\partial^{(i')}(h-\breve{h})=\partial^{l}_{U}\partial^{(i')}h\quad {\rm and} \quad\partial^{l}_{U}\partial^{(i')}(\lambda-\breve{\lambda})=\partial^{l}_{U}\partial^{(i')}\lambda
\ee 
By the estimates \ref{PORSI-i} and \ref{PORSI-ii} in \ref{PORSI-I}, the $h_{F}$-norm of the right hand side of each of these expressions is uniformly bounded. This concludes the proof of the bounds that we claimed above.

These estimates imply now, for any $i\geq 0$ and $l\geq 0$, we have
\begin{gather}
\label{FOR2}
|\partial^{l}_{U}\partial^{(i)}DD(\lambda-\breve{\lambda})|_{h_{F}}\leq C_{li}\diam_{h_{F}}^{m^{*}}(T_{p}),
\end{gather}
where the $C_{li}$ are uniform (use that $D=\partial +\Gamma$). This is the estimate that will be used in \ref{PORSI-II}-C.
\vs

\ref{PORSI-II}-C. {\sc The Kasner approximation $\hg^{\mathbb{K}}$ of $\hg$.} In coordinates $(z_{1},z_{2},U)$ the static equations are
\begin{align}
\label{SSA1} & \partial_{U}h_{AB}=2\lambda \Theta_{AB},\\
\label{SSA2} & \partial_{U}\Theta_{AB}=-D_{A}D_{B}\lambda +\lambda(2\kappa h_{AB}-\theta \Theta_{AB}+2\Theta_{AC}\Theta^{C}_{\ B}),\\
\label{SSA5} & \partial_{U}\bigg(\frac{\sqrt{|h|}}{\lambda}\bigg)=0,\\
\label{SSA3} & \Theta_{AB}\Theta^{AB}-\theta^{2}=-\frac{2}{\lambda^{2}}-2\kappa,\\
\label{SSA4} & D^{A}\Theta_{AB}=D_{B}\theta,
\end{align}
where, as earlier, $\theta=\Theta_{A}^{\ A}$. The equation (\ref{SSA5}) is the same as $\Delta U=0$ and is equivalent to 
\be\label{SSA6}
\partial_{U}\lambda=\lambda^{2}\theta
\ee
We will use this equation instead of (\ref{SSA5}). 

Evaluating (\ref{SSA1}), (\ref{SSA2}), (\ref{SSA3}), (\ref{SSA6}) and (\ref{SSA4}) at $\zeta_{p_{0}}(U)$ and (\ref{SSA3}) at $p_{0}$ we get,
\begin{align}
\label{SSA1B} & \partial_{U}\breve{h}_{AB}=2\breve{\lambda} \breve{\Theta}_{AB},\\
\label{SSA2B} & \partial_{U}\breve{\Theta}_{AB}=\breve{\lambda}(2\overline{\kappa}\breve{h}_{AB}-\breve{\theta} \breve{\Theta}_{AB}+2\breve{\Theta}_{AC}\breve{\Theta}^{C}_{\ B})+O^{\infty}_{AB}(\diam_{h_{F}}^{m^{*}}(T_{p})),\\
\label{SSA5B} &\partial_{U}\breve{\lambda}=\breve{\lambda}^{2}\breve{\theta},\\
\label{SSA3B} & \big(\breve{\Theta}_{AB}\breve{\Theta}^{AB}-\breve{\theta}^{2}\big)\bigg|_{p_{0}}=-\frac{2}{\breve{\lambda}^{2}}\bigg|_{p_{0}},\\
\label{SSA4B} &\breve{\partial}^{A}\breve{\Theta}_{AB}=\partial_{B}\breve{\theta},
\end{align}
where $\overline{\kappa}$ is defined as
\be
\overline{\kappa}=\bigg[-\frac{1}{\breve{\lambda}^{2}}-\frac{1}{2}\big(\breve{\Theta}_{AB}\breve{\Theta}^{AB}-\breve{\theta}^{2}\big)
\bigg]\bigg|_{p_{0}}
\ee
(and is not the Gaussian curvature of $\breve{h}$ which is zero) and where $O^{\infty}_{AB}$ is
\be\label{ODDI}
O^{\infty}_{AB}=-D_{A}D_{B}\lambda.
\ee
This notation is to stress that, as was shown in (\ref{FOR2}), for any $l\geq 0$ we have
\be
|\partial^{l}_{U}O^{\infty}_{AB}|_{h_{F}}\leq C_{l}\diam^{m^{*}}_{h_{F}}(T_{p})
\ee
where $C_{l}$ is uniform.

Consider now the metric 
\be\label{GKMETR}
\hg^{\mathbb{K}}=(\lambda^{\mathbb{K}})^{2}dU^{2}+h^{\mathbb{K}},
\ee
where $\lambda^{\mathbb{K}}=\lambda^{\mathbb{K}}(U)$ and $h^{\mathbb{K}}=h^{\mathbb{K}}(U)$ solve
\begin{align}
\label{RTE41} & \partial_{U}h^{\mathbb{K}}_{AB}=2\lambda^{\mathbb{K}} \Theta^{\mathbb{K}}_{AB},\\
\label{RTE42} & \partial_{U}\Theta^{\mathbb{K}}_{AB}=\lambda^{\mathbb{K}}(-\theta^{\mathbb{K}}\Theta^{\mathbb{K}}_{AB}+2\Theta^{\mathbb{K}}_{AC}\Theta^{\mathbb{K}C}_{\ B}),\\
\label{RTE43} & \partial_{U}\lambda^{\mathbb{K}}=(\lambda^{\mathbb{K}})^{2}\theta^{\mathbb{K}}
\end{align}
subject to the initial data
\be
h^{\mathbb{K}}_{AB}(0)=\breve{h}_{AB}(0),\quad \Theta_{AB}^{\mathbb{K}}(0)=\breve{\Theta}_{AB}(0)\quad \text{and}\quad \lambda^{\mathbb{K}}(0)=\breve{\lambda}(0).
\ee
Following the discussion in Section \ref{UNIQ}, we see that $(\lambda^{\mathbb{K}}(U),h^{\mathbb{K}}(U))$ satisfy (\ref{RTE1}), (\ref{RTE2}) and (\ref{RTE43}) for all $U$, and (\ref{RTE3}) at the initial time, hence is a Kasner solution. Hence 
\be\label{ZERO}
0=-\frac{1}{(\lambda^{\mathbb{K}})^{2}}-\frac{1}{2}\big(\Theta^{\mathbb{K}}_{AB}\Theta^{\mathbb{K}AB}-(\theta^{\mathbb{K}})^{2}\big)
\ee
on each $T_{U}$. Thus, (\ref{RTE42}) is equivalent to 
\be
\partial_{U}\Theta^{\mathbb{K}}_{AB}=\lambda^{\mathbb{K}}(2\overline{\kappa}^{\mathbb{K}}-\theta^{\mathbb{K}}\Theta^{\mathbb{K}}_{AB}+2\Theta^{\mathbb{K}}_{AC}\Theta^{\mathbb{K}C}_{\ B}),
\ee
where $\overline{\kappa}^{\mathbb{K}}$ is the right hand side of (\ref{ZERO}) and is zero. Therefore, thought as ODE's, the system (\ref{SSA1B}), (\ref{SSA2B}), (\ref{SSA5B}) is a perturbation of the system  (\ref{RTE41}), (\ref{RTE42}), (\ref{RTE43}) where the 'perturbation' is $O^{\infty}_{AB}$ and should be thought as depending only on $U$. Both systems have also the same initial data. Therefore, using (\ref{ODDI}) and standard ODE analysis we obtain
\begin{gather}
\label{FCASE1} |\partial^{l}_{U}(\breve{h}-h^{\mathbb{K}})|_{h_{F}}\leq C_{l}^{*}\diam^{m^{*}}_{h_{F}}(T_{p}),\\
\label{FCASE2}|\partial^{l}_{U}(\breve{\lambda}-\lambda^{\mathbb{K}})|\leq C^{*}_{l}\diam^{m^{*}}_{h_{F}}(T_{p})
\end{gather}
for any $l\geq 0$, where the $C^{*}_{l}$ are uniform. Now note that because $\partial^{(i)}h_{\mathbb{K}}=\partial^{(i)}\breve{h}=0$ then for every $i\geq 1$ we have
\begin{gather}
\partial_{U}^{l}\partial^{(i)}(h-h^{\mathbb{K}})=\partial_{U}^{l}\partial^{(i)}(h-\breve{h}),\\
\partial_{U}^{l}\partial^{(i)}(\lambda-\lambda^{\mathbb{K}})=\partial_{U}^{l}\partial^{(i)}(\lambda-\breve{\lambda})
\end{gather}
Thus, from (\ref{FOR1}) and (\ref{FOR11}) we obtain
\begin{gather}
\label{FCASE3} |\partial_{U}^{l}\partial^{(i)}(h-h^{\mathbb{K}})|_{h_{F}}\leq C_{li}^{*}\diam^{m^{*}}_{h_{F}}(T_{p}),\\
\label{FCASE4} |\partial_{U}^{l}\partial^{(i)}(\lambda-\lambda^{\mathbb{K}})|_{h_{F}}\leq C_{li}^{*}\diam^{m^{*}}_{h_{F}}(T_{p})
\end{gather}
where the $C^{*}_{li}$ are uniform. 

The estimates (\ref{ESTIMATE}) claimed in \ref{PORSI-II} are equivalent to (\ref{FCASE3}),(\ref{FCASE4}). This finishes the proof of the Proposition.
\end{proof} 

\begin{Proposition}\label{PORE} Let $(\Sigma; \hg, U)$ be a static end, and let $\gamma$ be a ray. Let $0<\rho^{*}\leq 1/2$ and let $j^{*}\geq 1$ and $m^{*}\geq 2$. Let $\epsilon^{*}, \mu^{*}, r^*, C^*$, be as in Proposition \ref{PORSI}. Then, there exist positive $\delta^*$, $\ell^*$ and $B^{*}$ such that if $p$ is a point in $\gamma$ with $r=r(p)\geq r^*$ satisfying,
\begin{enumerate}[labelindent=\parindent, leftmargin=*, label={\rm (\alph*)}, widest=a, align=left]
\item\label{PORE1} $\dist_{GH}\big(\big(\mathcal{A}^{c}_{r}(p;1/2,2); d_{r}\big),\big([1/2,2];|\ldots|\big)\big)\leq \epsilon^*$,
\item\label{PORE2} $|\rho_{r}(p)-\rho^*|\leq \mu^{*}$,
\item\label{PORE3} $|\theta_{r}(p)-1|\leq \delta^*$,
\item\label{PORE4} $\diam_{\hg^{\mathbb{K}}}(\phi(T_p))\leq \ell^*$,
\end{enumerate}
(where $\U_{p}$, $(\U^{\K};\hg^{\K},U^{\K})$ and $\phi:\U_{p}\rightarrow \U^{\K}$ are respectively the neighbourhood, the Kasner data and the diffeomoprhisms from \ref{PORSI-I} and \ref{PORSI-II} of Proposition \ref{PORSI}),
and $p'$ is a point in $\gamma$ with $r':=r(p')=a^* r$, then the following holds,
\begin{enumerate}[labelindent=\parindent, leftmargin=*, label={\rm (\Roman*)}, widest=a, align=left]
\item\label{POREI} $\dist_{GH}\big(\big(\mathcal{A}^{c}_{a^* r}(p';1/2,2); d_{a^* r}\big),\big([1/2,2];|\ldots|\big)\big)\leq \epsilon^*/2$,
\item\label{POREII} $\diam_{\hg_{a^{*}}^{\mathbb{K}}}(T_{p'})\leq \diam_{\hg^{\mathbb{K}}}(T_{p})/2$,
\item\label{POREIII} $|\theta_{r'}(p')-1|\leq B^{*}\diam^{2}_{\hg^{\mathbb{K}}}(T_{p})+|\theta_{r}(p)-1|/2$,
\item\label{POREIV} $|\rho_{r'}(p')-\rho_{r}(p)|\leq B^{*}\diam^{2}_{\hg^{\mathbb{K}}}(T_{p})+|\theta_{r}(p)-1|/2$. 
\end{enumerate}
\end{Proposition}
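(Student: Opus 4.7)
The plan is to apply Proposition \ref{PORSI} at $p$ to realise the rescaled data on the neighbourhood $\U_{p}$ (which contains $\mathcal{A}^{c}_{r}(p;1/(2a^{*}),2a^{*})$ and hence $p'$, since $r'=a^{*}r$) as $C^{j^{*}}$-close to an explicit Kasner model $(\U^{\K};\hg^{\K},U^{\K})$, then to read off the relevant geometric quantities at $p'$ from the Kasner model, and finally to transfer back to the actual data through the estimate \eqref{ESTIMATE}. The hypotheses \ref{PORE1}--\ref{PORE4} are tailored precisely so that Proposition \ref{PORSI} applies at $p$; the parameters $\delta^{*},\ell^{*},B^{*}$ will be fixed at the end so that all error terms are absorbed.

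On the Kasner model $\hg^{\K}=dx^{2}+x^{2a}dy^{2}+x^{2b}dz^{2}$, $U^{\K}=d+c\ln x$, with $0<a,b<1$ (since $c\neq 0$ is forced by $\rho^{*}>0$), a direct calculation gives $\theta_{\hg^{\K}}=1/x$ and $\rho_{\hg^{\K}}=c/x$ on every transversal torus; the rescaled invariants on $T_{x}$ are therefore $\theta_{\hg^{\K}_{x}}=1$ and $\rho_{\hg^{\K}_{x}}=c$. Since the minimising segment of $\gamma$ from $p$ to $p'$ has $\hg_{r}$-length $a^{*}-1$, its $\phi$-image traverses from the slice containing $\phi(p)$ to one at approximately $a^{*}$ times its $x$-value, up to an error controlled by $\diam_{\hg^{\K}}(\phi(T_{p}))$. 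The inequality \eqref{ASTRA} together with the definition of $a^{*}$ from \eqref{ASDEF} then yields
\begin{equation}
\diam_{\hg^{\K}_{a^{*}}}(\phi(T_{p'}))\leq e^{-2}\diam_{\hg^{\K}}(\phi(T_{p}))+O(\ell^{*m^{*}}),
\end{equation}
which is (II) for $\ell^{*}$ sufficiently small. Conclusion (I) then follows by combining (II) with the $C^{0}$-piece of \eqref{ESTIMATE} applied to $\hg_{a^{*}r}$: the rescaled Kasner annulus is GH-close to $[1/2,2]$ because its transversal tori shrink by (II), and $C^{0}$-closeness transfers this to the actual data.

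For (III) and (IV) the error at $p'$ is split into a Kasner-value contribution and a data-to-Kasner contribution. The latter is at most $C^{*\prime}\diam^{m^{*}}_{\hg^{\K}_{a^{*}}}(\phi(T_{p'}))\leq B^{*}\diam^{2}_{\hg^{\K}}(\phi(T_{p}))$ by the $C^{1}$-part of \eqref{ESTIMATE} (using $m^{*}\geq 2$ and (II)), and produces exactly the $B^{*}\diam^{2}$ summands. The Kasner-value contribution is more subtle: the Kasner built in the proof of Proposition \ref{PORSI} is fitted to the data on $T_{p}$ by matching initial values of $h,\Theta,\lambda$, so its rescaled mean curvature at $p$ is $\theta_{r}(p)$ rather than the Kasner value $1$. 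Along $\gamma$, $\theta$ evolves by the autonomous ODE \eqref{THE13}, $\partial_{s}\theta=-\theta^{2}$, so in the unscaled metric $\theta(s)=1/(s+1/\theta(p))$; evaluating at the $\hg$-distance $(a^{*}-1)r$ and multiplying by $r'=a^{*}r$ to pass to $\hg_{r'}$ gives
\begin{equation}
\theta^{\K}_{r'}(\phi(p'))=\frac{a^{*}}{(a^{*}-1)+1/\theta_{r}(p)},
\end{equation}
from which a short calculation yields $|\theta^{\K}_{r'}(\phi(p'))-1|\leq |\theta_{r}(p)-1|/a^{*}\leq |\theta_{r}(p)-1|/4$ by $a^{*}\geq 4$ (itself a consequence of $0<a,b<1$ in \eqref{ASDEF}). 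The analogous contribution for $\rho$ is controlled via the constraint \eqref{RTE23} and the ODE system \eqref{RTE31}--\eqref{RTE32}, which couple $\rho$ to $\theta$ through the shared parameter $c$; this produces the $|\theta_{r}(p)-1|/2$ term in (IV). Adding the two contributions gives the bounds (III)--(IV).

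The principal difficulty is the Kasner-ODE step supplying the contraction factor $1/2$: one must track how the initial deviation $|\theta_{r}(p)-1|$ propagates under \eqref{THE13} over $\hg_{r}$-distance $a^{*}-1$ and verify that the $a^{*}$-rescaling produces a \emph{strict} contraction. The inequality $a^{*}\geq 4$ from \eqref{ASDEF} is essential; without it one would not obtain a summable bootstrap when Proposition \ref{PORE} is iterated along the scales $r, a^{*}r, (a^{*})^{2}r,\ldots$ in the proof of Theorem \ref{KASYMPTOTIC}. A secondary technical point is ensuring that the quadratic error $B^{*}\diam^{2}$ is genuinely small relative to the diameter of the next torus, which is why (II) is proved first and then used to bound the data-to-Kasner error at $p'$.
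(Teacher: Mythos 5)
Your proposal is correct and its mathematical core coincides with the paper's: both rest on Proposition \ref{PORSI} to reduce everything to the explicit Kasner model, split each estimate into a data-to-Kasner transfer error of size $C^{*}\diam^{m^{*}}(\phi(T_{p}))\leq B^{*}\diam^{2}(\phi(T_{p}))$ plus an internal Kasner computation, and obtain the contraction in \ref{POREIII}--\ref{POREIV} from the explicit evolution $\theta^{\K}=1/x$, $\rho^{\K}=c/x$ together with $x'=x+(a^{*}-1)+O(\diam^{2})$ and $a^{*}\geq 4$ (your identity $|\theta^{\K}_{r'}-1|=\frac{x}{(a^{*}-1)+x}|\theta_{r}(p)-1|$ is exactly the paper's computation (\ref{COMB3})--(\ref{COMB4})). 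The one structural difference is logical packaging: the paper runs a compactness/contradiction argument, taking $\delta^{*}_{i}=1/i$, $\ell^{*}_{i}=1/i$, $B^{*}_{i}=i$ and showing \ref{POREI}--\ref{POREIV} must eventually hold along any putative counterexample sequence, so the constants are produced non-constructively; you instead fix $\delta^{*},\ell^{*},B^{*}$ directly, which forces you to make the quantitative dependences (e.g.\ that $\ell^{*}$ small enough gives \ref{POREI} with margin $\epsilon^{*}/2$) explicit, but yields the same statement. Two small points to tighten: the Kasner of Proposition \ref{PORSI} is fitted to the data at the distinguished point $p_{0}\in T_{p}$ (where the Gaussian curvature vanishes), not at $p$ itself, so $\theta^{\K}(\phi(p))=\theta_{r}(p_{0})$ and you must insert the extra $|\theta_{r}(p)-\theta_{r}(p_{0})|\leq C\diam^{2}(\phi(T_{p}))$ term (the paper's (\ref{COMB5})), which is harmlessly absorbed into $B^{*}\diam^{2}$; and in \ref{POREIV} the factor $1/2$ comes concretely from $\rho^{\K}=c/x$ with $c$ near $\rho^{*}\leq 1/2$, i.e.\ $|\rho^{\K}_{a^{*}}(p')-\rho^{\K}(p)|\approx c\,\frac{(a^{*}-1)}{x'}|x-1|\leq \frac{1}{2}|\theta_{r}(p)-1|+O(\diam^{2})$, which is sharper than the appeal to the constraint system you sketch.
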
 
 
\begin{proof} 
Proceeding by contradiction we assume that for each $\delta^{*}_{i}=1/i$, $\ell^{*}_{i}=1/i$ and $B^{*}_{i}=i$, there is $p_{i}\in \gamma$ with $r(p_{i})\geq r^{*}$ satisfying \ref{PORE1}-\ref{PORE4}, and there is $p'_{i}\in \gamma$ with $r'_{i}=r(p'_{i})=a^{*}r(p_{i})$ such that either \ref{POREI}, \ref{POREII}, \ref{POREIII} or \ref{POREIV} does not hold. 

We prove now that for $i\geq i_{0}$ with $i_{0}$ large enough, indeed all \ref{POREI}, \ref{POREII}, \ref{POREIII} and \ref{POREIV} must hold.

\ref{POREI} As $\diam_{\hg_{i}^{\mathbb{K}}}(\phi(T_{p_{i}}))\rightarrow 0$, then the metric distance between $(\mathcal{U}_{p_{i}};\hg_{r_{i}})$ and $(T_{p_{i}}\times I_{i}; \hg_{i}^{\mathbb{K}_{i}})$ tends to zero as $i\rightarrow \infty$ and at the same time the spaces $(T_{p_{i}}\times I_{i}; \hg_{i}^{\mathbb{K}_{i}})$ collapse metrically to a segment of length $(2a^{*}-1/(2a^{*}))$. Hence so does $(\mathcal{U}_{p_{i}};\hg_{r_{i}})$. As $\mathcal{U}_{p_{i}}$ contains $\mathcal{A}^{c}_{r_{i}}(p_{i};1/(2a^{*}),2a^{*})$ and therefore $\mathcal{A}^{c}_{r'_{i}}(p'_{i};1/2,2)$, these last annuli metrically collapse to a segment of length $2-1/2$. Hence \ref{POREI} must hold for $i$ sufficiently large.

\ref{POREII} Let $c_{i}$ be the Kasner parameter of the Kasner space $\mathbb{K}_{i}$. Then by \ref{PORE2}, for sufficiently large $i$ we have $c_{i}>  \rho^{*}/4$ hence \ref{POREII} must hold by the definition (\ref{ASDEF}) of $a^{*}$.

\ref{POREIII}
We write
\be\label{LOLA}
|\theta_{r'_{i}}(p'_{i})-1|\leq |\theta_{r'_{i}}(p'_{i})-\theta^{\mathbb{K}_{i}}_{a^{*}}(p'_{i})|+|\theta^{\mathbb{K}_{i}}_{a^{*}}(p'_{i})-1|
\ee
where $\theta^{\mathbb{K}_{i}}_{a^{*}}(T_{p'_{i}})$ is the mean curvature of the slice $T_{p'_{i}}$ with respect to the Kasner metric $(1/a^{*})^{2}
\hg^{\mathbb{K}_{i}}$, namely $\theta^{\mathbb{K}_{i}}_{a^{*}}(T_{p'_{i}})=a^{*}\theta^{\mathbb{K}_{i}}(T_{p'_{i}})$. Similarly, as $r'_{i}=a^{*}r_{i}$ we have $\theta_{r'_{i}}(p'_{i})=a^{*}\theta_{r_{i}}(p'_{i})$. Therefore for the first term in the right hand side of (\ref{LOLA}) we can write 
\be\label{COMB22}
|\theta_{r'_{i}}(p'_{i})-\theta^{\mathbb{K}}_{a^{*}}(p'_{i})| = a^{*}|\theta_{r_{i}}(p'_{i})-\theta^{\mathbb{K}}(p'_{i})|\leq C^{*}_{1}\diam^{2}_{h_{\mathbb{K}_{i}}}(T_{p_{i}})
\ee
where the last inequality is from \ref{POREII} in Proposition \ref{PORE} with $m^{*}\geq 2$, $j^{*}\geq 1$. 

Write the Kasner metric $\hg^{\mathbb{K}_{i}}$ as
\be
\hg^{\mathbb{K}_{i}}=dx^{2}+x^{2a_{i}}d\varphi_{1}^{2}+x^{2b_{i}}d\varphi_{2}^{2}=(\lambda^{\mathbb{K}_{i}})^{2}dU^{2}+h^{\mathbb{K}_{i}}
\ee
and let $x(p_{i})=x_{i}$ and $x(p'_{i})=x'_{i}$. Then, 
\be
\theta^{\mathbb{K}_{i}}(p_{i})=\frac{1}{x_{i}},\quad {\rm and}\quad \theta^{\mathbb{K}_{i}}(p'_{i})=\frac{1}{x'_{i}}
\ee
and,
\be\label{TOSUB1}
x_{i}'-x_{i}=\int \lambda^{\mathbb{K}_{i}}dU
\ee
where the integral is along any integral line of $\nabla^{a} U$. 

On the other hand the $\hg_{r_{i}}$-length of the segment of $\gamma$ between $p_{i}$ and $p'_{i}$, is equal to $a^{*}-1$. This length is equal, up to an $O(\diam^{2}_{h^{\mathbb{K}_{i}}}(T_{p_{i}}))$ to the $\hg_{r_{i}}$-length of any integral line of $\nabla^{a} U$ between $T_{p_{i}}$ and $T_{p'_{i}}$. So,
\be\label{TOSUB2}
a^{*}-1=\int \lambda_{r_{i}}dU+O(\diam^{2}_{h^{\mathbb{K}_{i}}}(T_{p_{i}}))
\ee
But by Proposition \ref{PORSI} we have $|\lambda_{r_{i}}-\lambda^{\mathbb{K}_{i}}|\leq O(\diam^{2}_{h_{\mathbb{K}_{i}}}(T_{p_{i}}))$. Subtract (\ref{TOSUB1}) and (\ref{TOSUB2}) to get
\be\label{EQFX}
x_{i}'=x_{i}+(a^{*}-1)+O(\diam^{2}_{h_{\mathbb{K}_{i}}}(T_{p_{i}}))
\ee
Thus
\be
\theta^{\mathbb{K}_{i}}_{a^{*}}(p'_{i})=\frac{a^{*}}{x_{i}+a^{*}-1+O(\diam^{2}_{h_{\mathbb{K}_{i}}}(T_{p_{i}}))}
\ee
Then we calculate
\begin{align}
\label{COMB3} |\theta^{\mathbb{K}_{i}}_{a^{*}}(p'_{i})-1| & =\bigg|\frac{x_{i}-1+O(\diam^{2}_{h_{\mathbb{K}_{i}}}(T_{p_{i}}))}{x_{i}+a^{*}-1+O(\diam^{2}_{h_{\mathbb{K}_{i}}}(T_{p_{i}}))}\bigg|\\
\label{COMB4} & \leq \frac{1}{2}\bigg|\frac{1}{x_{i}}-1\bigg| + C^{*}_{3}\diam^{2}_{h_{\mathbb{K}_{i}}}(T_{p_{i}})
\end{align}
where to obtain the bound we used that $x_{i}\rightarrow 1$ and that $a^{*}\geq 4$ (see definition of $a^{*}$). But 
\be
\frac{1}{x_{i}}=\theta^{\mathbb{K}_{i}}(p_{i})=\theta^{\mathbb{K}_{i}}(p_{0i})=\theta_{r_{i}}(p_{0i})
\ee
where $p_{0i}$ is the point over $T_{p_{i}}$ that is used in the construction of $\hg^{\mathbb{K}_{i}}$ in \ref{PORSI-II}-C in Proposition \ref{PORSI}. But again by Proposition \ref{PORSI} we have,
\be
|\theta_{r_{i}}(p_{i})-\theta_{r_{i}}(p_{0i})|\leq C_{4}^{*}\diam^{2}_{h_{\mathbb{K}_{i}}}(T_{p_{i}})
\ee
and thus
\be\label{COMB5}
\bigg|\frac{1}{x_{i}}-1\bigg|=|\theta_{r_{i}}(p_{0i})-1|\leq |\theta_{r_{i}}(p_{i})-1|+C_{4}^{*}\diam^{2}_{h_{\mathbb{K}_{i}}}(T_{p_{i}})
\ee
Combining now (\ref{LOLA}), (\ref{COMB22}), (\ref{COMB3})-(\ref{COMB4}) and (\ref{COMB5}) we deduce that \ref{POREIII} also holds for $i$ sufficiently large.

\ref{POREIV} This follows the same arguments as in \ref{POREIII}. Write,
\begin{align}
\label{POREIVE1} |\rho_{r'_{i}}(p'_{i})-\rho_{r_{i}}(p_{i})| \leq &\ |\rho_{r'_{i}}(p'_{i})-\rho^{\mathbb{K}_{i}}_{a^{*}}(p'_{i})|+|\rho^{\mathbb{K}_{i}}(p_{i})-\rho_{r_{i}}(p_{i})|\\
\label{POREIVE2} &+|\rho^{\mathbb{K}_{i}}_{a^{*}}(p'_{i})-\rho^{\mathbb{K}_{i}}(p_{i})|
\end{align}
The two terms on the right hand side of (\ref{POREIVE1}) are bounded by $O(\diam^{2}_{h_{\mathbb{K}_{i}}}(T_{p_{i}}))$ by Proposition \ref{PORSI} with $m^{*}\geq 2$. On the other hand following notation as in \ref{POREIII}, write $U^{\mathbb{K}_{i}}=c_{i}\ln x$ with $c_{i}\rightarrow \rho^{*}$. Then the term in (\ref{POREIVE2}) is equal to 
\be
\bigg|a^{*}\frac{c_{i}}{x'_{i}}-\frac{c_{i}}{x_{i}}\bigg|
\ee
and using (\ref{EQFX}) we can easily manipulate this expression to obtain the bound 
\be
|x_{i}-1|/2+O(\diam^{2}_{h_{\mathbb{K}_{i}}}(T_{p_{i}}))
\ee
because $a^{*}\geq 4$ and $\rho^{*}>0$. Finally use (\ref{COMB5}) to bound this expression once more and obtain \ref{POREIV}.
\end{proof} 

\begin{Theorem}\label{KASYMPTOTIC} {\rm (A characterisation of KA $\neq A, C$)} Let $(\Sigma;\hg,U)$ be a static end. Let $\gamma$ be a ray and suppose that there is a sequence $p_{i}\in \gamma$ such that $\rho_{r_{i}}(p_{i})\rightarrow \rho^{*}$, with $0<\rho^{*}\leq 1/2$ and that $(\mathcal{A}^{c}_{r_{i}}(p_{i};1/2,2); \hg_{r_{i}})$ metrically collapses to a segment $([1/2,2];|\ldots|)$. Then the end is asymptotically Kasner different from $A$ and $C$.. 
\end{Theorem}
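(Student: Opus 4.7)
The strategy is to fix an index $i$ so large that Propositions \ref{PORSI} and \ref{PORE} both apply at $p_i$, then iterate Proposition \ref{PORE} along $\gamma$ at the geometrically-spaced points $p^{(n)} \in \gamma$ with $r(p^{(n)}) = (a^*)^n r_i$, obtaining a Kasner approximation on each scaled annulus, and finally concatenate these into a single Kasner asymptotic, exploiting the arbitrariness of the differentiability order $j^*$ and the decay exponent $m^*$ in Proposition \ref{PORSI}.

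First I would fix $j^*\geq 1$ and $m^*\geq 2$ (to be taken arbitrarily large at the end), and let $\epsilon^*,\mu^*,\delta^*,\ell^*,C^*,B^*,a^*$ be the constants supplied by Propositions \ref{PORSI} and \ref{PORE}. Conditions \ref{PORSI-a} and \ref{PORSI-b} at $p_i$ follow directly from the hypothesis for $i$ large enough. To check the two extra assumptions \ref{PORE3} and \ref{PORE4}, note that the collapse of $(\mathcal{A}^c_{r_i}(p_i;1/2,2);\hg_{r_i})$ to a segment forces $\diam_{\hg_{r_i}}(T_{p_i})\to 0$, which Proposition \ref{PORSI} transfers to $\diam_{\hg^{\mathbb K}}(\phi(T_{p_i}))$; and the mean curvature of a leaf in the model Kasner at $x$-coordinate $x_0$ equals $1/x_0$, while the collapse identifies $p_i$ with the value $1$ in the segment $[1/2,2]$, so the $x$-coordinate of $\phi(p_i)$ tends to $1$ and the $C^{j^*}$-closeness of Proposition \ref{PORSI} (with $j^*\geq 1$) forces $\theta_{r_i}(p_i)\to 1$.

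Fix such an $i$, set $p^{(0)}:=p_i$, and inductively define $p^{(n+1)}\in\gamma$ by $r(p^{(n+1)})=a^* r(p^{(n)})$. The four conclusions of Proposition \ref{PORE} propagate all four hypotheses to the next point: \ref{POREI} keeps the annuli collapsing, \ref{POREII} halves the diameter $d_n:=\diam_{\hg^{\mathbb K}_n}(\phi_n(T_{p^{(n)}}))$, so $d_n\leq 2^{-n}d_0$, and the two recurrences \ref{POREIII}--\ref{POREIV}, namely
\[
|\theta_{r_{n+1}}(p^{(n+1)})-1|\leq B^*d_n^{\,2}+\tfrac{1}{2}|\theta_{r_n}(p^{(n)})-1|,
\]
\[
|\rho_{r_{n+1}}(p^{(n+1)})-\rho_{r_n}(p^{(n)})|\leq B^*d_n^{\,2}+\tfrac{1}{2}|\theta_{r_n}(p^{(n)})-1|,
\]
yield $|\theta_{r_n}(p^{(n)})-1|\to 0$ geometrically and make $\{\rho_{r_n}(p^{(n)})\}$ Cauchy with limit $c_\infty$ close to $\rho^*$, hence $c_\infty\in(0,1/2]$. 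Therefore Proposition \ref{PORSI} applies at every $p^{(n)}$, producing a Kasner space $\mathbb K_n$ and a diffeomorphism $\phi_n$ on a neighbourhood $\mathcal{U}_n$ of $\mathcal{A}^c_{r_n}(p^{(n)};1/(2a^*),2a^*)$ with $\|\phi_{n*}\hg_{r_n}-\hg^{\mathbb K_n}\|_{C^{j^*}}\leq C^*d_n^{\,m^*}$.

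The convergence of $\theta_n$ and $\rho_n$ pins a unique triple $(a_\infty,b_\infty,c_\infty)$ on the Kasner ellipse, the two-branch ambiguity $(a_\pm,b_\pm)$ being forced to be consistent from one step to the next because $\phi_n$ and $\phi_{n+1}$ are determined by the same induced data $(h,\Theta,\lambda)$ on a common leaf in the overlap $\mathcal{U}_n\cap\mathcal{U}_{n+1}$ via the unique Kasner initial-value problem analysed in subsection \ref{UNIQ}. I would then patch the $\phi_n$'s on overlaps by a partition of unity to get a global diffeomorphism $\Phi$ from a neighbourhood of infinity in $\Sigma$ onto a neighbourhood of infinity in the limit Kasner $(\Sigma^{\mathbb K_\infty};\hg^{\mathbb K_\infty},U^{\mathbb K_\infty})$. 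Since $d_n\lesssim 2^{-n}$ and $r_n=(a^*)^n r_i$, one has $d_n\lesssim r_n^{-\alpha}$ with $\alpha=\log_{a^*}2>0$, so the bound $C^*d_n^{\,m^*}$ unscales to a $C^{j^*}$-decay of $\Phi_*\hg-\hg^{\mathbb K_\infty}$ by an inverse power of distance whose exponent grows linearly in $m^*$. Because $j^*$ and $m^*$ are arbitrary, Definition \ref{KADEF} is satisfied; and $\mathbb K_\infty\neq A,C$ because $c_\infty>0$.

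The main obstacle I expect is the gluing/branch-compatibility step: verifying that the locally chosen Kasner models $\mathbb K_n$ fit together into a single global Kasner $\mathbb K_\infty$ and that the partition-of-unity interpolation of the $\phi_n$'s preserves the power-decay rate. Both should follow from the uniqueness theorem of subsection \ref{UNIQ} applied on the overlap leaves, together with the fact that consecutive closeness bounds $C^*d_n^{\,m^*}$ and $C^*d_{n+1}^{\,m^*}$ are of comparable size, but the bookkeeping is delicate.
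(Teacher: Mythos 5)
Your proposal follows essentially the same route as the paper: verify hypotheses \ref{PORE1}--\ref{PORE4} of Proposition \ref{PORE} at $p_i$ for $i$ large, iterate Proposition \ref{PORE} along the geometrically spaced points $r(p^{(n)})=(a^*)^nr_i$ to keep all four hypotheses propagating (with the diameters halving and $\theta$, $\rho$ converging), invoke Proposition \ref{PORSI} at each step, and exploit the arbitrariness of $j^*$ and $m^*$ to reach Definition \ref{KADEF}. The one place where you diverge --- and which you yourself flag as the main obstacle --- is the final gluing. The paper does not patch the $\phi_n$ by a partition of unity: because each trivialisation is built by flowing along $\nabla U$, the transition $\phi_{n-1}\circ\phi_n^{-1}$ is induced by a single map $\psi_{n-1,n}$ of the transversal tori, so the $\phi_n$ agree exactly on overlaps and define a global map directly; the real work is showing that the \emph{sequence of unscaled Kasner approximants} $\hat{\hg}^{\mathbb{K}_n}$, each extended backward over all earlier neighbourhoods via the Kasner ODE, is Cauchy on each fixed $\mathcal{U}^{\mathbb{K}_i}$. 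Propagating the difference $\hat{\hg}^{\mathbb{K}_i}-\hat{\hg}^{\mathbb{K}_{i+1}}$ backward through the ODE costs a multiplicative constant $c_2$ per step, so the telescoping sum converges only after choosing $m^*$ with $2^{m^*}>c_2$; this is the mechanism that produces the single limit Kasner $\hat{\hg}^{\mathbb{K}_\infty}$ together with the power-law closeness to it, and it is not supplied by observing that consecutive bounds $C^*d_n^{m^*}$ are comparable, nor by the convergence of the parameters $(a_n,b_n,c_n)$ alone (the torus quotients must also stabilise). With that step filled in as in the paper, your argument is complete.
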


\begin{proof} For the $\rho^{*}$ given in the hypothesis and for any integers $j^{*}\geq 1$ and $m^{*}\geq 2$ let $\epsilon^{*}$, $\mu^{*}$, $r^{*}$ and $C^{*}$ be as in Proposition \ref{PORSI}, and let $\delta^{*}$, $\ell^{*}$ and $B^{*}$ be as in Proposition \ref{PORE}. We begin proving that there are $\mu^{**}\leq \mu^{*}$, $\delta^{**}\leq \delta^{*}$ and $\ell^{**}\leq \ell^{*}$ such that if for $i$ big enough the point $p^{0}:=p_{i}$ is such that,
\begin{enumerate}[labelindent=\parindent, leftmargin=*, label={\rm (\alph*')}, widest=a, align=left]
\item\label{PORE1b} $\dist_{GH}\big(\big(\mathcal{A}^{c}_{r}(p^{0};1/2,2); d_{r}\big),\big([1/2,2];|\ldots|\big)\big)\leq \epsilon^*$,
\item\label{PORE2b} $|\rho_{r^{0}}(p^{0})-\rho^*|\leq \mu^{**}$,
\item\label{PORE3b} $|\theta_{r^{0}}(p^{0})-1|\leq \delta^{**}$,
\item\label{PORE4b} $\diam_{h_{\mathbb{K}^{0}}}(\phi(T_{p^{0}}))\leq \ell^{**}$,
\end{enumerate}
then for all $p^{n}\in \gamma$ such that $\mathfrak{r}_{n}:=r(p^{n})=(a^{*})^{n}r(p_{0})$ we have
\begin{enumerate}[labelindent=\parindent, leftmargin=*, label={\rm (\alph*)}, widest=a, align=left]
\item\label{PORE1b2} $\dist_{GH}\big(\big(\mathcal{A}^{c}_{\mathfrak{r}_{n}}(p^{n};1/2,2); d_{\mathfrak{r}_{n}}\big),\big([1/2,2];|\ldots|\big)\big)\leq \epsilon^*$,
\item\label{PORE2b2} $|\rho_{\mathfrak{r}_{n}}(p^{n})-\rho^*|\leq \mu^{*}$,
\item\label{PORE3b2} $|\theta_{\mathfrak{r}_{n}}(p^{n})-1|\leq \delta^{*}$,
\item\label{PORE4b2} $\diam_{h_{\mathbb{K}^{n}}}(\phi(T_{p^{n}}))\leq \ell^{*}2^{-n}$.
\end{enumerate}
To choose $\epsilon^{**}$, $\delta^{**}$ and $\mu^{**}$ we make the following observation. Suppose that for some $\mu^{**}\leq \mu^{*}$, $\delta^{**}\leq \delta^{*}$ and $\ell^{**}\leq \ell^{*}$, \ref{PORE1},\ref{PORE2},\ref{PORE3} and \ref{PORE4} hold for $p^{n}$ for $n=0,1,2,3,\ldots,m\geq 1$. Then, after using the conclusions \ref{POREI},\ref{POREII} and \ref{POREIII} in Proposition \ref{PORE} $m$-times (each time use Prop \ref{PORE} with $p=p^{n},p'=p^{n+1}$) one obtains without difficulty the bounds,
\begin{align}
\label{RHSI} & \diam_{h_{\mathbb{K}^{m}}}(\phi(T_{p^{m}}))\leq \frac{\ell^{**}}{2^{m-1}},\\
\label{RHSII} & |\theta_{r^{m}}(p^{m})-1|\leq \frac{mB^{*}\ell^{**}}{2^{m-1}}+\frac{\delta^{**}}{2^{m}},\\
\label{RHSIII} & |\rho_{\mathfrak{r}_{n}}(p^{n})-\rho_{r^{0}}(p^{0})|\leq \sum_{n=1}^{n=m}\bigg(\frac{B^{*}(\ell^{**})^{2}}{2^{2(n-1)}}+\frac{nB^{*}\ell^{**}}{2^{n}}+\frac{\delta^{**}}{2^{n+1}}\bigg)
\end{align}
With this information at hand, choose $\mu^{**}=\mu^{*}/4$, and $\delta^{**}\leq \delta^{*}$ and $\ell^{**}\leq \ell^{*}$ such that the right hand side of (\ref{RHSII}) is less or equal than $\delta^{*}/2$ for all $m\geq 1$ and, when in (\ref{RHSIII}) we consider $m=\infty$ (i.e. the infinite sum), this sum is less or equal than $\mu^{*}/4$. Chosed that way it is then trivial that \ref{PORE1},\ref{PORE2},\ref{PORE3} and \ref{PORE4} in this theorem indeed hold for all $p^{n}$, $n=0,1,2,3,\ldots,\infty$.

Having now \ref{PORE1b2} and \ref{PORE2b2} for all $p^{n}$, we can use Proposition \ref{PORSI} to conclude that, for each $n$, 
\begin{enumerate}
\item there are neighbourhoods $\mathcal{U}_{n}$, each covering $\mathcal{A}_{r_{n}}(p^{n};1/(2a^{*}),2a^{*})$ and their union  covering the end of $\Sigma$, and, 
\item there are Kasner spaces $(\mathcal{U}^{\mathbb{K}_{n}};\hg^{\mathbb{K}_{n}},U^{\mathbb{K}_{n}})$, $\T^{2}$-quotients of,
\be\label{KCOSAS}
I_{n}\times \mathbb{R}^{2},\quad \tilde{\hg}^{\mathbb{K}_{n}}=dx^{2}+x^{2a_{n}}dy^{2}+x^{2b_{n}}dy^{2},\quad \tilde{U}^{\mathbb{K}_{n}}=d_{n}+c_{n}\ln x
\ee 
and,
\item there are diffeomorphisms $\phi_{n}:\mathcal{U}_{n}\rightarrow \mathcal{U}^{\mathbb{K}_{n}}=I_{n}\times T^{2}_{n}$ ($T^{2}_{n}$ is the quotient of $\mathbb{R}^{2}$) such that,
\begin{align}
& \label{IINI} \phi_{n*}U=U^{\mathbb{K}_{n}},\\
& \label{IINII} \|\phi_{n*}\hg_{\mathfrak{r}_{n}}-\hg^{\mathbb{K}_{n}}\|_{C^{j^{*}}_{\hg^{\mathbb{K}_{n}}}(\mathcal{U}^{\mathbb{K}_{n}})}\leq C^{*}\diam^{m^{*}}(\phi_{*}(T_{p^{n}}))
\end{align}
\end{enumerate}

What we have so far is close to the Definition \ref{KADEF} of Kasner asymptotic, except that we still need one single Kasner space and one global map $\phi$. Its construction is what we do next. 
\vs

We will work with the `un-scaled' metrics defined by,
\be\label{USCAL}
\hat{\hg}^{\mathbb{K}_{n}}:=(\mathfrak{r}_{n})^{2}\hg^{\mathbb{K}_{n}}=\hg^{\mathbb{K}_{n}}_{1/\mathfrak{r}_{n}},
\ee
and leave $U^{\mathbb{K}_{n}}=U$ unchanged. Thus, we will work with the data sets, $(\mathcal{U}^{\mathbb{K}_{n}}; \hat{\hg}^{\mathbb{K}_{n}},U^{\mathbb{K}_{n}})$. 

From the construction of the trivialisations $\phi$ in step \ref{PORSI-II}-A of Proposition \ref{PORSI}, we obtain that the transition functions,
\be
\phi_{n-1}\circ \phi_{n}^{-1}:\phi_{n}(\mathcal{U}_{n-1}\cap \mathcal{U}_{n})(\subset \mathcal{U}^{\K_{n}})\rightarrow \phi_{n-1}(\mathcal{U}_{n-1}\cap \mathcal{U}_{n})(\subset \mathcal{U}^{\K_{n-1}})
\ee
are defined by just one map $\psi_{n-1,n}:T^{2}_{n}\rightarrow T^{2}_{n-1}$, namely, there is $\psi_{n-1,n}$ such that if 
\be
\phi_{n-1}\circ \phi_{n}^{-1}((x_{n},t_{n}))=(x_{n-1},t_{n-1})
\ee
where $x_{n-1}\in I_{n-1}$, $t_{n-1}\in T^{2}_{n-1}$, $x_{n}\in I_{n}$ and $t_{n}\in T^{2}_{n}$, then 
\be
t_{n-1}=\psi_{n-1,n}(t_{n}). 
\ee
We can use this fact to extend the Kasner data $(\mathcal{U}^{\mathbb{K}_{n}}; \hat{\hg}^{\mathbb{K}_{n}},U^{\mathbb{K}_{n}})$ to a Kasner data on $\mathcal{U}^{\mathbb{K}_{n}}\cup_{\#} \mathcal{U}^{\mathbb{K}_{n-1}}$ where $\#$ means we use the identification $\phi_{n-1}\circ \phi_{n}^{-1}$. The extension is performed as follows. Instead of the coordinate $x$ we use $U$. So, on $\mathcal{U}^{\K_{n-1}}$, $U$ ranges between $U_{1}^{n-1}$, and $U_{2}^{n-1}$, and on $\mathcal{U}^{\K_{n}}$, $U$ ranges between $U_{1}^{n}$ and $U_{2}^{n}$. Now, extend $\hat{g}^{\mathbb{K}_{n}}$ given by (\ref{USCAL}) from $[U^{n}_{1},U^{n}_{2}]\times T^{2}_{n}$ to $[U^{n-1}_{1},U^{n}_{2}]\times T^{2}_{n}$ in the obvious way (by using in (\ref{KCOSAS}) use $U$ instead of $x$), and then identify $[U^{n-1}_{1},U^{n-1}_{2}]\times T^{2}_{n}$ to $[U^{n-1}_{1},U^{n-1}_{2}]\times T^{2}_{n-1}$ by $(U,t_{n})\rightarrow (U,\psi_{n-1,n}(t_{n}))$. In this way we can extend uniquely $\hat{\hg}^{\mathbb{K}_{n}}$, from $\mathcal{U}^{\mathbb{K}_{n}}$, to $\mathcal{U}^{\mathbb{K}_{n-1}}$, then to $\mathcal{U}^{\mathbb{K}_{n-2}}$ and so on until, say, $\mathcal{U}^{\mathbb{K}_{n_{0}}}$. Thus we have a Kasner data,
\be
(\U^{\K_{n_{0}}}\cup_{\#}\ldots\cup_{\#} \U^{\K_{n}}; \hat{\hg}^{\K_{n}},U)
\ee
and we can use the map,
\be
\phi:\U_{n_{0}}\cup\ldots\cup \U_{n}\rightarrow \U^{\K_{n_{0}}}\cup_{\#}\ldots\cup_{\#} \U^{\K_{n}}
\ee
defined by,
\be
\phi(p)=\phi_{j}(p),\quad {\rm if\ } p\in \U_{j}
\ee
to translate it back to a Kasner data on $\mathcal{U}_{n_{0}}\cup\ldots \cup\mathcal{U}_{n}$. 
It is important to keep in mind in the following that, for each $n$, the metrics $\phi^{*}\hat{\hg}^{\K_{n}}$ are defined indeed on $\mathcal{U}_{n_{0}}\cup\ldots \cup\mathcal{U}_{n}$. The point now is that, as was mentioned earlier, one can take a convergent subsequence of the metrics $\hat{\hg}^{\K_{n}}$, and that will define the Kasner metric we were looking for. We pass to explain the calculations justifying the convergence.

We observe the following inequalities for any $n_{0}<i<n$,
\be\label{OBSI1}
\|\hat{\hg}^{\mathbb{K}_{i}}-\hat{\hg}^{\mathbb{K}_{n}}\|_{C^{j^{*}}_{\hat{\hg}^{\mathbb{K}_{i}}}(\mathcal{U}^{\K_{i}})} \leq 
\|\hat{\hg}^{\mathbb{K}_{i}}-\hat{\hg}^{\mathbb{K}_{i+1}}\|_{C^{j^{*}}_{\hat{\hg}^{\mathbb{K}_{i}}}(\mathcal{U}^{\K_{i}})}
+\|\hat{\hg}^{\mathbb{K}_{i+1}}-\hat{\hg}^{\mathbb{K}_{n}}\|_{C^{j^{*}}_{\hat{\hg}^{\mathbb{K}_{i}}}(\mathcal{U}^{\K_{i}})}
\ee
\be\label{OBSI2}
\|\hat{\hg}^{\mathbb{K}_{i}}-\hat{\hg}^{\mathbb{K}_{i+1}}\|_{C^{j^{*}}_{\hat{\hg}^{\mathbb{K}_{i}}}(\mathcal{U}^{\K_{i}})}\leq 
c_{0}\|\hat{\hg}^{\mathbb{K}_{i}}-\hat{\hg}^{\mathbb{K}_{i+1}}\|_{C^{j^{*}}_{\hat{\hg}^{\mathbb{K}_{i}}}(\mathcal{U}^{\K_{i}}\cap_{\#} \mathcal{U}^{\K_{i+1}})}
\ee
\be\label{OBSI3}
\|\hat{\hg}^{\mathbb{K}_{i}}-\hat{\hg}^{\mathbb{K}_{i+1}}\|_{C^{j^{*}}_{\hat{\hg}^{\mathbb{K}_{i}}}(\mathcal{U}^{\K_{i}}\cap_{\#} \mathcal{U}^{\K_{i+1}})}\leq \frac{c_{1}}{2^{im^{*}}}
\ee

\be\label{OBSI4}
\|\hat{\hg}^{\mathbb{K}_{i+1}}-\hat{\hg}^{\mathbb{K}_{n}}\|_{C^{j^{*}}_{\hat{\hg}^{\mathbb{K}_{i}}}(\mathcal{U}^{\K_{i}})}\leq c_{2}\|\hat{\hg}^{\mathbb{K}_{i+1}}-\hat{\hg}^{\mathbb{K}_{n}}\|_{C^{j^{*}}_{\hat{\hg}^{\mathbb{K}_{i+1}}}(\mathcal{U}^{\K_{i+1}})}
\ee
Briefly: The inequality (\ref{OBSI1}) is the triangle inequality. The inequality (\ref{OBSI3}) follows by first adding and subtracting $\phi_{i*}\hg$ inside the norm, then use the triangle inequality and finally use (\ref{IINII}) after noting that scaling the metric by $(\mathfrak{r}_{n})^{2}$ decreases the norm (observe that (\ref{IINII} involves scaled metrics). The inequalities (\ref{OBSI2}) and (\ref{OBSI4}) follow by noting that, because Kasner metrics are determined by an ODE, the norms on the right or the left hand sides are controlled by the $C^{j}$-norms at just one level set of $U$ i.e. just one torus. The constant $c_{0}$ and $c_{2}$ are independent on $n$ and $m^{*}$, and $c_{1}$ is independent on $n$ but may depend on $j^{*}$ and $m^{*}$. 

Putting all together we have the following recursive inequality,
\be
\|\hat{\hg}^{\mathbb{K}_{i}}-\hat{\hg}^{\mathbb{K}_{n}}\|_{C^{j^{*}}_{\hat{\hg}^{\mathbb{K}_{i}}}(\mathcal{U}^{\K_{i}})} \leq 
\frac{c_{4}}{2^{im^{*}}}
+c_{2}\|\hat{\hg}^{\mathbb{K}_{i+1}}-\hat{\hg}^{\mathbb{K}_{n}}\|_{C^{j^{*}}_{\hat{\hg}^{\mathbb{K}_{i+1}}}(\mathcal{U}^{\K_{i+1}})}
\ee
from which we deduce,
\begin{align}
\|\hat{\hg}^{\mathbb{K}_{i}}-\hat{\hg}^{\mathbb{K}_{n}}\|_{C^{j^{*}}_{\hat{\hg}^{\mathbb{K}_{i}}}(\mathcal{U}^{\K_{i}})} & \leq \frac{c_{4}}{2^{im^{*}}}+\frac{c_{4}c_{2}}{2^{(i+1)m^{*}}}+\frac{c_{4}c_{2}^{2}}{2^{(i+2)m^{*}}}+\ldots+\frac{c_{4}c_{2}^{n-i}}{2^{(i+n-i)m^{*}}}\\
& = \frac{c_{4}}{2^{im^{*}}}\sum_{l=0}^{l=n-i}\bigg(\frac{c_{2}}{2^{m^{*}}}\bigg)^{l}
\end{align}
Playing with the fact that $c_{2}$ does not depend on $m^{*}$, we take $m$ such that $2^{m^{*}}>c_{2}$, thus making the series $\sum_{l=0}^{l=\infty}\big(\frac{c_{2}}{2^{m^{*}}}\big)^{l}$ convergent. The work is essentially done. Using this bound we let $n\rightarrow \infty$ and we can take a subsequence of the metrics $\hat{\hg}^{\K_{n}}$ convergent in $C^{j^{*}-1}_{\hat{\hg}^{\K_{i}}}(\U^{\K_{i}})$ for every $i>n_{0}$. 

Say the limit is $\hat{\hg}^{\K_{\infty}}$. Then we have the bounds,
\be
\|\hat{\hg}^{\mathbb{K}_{i}}-\hat{\hg}^{\mathbb{K}_{\infty}}\|_{C^{j^{*}-1}_{\hat{\hg}^{\mathbb{K}_{i}}}(\mathcal{U}^{\K_{i}})}\leq \frac{c_{5}}{2^{im^{*}}}
\ee
for every $i>n_{0}$. Making $2=a_{*}^{\ln 2/\ln a_{*}}$ we get $2^{im^{*}}=a_{*}^{i(m^{*}\ln 2/\ln a_{*})}$ and recalling that $\mathfrak{r}_{n}=r(p_{n})=a_{*}^{n}r(p_{0})$ we get without difficulty,
\be
\|\hg-\phi^{*}\hat{\hg}^{\K_{\infty}}\|_{C^{j^{*}}_{\hg}}(p)\leq \frac{c_{m^{*},j^{*}}}{(\dist_{\hg}(p,\partial \Sigma))^{m^{*}(\ln 2/\ln a^{*})}}
\ee
Playing with the freedom in $j^{*}$ and $m^{*}$ and passing back to the variables $(g,N)$, KA is obtained as wished.
\end{proof}

\subsubsection{The asymptotic of free $\Sa$-symmetric data sets}\label{FTKASS}

Free $\Sa$-symmetric ends have a well defined limit of $U$ at infinity that we denoted by $U_{\infty}$ (Proposition \ref{LPRO}). In this section we study free $\Sa$-symmetric ends with the property that, 
\be\label{MAXU}
U(p)\leq U_{\infty} 
\ee
for all $p$. We aim to prove the following theorem. 
\begin{Theorem}\label{SSKAA} Let $(\Sigma;\hg,U)$ be a free $\Sa$-symmetric static end such that $U(p)\leq U_{\infty}$ for all $p\in \Sigma$. Then, either the data set is flat and $U$ is constant, or is asymptotic to a Kasner different from $A$ and $C$.
\end{Theorem}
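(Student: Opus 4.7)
The plan is to reduce to the two-dimensional data $(S;q,U,V)$ via Proposition \ref{PRED} and then combine the structural results of section \ref{S1S} with the characterisation of Kasner asymptotic in Theorem \ref{KASYMPTOTIC}. First I would use Propositions \ref{SUPO}/\ref{SUPO3} to conclude that $S$, outside a compact set, is a finite disjoint union of reduced ends each diffeomorphic to $[0,\infty)\times \Sa$, so by Proposition \ref{SIMPLECUTUS} both $S$ and $\Sigma$ admit simple end cuts $\{\ell_j\}$ and $\{T_j=\pi^{-1}(\ell_j)\}$. I would also invoke Proposition \ref{LPRO} so that $U\to U_\infty$ uniformly, using the hypothesis $U\leq U_\infty$ to make $U_\infty$ meaningful as a finite supremum (the case $U_\infty=\infty$ being vacuous for the bound).

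Next I would dichotomise. Equation (\ref{ES3}) can be written as $\mathrm{div}(e^V\nabla U)=0$, so $U$ satisfies a strong maximum principle. Combined with $U\leq U_\infty$ and $U\to U_\infty$, either $U\equiv U_\infty$ everywhere, or $U<U_\infty$ strictly in the interior. In the first case, constancy of $U$ forces $Ric_{\hg}=0$ in (\ref{ES1})–(\ref{ES2}), and then $\hg$ is flat, giving the first alternative of the theorem. In the second case, for $c$ sufficiently close to $U_\infty$ the set $\{U=c\}$ in $S$ must contain a component which is a single loop enclosing the end (otherwise $U$ could be extended to attain $U_\infty$ at an interior maximum), and in $\Sigma$ the corresponding level set is a torus isotopic to the end cut $T_j$. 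This is the $\star$-static condition we need for the second case.

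Then I would look at rescaled annuli along a ray $\gamma$. For the reduced end the area growth is either sub-quadratic or quadratic in $d(\cdot,\partial S)$, corresponding respectively to the cylindrical case $\mu=0$ (so $\ell_j$ has $q_{r_j}$-length $\to 0$) or the conical case $\mu>0$. Using Lemma \ref{SUPO2} and the quadratic decay of $|\nabla V|^2$ and $\kappa$, and the fact that the $\Sa$-fibers have length $\Lambda$ with $\Lambda^2/r^2\to 0$ (either because $\Lambda$ is bounded in the cylindrical/flat case with $\Omega=0$, or because $\Lambda^2\geq \eta'\Omega\, r$ in the twisted case, or because the bundle metric collapses along the conical base), the three-dimensional annuli $(\mathcal{A}^c_{\hg_{r}}(p;1/2,2);\hg_{r})$ along $\gamma$ metrically collapse to a segment. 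Consequently hypothesis \ref{PORSI-a} of Theorem \ref{KASYMPTOTIC} holds. The remaining task is to verify $\rho_{r}=|\nabla U|_{r}\not\to 0$ along some sequence $p_i\in\gamma$, and then Theorem \ref{KASYMPTOTIC} delivers the Kasner asymptotic different from $A$ or $C$.

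The hard part is ruling out sub-quadratic decay of $|\nabla U|$, i.e.\ the scenario $\rho_{r}(p_i)\to 0$ along the ray. Here I would exploit the $\star$-static property just established: the level sets of $U$ near $U_\infty$ are compact surfaces of genus $\geq 1$ (tori), so the monotonic quantity (\ref{GMONOTONIC}) along these level sets is available and gives a definite lower bound incompatible with $\rho_{r}\to 0$. Concretely, writing the flux-type integral $\int_{\{U=c\}}|\nabla U|\,dA$ and differentiating in $c$ together with the Gauss–Bonnet contribution (which vanishes or is non-positive precisely because the level sets are tori rather than spheres), one obtains a monotonic quantity along $c\nearrow U_\infty$ whose subsequential limit forces a positive asymptotic value of $\rho_{r}\cdot r$ and hence a positive limit for $\rho_{r}(p_i)$ along $\gamma$. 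This is the mechanism that forbids sub-quadratic curvature decay and closes the proof; I expect this monotonicity argument to be the main technical obstacle, since it requires carefully matching the $\Sa$-symmetric reduction with the foliation by $U$-level tori and invoking the topological hypothesis on their genus.
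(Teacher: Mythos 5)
Your architecture is the same as the paper's: reduce to $(S;q,U,V)$, show the end is $\star$-static (level sets of $U$ near $U_\infty$ are connected tori), dichotomise between Kasner asymptotic and decay of $|\nabla U|$, and kill the decay scenario with a monotone quantity along the $U$-level tori. Two steps, however, are stated in a form that would fail as written.

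First, the claim that the rescaled three-dimensional annuli along $\gamma$ \emph{always} collapse to a segment is false in the conical case $\mu=\lim A(B(\partial S,r))/r^{2}>0$: there the two-dimensional base does not collapse, so the annuli in $(\Sigma;\hg_r)$ collapse only to a two-dimensional flat annulus and hypothesis \ref{PORSI-a} of Theorem \ref{KASYMPTOTIC} is not available. The paper (Proposition \ref{BATAT2}) instead observes that when $\mu>0$ the base is asymptotically conical, so $\kappa$ and hence $|\nabla U|^{2}\leq\kappa$ (by (\ref{KAPPAF})) decay sub-quadratically, and this case is funnelled into the same contradiction as the $\rho_r\to 0$ scenario via Proposition \ref{CORONA}. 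You need this detour; you cannot reach Theorem \ref{KASYMPTOTIC} directly when $\mu>0$.

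Second, the monotonicity endgame is not what you describe. The quantity $\int_{\{U=c\}}|\nabla U|\,dA$ is the harmonic flux and is \emph{constant} in $c$, so it carries no information; the correct quantity is $G(U)=\int_{U^{-1}}|\nabla U|^{2}dA$, for which the genus $\geq 1$ hypothesis gives $-4\pi\chi\geq0$ in the second-variation formula and hence $G''\geq G'^{2}/G$, i.e. $(G'/G)'\geq 0$. Moreover the conclusion one extracts is not "a positive limit for $\rho_r$ along $\gamma$" (which does not follow: $G$ bounded below only controls an integral over the level set, and the lower bound is $G(U)\geq G(U_0)e^{-C(U-U_0)}$ with $C>0$, which decays if $U\to\infty$). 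What one actually gets is a divergent sequence $p_i$ with $|\nabla e^{CU}|(p_i)\geq D>0$ (Corollary \ref{P75}), and the contradiction with sub-quadratic decay of $|\nabla U|$ along $\gamma\cup(\cup_j\mathcal{S}_j)$ then requires the Harnack estimate of Proposition \ref{MAXMINU} together with the bound $e^{CU(\gamma(r))}\leq c_1 r^{1/2}$ obtained by integrating along the ray, and the maximum principle to propagate both bounds from the cut surfaces to the regions $\mathcal{U}_{j,j+1}$ (Proposition \ref{FCOK}). You correctly identified this as the main obstacle, but the mechanism you sketch would not close the argument.
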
 

Suppose $(\Sigma;\hg,U)$ is a data set as in the last proposition. If $U(p)=U_{\infty}$ at some $p\in \Sigma^{\circ}$ then $U$ is constant by the maximum principle and the data set is flat. Due to this, from now on we are concerned with the case when $U<U_{\infty}$.

A large part of the proof of Theorem \ref{SSKAA} is indeed quite general and is valid too for a class of data sets that will show up again crucially in the next section. They are the $\hgls$-static ends that we define below (the `$\hgls$" is just a notation). 

The level sets of $U$ will be denoted as follows,
\be
U^{-1}_{*}=\{p\in \Sigma: U(p)=U_{*}\}
\ee
For instance $U_{1}^{-1}=\{p\in \Sigma:U(p)=U_{1}\}$ and so forth. As for the critical and regular values of $U$, it follows from Theorem 1 in \cite{MR0308345} that the set of critical values of $U$ is discrete. We will use this information below. Besides of this, the critical set $\{|\nabla U|=0\}$ is well understood but this won't be necessary here, (see \cite{2015arXiv150404563A}).  

\begin{Definition}[$\star$-static end] \label{DEFSSS} Let $(\Sigma;\hg,U)$ be a (non-necessarily free $\Sa$-symmetric) static end. Then, we say that $(\Sigma;\hg,U)$ is a $\hgls$-static end iff
\begin{enumerate}
\item\label{URVA1} the limit of $U$ at infinity exists (denote it by $U_{\infty}\leq \infty$),
\item\label{URVA2} $U<U_{\infty}$ everywhere,
\item\label{URVA3} there is a regular value $U_{0}$ of $U$, with $U_{0}> \sup \{U(p):p\in \partial S\}$, such that for any regular value $U_{1}\geq U_{0}$, $U^{-1}_{1}$ is a compact and connected surface of genus greater than zero.
\end{enumerate}
\end{Definition} 

Note that condition \ref{URVA2} implies that $\hgls$-ends are non-flat. It is also easy to see that any two regular values $U_{2}>U_{1}$ greater or equal than $U_{0}$, enclose a compact region $\Omega_{12}$, that is $\partial \Omega_{12}=U_{1}^{-1}\cup U_{2}^{-1}$.

The proof of Theorem \ref{SSKAA} follows from the next three propositions.
\begin{Proposition}\label{BATAT1} Let $(\Sigma;\hg,U)$ be a free $\Sa$-symmetric static end such that $U(p)< U_{\infty}$ for all $p$. Then $(\Sigma;\hg,U)$ is a $\hgls$-static end and has a simple cut $\{\mathcal{S}_{j}\}$.
\end{Proposition}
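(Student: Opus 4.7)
The plan is to verify the three conditions of Definition~\ref{DEFSSS} and to deduce the existence of the simple cut from the structural results of Section~\ref{S1S}. Since the free $\Sa$-action is by isometries and preserves $U$, the quotient $(\qM;q,U,V)$ is a metrically complete reduced data set with $\partial\qM$ compact; the single end of $\Sigma$ descends to a single end of $\qM$, so Proposition~\ref{SUPO} (if $\Omega=0$) or Proposition~\ref{SUPO3} (if $\Omega\neq 0$) identifies this end with $[0,\infty)\times\Sa$. Thus $(\qM;q,U,V)$ is a reduced static end, Proposition~\ref{LPRO} applies and furnishes condition~\ref{URVA1} (the limit $U_\infty$ exists), while condition~\ref{URVA2} is part of the hypothesis. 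The existence of a simple cut $\{\mathcal{S}_j\}$ of $\Sigma$, each $\mathcal{S}_j$ a torus lifted from a loop in a simple cut of $\qM$, is then exactly Proposition~\ref{SIMPLECUTUS}.

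The bulk of the proof is condition~\ref{URVA3}, which I will carry out downstairs on $\qM$. Since $\Sigma\to\qM$ is a principal $\Sa$-bundle and $U$ is $\Sa$-invariant, $\nabla U$ is horizontal; in particular the critical sets coincide under projection, regular values of $U$ on $\Sigma$ and on $\qM$ match, and the $\Sigma$-preimage of a single loop in $\qM$ is a single torus. The critical values of $U$ on $\qM$ form a discrete set by Theorem~1 of \cite{MR0308345}, and the uniform convergence $U\to U_\infty$ together with $U<U_\infty$ everywhere lets me choose a regular value $U_0>\sup\{U(p):p\in\partial\Sigma\}$ such that, for every $U_1\in[U_0,U_\infty)$, the level set $\{U=U_1\}\subset\qM$ is compact and contained in the end $E\sim[0,\infty)\times\Sa$ of $\qM$. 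It therefore suffices to show that, for every such regular $U_1$, this level set is a single loop essential in $E$.

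The main tool is the divergence-form equation $\mathrm{div}(e^V\nabla U)=0$ on $\qM$ (a restatement of \eqref{ES3}), to which the strong maximum and minimum principles apply. First, the level set is non-empty and must contain at least one essential loop, because by choice of $U_0$ we have $U<U_1$ on $\partial\Sigma$ while $U>U_1$ near infinity, so any separating compact $1$-submanifold of $E$ must carry a generator of $H_1(E;\mathbb{Z}_2)$. Second, no component can be null-homotopic in $E$: an innermost such loop would bound a disk $D\subset E$ on whose boundary $U=U_1$ and in whose interior $U\neq U_1$, so either $\max_{\bar D}U$ or $\min_{\bar D}U$ would be attained in $D^\circ$, forcing $U\equiv U_1$ on $D$ and contradicting regularity of $U_1$. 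Third, no two essential components can coexist: two adjacent ones would bound a sub-annulus $A\subset E$ with no other components in its interior, and the same max/min argument on $A$ would again force $U\equiv U_1$ on $A$. Hence the level set in $\qM$ is a single essential loop and its preimage in $\Sigma$ is a single torus, so condition~\ref{URVA3} holds with genus one.

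The only real geometric content is the innermost-disk/adjacent-annulus dichotomy via the maximum principle; everything else is bookkeeping built on Propositions~\ref{LPRO}, \ref{SUPO}/\ref{SUPO3} and \ref{SIMPLECUTUS}. A mild subtlety is that critical values of $U$ could a priori accumulate at $U_\infty$, so $U_0$ cannot in general be placed above every critical value; this is bypassed by using uniform convergence to confine every high level set to the end $E$, where the topology is simple enough that the max/min argument is insensitive to whatever occurs closer to $\partial\Sigma$.
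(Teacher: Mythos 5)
Your proposal is correct and follows essentially the same route as the paper: pass to the reduced end $E\sim[0,\infty)\times\Sa$ via Propositions \ref{SUPO}/\ref{SUPO3}, invoke Propositions \ref{LPRO} and \ref{SIMPLECUTUS} for the limit $U_\infty$ and the simple cut, and use the maximum principle (via ${\rm div}(e^{V}\nabla U)=0$) on innermost disks and enclosed sub-annuli to show each high regular level set is a single essential circle, hence a torus upstairs. Your write-up is merely more explicit than the paper's about non-emptiness of the level set and the possible accumulation of critical values, but the argument is the same.
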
 
\begin{Proposition}\label{BATAT2} Let $(\Sigma;\hg,U)$ be a static free $\Sa$-symmetric end such that $U(p)< U_{\infty}$ for all $p$. Then the end is asymptotic to a Kasner different from $A$ and $C$, or has sub-quadratic curvature decay.
\end{Proposition}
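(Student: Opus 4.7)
The approach is by contrapositive: assume the curvature does not decay sub-quadratically and verify the hypotheses of Theorem \ref{KASYMPTOTIC} to conclude Kasner asymptotic different from $A$ and $C$. Throughout I work under the conclusion of Proposition \ref{BATAT1}, which asserts that, under the standing hypothesis $U<U_{\infty}$, the end is $\hgls$-static and admits a simple cut $\{\mathcal{S}_{j}\}$; in particular the far-out regular level sets of $U$ are compact connected surfaces of positive genus, which by the free $\Sa$-symmetry must be tori.

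First I would produce a ray $\gamma$ from $\partial\Sigma$ and a divergent sequence $p_{i}\in \gamma$ with $\rho_{r_{i}}(p_{i})\to \rho^{*}\in(0,1/2]$, where $r_{i}:=\dist_{\hg}(p_{i},\partial\Sigma)$. In three dimensions the Riemann tensor is algebraically determined by the Ricci tensor, and $Ric_{\hg}=2\nabla U\,\nabla U$, so $|Rm|_{\hg}\leq c\,|\nabla U|^{2}_{\hg}$. Failure of sub-quadratic decay then furnishes a ray $\gamma$ and a divergent sequence $p_{i}\in \gamma$ with $r_{i}^{2}|Rm|(p_{i})\geq \epsilon_{0}>0$, hence $\rho_{r_{i}}(p_{i})\geq c_{0}>0$; Anderson's estimates bound $\rho_{r}$ uniformly from above, so after extraction $\rho_{r_{i}}(p_{i})\to \rho^{*}>0$. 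The upper bound $\rho^{*}\leq 1/2$ will be automatic once the blow-down is identified with a Kasner, through the constraint $2c^{2}+(a-\tfrac{1}{2})^{2}+(b-\tfrac{1}{2})^{2}=\tfrac{1}{2}$.

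The core of the argument is to show that $(\mathcal{A}^{c}_{r_{i}}(p_{i};1/2,2),\hg_{r_{i}})$ metrically collapses to a segment, which I do in two stages. First, the two-dimensional reduced annulus in $(\qM,q_{r_{i}})$ must itself collapse to a segment: by Propositions \ref{SUPO} and \ref{SUPO3} the only alternative is convergence (up to finite cover) to a flat two-dimensional cone or annulus, in which case $\gcur_{q_{r_{i}}}\to 0$ uniformly; but formula (\ref{KAPPAF}) rescaled reads $\gcur_{q_{r}}=\tfrac{3}{4}\Omega_{r}^{2}e^{-4V_{r}}+|\nabla U|_{q_{r}}^{2}$ with both summands non-negative, so one would obtain $\rho_{r_{i}}(p_{i})\to 0$, contradicting $\rho^{*}>0$. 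Second, once the reduced annulus collapses, the 3D rescaled annulus either collapses to the same segment (if the $\Sa$-fibre length $\Lambda(p_{i})/r_{i}\to 0$) or collapses only to a two-dimensional cylinder $\Sa\times[1/2,2]$. In the latter scenario, passing to finite covers that unwrap the collapsing $\Sa$-direction of the base (permissible by the machinery of subsection \ref{SFCCRM} because $p_{i}$ lies on $\gamma$) and lifting the $\Sa$-fibration yields $\T^{2}$-symmetric 3D data whose $C^{\infty}$-limit is, by the uniqueness argument of subsection \ref{UNIQ}, a Kasner solution with $\nabla U\neq 0$; however such a Kasner with a $\T^{2}$-direction of bounded rescaled length must satisfy $\max\{a,b\}=1$, i.e.\ be of type $A$ or $C$, forcing $c=0$ and hence $\rho^{*}=0$, a contradiction. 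Therefore the 3D rescaled annulus collapses to a segment, and Theorem \ref{KASYMPTOTIC} applies and delivers the Kasner asymptotic different from $A$ and $C$.

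The hard part is the second half of the preceding paragraph, i.e.\ ruling out the scenario in which $\Lambda/r$ stays bounded below along $p_{i}$. The delicate technical ingredients are the rigorous passage from the 3D collapse to a $\T^{2}$-symmetric finite-cover $C^{\infty}$-limit, the identification of that limit with a Kasner solution, and the bookkeeping of how the original $\Sa$-fibre sits inside the emerging $\T^{2}$-symmetry so as to read off the constraint $\max\{a,b\}=1$ and reach the desired contradiction with $\rho^{*}>0$.
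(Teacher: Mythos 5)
Your proposal is correct in substance and its skeleton coincides with the paper's: both arguments come down to a dichotomy along a ray --- either $\rho_{r}=|\nabla U|_{r}$ tends to zero, in which case the cone/collapse alternative for the reduced surface (Propositions \ref{SUPO}, \ref{SUPO3}) together with the identity (\ref{KAPPAF}) forces sub-quadratic decay of $\gcur$, of $|\nabla U|^{2}$, and hence of the three-dimensional curvature; or $\rho_{r_{i}}(p_{i})\geq \rho_{*}>0$ along some sequence on the ray, in which case Theorem \ref{KASYMPTOTIC} is invoked. Where you genuinely depart from the paper is in the step that upgrades the collapse of the two-dimensional annuli in $(S;q)$ to the collapse of the three-dimensional annuli in $(\Sigma;\hg)$, which is what Theorem \ref{KASYMPTOTIC} actually requires. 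You rule out the scenario $\Lambda(p_{i})/r_{i}\not\to 0$ by a blow-down, a $\T^{2}$-symmetric unwrapped limit, the uniqueness discussion of subsection \ref{UNIQ}, and the observation that a Kasner direction of non-vanishing rescaled length forces $\max\{a,b\}=1$ and hence $c=0$. That can be made to work, but it is considerably heavier than the paper's one-line fix: the scaling (\ref{SECSCA}) with $\lambda=1$ and $\nu$ small changes only the $\hg$-length of the $\Sa$-fibres while leaving $q$, $U$ and $|\nabla U|$ untouched, so it produces the required three-dimensional collapse for free. Two loose ends in your write-up deserve attention: (i) the negation of sub-quadratic decay gives a divergent sequence of points of non-decay that are not a priori on a ray; you should first dispose of the quadratic-area-growth (cone) case, and only then use the collapse of the annuli together with the uniform bounds on the derivatives of $\rho_{r}$ to transport non-decay to nearby points of $\gamma$ --- so the logical order must be (cone alternative) before (choice of $p_{i}\in\gamma$); (ii) the construction of the $\T^{2}$-symmetric limit when the base collapses but the fibre does not requires combining the exact $\xi$-symmetry with the local unwrapping of subsection \ref{SFCCRM}, which is exactly the bookkeeping the paper's scaling trick is designed to avoid.
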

\begin{Proposition}\label{CORONA} Let $(\Sigma; \hg,U)$ be a $\hgls$-static end and let $\gamma$ be a ray. Suppose that the data set has a simple cut $\{\mathcal{S}_{i}\}$. Then the curvature does not decay sub-quadratically along $\gamma\cup (\cup_{j}\mathcal{S}_{j})$. 
\end{Proposition}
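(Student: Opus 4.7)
The plan is to argue by contradiction. Suppose $r(p)^{2}|Rm|_{\hg}(p)\to 0$ uniformly as $r(p)\to\infty$ along $X:=\gamma\cup(\cup_{j}\mathcal{S}_{j})$. Since in dimension three the Weyl tensor vanishes and $Ric_{\hg}=2\nabla U\nabla U$, this is equivalent to $|\nabla U|_{\hg_{r(p)}}(p)\to 0$ along $X$. The first step is to spread this vanishing from the one-dimensional set $X$ to the cut pieces: each $\mathcal{U}_{j,j+1}$ is bounded by $\mathcal{S}_{j}\cup \mathcal{S}_{j+1}\subset X$ and meets $\gamma\subset X$, so combining Anderson's quadratic curvature decay (Theorem \ref{LACD2} of Part~I), Liu's ball-covering property, and the Harnack estimate (Proposition \ref{MAXMINU}), the boundary/axial vanishing of $|\nabla U|_{\hg_{r_{j}}}$ propagates to uniform vanishing on the whole rescaled piece $(\mathcal{U}_{j,j+1};\hg_{r_{j}})$. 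Consequently the rescaled second fundamental form of any connected component of $\Sigma_{U_{1}}\cap\mathcal{U}_{j,j+1}$ also vanishes as $j\to\infty$.

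The second step extracts a topological lower bound. Since $(\Sigma;\hg,U)$ is $\hgls$-static, every regular level set $\Sigma_{U_{1}}=U^{-1}(U_{1})$ with $U_{1}\ge U_{0}$ is compact, connected, and of genus $g\ge 1$. Using $R_{\hg}=2|\nabla U|^{2}$ and $Ric_{\hg}(n,n)=2|\nabla U|^{2}$ (with $n=\nabla U/|\nabla U|$) in the Gauss equation I obtain $K_{\Sigma_{U_{1}}}=-|\nabla U|^{2}+\det A$, and Gauss--Bonnet yields
$$\int_{\Sigma_{U_{1}}}|\nabla U|^{2}\,dA\;-\;\int_{\Sigma_{U_{1}}}\det A\,dA\;=\;4\pi(g-1)\;\ge\;0.$$
This strictly positive topological term is exactly the contribution that feeds into the monotonic quantity $(\ref{GMONOTONIC})$; its monotonicity along the level sets of $U$ (a consequence of the static vacuum equations $Ric_{\hg}=2\nabla U\nabla U$, $\Delta_{\hg}U=0$) then upgrades $4\pi(g-1)>0$ into a uniform, scale-invariant positive lower bound on $(\ref{GMONOTONIC})$ valid at every regular $U_{1}\ge U_{0}$.

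To conclude, I choose a sequence of regular values $U_{1,j}\nearrow U_{\infty}$ whose level sets traverse $\mathcal{U}_{j,j+1}$ (such a sequence exists by the simple end cut construction together with the discreteness of critical values \cite{MR0308345}). By Step~1, the integrand of $(\ref{GMONOTONIC})$ on the portion of $\Sigma_{U_{1,j}}$ inside the rescaled piece vanishes pointwise in the limit, while the area of that portion in the rescaled metric is controlled by Bishop--Gromov; hence $(\ref{GMONOTONIC})$ evaluated at $U_{1,j}$ must tend to zero, contradicting the lower bound from Step~2. The main obstacle I foresee is precisely Step~1: propagating a one-dimensional vanishing on $X$ to a uniform two-dimensional vanishing throughout the annular piece $\mathcal{U}_{j,j+1}$. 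This is where the simplicity of the cut $\{\mathcal{S}_{j}\}$ is essential---so that a connected level component threads through only one simple cut at the relevant scale---together with a Harnack-type estimate controlling $|\nabla U|_{\hg_{r}}$ on the full piece by its values on $\partial \mathcal{U}_{j,j+1}\subset X$.
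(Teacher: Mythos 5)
There is a genuine gap, and it sits at the heart of your Step~2. The Gauss--Bonnet identity you derive, $\int_{U_{1}^{-1}}|\nabla U|^{2}dA-\int_{U_{1}^{-1}}\det A\,dA=4\pi(g-1)$, does not yield a positive lower bound on $G(U_{1})=\int_{U_{1}^{-1}}|\nabla U|^{2}dA$: the term $\int \det A\,dA$ has no sign, and, worse, in the $\hgls$-static setting the level sets are only required to have genus $>0$ (Definition \ref{DEFSSS}) and in all the relevant applications they are tori, so $4\pi(g-1)=0$ and there is no ``strictly positive topological term'' at all. The genus hypothesis enters the paper's argument elsewhere: in Proposition \ref{P70} the Euler characteristic appears in the formula for $G''$ through $-4\pi\chi\geq 0$, giving $G''\geq G'^{2}/G$, i.e.\ log-convexity of $G$. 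What this monotonicity actually delivers is only $G(U)\geq G(U_{0})e^{-C(U-U_{0})}$ with $C=-G'(U_{0})/G(U_{0})$, and the paper shows (Corollary \ref{P75}) that necessarily $C>0$, since $C\leq 0$ would force $G\geq G(U_{0})>0$ and hence a divergent sequence with $|\nabla U|$ bounded away from zero, contradicting Anderson's decay $|\nabla U|\leq \eta/r$. Because $U_{\infty}=\infty$ is allowed (and occurs, e.g.\ for Kasner-type ends), $G$ may genuinely tend to zero, so the ``uniform, scale-invariant positive lower bound on (\ref{GMONOTONIC})'' that your contradiction rests on is false precisely in the case that matters. (When $U_{\infty}<\infty$ such a bound does follow from log-convexity, but not from your single-level-set Gauss--Bonnet computation.)

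The paper closes the argument with the device your proposal is missing: the exponential weight. From $C>0$ and log-convexity one gets $G(U)e^{CU}\geq G(U_{0})e^{CU_{0}}>0$, hence a divergent sequence $p_{i}$ with $|\nabla e^{CU}|(p_{i})\geq D>0$ (Corollary \ref{P75}); on the other hand, Proposition \ref{FCOK} shows that sub-quadratic decay along $\gamma\cup(\cup_{j}\mathcal{S}_{j})$ forces $|\nabla e^{CU}|\to 0$, by combining (i) integration of $|\nabla U|=o(1/r)$ along $\gamma$ to get $e^{CU}\leq c\,r^{1/2}$, (ii) the Harnack estimate of Proposition \ref{MAXMINU} on the connected sets $\mathcal{S}_{j}\cup\alpha_{j,j+1}\cup\mathcal{S}_{j+1}$ together with the maximum principle for $U$ to spread this bound over $\mathcal{U}_{j,j+1}$, and (iii) the subharmonicity $\Delta|\nabla U|^{2}\geq 0$ plus Anderson's estimate on $\mathcal{S}_{j}\cup\mathcal{S}_{j+1}$ to get $|\nabla U|\leq c\,2^{-2j}$ on $\mathcal{U}_{j,j+1}$; the product then decays like $2^{-j}$, giving the contradiction. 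Two further soft spots in your write-up: your Step~1 propagation should be run through $\Delta|\nabla U|^{2}\geq 0$ and the maximum principle (Proposition \ref{MAXMINU} controls the oscillation of $U$, not of $|\nabla U|$); and your final step would additionally need the \emph{entire} level set $U^{-1}(U_{1,j})$ to lie in the controlled region and to have bounded rescaled area --- Bishop--Gromov bounds volumes of balls, not areas of level sets of a harmonic function --- so even granting a lower bound on $G$, the claim $G(U_{1,j})\to 0$ is not established.
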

\begin{proof}[Proof of Theorem \ref{SSKAA}] Direct from Propositions \ref{BATAT1}, \ref{BATAT2} and \ref{CORONA}.
\end{proof}

Propositions \ref{BATAT1} and \ref{BATAT2} concern only free $\Sa$-symmetric ends  and are simple to prove.   

\begin{proof}[Proof of Proposition \ref{BATAT1}] We need to show only \ref{URVA2} of Definition \ref{DEFSSS}, items \ref{URVA1} and \ref{URVA3} are verified by hypothesis. Without loss of generality we can assume that the quotient manifold $S$ is diffeomorphic to $\Sa\times [0,\infty)$ (Propositions \ref{SUPO}, \ref{SUPO3}). We work on $(S;q,U,V)$ in particular we think $U$ as a function from $S$ into $\mathbb{R}$. Clearly there is a regular value $U_{0}$ such that for any regular value $U_{1}\geq U_{0}$, $U_{1}^{-1}$ is compact, that is, a collection of circles. None of such circles can be contractible otherwise we would violate the maximum principle. But if there are two such circles, then they enclose a compact manifold (a finite cylinder) hence the maximum principle would be also violated. Therefore $U_{1}^{-1}$ is just diffeomorphic to $\Sa$. Now thinking $U$ as a function from $\Sigma$ to $\mathbb{R}$, we have that $U_{1}^{-1}$ is diffeomorphic to a torus, hence of genus greater than zero. The existence of a simple cut $\{\mathcal{S}_{i}\}$ was shown in Proposition \ref{SIMPLECUTUS}.
\end{proof}
 
\begin{proof}[Proof of Proposition \ref{BATAT2}] We work on $(S;q,U,V)$. Let $\mu:=\lim A(B(\partial S,r))/r^{2}$. If $\mu>0$ then $(S;q)$ is asymptotic to a two-dimensional cone. Hence $\kappa$ decays sub-quadratically and therefore so does $|\nabla U|^{2}$ by (\ref{KAPPAF}). Suppose now that $\mu=0$. Let $\gamma$ be a ray from $\partial S$. If $\mu=0$ then any sequence of annuli $(\mathcal{A}^{c}_{r_{i}}(p_{i};1/2,2);q_{r_{i}})$, with $p_{i}\in \gamma$, metrically collapses to the segment $[1/2,2]$. For this reason, if $|\nabla U|^{2}$ decays sub-quadratically along any sequence $p_{i}\in \gamma$ then indeed $|\nabla U|^{2}$ decays sub-quadratically along the end. On the other hand if for a certain sequence $p_{i}$, $|\nabla U|_{r_{i}}^{2}(p_{i})\geq \rho_{*}>0$ ($\rho_{*}$ a given constant), then the end $(\Sigma;\hg,U)$ is indeed asymptotic to a Kasner different from $A$ and $C$ by Proposition \ref{KASYMPTOTIC}. (There is a caveat here. Proposition \ref{KASYMPTOTIC} requires that for $i$ large enough, the annulus $(\mathcal{A}_{r_{i}}(p_{i};1/2,2);\hg_{r_{i}})$ (annulus in $\Sigma$) to be metrically close to the segment $[1/2,2]$. For $i$ large enough the annulus $(\mathcal{A}_{r_{i}}(p_{i};1/2,2);q_{r_{i}})$ (annulus in $S$) is close to the segment $[1/2,2]$, then, if necessary, just make a scaling as in (\ref{SECSCA}), with $\lambda_{i}=1, \mu_{i}=0$ and with $\nu_{i}$ small enough that also the annulus $(\mathcal{A}_{r_{i}}(p_{i};1/2,2);\hg_{r_{i}})$ is close to $[1/2,2]$. Note that such scaling only changes the $\hg$-length of the $\Sa$-fibers in $\Sigma$ and so doesn't affect the norm $|\nabla U|^{2}$).
\end{proof} 
 
The proof of Proposition \ref{CORONA} will be carried out through several steps (Proposition \ref{ANTES}, \ref{GPI}, \ref{P70}, Corollary \ref{P75}, and Proposition \ref{FCOK}).

\begin{Proposition}\label{ANTES} Let $(\Sigma; \hg, U)$ be a $\hgls$-static end. Let $U_{0}$ be a regular value as in Definition \ref{DEFSSS} and consider another regular value $U_{1}\geq U_{0}$. Then, the set of points in $U_{0}^{-1}$ reaching $U_{1}^{-1}$ in time $U_{1}-U_{0}$ under the flow of $\partial_{U}=\nabla^{i}U/|\nabla U|$ is a set of total measure on $U_{0}^{-1}$ and its image under the flow is a set of total measure in $U_{1}^{-1}$.
\end{Proposition}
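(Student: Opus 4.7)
The aim is to show that the subset $B\subset U_0^{-1}$ of points whose integral curve of the renormalised flow $X:=\nabla U/|\nabla U|^2$ (so that $X\cdot U=1$) fails to reach $U_1^{-1}$ has $\mathcal{H}^2$-measure zero; the dual statement for $U_1^{-1}$ then follows by applying the same argument to the reversed flow $-X$ starting from $U_1^{-1}$.

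First, the framework. Because $U_0>\sup_{\partial\Sigma}U$, $U<U_\infty$ everywhere and $U\to U_\infty>U_1$ at infinity, the region $\Omega_{01}$ enclosed by $U_0^{-1}$ and $U_1^{-1}$ is compact. By the discreteness of critical values recalled from \cite{MR0308345} there are only finitely many critical values $c_1<\cdots<c_m$ in $[U_0,U_1]$, with compact critical sets $K_j:=\mathrm{Crit}(U)\cap U^{-1}(c_j)$. A trajectory of $X$ starting in $U_0^{-1}$ remains in $\Omega_{01}$, increases $U$ at unit rate, and either reaches $U_1^{-1}$ at time $U_1-U_0$ or terminates at some $K_j$ at time $c_j-U_0$. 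Thus $B=\bigsqcup_{j=1}^{m}B_{c_j}$, where $B_{c_j}$ is the set of starting points captured by $K_j$, and it suffices to prove $\mathcal{H}^2(B_{c_j})=0$ for each $j$.

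The key tool is the natural measure $\omega_s:=|\nabla U|\,dA_s$ on regular level sets. Because $\Delta U=0$, the divergence theorem yields that the flux $\mathcal{M}(s):=\omega_s(U^{-1}(s))$ is constant across the regular values in $[U_0,U_1]$. The identity $\mathrm{div}(|\nabla U|^2 X)=\Delta U=0$ combined with coarea shows that $\phi_t$ preserves $\omega$, in the sense that $\omega_{U_0+t}(\phi_t(A))=\omega_{U_0}(A)$ for every Borel $A\subset U_0^{-1}$ on which $\phi_t$ is defined. Since $|\nabla U|$ is bounded above and below by positive constants on the compact regular level $U_0^{-1}$, $\omega_{U_0}$ is mutually absolutely continuous with surface measure, so it is enough to prove $\omega_{U_0}(B_{c_j})=0$.

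Now the central ``capture bound''. Fix an open neighbourhood $V\subset\Omega_{01}$ of $K_j$. For $\epsilon>0$ small enough, every trajectory contributing to $B_{c_j}$ has already entered $V$ by $U$-time $c_j-U_0-\epsilon$; otherwise a compactness argument would produce a regular limit point $q_*\in U^{-1}(c_j)\setminus K_j$ from which the flow lands in $K_j$ in arbitrarily short time, contradicting smoothness of the flow at regular points. Invariance of $\omega$ under $\phi_t$ then gives
\begin{equation*}
\omega_{U_0}(B_{c_j})=\omega_{c_j-\epsilon}\bigl(\phi_{c_j-U_0-\epsilon}(B_{c_j})\bigr)\le \omega_{c_j-\epsilon}\bigl(U^{-1}(c_j-\epsilon)\cap V\bigr)\le \sup_V|\nabla U|\cdot \mathrm{Area}\bigl(U^{-1}(c_j-\epsilon)\cap V\bigr).
\end{equation*}
The area factor is bounded uniformly in small $\epsilon$ by a constant depending only on $V$ (applying the coarea formula in a fixed tubular neighbourhood of $K_j$ together with the Hardt-Simon-type bound that the critical set of a $3$-dimensional harmonic function has finite $1$-dimensional Hausdorff content), while $\sup_V|\nabla U|\to 0$ as $V$ shrinks to $K_j$ since $|\nabla U|$ is continuous and vanishes on $K_j$. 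Letting $V$ shrink yields $\omega_{U_0}(B_{c_j})=0$; summing over the finitely many $j$ gives $\omega_{U_0}(B)=0$. The image statement on $U_1^{-1}$ follows verbatim by applying the argument to the reversed flow. The main technical point requiring care is the uniform area bound on the slices $U^{-1}(c_j-\epsilon)\cap V$ near a possibly degenerate component of $K_j$; the rest is routine measure theory combined with the harmonicity of $U$ and the compactness supplied by the $\star$-static hypothesis.
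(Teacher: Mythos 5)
Your overall strategy is genuinely different from the paper's and is in principle viable, but it hinges on a quantitative step that you have not actually established: the uniform bound $\mathrm{Area}\big(U^{-1}(c_j-\epsilon)\cap V\big)\leq C(V)$ for all small $\epsilon$. The coarea formula only gives $\int \mathrm{Area}(U^{-1}(s)\cap V)\,ds\leq \sup_V|\nabla U|\cdot \mathrm{Vol}(V)$, i.e.\ a bound on the slice area on average (or for almost every $s$), not for the particular slices $s=c_j-\epsilon$ your capture bound requires; and the fact you cite --- finiteness of the $1$-dimensional Hausdorff content of the critical set --- concerns the wrong object and does not control the $\mathcal{H}^{2}$-measure of the nearby level sets, which can a priori concentrate area exactly where $|\nabla U|$ is small (indeed the constancy of the flux $\mathcal{M}$ forces $\mathrm{Area}(U^{-1}(s))\geq \mathcal{M}/\sup|\nabla U|$, so small gradient is precisely where area wants to be large). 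What your argument really needs is a Hardt--Simon/Lin-type bound on $\mathcal{H}^{2}(\{U=s\}\cap K)$ that is \emph{uniform in $s$} near the critical value; such a bound does hold for harmonic functions on smooth Riemannian manifolds, but it is a substantial external input that you neither state correctly nor prove. Without it, the product $\sup_V|\nabla U|\cdot\mathrm{Area}(U^{-1}(c_j-\epsilon)\cap V)$ is not shown to tend to zero. (The remaining ingredients --- the decomposition $B=\bigsqcup_j B_{c_j}$, the flow-invariance of $|\nabla U|\,dA$, and the capture claim via a compactness argument at a regular limit point --- are fine.)

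The paper sidesteps this difficulty entirely by flowing with the \emph{un-normalised} field $\nabla U$ in the three-dimensional bulk: since $\Delta U=0$ this field is divergence-free, hence its flow is volume-preserving, and a trapped trajectory takes \emph{infinite} parameter time to approach the critical set $\mathcal{C}$. If the trapped set $C\subset U_0^{-1}$ had positive area, the flow tube $C_1=\phi(C\times[0,1])$ would have positive volume, yet for $t$ large it would be confined to $B(\mathcal{C},\epsilon)\setminus\mathcal{C}$, whose volume tends to zero as $\epsilon\to 0$ because $\mathcal{C}$ is closed --- contradicting volume preservation. That argument needs no information about the structure or measure of level sets near critical points, which is exactly where your version gets stuck. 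Either import the paper's bulk-volume argument or supply a correct statement and reference for the uniform level-set area bound.
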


\begin{proof} Denote by $\Omega_{01}$ the manifold enclosed by $U_{0}^{-1}$ and $U_{1}^{-1}$. Let $\mathcal{C}=\{p:\nabla U(p)=0\}\cap \Omega_{01}$ be the set of critical points in $\Omega_{01}^{\circ}$. The closed set of points $C$ (note the font) in $U_{0}^{-1}$ that do not reach $U_{1}^{-1}$ in time $U_{1}-U_{0}$ under the flow of $\partial_{U}=\nabla^{i} U/|\nabla U|^{2}$, end in a smaller time at a point in $\mathcal{C}$. Let $\phi(x,t):C\times [0,\infty)\rightarrow \Omega_{01}$ be the map generated by the flow of the vector field $\nabla^{i} U$, (not the collinear field $\partial_{U}$), that is, that takes a point $x$ in $C$ and moves it a time $t$ by the flow of $\nabla^{i} U$ (note that indeed if $x\in C$, then the orbit under the flow of $\nabla^{i} U$ remains in $\Omega_{01}$ and is defined for all time). Suppose that the area of $C$ is positive. Then the set
\be
C_{1}=\{\phi(x,t):x\in C,0\leq t\leq 1\}
\ee
has positive volume $V(C_{1})$. But as $U$ is harmonic the flow of $\nabla^{i} U$ preserves volume and so we have $V(\phi(C_{1},t))=V(\phi(C_{1},0))$ for all $t\geq 0$. Let $\epsilon>0$ be small enough that
\be
V(B(\mathcal{C},\epsilon)\setminus \mathcal{C})<V(C_{1})/2
\ee
where $B(\mathcal{C},\epsilon)$ is the ball of points at a distance less than epsilon from $\mathcal{C}$. Then a contradiction is reached by choosing $t$ large enough that $\phi(C_{1},t)\subset B(\mathcal{C},\epsilon)\setminus \mathcal{C}$ because then it would be
\be
V(C_{1})=V(\phi(C_{1},t))\leq V(B(\mathcal{C},\epsilon)\setminus \mathcal{C})<V(C_{1})/2
\ee
To show that the image of $U^{-1}_{0}\setminus C$ under the flow of $\partial_{U}$ is a set of total measure in $U^{-1}_{1}$ just reverse the argument using the flow of $-\partial_{U}$ from $U^{-1}_{1}$ to $U^{-1}_{0}$.
\end{proof}

The following function of the level sets of $U$, ($U\geq U_{0}$), will be central in the analysis later,
\be\label{GMONOTONIC}
G(U):=\int_{U^{-1}}|\nabla U|^{2}dA
\ee
The function $G(U)$ is well defined at least for regular values of $U$. It is also well defined at the critical values but this won't be needed. As mentioned before Definition \ref{DEFSSS}, critical values of $U$ are discrete and, as we will show next, the lateral limits of $G(U)$ at any critical value $U_{c}$ coincide (and are finite). Let us see this property. Let $U_{2}>U_{1}$ be any two regular values with $U_{2}>U_{c}>U_{1}\geq U_{0}$ and let $\Omega_{12}$ be the region enclosed them. As in Proposition \ref{ANTES} let $C$ be the closed set of points in $U_{1}^{-1}$ that do not reach $U_{2}^{-1}$ in time $U_{2}-U_{1}$ under the flow of $\partial_{U}$. For any $\epsilon>0$ small enough let $R(\epsilon)$ be an open region in $U_{1}^{-1}$, with smooth boundary, containing $C$, and inside the ball $B(C,\epsilon)$. Let $C_{1}(\epsilon)=U_{1}^{-1}\setminus R(\epsilon)$. Let $\Omega_{12}(\epsilon)$ be the union of the set of integral curves (inside $\Omega_{12}$) of $\partial_{U}$ starting from points in $C_{1}(\epsilon)$ and ending in $U_{2}^{-1}$, and let $C_{2}(\epsilon)$ be the union of the end-points in $U_{2}^{-1}$ of these integral curves. Then the divergence theorem gives
\be
\int_{C_{1}(\epsilon)}|\nabla U|^{2}dA-\int_{C_{2}(\epsilon)}|\nabla U|^{2}dA=\int_{\Omega_{12}(\epsilon)}\langle\nabla\nabla U,\frac{\nabla U}{|\nabla U|} \nabla U\rangle dV
\ee
Take the limit $\epsilon\rightarrow 0$ and use Proposition \ref{ANTES} to deduce,
\be
G(U_{2})-G(U_{1})=\int_{\Omega'_{12}}\langle\nabla\nabla U,\frac{\nabla U}{|\nabla U|} \nabla U\rangle dV
\ee
where $\Omega'_{12}$ is the union of the set of integral curves of $\partial_{U}$ starting from points in $U_{1}^{-1}\setminus C$ and ending in $U_{2}^{-1}$ and is equal to $\Omega'_{12}$ minus a set of measure zero. Observe that the integrand is bounded. Take finally the limit $U_{1}\uparrow U_{c}$ and $U_{2}\downarrow U_{c}$ and note that the volume of $\Omega_{12}$ tends to zero (this is easy to see) to get
\be
\lim_{U_{1}\uparrow U_{c}}G(U)=\lim_{U_{2}\downarrow U_{c}}G(U)
\ee
as claimed.

The function $G(U)$ will be thought as defined for all $U\geq U_{0}$, continuous everywhere and differentiable except perhaps on a discrete set (the critical values of $U$). The continuity will be used implicitly several times in what follows.

\begin{Proposition}\label{GPI} Let $(\Sigma; \hg, U)$ be a $\hgls$-static end. Let $U_{0}$ be a regular value as in Definition \ref{DEFSSS}. Then for any two regular values $U_{2}>U_{1}\geq U_{0}$ we have,
\be
G'(U_{2})\geq G'(U_{1})
\ee
where $G'=dG/dU$.
\end{Proposition}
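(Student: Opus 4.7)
The plan is to show that $G$ is convex, i.e., $G''\ge 0$, by combining a coarea/divergence representation of $G'$ with a Bochner--Kato estimate on $|\nabla U|$.

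First I would derive a workable formula for $G'$. Since $U$ is harmonic we have
$$\mathrm{div}(|\nabla U|\,\nabla U) \;=\; \nabla |\nabla U|\cdot \nabla U \;=\; \frac{\nabla\nabla U(\nabla U,\nabla U)}{|\nabla U|},$$
whose absolute value is bounded by $|\nabla\nabla U|\,|\nabla U|$ and is therefore integrable across the critical set of $U$. The divergence theorem on the compact region $\Omega_{12}$ enclosed by the level sets $U_1^{-1}$ and $U_2^{-1}$ (compact by \ref{URVA3} of Definition \ref{DEFSSS}), followed by the coarea formula, gives
$$G'(U) \;=\; \int_{U^{-1}} \frac{\nabla\nabla U(\nabla U,\nabla U)}{|\nabla U|^2}\,dA \;=\; \int_{U^{-1}} n(|\nabla U|)\,dA,$$
where $n=\nabla U/|\nabla U|$ is the unit normal on a regular level set and one uses $\nabla |\nabla U| = \nabla\nabla U\cdot n$.

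Second, I would verify subharmonicity of $|\nabla U|$. Combining the Bochner identity with the static equations $Ric_{\hg}=2\nabla U\nabla U$ and $\Delta_{\hg} U=0$ yields $\tfrac12\Delta|\nabla U|^2 = |\nabla\nabla U|^2 + 2|\nabla U|^4$. For the smooth regularisation $f_\epsilon:=\sqrt{|\nabla U|^2+\epsilon^2}$ a direct computation then gives
$$f_\epsilon\,\Delta f_\epsilon \;=\; |\nabla\nabla U|^2 + 2|\nabla U|^4 - \frac{|\nabla\nabla U\cdot\nabla U|^2}{f_\epsilon^2} \;\ge\; 2|\nabla U|^4 \;\ge\; 0,$$
the first inequality following from Cauchy--Schwarz $|\nabla\nabla U\cdot\nabla U|^2\le |\nabla\nabla U|^2|\nabla U|^2$ together with $|\nabla U|^2/f_\epsilon^2\le 1$.

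Third, applying the divergence theorem to the smooth function $f_\epsilon$ on the compact domain $\Omega_{12}$ gives
$$0 \;\le\; \int_{\Omega_{12}} \Delta f_\epsilon\,dV \;=\; \int_{U_2^{-1}} n(f_\epsilon)\,dA \;-\; \int_{U_1^{-1}} n(f_\epsilon)\,dA.$$
Since $U_1$ and $U_2$ are regular values, $|\nabla U|$ is smooth and bounded away from zero on the compact level sets $U_i^{-1}$, so $f_\epsilon\to |\nabla U|$ smoothly on each. Passing to the limit $\epsilon\to 0$ and inserting the formula for $G'$ from the first step yields $G'(U_2)-G'(U_1)\ge 0$, as desired.

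The main technical obstacle is that $|\nabla U|$ is not smooth at the (possibly non-empty) critical set of $U$ inside $\Omega_{12}$, so the formal identity $G'(U_2)-G'(U_1)=\int_{\Omega_{12}}\Delta|\nabla U|\,dV$ cannot be used directly. The $\epsilon$-regularization $f_\epsilon$ handles this cleanly: $f_\epsilon$ is globally smooth, the subharmonicity survives uniformly in $\epsilon$, and the boundary integrals are taken only on regular level sets where the $\epsilon\to 0$ limit is genuinely smooth.
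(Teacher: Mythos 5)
Your argument is correct, and it reaches the same key identity as the paper, namely $G'(U)=\int_{U^{-1}}\nabla_{n}|\nabla U|\,dA$ combined with the subharmonicity of $|\nabla U|$ coming from Bochner and $Ric_{\hg}=2\nabla U\nabla U$. The differences are in the two technical steps. For the flux formula you integrate $\mathrm{div}(|\nabla U|\nabla U)$ over $\Omega_{12}$ and apply the coarea formula, whereas the paper identifies nearby level sets by the flow of $\nabla U/|\nabla U|^{2}$, uses that harmonicity preserves the form $|\nabla U|\,dA$, and differentiates under the integral; both yield the same expression at regular values. For the critical set of $U$ inside $\Omega_{12}$, the paper excises a sublevel set $E=\{|\nabla U|\leq \epsilon\}$ for a regular value $\epsilon^{2}$ of $|\nabla U|^{2}$ and then checks that the extra boundary term $\int_{\partial E}\nabla_{n_{out}}|\nabla U|\,dA$ has a favourable sign, while you instead regularise globally with $f_{\epsilon}=\sqrt{|\nabla U|^{2}+\epsilon^{2}}$, verify via the Kato-type inequality that $\Delta f_{\epsilon}\geq 0$ uniformly in $\epsilon$, and pass to the limit only on the two regular boundary level sets where $|\nabla U|$ is bounded away from zero. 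Your route avoids invoking Sard's theorem for $|\nabla U|^{2}$ and the sign analysis on $\partial E$, at the modest cost of the extra pointwise computation for $f_{\epsilon}$; either way the conclusion $G'(U_{2})\geq G'(U_{1})$ follows.
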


\begin{proof} Let $U_{*}$ be a regular value. Identify nearby level sets $U^{-1}$ to $U^{-1}_{*}$ through the flow of $\partial_{U}:=\nabla^{i} U/|\nabla U|^{2}=n/|\nabla U|$ where $n$ is the unit normal to $U^{-1}$. As $U$ is harmonic, the form $|\nabla U|dA$ is preserved. Abusing notation we write $|\nabla U|dA=|\nabla U_{*}|dA_{*}$. Thus, 
\be
G(U)=\int_{U^{-1}} |\nabla U||\nabla U_{*}|dA_{*}
\ee
Therefore 
\be
G'(U)=\int_{U^{-1}} (\nabla_{n}|\nabla U|)\frac{|\nabla U_{*}|}{|\nabla U|}dA_{*}=\int_{U^{-1}} \nabla_{n}|\nabla U| dA
\ee
Let $\Omega_{12}$ be the region enclosed by $U^{-1}_{1}$ and $U^{-1}_{2}$. Now let $\epsilon^{2}>0$ be a regular value of $|\nabla U|^{2}$ smaller than the minimum of $|\nabla U|^{2}$ over $U^{-1}_{1}$ and $U^{-1}_{2}$. Let $E=\{p\in \Omega_{12}:|\nabla U|(p)\leq \epsilon\}$. The divergence theorem gives us
\be
\int_{U^{-1}_{2}}\nabla_{n}|\nabla U|dA=\int_{U^{-1}_{1}}\nabla_{n}|\nabla U|dA+\int_{\Omega_{12}\setminus E^{\circ}}\Delta |\nabla U| dV+\int_{\partial E}\nabla_{n_{out}} |\nabla U|dA
\ee
The last term on the right hand side is positive, and the second from last is non-negative because $\Delta |\nabla U|\geq 0$ (use Bochner or just see \cite{MR1809792} Lemma 3.5). The proposition follows.
\end{proof}

\begin{Proposition}\label{P70} Let $(\Sigma; \hg,U)$ be a $\hgls$-static end. Let $U_{0}$ be a regular value as in Definition \ref{DEFSSS}. Then, for any two regular values $U_{2}\geq U_{1}\geq U_{0}$, we have
\be\label{LNGP}
\bigg(\frac{G'}{G}\bigg)(U_{2})\geq \bigg(\frac{G'}{G}\bigg)(U_{1})
\ee
where $G'=dG/dU$.
\end{Proposition}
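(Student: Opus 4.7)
The plan is to show that $(G'/G)(U)$ is non-decreasing by proving the equivalent log-convexity inequality $G''(U)\, G(U) \geq (G'(U))^2$, and to reduce this, via Cauchy--Schwarz, to a pointwise-on-level-set integral inequality that is then closed by combining Bochner's formula with the Gauss equation and the Gauss--Bonnet theorem. Writing $\phi := |\nabla U|$, the key observation is that the divergence-theorem identity already used in Proposition \ref{GPI}, together with the coarea identity $dV = dU\, dA/\phi$, gives
$$G''(U) = \int_{U^{-1}} \frac{\Delta \phi}{\phi}\, dA,$$
and on the other hand, since $\nabla_n\phi = \phi\,\partial_U\phi$, Cauchy--Schwarz yields
$$(G')^2 = \Big(\int \phi \cdot \partial_U \phi\, dA\Big)^2 \leq G\,\int \frac{(\nabla_n\phi)^2}{\phi^2}\, dA.$$
Hence it suffices to prove the single inequality
$$\int_{U^{-1}} \frac{\phi\Delta\phi - (\nabla_n\phi)^2}{\phi^2}\, dA \geq 0. \qquad (\star)$$

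The core algebraic step is to rewrite the integrand in $(\star)$ using the static structure. Bochner's identity, together with $\Delta U = 0$ and $Ric_\hg = 2\nabla U\otimes \nabla U$, gives $\phi\Delta\phi = |\nabla^2 U|^2 - |\nabla\phi|^2 + 2\phi^4$. In an adapted orthonormal frame $(n,e_1,e_2)$ with $n = \nabla U/\phi$, one has $(\nabla^2 U)(n,n) = \nabla_n\phi$, $(\nabla^2 U)(n,e_A) = \nabla_{e_A}\phi$, and $(\nabla^2 U)(e_A,e_B) = -\phi\,\Pi_{AB}$ (the last coming from $e_A(e_B U) = 0$ plus the definition of the second fundamental form $\Pi$ of the level set). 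This decomposition yields $|\nabla^2 U|^2 = (\nabla_n\phi)^2 + 2|\nabla_T\phi|^2 + \phi^2|\Pi|^2$, and together with $(\nabla_n\phi)^2 = \phi^2 H^2$ (from the harmonicity relation $\nabla_n\phi = -H\phi$ already used in Proposition \ref{GPI}) I get
$$\phi\Delta\phi - (\nabla_n\phi)^2 \;=\; |\nabla_T\phi|^2 + \phi^2(|\Pi|^2 - H^2) + 2\phi^4 \;=\; |\nabla_T\phi|^2 + 2\phi^2(\phi^2 - \det \Pi),$$
using the two-dimensional identity $|\Pi|^2 - H^2 = -2\det\Pi$.

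Now the Gauss equation for the surface $U^{-1}$ sitting in $(\Sigma,\hg)$ gives the intrinsic Gaussian curvature as $K_{U^{-1}} = \det\Pi + K^{\mathrm{sec}}_\hg(e_1,e_2)$, and a direct computation from $Ric_\hg = 2\nabla U\otimes \nabla U$ (which has scalar curvature $R = 2\phi^2$ and vanishing Ricci in the tangent directions to $U^{-1}$) shows that the ambient sectional curvature of the plane $\mathrm{span}(e_1,e_2)$ equals $-\phi^2$. Therefore $\phi^2 - \det\Pi = -K_{U^{-1}}$, and integrating $(\star)$ becomes
$$\int_{U^{-1}} \frac{\phi\Delta\phi - (\nabla_n\phi)^2}{\phi^2}\, dA \;=\; \int_{U^{-1}} \frac{|\nabla_T\phi|^2}{\phi^2}\, dA \;-\; 2\int_{U^{-1}} K_{U^{-1}}\, dA.$$
This is where the $\hgls$-static hypothesis enters decisively: for $U \geq U_0$ each regular level $U^{-1}$ is compact and has genus at least one, so by Gauss--Bonnet $\int K_{U^{-1}} dA = 2\pi\chi(U^{-1}) \leq 0$. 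Both summands on the right are non-negative, $(\star)$ is proved, and hence $G''G \geq (G')^2$.

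The main obstacle in the proof is precisely the topological input in the last step: without an a priori bound $\chi(U^{-1}) \leq 0$ the Gauss--Bonnet term has the wrong sign and one cannot close the Cauchy--Schwarz loss, which explains why genus-one level sets are built into the definition of $\hgls$-static end. A minor technical matter, handled as in the continuity discussion preceding Proposition \ref{GPI}, is the passage across the discrete set of critical values of $U$: the regular-value inequality $(G'/G)(U_2) \geq (G'/G)(U_1)$ is proved first for $U_1,U_2$ in the same connected component of the regular-value set and then extended by the continuity of $G$ and $G'$ at critical values.
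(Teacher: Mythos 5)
Your proof is correct and follows essentially the same route as the paper's: both arguments reduce the claim to the log-convexity inequality $G''G\geq (G')^{2}$, arrive at the identical identity $G''=-4\pi\chi+\int |\nabla_{T}\phi|^{2}/\phi^{2}\,dA+\int \theta^{2}\,dA$, absorb $(G')^{2}$ into the $\int\theta^{2}$ term by the same Cauchy--Schwarz step, and discharge the Euler-characteristic term via Gauss--Bonnet using the genus hypothesis in the definition of a $\hgls$-static end. The only (cosmetic) difference is that you obtain the $G''$ formula from Bochner plus the Hessian decomposition and the Gauss equation, whereas the paper reads it off from the Riccati evolution of the level-set mean curvature under the normal flow -- equivalent computations -- and the passage across critical values is handled in both cases by the one-sided monotonicity of $G'$ from Proposition \ref{GPI} together with the continuity of $G$.
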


\begin{proof} First, recall that the set of critical values of $U$ is discrete. We start proving that for any two regular values $U_{2}>U_{1}$ with no critical value in between, the inequality (\ref{LNGP}) holds.

We write
\be
\hg=\frac{1}{|\nabla U|^{2}}dU^{2}+h
\ee
where $h$ is a two-metric over the leaves $U^{-1}$ between $U_{1}^{-1}$ and $U^{-1}_{2}$. Denote with a prime ($'$) the derivative with respect to $\partial_{U}=\nabla^{i} U/|\nabla U|^{2}$. We will use again the notation $\lambda:=1/|\nabla U|$. Let $\Theta$ and $\theta$ be the second fundamental form and mean curvature respectively of the leaves $U^{-1}$. 

Fix a leaf $U_{*}^{-1}$. Identify the leaves $U^{-1}$ to $U^{-1}_{*}$ through the flow of $\partial_{U}$. As $U$ is harmonic we have $|\nabla U|dA=|\nabla U_{*}|dA_{*}$.  Hence 
\be
G=\int_{U^{-1}}|\nabla U|^{2}dA=\int_{U^{-1}}\frac{1}{\lambda}|\nabla U_{*}|dA_{*}. 
\ee
As $dA=\lambda |\nabla U_{*}|dA_{*}$ and $\theta =(\partial_{n} dA)/dA$ we deduce $\theta=-(1/\lambda)'$. Thus,
\be
G'=-\int_{U^{-1}}\theta |\nabla U_{*}|dA_{*}
\ee

\be
G''=-\int_{U^{-1}}\theta '|\nabla U_{*}|dA_{*}=-\int_{U^{-1}}\frac{\theta'}{\lambda}dA
\ee
We use now that in dimension three $\theta'$ has the standard expression,
\be
\theta'=-\Delta \lambda -(-2\kappa+tr_{h}Ric+\theta^{2})\lambda
\ee
to deduce,
\be
G''=-4\pi\chi+\int_{U^{1}}\big(\frac{|\nabla \lambda|^{2}}{\lambda^{2}}+tr_{h}Ric\big)dA+\int_{U^{-1}}\theta^{2}dA
\ee
where $\chi$ is the Euler characteristic of the leaves $U^{-1}$. On the right hand side of this expression the first two terms are non-negative. For the last term we have
\be
\int_{U^{-1}}\theta^{2}dA=\int_{U^{-1}}\theta^{2}\lambda |\nabla U_{*}|dA_{*}\geq \frac{\bigg(\int_{U^{-1}}\theta |\nabla U_{*}|dA_{*}\bigg)^{2}}{\int_{U^{-1}} \frac{1}{\lambda}|\nabla U_{*}|dA_{*}}=\frac{G'^{2}}{G}
\ee
Therefore,
\be
G''\geq \frac{G'^{2}}{G}
\ee
which is equivalent to $(G'/G)'\geq 0$ from which (\ref{LNGP}) follows.

We prove now that (\ref{LNGP}) also holds when $U_{2}>U_{1}$ are two regular values, and between them there is only one critical value $U_{c}$. This would complete the proof of the proposition. To see this we just compute,
\begin{align}
\label{LNGP1} \bigg(\frac{G'}{G}\bigg)(U_{2}) & \geq \lim_{U\rightarrow U^{+}_{c}}\bigg(\frac{G'}{G}\bigg)(U)=\bigg(\frac{\lim_{U\rightarrow U^{+}_{c}} G'(U)}{G(U_{c})}\bigg)\\
\label{LNGP2} & \geq \bigg(\frac{\lim_{U\rightarrow U^{-}_{c}} G'(U)}{G(U_{c})}\bigg)=\lim_{U\rightarrow U^{-}_{c}}\bigg(\frac{G'}{G}\bigg)(U)\\
& \geq \bigg(\frac{G'}{G}\bigg)(U_{1})
\end{align}
where to pass from (\ref{LNGP1}) to (\ref{LNGP2}) we use Proposition \ref{GPI} (note $G(U)>0$ for all $U$).
\end{proof}

\begin{Corollary}\label{P75} Let $(\Sigma; \hg,U)$ be a $\hgls$-static end. Then, there is a divergent sequence of points $p_{i}$, and constants $C>0$ and $D>0$ such that
\be\label{TOTOT}
|\nabla e^{CU}|(p_{i})\geq D
\ee
\end{Corollary}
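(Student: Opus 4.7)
The plan is to exploit two ingredients: the conservation of the flux $\int|\nabla U|\,dA$ across regular level sets (a direct consequence of $U$ being harmonic), and the monotonicity of $G'/G$ supplied by Proposition \ref{P70}.

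First I would observe that, since $\Delta U=0$, the divergence theorem applied to the region enclosed by any two regular level sets $U^{-1}_{1}, U^{-1}_{2}\subset\Sigma$ ($U_{0}\leq U_{1}<U_{2}<U_{\infty}$) gives
\[
\int_{U_{2}^{-1}}|\nabla U|\,dA-\int_{U_{1}^{-1}}|\nabla U|\,dA = \int\Delta U\,dV = 0.
\]
Therefore $I(U):=\int_{U^{-1}}|\nabla U|\,dA$ is a positive constant $I_{0}$ on regular values $U\geq U_{0}$, and from the trivial pointwise bound $|\nabla U|^{2}\leq \bigl(\sup_{U^{-1}}|\nabla U|\bigr)|\nabla U|$ one gets
\[
\sup_{U^{-1}}|\nabla U|\;\geq\;\frac{G(U)}{I_{0}}\qquad\text{for every regular }U\geq U_{0}.
\]

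Next I would use Proposition \ref{P70} to fix the constant $C>0$ once and for all. Set $m_{0}:=(G'/G)(U_{0})$, a finite real number, and let $C:=\max\{1,\,1-m_{0}\}>0$. Monotonicity of $G'/G$ gives $(G'/G)(U)\geq m_{0}$ for every regular $U\geq U_{0}$, so the function $f(U):=CU+\ln G(U)$ satisfies $f'(U)=C+(G'/G)(U)\geq 1$ at every regular $U$. Since the set of critical values of $U$ is discrete and $G$ is continuous across them, $f$ is non-decreasing on $[U_{0},U_{\infty})$. Hence
\[
e^{CU}G(U)\;\geq\;e^{CU_{0}}G(U_{0})\qquad\text{for every regular }U\geq U_{0}.
\]

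Combining the two estimates yields, at every regular $U\in[U_{0},U_{\infty})$,
\[
\sup_{U^{-1}}|\nabla e^{CU}|\;=\;Ce^{CU}\sup_{U^{-1}}|\nabla U|\;\geq\;\frac{Ce^{CU}G(U)}{I_{0}}\;\geq\;\frac{Ce^{CU_{0}}G(U_{0})}{I_{0}}\;=:D\;>\;0.
\]
By Definition \ref{DEFSSS} each level set $U^{-1}$ is compact, so the supremum is attained at some $p_{U}\in U^{-1}$. Choosing a sequence of regular values $U_{i}\to U_{\infty}$ (available since the critical values form a discrete set) produces points $p_{i}\in U_{i}^{-1}$ with $|\nabla e^{CU}|(p_{i})\geq D$, and these $p_{i}$ necessarily diverge: a convergent subsequence would yield a limit $p_{\infty}\in\Sigma$ with $U(p_{\infty})=U_{\infty}$, contradicting either $U<U_{\infty}$ on $\Sigma$ (when $U_{\infty}$ is finite) or the finiteness of $U(p_{\infty})$ (when $U_{\infty}=\infty$).

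The only real obstacle is the choice of $C$: one needs $e^{CU}G(U)$ to stay bounded away from zero \emph{uniformly} as $U\to U_{\infty}$, which a priori could fail if $G$ decayed arbitrarily fast. Proposition \ref{P70} rules this out by pinning down a finite lower bound $(G'/G)(U)\geq m_{0}$ independent of $U$, so a single constant $C>-m_{0}$ dominates the logarithmic decay of $G$ everywhere along the end.
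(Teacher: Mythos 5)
Your proof is correct and follows essentially the same route as the paper: both arguments integrate the monotonicity of $G'/G$ from Proposition \ref{P70} to get a uniform lower bound on $e^{CU}G(U)$, and then convert this into a pointwise lower bound for $\sup_{U^{-1}}|\nabla e^{CU}|$ using the fact that the flux $\int_{U^{-1}}|\nabla U|\,dA$ is a fixed constant. The one place you genuinely improve on the paper is the choice of $C$: the paper takes $C=-G'(U_{0})/G(U_{0})$ and must separately rule out $C\leq 0$ by invoking the decay of $|\nabla U|$ at infinity, whereas your choice $C=\max\{1,\,1-(G'/G)(U_{0})\}$ is automatically positive and makes $C+(G'/G)(U)\geq 1$ hold regardless of the sign of $(G'/G)(U_{0})$, which suffices since the statement only asks for some $C>0$.
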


\begin{proof} From Proposition \ref{P70} we get
\be
G(U)\geq G(U_{0})e^{-C(U-U_{0})}
\ee
where $C=-G'(U_{0})/G(U_{0})$. If $C\leq 0$ then $G(U)\geq G(U_{0})$. But 
\be
G(U)=\int_{U^{-1}}|\nabla U||\nabla U_{0}|dA_{0}
\ee
which has a fixed integration measure $|\nabla U_{0}|dA_{0}$. It follows that there must be a divergent sequence of points $p_{i}$ for which $|\nabla U|(p_{i})$ is bounded away from zero (which is not the case). Thus $C>0$. In this case we have
\be
G(U)e^{CU}\geq G(U_{0})e^{CU_{0}}>0.
\ee
But as 
\be
G(U)e^{CU}=\int_{U^{-1}}\frac{1}{C}|\nabla e^{CU}||\nabla U_{0}| dA_{0}
\ee
again we conclude that there must be a divergent sequence of points $p_{i}$ and a constant $D>0$ for which (\ref{TOTOT}) holds.
\end{proof}

\begin{Proposition}\label{FCOK} Let $(\Sigma; \hg,U)$ be a $\hgls$-static end and let $\gamma$ be a ray. Suppose that the data set has a simple cut $\{\mathcal{S}_{i}\}$ and that the curvature decays sub-quadratically along $\gamma\cup (\cup_{j}\mathcal{S}_{j})$.  Then, for any constant $C>0$, $|\nabla e^{CU}|$ tends to zero at infinity.
\end{Proposition}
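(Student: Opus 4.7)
The plan is to prove the stated conclusion by a two-step argument: first upgrade the sub-quadratic curvature decay from the one-dimensional set $\gamma\cup(\cup_{j}\mathcal{S}_{j})$ to a uniform sub-quadratic decay on the whole end, and then use this to control the growth of $U$ well enough that the exponential factor $e^{CU}$ cannot defeat the decay of $|\nabla U|$.

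For the first step, recall that in the harmonic presentation $Ric_{\hg}=2\nabla U\nabla U$, so $|Ric|_{\hg}=2|\nabla U|^{2}$; sub-quadratic curvature decay on the prescribed set is therefore equivalent to $r(p)|\nabla U|(p)\to 0$ as $r(p)\to\infty$ with $p\in\gamma\cup(\cup_{j}\mathcal{S}_{j})$. Bochner's identity, together with $\Delta_{\hg}U=0$ and $Ric_{\hg}(\nabla U,\nabla U)=2|\nabla U|^{4}$, yields
\[
\Delta_{\hg}|\nabla U|^{2}=2|\nabla\nabla U|^{2}+4|\nabla U|^{4}\geq 0,
\]
so $|\nabla U|^{2}$ is subharmonic. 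Given $\epsilon>0$, the hypothesis provides $j_{0}$ such that $r|\nabla U|<\epsilon$ on every $\mathcal{S}_{j}$ with $j\geq j_{0}$. Since $\mathcal{S}_{j}\subset\mathcal{A}(2^{1+2j},2^{2+2j})$ while $\mathcal{U}_{j,j+1}\subset\mathcal{A}(2^{1+2j},2^{4+2j})$, the values of the distance function $r$ on $\mathcal{S}_{j}\cup\mathcal{S}_{j+1}$ and on $\mathcal{U}_{j,j+1}$ are comparable within a universal factor. Applying the maximum principle for the subharmonic $|\nabla U|^{2}$ on the compact region $\mathcal{U}_{j,j+1}$, whose boundary is $\mathcal{S}_{j}\cup\mathcal{S}_{j+1}$, yields a universal constant $B$ with $r|\nabla U|\leq B\epsilon$ throughout $\mathcal{U}_{j,j+1}$. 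As the regions $\mathcal{U}_{j,j+1}$ tile the end modulo a compact set and $\epsilon$ is arbitrary, we obtain $r|\nabla U|\to 0$ uniformly as $r\to\infty$.

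For the second step, parametrise $\gamma$ so that $r(\gamma(s))=s$. Given $\epsilon>0$, choose $r_{0}$ so that $r|\nabla U|<\epsilon$ wherever $r\geq r_{0}$, and integrate along $\gamma$:
\[
U(\gamma(r))-U(\gamma(r_{0}))\leq \int_{r_{0}}^{r}|\nabla U|(\gamma(s))\,ds\leq \epsilon\ln(r/r_{0}).
\]
By the Harnack-type estimate of Proposition \ref{MAXMINU}, for any point $q$ with $r(q)$ large, $q$ and $\gamma(r(q))$ lie in the same connected component of the scaled annulus $\mathcal{A}_{r(q)}(1/2,2)$ (because the end is simply connected at infinity with one end), and $|U(q)-U(\gamma(r(q)))|$ is bounded by a universal constant. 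Therefore there is $C_{\epsilon}$ such that, for all $q$ sufficiently far out,
\[
U(q)\leq \epsilon\ln r(q)+C_{\epsilon}, \qquad e^{CU(q)}\leq e^{CC_{\epsilon}}\,r(q)^{C\epsilon}.
\]

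Combining both steps,
\[
|\nabla e^{CU}|(q)=Ce^{CU(q)}|\nabla U|(q)\leq Ce^{CC_{\epsilon}}\epsilon\,r(q)^{\,C\epsilon-1}.
\]
For the given $C>0$, choose $\epsilon<1/C$; then $C\epsilon-1<0$, so the right-hand side tends to zero as $r(q)\to\infty$, completing the proof. The main obstacle, as expected, is the first step: the hypothesis is merely pointwise on the one-dimensional ray and the codimension-one cut surfaces, and one must extend it to a uniform bound everywhere. The subharmonicity of $|\nabla U|^{2}$ (which is special to the harmonic presentation of the static vacuum equations) combined with the defining property of a simple cut, namely that the $\mathcal{S}_{j}$ enclose the end into a nested sequence of compact regions $\mathcal{U}_{j,j+1}$, is what makes the maximum principle sufficient here.
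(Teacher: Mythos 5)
Your overall strategy is the same as the paper's: integrate the decay of $|\nabla U|$ along $\gamma$ to show that $e^{CU}$ grows slower than linearly in $r$ along the ray, propagate that bound to the whole end, and multiply by the quadratically decaying $|\nabla U|$. The final computation is correct, and your choice $\epsilon<1/C$ plays exactly the role of the paper's choice $|\nabla U|\leq 1/(2Cr)$, which yields $e^{CU(\gamma(r))}\leq c_{1}r^{1/2}$.

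There is, however, one step whose justification does not hold up: the transfer of the bound on $U$ from $\gamma$ to an arbitrary far point $q$. You assert that $q$ and $\gamma(r(q))$ lie in the same connected component of $\mathcal{A}_{r(q)}(1/2,2)$ ``because the end is simply connected at infinity with one end.'' One-endedness does not imply that metric annuli are connected (they can have several components, only one of which meets $\gamma$), and Proposition \ref{MAXMINU} only controls the oscillation of $U$ on a set contained in a \emph{single} connected component; so as written the Harnack step is not justified. This is precisely where the simple cut must enter: the set $\mathcal{S}_{j}\cup\alpha_{j,j+1}\cup\mathcal{S}_{j+1}$ (with $\alpha_{j,j+1}$ the segment of $\gamma$ between the two surfaces, which exists because every ray crosses each $\mathcal{S}_{j}$) is connected and contained in one annulus $\mathcal{A}(2^{1+2j},2^{4+2j})$, so Proposition \ref{MAXMINU} gives $U\leq \eta+U(\gamma(r_{j}))$ on $\mathcal{S}_{j}\cup\mathcal{S}_{j+1}$, and the maximum principle then extends this upper bound to all of $\mathcal{U}_{j,j+1}$. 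You already use $\mathcal{U}_{j,j+1}$ and the maximum principle in your first step, so the repair is available to you; but note also that your first step (upgrading $r|\nabla U|\to 0$ to the whole end) is not actually needed for the conclusion, since Anderson's unconditional estimate $|\nabla U|\leq K/r$ already suffices for the gradient factor once $e^{CU}\leq c\,r^{C\epsilon}$ with $C\epsilon<1$ is in hand — and the $r$-comparability of $\mathcal{U}_{j,j+1}$ with its boundary that your first step relies on is itself not entirely immediate.
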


\begin{proof} Let $\gamma(s)$ be a ray from $\partial \Sigma$ and parametrised by arc-length $s$, (i.e. $\dist(\gamma(s),\partial \Sigma)=s$). As we have done before, we will use the notation $r(p)=\dist(p,\partial \Sigma)$, for $p\in \Sigma$. Thus $r(\gamma(s))=s$. 

As $|\nabla U|^{2}$ decays faster than quadratically along $\gamma$ we have,
\be
r|\nabla U|(r)\rightarrow 0\quad {\rm as}\quad r\rightarrow \infty,
\ee
where we have denoted $|\nabla U|(\gamma(r))$ by $|\nabla U|(r)$. Let $r_{0}$ be such that for all $r\geq r_{0}$ we have $|\nabla U|(r)\leq 1/(2Cr)$. Integrating we obtain
\be
|U(r)-U(r_{0})|\leq \frac{1}{2C}\ln \frac{r}{r_{0}}
\ee
where to simplify notation we made $U(r):=U(\gamma(r))$. Thus,
\be\label{LATERTOU}
e^{CU(r)}\leq c_{1}r^{1/2}
\ee
We will use this inequality below.

The ray $\gamma$ intersects $\mathcal{S}_{j}$ and $\mathcal{S}_{j+1}$. So let $\alpha_{j,j+1}$ be the segment of $\gamma$ intersecting $\mathcal{S}_{j}$ and $\mathcal{S}_{j+1}$ only at its end points. Let $r_{j}$ be the number such that $\gamma(r_{j})$ is the end point of $\alpha_{j,j+1}$ in $\mathcal{S}_{j}$. The connected set 
\be
Z_{j}=\mathcal{S}_{j}\cup \alpha_{j,j+1}\cup \mathcal{S}_{j+1}
\ee
is included inside $\mathcal{A}(2^{1+2j},2^{4+2j})$. So by Proposition \ref{MAXMINU} (with $Z=Z_{j}$) we deduce,
\be\label{ANTERIOR}
U(q)\leq \eta+U(\gamma(r_{j}))
\ee
for any $q$ in $\mathcal{S}_{j}\cup \mathcal{S}_{j+1}$, and where $\eta$ does not depend on $j$.

Let $\mathcal{U}_{j,j+1}$ be the compact manifold enclosed by $\mathcal{S}_{j}$ and $\mathcal{S}_{j+1}$. By the maximum principle, the maximum of $U$ on $\mathcal{U}_{j,j+1}$ takes place at a point, say $x_{j}$, in $\mathcal{S}_{j}\cup \mathcal{S}_{j+1}$. So,
\be
U(x)\leq U(x_{j}) 
\ee
for any $x\in \mathcal{U}_{j,j+1}$. Combining this with (\ref{ANTERIOR}), with $q=x_{j}$, we obtain,
\be\label{BEUS11}
e^{CU(x)}\leq c_{2}e^{CU(\gamma(r_{j}))}
\ee
for any $x\in \mathcal{U}_{j,j+1}$ and where the constant $c_{2}$ does not depend on $j$. 

Now, $\mathcal{S}_{j}$ is included in $\mathcal{A}(2^{1+2j},2^{2+2j})$ and so we have,
\be\label{BOUNDSTT}
r_{j}\leq 2^{2+2j}
\ee
which plugged in (\ref{LATERTOU}) gives
\be\label{BOUNDSTOU}
e^{CU(\gamma(r_{j}))}\leq c_{1}2^{1+j}
\ee
Combining this bound and (\ref{BEUS11}) we deduce
\be\label{BEUS1}
e^{CU(x)}\leq c_{4}2^{j}
\ee
for any $x\in \mathcal{U}_{j,j+1}$ and where $c_{4}$ does not depend on $j$. 

On the other hand we also have $\Delta |\nabla U|^{2}\geq 0$ and thus the maximum of $|\nabla U|^{2}$ over $\mathcal{U}_{j,j+1}$ is reached again at $\mathcal{S}_{j}\cup \mathcal{S}_{j+1}$. From this fact we conclude that for every point $x\in \mathcal{U}_{j,j+1}$ it must be,
\be\label{BEUS2}
|\nabla U|(x)\leq \max\{|\nabla U|(q):q\in \mathcal{S}_{j}\cup \mathcal{S}_{j+1}\}\leq \frac{c_{5}}{2^{2j}}
\ee  
where the constant $c_{5}$ does not depend on $j$ and where to obtain the last inequality it was used that $|\nabla U|(q)\leq K/r(q)$ (Anderson's estimate) and the bound $r(q)\geq 2^{1+2j}$ for any $q\in \mathcal{S}_{j}\cup \mathcal{S}_{j+1}$ because $\mathcal{S}_{j}\cup \mathcal{S}_{j+1}$ is included in $\mathcal{A}(2^{1+2j},2^{4+2j})$.

Let $p_{j}$ be any divergent sequence such that $p_{j}\in \mathcal{U}_{j,j+1}$ for each $j$. Then, using (\ref{BEUS1}) and (\ref{BEUS2}) we reach,
\be
|\nabla e^{CU}|(p_{j})=Ce^{CU(p_{j})}|\nabla U|(p_{j})\leq \frac{c_{6}}{2^{j}}
\ee
where $c_{6}$ does not depend on $j$. Thus $|\nabla e^{CU}|(p_{j})$ tends to zero as $j$ goes to infinity. As the sequence $p_{j}$ is arbitrary we have proved the proposition.
\end{proof}

\subsubsection{Proof of the KA of static black hole ends}\label{POKA}

In this section we aim to prove finally Theorem \ref{KAFR} stating that a static black hole data set with sub-cubic volume growth is indeed AK. 

{\it Terminology}. Let $\Sigma$ be the manifold of a static black hole data. An embedded connected surface $\mathcal{S}$ is {\it disconnecting} if $\Sigma\setminus \mathcal{S}$ has two connected components one of which contains $\partial \Sigma$ and the other infinity. The closure of the component of $\Sigma\setminus \mathcal{S}$ containing $\partial \Sigma$ is denoted by $\Omega(\partial \Sigma,\mathcal{S})$. For instance, the surfaces $\mathcal{S}_{j}$ of a simple cut are disconnecting.

For any disconnecting surface $S$ on a static black hole data we have,
\be\label{MAXEQ}
\max\{U(p):p\in \Omega(\partial \Sigma,S)\}=\max\{U(p):p\in S\}
\ee
by the maximum principle. We will use this simple fact in the proof of the next proposition.
 
\begin{Proposition}\label{DFNKA} Let $(\Sigma; \hg, U)$ be a static black hole end with sub-cubic volume growth. Let $\gamma$ be a ray and let $\{\mathcal{S}_{j}\}$ be a simple cut. Then the end is either asymptotically Kasner different from $A$ and $C$ or the curvature decays sub-quadratically along the set $\gamma\cup (\cup_{j} \mathcal{S}_{j})$.
\end{Proposition}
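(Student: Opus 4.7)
The plan is to prove the dichotomy by assuming the curvature does not decay sub-quadratically along $\gamma \cup (\cup_{j} \mathcal{S}_{j})$ and producing Kasner asymptotic different from $A$ and $C$. I would first note that in the harmonic presentation, because $Ric_{\hg} = 2\nabla U\nabla U$ and the Weyl tensor vanishes in three dimensions, one has pointwise $|Rm|_{\hg} \sim |\nabla U|^{2}$, so sub-quadratic curvature decay at a point $p$ with $r=r(p)$ is equivalent to $\rho_{r}(p) := r|\nabla U|(p) \to 0$. By Bishop--Gromov and the sub-cubic volume growth hypothesis, together with the uniform Anderson estimates on $Ric_{\hg}$, $\nabla U$ and their derivatives, for any divergent sequence $p_{i}\in\gamma$ the rescaled annuli $(\mathcal{A}^{c}_{r_{i}}(p_{i};1/2,2);\hg_{r_{i}})$ cannot converge to a three-dimensional limit; by subsection \ref{SFCCRM} they must, up to a subsequence and passing to covers, collapse either to a segment $([1/2,2];|\cdot|)$ or to a two-dimensional Riemannian orbifold with finitely many orbifold points per end.

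Suppose first that along some subsequence $p_{i}\in \gamma$ with $r_{i}\to\infty$ the rescaled annuli collapse to a segment. If moreover $\rho_{r_{i}}(p_{i})\to \rho^{*}\in (0,1/2]$ (the upper bound comes from the Kasner constraint on the $\T^{2}$-symmetric limit obtained after passing to covers), Theorem \ref{KASYMPTOTIC} applies and gives the conclusion. Otherwise $\rho_{r_{i}}(p_{i})\to 0$ for every such subsequence on $\gamma$, which is exactly sub-quadratic decay of $|\nabla U|^{2}$, hence of $|Rm|_{\hg}$, along $\gamma$. The extension to each $\mathcal{S}_{j}$ follows from the fact that $\mathcal{S}_{j}\subset \mathcal{A}(2^{1+2j},2^{2+2j})$, so any point of $\mathcal{S}_{j}$ can be joined to $\gamma\cap \mathcal{S}_{j}$ by a chain of radial geodesics of uniformly bounded $\hg_{r_{j}}$-length (ball-covering, Proposition \ref{MAXMINU}), along which the scaled derivatives of $\nabla U$ are uniformly bounded by the higher-order Anderson estimates; integrating propagates $\rho_{r_{j}}\to 0$ and $|Rm|_{\hg_{r_{j}}}\to 0$ from $\gamma\cap\mathcal{S}_{j}$ to all of $\mathcal{S}_{j}$.

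The harder case is when there is a sequence $p_{i}\in \gamma$ with $\rho_{r_{i}}(p_{i})\to \rho^{*}>0$ yet the rescaled annuli collapse to a two-dimensional orbifold rather than a segment. By subsection \ref{RDSACL}, the collapsed limit is a reduced free $\Sa$-symmetric static data $(S_{\infty};q_{\infty},\overline{U},V)$ on a non-compact two-orbifold with finitely many orbifold points per end, and $\overline{U}$ is non-constant because $|\overline{\nabla}\,\overline{U}|=\rho^{*}>0$ at the limit point. Since $U$ is harmonic on $\Sigma^{\circ}$ with $U=-\infty$ on the horizon and $U\to U_{\infty}$ at infinity, the maximum principle gives $U<U_{\infty}$ throughout $\Sigma^{\circ}$, and passing to the limit yields $\overline{U}\leq \overline{U}_{\infty}$. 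I would then apply an orbifold-adapted version of Theorem \ref{SSKAA} --- the isolated orbifold points do not affect the asymptotic analysis of each end of $S_{\infty}$ since they lie in a compact set --- to conclude that $(S_{\infty};q_{\infty},\overline{U},V)$ is asymptotic to a Kasner different from $A$ and $C$. From this Kasner asymptotic I extract, by a diagonal argument, a secondary sequence $\tilde p_{k}\in \gamma$ at scales $\tilde r_{k}=R_{k}r_{i(k)}$ with $R_{k}\to\infty$, such that the rescaled annuli $(\mathcal{A}^{c}_{\tilde r_{k}}(\tilde p_{k};1/2,2);\hg_{\tilde r_{k}})$ in $\Sigma$ are $C^{\infty}$-close to a Kasner annulus with parameter $c>0$; in particular they collapse to a segment and $\rho_{\tilde r_{k}}(\tilde p_{k})\to \tilde\rho^{*}>0$. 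Theorem \ref{KASYMPTOTIC} applied to this secondary sequence produces the desired Kasner asymptotic of the original end.

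The main obstacle is the two-step limit in the orbifold case: one must carefully interleave the convergence of $\Sigma$-annuli to the orbifold $S_{\infty}$ with the Kasner asymptotic of $S_{\infty}$, keeping the produced points on the prescribed ray $\gamma$ and making the resulting closeness quantitative enough to verify the hypotheses of Theorem \ref{KASYMPTOTIC}. A secondary difficulty is the adaptation of Theorem \ref{SSKAA} and the $\star$-static monotonicity of $G(U)$ in (\ref{GMONOTONIC}) to reduced data on a two-orbifold with finitely many orbifold points; both the strict maximum principle for $\overline{U}$ and the integral identities for $G$ survive after excising small neighbourhoods of the orbifold locus, so the adaptation is delicate but not structurally different from the manifold case.
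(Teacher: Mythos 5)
Your overall architecture is the same as the paper's: assume a sequence of points with scaled gradient bounded below, dichotomize on whether the rescaled annuli collapse to a segment (then Theorem \ref{KASYMPTOTIC}) or to a two-dimensional orbifold (then pass to the reduced data, apply Theorem \ref{SSKAA} away from the finitely many orbifold points, and extract a secondary segment-collapsing sequence for Theorem \ref{KASYMPTOTIC}). Two of your steps, however, do not hold as written. First, the extension of the decay from $\gamma$ to the surfaces $\mathcal{S}_{j}$: you integrate the uniformly \emph{bounded} scaled derivatives of $\nabla U$ along chains of bounded $\hg_{r_{j}}$-length and claim this propagates $\rho_{r_{j}}\to 0$. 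A bound $|\nabla\nabla U|_{r_{j}}\leq C$ along a chain of length $L$ only yields $|\rho_{r_{j}}(q)-\rho_{r_{j}}(p)|\leq CL$, which does not tend to zero; smallness does not propagate from bounded derivatives, and the maximum principle for the subharmonic function $|\nabla U|^{2}$ propagates bounds from boundaries inward, not from a ray outward. The paper avoids this by taking the contradiction hypothesis to be a sequence $p_{n}$ in the whole set $\gamma\cup(\cup_{j}\mathcal{S}_{j})$ and running the segment/orbifold dichotomy for those points directly, using that each $\mathcal{S}_{j}$ is connected, lies in $\mathcal{A}(2^{1+2j},2^{2+2j})$ and meets $\gamma$, so the relevant annulus component always contains a piece of $\gamma$. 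Your own orbifold machinery could be rerouted to cover points of $\mathcal{S}_{j}\setminus\gamma$, but the integration argument you actually give fails.

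Second, your justification of $\overline{U}\leq\overline{U}_{\infty}$ is circular and conflates two different limits. The existence of the limit $U_{\infty}$ of $U$ at infinity on $\Sigma$ is Proposition \ref{EXISTLIM}, which in the paper is proved \emph{after} and \emph{from} Proposition \ref{DFNKA} (via Corollary \ref{PARAFINA}), so it cannot be assumed here. Moreover, even granting $U<U_{\infty}$ on $\Sigma^{\circ}$, ``passing to the limit'' does not give $\overline{U}\leq\overline{U}_{\infty}$: the constant $\overline{U}_{\infty}$ is the limit at infinity of $\overline{U}$ on the reduced end $S_{\overline{\gamma}}$ (supplied by Proposition \ref{LPRO}), and is not a priori the limit of $U_{\infty}-U(p_{n})$, which need not even exist. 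The paper's argument instead selects surfaces $\mathcal{S}_{j_{n}+k_{n}}$ collapsing far out in $S_{\overline{\gamma}}$, on which $U_{n}=U-U(p_{n})$ converges to $\overline{U}_{\infty}$, and applies the maximum principle in the form (\ref{MAXEQ}) to the regions $\Omega(\partial\Sigma,\mathcal{S}_{j_{n}+k_{n}})$; some argument of this kind is genuinely needed at this point.
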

\begin{proof} Suppose that there is a sequence of points $p_{n}\in \gamma\cup (\cup_{j} \mathcal{S}_{j})$ such that for some $\rho_{*}>0$,
\be\label{UCONDITI}
|\nabla U|_{r_{n}}(p_{n})\geq \rho_{*} 
\ee
If a subsequence of the annuli $(\mathcal{A}^{c}_{r_{n}}(p_{n}, 1/2,2);\hg_{r_{n}})$ collapses to a segment then $\gamma$ must pass through the annuli $\mathcal{A}^{c}_{r_{n}}(p_{n}, 1/2,2)$ and the end must be asymptotically Kasner by Theorem \ref{KASYMPTOTIC}. If no subsequence of these annuli metrically collapses to a segment then one can find a subsequence (also indexed by $n$) and neighbourhoods $\mathcal{B}_{n}$ of $\mathcal{A}^{c}_{r_{n}}(p_{n}, 1/2,2)$ such that $(\mathcal{B}_{n};\hg_{r_{n}})$ collapses to a two-dimensional orbifold. Having this, by a diagonal argument, one can find a subsequence of it (also indexed by $n$) and neighbourhoods $\mathcal{B}_{k_{n}}$ of $\mathcal{A}^{c}_{r_{n}}(p_{n};1/2,2^{k_{n}})$, with $k_{n}\rightarrow \infty$, and collapsing to a two-dimensional orbifold $(S_{\infty};q_{\infty})$. As the collapse is along $\Sa$-fibers (hence defining asymptotically a symmetry), we obtain, in the limit, a well defined reduced data $(S;q,\bar{U},V)$ where $\overline{U}$ is obtained as the limit of $U_{n}:=U-U(p_{n})$. This data has $|\nabla \overline{U}|_{q}\not\neq 0$ by (\ref{UCONDITI}) and therefore is non flat. Moreover it has at least one end containing a limit, say $\overline{\gamma}$, of the ray $\gamma$. Let us denote that end by $S_{\overline{\gamma}}$. 

As observed in subsection \ref{RDSACL} the limit orbifold has only a finite number of conic points, therefore the basic structure of the asymptotic of the reduced data on the end $S_{\overline{\gamma}}$ is described by Propositions \ref{REDCUR}, \ref{SUPO}, \ref{SUPO2} and \ref{SUPO3}. Furthermore $\overline{U}$ has a limit value $\overline{U}_{\infty}\leq \infty$ at infinity by Proposition \ref{LPRO}.   

We claim that we must have $\overline{U}\leq \overline{U}_{\infty}$. Let us see this. Assume $U_{\infty}<\infty$ otherwise there is nothing to prove. Let $j_{n}$ be an integer such that $j_{n}\leq r_{n}=r(p_{n})\leq j_{n}+1$. As $\gamma$ intersects all the surfaces $\mathcal{S}_{j}$, then fixed an integer $k\geq 1$, the surfaces $\mathcal{S}_{j_{n}+k}$ `collapse into sets' in $S_{\overline{\gamma}}$ as $n\rightarrow \infty$. The bigger $k$ is, the farther away the sets `collapse'. As $\overline{U}\rightarrow \overline{U}_{\infty}$ over the end $S_{\overline{\gamma}}$ then one can find a sequence $k_{n}\rightarrow \infty$ such that $U_{n}$ converges to $\overline{U}_{\infty}$ (as $n\rightarrow \infty$) when restricted to the surfaces $S_{j_{n}+k_{n}}$. Then, by (\ref{MAXEQ}), we have    
\be\label{MAXEQ2}
\max\{U_{n}(p):p\in \Omega(\partial \Sigma,S_{j_{n}+k_{n}})\}=\max\{U_{n}(p):p\in S_{j_{n}+k_{n}}\}\rightarrow \overline{U}_{\infty}
\ee 
and the claim follows because if $\overline{U}(q)\geq \overline{U}_{\infty}+\epsilon$ for some $\epsilon>0$ and for some $q\in S_{\overline{\gamma}}$ then there is a sequence of points $q_{n}\in \Omega(\partial \Sigma,S_{j_{n}+k_{n}})$ with $U_{n}(q_{n})>\overline{U}_{\infty}+\epsilon/2$ if $n\geq n_{0}$, that would eventually violate (\ref{MAXEQ2}).  

As $(S_{\overline{\gamma}}; q,\overline{U},V)$ is non-flat then it has to be AK different from the Kasner $A$ and $C$ by Proposition \ref{SSKAA}. Therefore one can find a sequence $k_{n}$ such that the annuli 
\be
(\mathcal{A}^{c}(\gamma(r_{n}2^{k_{n}});r_{n}2^{k_{n}-1},r_{n}2^{k_{n}+1});\hg_{r_{n}2^{k_{n}}}),
\ee
neighbouring the points $\gamma(r_{n}2^{k_{n}})$, collapse to a segment $[1/2,2]$ while having 
\be
|\nabla U|_{\hg_{r_{n}2^{k_{n}}}}(\gamma(r_{n}2^{k_{n}}))\geq \rho^{**}
\ee
for some $\rho^{**}>0$. Then the end must be asymptotically Kasner by Theorem \ref{KASYMPTOTIC}. We reach thus a contradiction. Hence, the curvature decays sub-quadratically along the set $\gamma\cup (\cup_{j} \mathcal{S}_{j})$.
\end{proof}

\begin{Corollary}\label{PARAFINA} Let $(\Sigma; \hg, U)$ be a static black hole data set with sub-cubic volume growth that is not AK. Then 
\be\label{QPARAFINA}
(\max\{U(p):p\in \mathcal{S}_{j}\cup \mathcal{S}_{j+1}\}-\min\{U(p):p\in \mathcal{S}_{j}\cup \mathcal{S}_{j+1}\})\rightarrow 0
\ee  
where $\{\mathcal{S}_{j}\}$ is a simple cut.
\end{Corollary}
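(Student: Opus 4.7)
The plan is to combine three ingredients: (i) the sub-quadratic decay of $|\nabla U|^{2}$ along $\bigcup_{k}\mathcal{S}_{k}$ furnished by Proposition \ref{DFNKA} (applied to any ray together with the simple cut at hand); (ii) the subharmonicity $\Delta_{\hg}|\nabla U|^{2}\geq 0$, which follows from $\Delta U=0$ and $Ric_{\hg}=2\nabla U\nabla U\geq 0$ via Bochner; and (iii) the Harnack-type oscillation estimate in part (2) of Proposition \ref{MAXMINU}. Write $r_{j}=2^{1+2j}$. Since the data is not AK, ingredient (i) (together with $R_{\hg}=2|\nabla U|^{2}$) gives
\[ r(p)^{2}\,|\nabla U|^{2}(p) \longrightarrow 0 \quad\text{uniformly as } p\in \bigcup_{k}\mathcal{S}_{k}\ \text{escapes to infinity.} \]

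The next move is to upgrade this pointwise decay on the cuts to uniform decay on a whole connected component of a scaled annulus. Consider the connected compact region $V_{j}:=\mathcal{U}_{j-1,j}\cup \mathcal{U}_{j,j+1}\cup \mathcal{U}_{j+1,j+2}$; because the cut is simple, its boundary is $\mathcal{S}_{j-1}\cup \mathcal{S}_{j+2}$, while $\mathcal{S}_{j}$ and $\mathcal{S}_{j+1}$ are interior. By (ii) and the maximum principle,
\[ \max_{V_{j}}|\nabla U|^{2} \;=\; \max_{\mathcal{S}_{j-1}\cup \mathcal{S}_{j+2}}|\nabla U|^{2} \;=\; o(r_{j}^{-2}), \]
equivalently $\max_{V_{j}}|\nabla U|_{r_{j}}\to 0$. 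In the $r_{j}$-scale, $\mathcal{S}_{j-1}\subset \mathcal{A}_{r_{j}}(1/4,1/2)$ and $\mathcal{S}_{j+2}\subset \mathcal{A}_{r_{j}}(16,32)$, so the enlarged annulus $\mathcal{A}_{r_{j}}(1/2,16)$ meets neither bounding surface; this forces the inclusion $\mathcal{A}^{c}_{r_{j}}(\mathcal{U}_{j,j+1};1/2,16)\subset V_{j}$, and hence $|\nabla U|_{r_{j}}\to 0$ uniformly on this enlarged component.

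To conclude I would apply part (2) of Proposition \ref{MAXMINU} with $a=1$, $b=8$, and $Z_{j}=\mathcal{S}_{j}\cup \mathcal{S}_{j+1}$: the set $Z_{j}$ lies in the single connected component $\mathcal{A}^{c}_{r_{j}}(\mathcal{U}_{j,j+1};1,8)$ of $\mathcal{A}_{r_{j}}(1,8)$ (since $\mathcal{U}_{j,j+1}$ is connected and contains both cuts), the gradient hypothesis on the enlarged component was verified in the previous step, and the conclusion is precisely \eqref{QPARAFINA}. The main obstacle is the topological confinement in the second paragraph: one must verify that $\mathcal{S}_{j-1}$ and $\mathcal{S}_{j+2}$ genuinely trap the relevant component of the enlarged annulus inside $V_{j}$. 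This reduces to the arithmetic of powers of $2$ in the definition of the partition cut, and it uses crucially that the cut is \emph{simple}, so that each cell $\mathcal{U}_{j',j'+1}$ has exactly one inner and one outer boundary component.
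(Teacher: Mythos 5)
Your argument is correct and follows essentially the same route as the paper's own proof: Proposition \ref{DFNKA} gives the decay of $|\nabla U|_{r_{j}}$ on the cut surfaces, the subharmonicity of $|\nabla U|^{2}$ together with the maximum principle propagates it to the enclosed region $\mathcal{U}_{j-1,j+2}$, and part (2) of Proposition \ref{MAXMINU} applied to $Z_{j}=\mathcal{S}_{j}\cup\mathcal{S}_{j+1}$ then yields (\ref{QPARAFINA}). The only differences are cosmetic: your normalization $r_{j}=2^{1+2j}$ versus the paper's $\hat{r}_{j}=2^{2j}$, and the corresponding choice of $a,b$ in the Harnack step.
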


\begin{proof} If the data is not AK, then we deduce by Proposition \ref{DFNKA} that for any sequence of points $p_{j}\in \mathcal{S}_{j}$ we have 
\be\label{777}
|\nabla U|_{r_{j}}(p_{j})\rightarrow 0,
\ee
where $r_{j}=r(p_{j})$ as usual. Now, if $p_{j}\in \mathcal{S}_{j}$ then $2^{1+2j}\leq r_{j}\leq 2^{4+2j}$, thus (\ref{777}) implies right away that,
\be
\max\{|\nabla U|_{\hat{r}_{j}}(q):q\in \mathcal{S}_{j-1}\cup \mathcal{S}_{j+2}\}\rightarrow 0,
\ee
as $j\rightarrow \infty$, where we made $\hat{r}_{j}=2^{2j}$. Now, as the maximum of $|\nabla U|_{\hat{r}_{j}}$ on $\mathcal{U}_{j-1,j+2}$ is reached at $\mathcal{S}_{j-1}\cup \mathcal{S}_{j+2}$ we conclude that,
\be\label{CAPICCI}
\max\{|\nabla U|_{\hat{r}_{j}}(q):q\in \mathcal{U}_{j-1,j+2}\}\rightarrow 0
\ee
as $j\rightarrow \infty$. Observe that because $\mathcal{S}_{j}$ and $\mathcal{S}_{j+1}$ are intersected by any ray $\gamma$ ($\{\mathcal{S}_{j}\}$ is a simple cut), they belong to the same connected component of $\mathcal{A}(2^{1+2j},2^{4+2j}) =\mathcal{A}_{\hat{r}_{j}}(2,4)$. Denote that component by $\mathcal{A}_{\hat{r}_{j}}^{c}(2,4)$. We have,
\be
\mathcal{S}_{j}\cup \mathcal{S}_{j+1}\subset \mathcal{A}_{\hat{r}_{j}}^{c}(2,4)\subset \mathcal{A}_{\hat{r}_{j}}^{c}(1/2,2^{6})\subset \mathcal{U}_{j-1,j+2}
\ee
and remember that by (\ref{CAPICCI}) the maximum of $|\nabla U|_{\hat{r}_{j}}$ over $\mathcal{A}_{\hat{r}_{j}}^{c}(1/2,2^{6})$ tends to zero. So (\ref{QPARAFINA}) is exactly item \ref{posterior} in Proposition \ref{MAXMINU} with $a=2$, $b=4$ and $Z_{j}=\mathcal{S}_{j}\cup \mathcal{S}_{j+1}$.
\end{proof}

\begin{Proposition}\label{EXISTLIM} Let $(\Sigma; \hg, U)$ be a static black hole data set with sub-cubic volume growth. Then $U$ tends uniformly to a constant $U_{\infty}\leq \infty$ at infinity. 
\end{Proposition}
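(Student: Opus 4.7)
My plan is to split into the two cases furnished by the dichotomy of Proposition \ref{DFNKA}. Fix a ray $\gamma$ from $\partial\Sigma$ and a simple cut $\{\mathcal{S}_j\}$. By Proposition \ref{DFNKA}, either (i) the end is asymptotically Kasner different from $A$ and $C$, or (ii) the curvature decays sub-quadratically along $\gamma \cup (\cup_j \mathcal{S}_j)$, so that Corollary \ref{PARAFINA} applies. In either case the goal is to extract uniform convergence of $U$ to some $U_\infty \in [-\infty,+\infty]$.

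In case (i), Definition \ref{KADEF} provides a diffeomorphism $\phi$ under which $N$ is uniformly close, to any polynomial order in $x$, to $N^{\mathbb{K}} = x^\gamma$, with $\gamma \neq 0$ because the model is neither $A$ nor $C$. To rule out $\gamma<0$ I will apply the maximum principle to the $g$-harmonic function $N$: if $\gamma<0$, then $N\to 0$ at infinity while $N=0$ on $\partial\Sigma$, and applying the maximum principle on the exhaustion $\{p\in\Sigma : \dist_g(p,\partial\Sigma)\leq R\}$ (bounded by $\partial\Sigma$ and a distant surface on which $\sup N \to 0$) and letting $R\to\infty$ forces $N\equiv 0$, a contradiction. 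Hence $\gamma>0$, and $U=\ln N\to +\infty$ uniformly on the end.

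In case (ii), Corollary \ref{PARAFINA} gives
\[
\max\{U(p):p\in\mathcal{S}_j\cup\mathcal{S}_{j+1}\}-\min\{U(p):p\in\mathcal{S}_j\cup\mathcal{S}_{j+1}\}\longrightarrow 0,
\]
and from here the argument is a direct transcription of the last part of the proof of Proposition \ref{LPRO}. Pick any divergent sequence $p_j\in\mathcal{S}_j$ and extract a subsequence along which $U(p_{j_i})\to U_\infty\in [-\infty,+\infty]$; vanishing oscillation then forces both $\max U|_{\mathcal{S}_{j_i}}$ and $\min U|_{\mathcal{S}_{j_i}}$ to tend to $U_\infty$. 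Applying the maximum principle for the $\hg$-harmonic function $U$ on the compact region $\mathcal{U}_{j_i,j_{i'}}$ bounded by $\mathcal{S}_{j_i}\cup\mathcal{S}_{j_{i'}}$ and then letting $i'\to\infty$ followed by $i\to\infty$ sandwiches $U$ on $\mathcal{U}_{j_i,\infty}$ between quantities that both tend to $U_\infty$, which yields the uniform convergence $U\to U_\infty$.

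The main potential obstacle is essentially administrative: verifying that the maximum-principle arguments apply cleanly in each case. For $N$ in case (i) this is standard since $N$ is $g$-harmonic and the exhaustion is by compact regions with $\partial\Sigma$ included. For $U$ in case (ii) the double-limit argument already used in Proposition \ref{LPRO} bypasses any subtlety about unbounded regions. No new tools are needed beyond Proposition \ref{DFNKA}, Corollary \ref{PARAFINA}, and the maximum principle.
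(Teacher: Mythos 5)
Your proposal is correct and follows essentially the same route as the paper: the paper dismisses the asymptotically Kasner case as immediate (you simply fill in the details, including ruling out $\gamma<0$ by the maximum principle), and in the non-AK case it likewise combines Corollary \ref{PARAFINA} with the maximum principle on the regions $\mathcal{U}_{j,j'}$. The paper phrases the final step as the statement that the limit of $U$ over one sequence of regions $\mathcal{U}_{j,j+1}$ forces the same limit over any other, while you use the double-limit sandwich from Proposition \ref{LPRO}; these are the same argument.
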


\begin{proof} The claim is obviously true if the end is AK. Let us assume then that the end is not AK. Let $\{\mathcal{S}_{j}\}$ be a simple cut and $\gamma$ a ray. By Corollary \ref{PARAFINA} we have,
\be
(\max\{U(p):p\in \mathcal{S}_{j}\cup \mathcal{S}_{j+1}\}-\min\{U(p):p\in \mathcal{S}_{j}\cup \mathcal{S}_{j+1}\})\rightarrow 0
\ee  
And by the maximum principle,
\begin{align}
\max\{U(p):&\ p\in \mathcal{S}_{j}\cup \mathcal{S}_{j+1}\} \geq \max\{U(p):p\in \mathcal{U}_{j,j+1}\}\geq \\ 
& \geq \min\{U(p):p\in \mathcal{U}_{j,j+1}\}\geq \min\{U(p):p\in \mathcal{S}_{j}\cup \mathcal{S}_{j+1}\}
\end{align}
Therefore the function $U$ is becoming more and more constant over the manifolds $\mathcal{U}_{j,j+1}$ enclosed by $\mathcal{S}_{j}$ and $\mathcal{S}_{j+1}$. A simple application of this fact is that if there is a sequence of manifolds $\mathcal{U}_{j_{i},j_{i}+1}$ over which $U$ tends to infinity then $U$ must tend to infinity over any other sequence $\mathcal{U}_{j'_{i},j'_{i}+1}$, as if not then for some $i_{1}<i_{2}$ the minimum of $U$ over the manifold $\mathcal{U}_{j_{i_{1}},j_{i_{2}}}$ enclosed by $\mathcal{S}_{j_{i_{1}}}$ and $\mathcal{S}_{j_{i_{2}}}$ would not be reached at a point on either $\mathcal{S}_{j_{i_{1}}}$ or $\mathcal{S}_{j_{i_{2}}}$, but rather at a point on a manifold $\mathcal{U}_{j,j+1}$ with $j_{i_{1}}<j$ and $j+1<j_{i_{2}}$. This would violate the maximum principle. For the same reason if $U$ tends to a finite constant over a sequence of manifolds $\mathcal{U}_{j,j+1}$ then it must tend also to the same constant over any other sequence.
\end{proof}

{\it Notation}: Let $\Sigma$ and ${\rm H}$ be two manifolds with diffeomorphic compact boundaries and let $\#:\partial {\rm H}\rightarrow \partial \Sigma$ be a diffeomorphism. Then we denote by $\Sigma\# {\rm H}$ the manifold that is obtained by identifying $\partial \Sigma$ to $\partial {\rm H}$ through $\#$. 

We know from Part I that, if a static black hole data set $(\Sigma; g,N)$ is not the Boost, then every horizon component is a two-sphere. In that case, to every connected component of $\partial \Sigma$ we can glue a three-ball (in a unique way). Following the notation above we denote by ${\rm H}$ the set of balls and by $\Sigma\# {\rm H}$ the resulting boundary-less manifold. We will use this notation below.

\begin{Proposition}\label{TOROROS} Let $(\Sigma; \hg, U)$ be a static black hole data set with sub-cubic volume growth. Then, either the data set is a Boost, is asymptotic to a Boost or there is a divergent sequence of disconnecting tori embedded in $\Sigma$ and enclosing solid tori in $\Sigma\# {\rm H}$.
\end{Proposition}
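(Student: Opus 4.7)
My plan is as follows. First I would dispatch the trivial cases and set up notation. If the data is the Boost the conclusion holds, so assume from now on that the data is not the Boost; then by the results recalled from Part I every horizon component is a topological two-sphere, so $\Sigma\#{\rm H}$ is a well-defined boundary-less orientable three-manifold. By Proposition \ref{EXISTLIM} we have $U\to U_{\infty}\leq\infty$ uniformly at infinity, and by the maximum principle (which otherwise would force $U$ constant) $U<U_{\infty}$ in $\Sigma^{\circ}$; hence the sublevel set $\{U\leq c\}$ is compact for $c$ close enough to $U_{\infty}$ (or sufficiently large if $U_{\infty}=\infty$). It remains, under the additional assumption that the end is not asymptotic to a Boost, to produce a divergent sequence of disconnecting tori $T_{n}$ bounding solid tori in $\Sigma\#{\rm H}$.

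The second step is to construct the tori from the geometry of collapsing annuli. Let $\gamma$ be a ray from $\partial\Sigma$ and $p_{n}\in\gamma$ a divergent sequence with $r_{n}=\dist_{\hg}(p_{n},\partial\Sigma)\to\infty$. By Anderson's uniform curvature estimates and the sub-cubic volume growth hypothesis, passing to a subsequence the rescaled annuli $(\mathcal{A}^{c}_{r_{n}}(p_{n};1/2,2);\hg_{r_{n}})$ metrically collapse with bounded curvature to either a one-dimensional segment or a two-dimensional orbifold, as described in Subsections \ref{SFCCRM} and \ref{RDSACL}. In the one-dimensional case the convergence and collapse results of Subsection \ref{SFCCRM} supply finite covers of suitable neighborhoods that converge in $C^{\infty}$ to a $\T^{2}$-symmetric Kasner space; the covering $\T^{2}$-fibers descend to embedded two-tori $T_{n}$ in $\Sigma$, which can be identified with regular level sets $\{U=c_{n}\}$ for $c_{n}\to U_{\infty}$ by running the flow of $\nabla U$ a short time. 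In the two-dimensional orbifold case the limit carries a free $\Sa$-symmetric reduced static data structure, to which Theorem \ref{SSKAA} applies: the limit is either flat (contradicting the non-Boost assumption once unwrapped) or asymptotically Kasner different from $A$ and $C$, and this latter case reduces back to the one-dimensional collapse regime after a further rescaling. Either way we obtain embedded tori $T_{n}$ at distance of order $r_{n}$ from $\partial\Sigma$, and since they are $C^{0}$-close to compact level sets of $U$ they are disconnecting.

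The remaining and, I expect, hardest step is to verify that the compact region $\Omega(\partial\Sigma,T_{n})\cup{\rm H}$ in $\Sigma\#{\rm H}$ is diffeomorphic to the solid torus $D^{2}\times\Sa$. My plan is to exploit the collapse geometry to produce a compressing disk for one generator of $\pi_{1}(T_{n})$. In the cover $\tilde{\mathcal{B}}_{n}$, the torus $T_{n}$ lifts to a $\T^{2}$-fiber whose ``collapsing'' $\Sa$-generator has geodesic length tending to zero; that short loop bounds a small embedded disk in $\tilde{\mathcal{B}}_{n}$, which projects to a compressing disk for $T_{n}$ in a neighborhood inside $\Sigma$. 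I would then extend this disk inward through $\Omega(\partial\Sigma,T_{n})$ by following the negative $\nabla U$-flow, using harmonicity of $U$ to control the geometry; because each horizon has been filled by a closed three-ball, the flow terminates against a filled-in ball without creating new topology, showing that one of the $\Sa$-generators of $\pi_{1}(T_{n})$ is null-homotopic in $\Sigma\#{\rm H}$. Together with the persistent non-contractible orbit of the asymptotic symmetry, this gives $\pi_{1}\bigl(\Omega(\partial\Sigma,T_{n})\cup{\rm H}\bigr)\cong\mathbb{Z}$ and, by standard three-manifold recognition arguments, identifies the enclosed region with the solid torus. Controlling how the $\nabla U$-flow interacts with the filled-in horizons, and ruling out more complicated Seifert-fibered enclosures, is where I expect the bulk of the technical work to lie.
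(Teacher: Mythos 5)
There is a genuine gap at the decisive step. Your construction of the tori $T_{n}$ from collapsed annuli follows the paper's general strategy, but your argument that they enclose solid tori in $\Sigma\#{\rm H}$ does not work. You claim that the collapsing $\Sa$-generator of $T_{n}$, being short, ``bounds a small embedded disk in $\tilde{\mathcal{B}}_{n}$, which projects to a compressing disk for $T_{n}$.'' This is exactly backwards: the defining feature of metric collapse with bounded curvature is that these short loops are \emph{not} null-homotopic — they generate the local fundamental group that the finite covers $\tilde{\mathcal{B}}_{n}$ unwrap, and in the limit $\T^{2}$-symmetric space the fiber directions remain non-contractible. No compressing disk exists in a neighbourhood of $T_{n}$, so the subsequent plan (pushing the disk inward by the $\nabla U$-flow, computing $\pi_{1}$ of the enclosed region, and recognizing a solid torus) never gets off the ground. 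Moreover the $\nabla U$-flow has critical points and does not transport embedded disks in any controlled way, so even the extension step would need a separate justification.

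The paper's route is entirely different and is driven by mean curvature, which is absent from your proposal. The topological conclusion comes from Galloway's theorem: a divergent sequence of disconnecting tori with \emph{positive outward mean curvature in the metric $g$} must bound solid tori in $\Sigma\#{\rm H}$. Accordingly, almost the entire proof is devoted to producing such mean-convex tori from the collapse limits, and the genuinely hard case is when the limiting torus is minimal (zero mean curvature): there the paper shows the end is diffeomorphic to $[0,\infty)\times\T^{2}$, hence $\star$-static by Proposition \ref{SSEU}, and then derives a contradiction with the sub-quadratic curvature decay via the monotone quantity $G(U)$ of Proposition \ref{CORONA}. Negative mean curvature is excluded by a focal-point argument using $Ric_{\hg}\geq 0$. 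None of this machinery — the mean curvature trichotomy, Galloway's minimal-surface input, or the $G(U)$ monotonicity ruling out the minimal case — appears in your proposal, and without it the claim that the $T_{n}$ bound solid tori is unsupported.
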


\begin{proof} Let us assume that the data set is not a Boost and in particular that the horizon components are spheres. Furthermore, let us assume that the data it is not asymptotic to $B$, a Boost.

By Galloway's \cite{MR1201655} (see comments in footnote \ref{FN} in section \ref{TSOTP} of Part I), if there is a divergent sequence of disconnecting tori $T_{i}$ having outwards mean curvature positive in $(\Sigma;g)$ (not in the space $(\Sigma;\hg)$), then they bound solid tori in $\Sigma\cup {\rm H}$ (see definition of ${\rm H}$ above). Let us prove below the existence of such tori under the mentioned assumptions. 

If the data is asymptotic to a Kasner space, hence different from $B$ by assumption, then the existence of disconnecting tori with outwards mean curvature positive is direct (see further comments in subsection \ref{TSOTP} of Part I). On the other hand if the asymptotic is not a Kasner space then by Proposition \ref{DFNKA} the curvature of $\hg$ must decay sub-quadratically along $\gamma\cup (\cup_{j} \mathcal{S}_{j})$ where $\gamma$ is a ray and $\{\mathcal{S}_{j}\}$ is a simple cut. So, let us assume furthermore this decay, and hence that the asymptotic is not Kasner different from $A$ and $C$.

Let $p_{j}$ be, for each $j$, a point in $\gamma\cap \mathcal{S}_{j}$. If for some subsequence $p_{j_{i}}$ the annuli $(\mathcal{A}^{c}_{r_{j_{i}}}(p_{j_{i}};1/2,2);\hg_{r_{j_{i}}})$ collapse to a segment $[1/2,2]$, then there are neighbourhoods $\mathcal{B}_{i}$ of $\mathcal{A}^{c}_{r_{i}}(p_{j_{i}};1/2,2)$ and finite coverings $\tilde{\mathcal{B}}_{i}$ such that the sequence $(\tilde{\mathcal{B}}_{i};\hg_{r_{j_{i}}})$ converges to a $\Sa\times \Sa$-symmetric flat space $([1/2,2]\times T;g_{F})$. The limit is flat due to the sub-quadratic curvature decay along the ray $\gamma$ that is crossing $\mathcal{B}_{i}$. For the same reason the lifts of $U-U(p_{j})$ to $\tilde{\mathcal{B}}_{i}$ converge to the constant zero. Hence the lifts of $N/N(p_{j})$ converge to the constant one.
Let $T_{i}$ be a sequence of embedded tori in $\mathcal{B}_{i}$ such that the coverings $\tilde{T}_{i}$ converge (in $C^{2}$) to the torus $\{1\}\times T$ on $[1/2,2]\times T$. Observe that, as the disconnecting surfaces $\mathcal{S}_{j_{i}}$ are embedded in $\mathcal{B}_{i}$, the tori $T_{i}$ are also disconnecting. If the outwards mean curvature of the torus $\{1\}\times T$ is negative, then so is the mean curvature of the tori $T_{i}$ for $i$ sufficiently large. But this is not possible because as $Ric\geq 0$ any ray from $T_{i}$ would develop a focal point at a finite distance from $T_{i}$. On the other hand if the outwards mean curvature is positive, then for $i\geq i_{0}$ with $i_{0}$ large enough the mean curvature of the tori $T_{i}$ calculated using $g$ is also positive because the lifts of $N/N(p_{j})$ converge to one (so $g$ and $\hg$ different essentially by a numeric factor). Thus, the $T_{i}$ are the tori we are looking for. 
So let us suppose that the mean curvature of the torus $\{1\}\times T$ is zero and that this occurs for every subsequence $p_{j_{i}}$ for which the annuli $(\mathcal{A}^{c}_{r_{j_{i}}}(p_{j_{i}};1/2,2);\hg_{r_{j_{i}}})$ collapse to the segment $[1/2,2]$. Note that in such case the $\Sa\times \Sa$-symmetric space $([1/2,2]\times T;g_{F})$ must be a flat metric product $g_{F}=dx^{2}+h_{F}$. In such hypothesis we claim that if there is one such sequence then the end must be diffeomorphic to $[0,\infty)\times T^{2}$. This intuitive fact was proved essentially in \cite{MR3302042}, so let us postpone explaining it until later. Now, if the end is diffeomorphic to $[0,\infty)\times T^{2}$, then, as proved in Proposition \ref{SSEU} below, it must be a $\star$-static end. Then, by Proposition \ref{CORONA}, the curvature cannot decay sub-quadratically along $\gamma\cup(\cup \mathcal{S}_{j})$ which is against the hypothesis. We reach thus a contradiction. 

So let us assume now that there are no sequence of points $p_{j_{i}}$ with the property that the annuli $(\mathcal{A}^{c}_{r_{j_{i}}}(p_{j};1/2,2);g_{r_{j_{i}}})$ collapse to a segment. Then, exactly as was done inside the proof of proposition \ref{DFNKA}, we can find a subsequence $p_{j_{i}}$ and a sequence of neighbourhoods $\mathcal{B}_{k_{i}}$ of $\mathcal{A}^{c}_{r_{j_{i}}}(p_{j_{i}};1/2,2^{k_{i}})$ over which the static data collapses to a reduced data $(S;q,\overline{U},V)$, having at least one end (without orbifold points), containing a limit of the ray $\gamma$. Furthermore, over that end we have $\overline{U}\leq \overline{U}_{\infty}\leq \infty$. Let $(\overline{\Sigma};\overline{\hg},\overline{U})$ be a static data reducing to $(E;q,\overline{U},V)$, found by taking a limit of covers (unwrappings) of (regions of) the manifolds $\mathcal{B}_{k_{i}}$. Then, by Proposition \ref{SSKAA}, either the end $(\overline{\Sigma};\overline{\hg},\overline{U})$ is asymptotic to a Kasner space different from $A$ and $C$, or it is flat and $\overline{U}$ is constant. If it is asymptotic to a Kasner space different from $A$ and $C$ then we one can easily find a sequence of points $p'_{i}$ in $\gamma\cap \mathcal{B}_{k_{i}}$ such that (make $r'_{i}=r(p'_{i})$) the annuli $(\mathcal{A}^{c}_{r'_{i}}(p'_{i};1/2,2);\hg_{r'_{i}})$ metrically collapses to the interval $[1/2,2]$ while $\rho(p'_{i})\rightarrow \rho^{*}>0$. It follows then from  Theorem \ref{KASYMPTOTIC} that the end is asymptotically Kasner different from $A$ and $C$ which is against the assumption made earlier. Suppose now that $(\overline{\Sigma};\overline{\hg})$ is flat. Again, we need to find a convex $\Sa$-symmetric torus on $\overline{\Sigma}$, from which we can obtain a sequence of convex disconnecting tori $T_{i}$ on the neighbourhoods $\mathcal{B}_{k_{i}}$. To prove this we will rely on the results obtained for reduced data. Indeed we know that $E$ is diffeomorphic to $\Sa\times [r_{0},\infty)$ and that, if the area growth is quadratic, then $q$ has the asymptotic form $dr^{2}+\mu^{2}r^{2}d\varphi^{2}$ with $\kappa$ and $|\nabla V|^{2}=|\nabla \ln \Lambda|^{2}$ decaying sub-quadratically. On the other hand $\overline{\hg}$ has de form,
\be
\overline{\hg}=q_{ij}dx^{i}dx^{j}+\Lambda^{2}(d\varphi+  \theta_{i}dx^{i})^{2}
\ee
where $(x_{1},x_{2})=(r,\theta)$. For $r_{0}$ sufficiently large, the mean curvature of the torus $\{r=r_{0}\}$ on $\overline{\Sigma}$ is approximated by $\partial_{r}\Lambda/\Lambda+1/r_{0}\sim 1/r_{0}$, hence positive. This provides the torus we are looking for. On the other hand if the area growth of $(E;q)$ is less than quadratic, then, for any divergent sequence of points $t_{i}$, the annuli 
$(\mathcal{A}(t_{i};1/2,2);q_{r_{i}})$ metrically collapse to $[1/2,2]$. Using this, one can easily find a sequence of points $p_{j_{l}}$ on $\gamma\cup \{\mathcal{S}_{j}\}$ such that the annuli $(\mathcal{A}^{c}_{r_{j_{l}}}(p_{j_{l}};1/2,2);\hg_{r_{j_{l}}})$ collapse to the segment $[1/2,2]$, reaching a contradiction as we are assuming such a sequence does not exist. 
\vs

To conclude let us explain the claim that was left to be proved. Assume then that for every subsequence $p_{j_{i}}$ (of the original sequence $p_{j}$) for which the annuli $(\mathcal{A}^{c}_{r_{j_{i}}}(p_{j_{i}};1/2,2);\hg_{r_{j_{i}}})$ collapse to the segment $[1/2,2]$ there are neighbourhoods $\mathcal{B}_{i}$ of $\mathcal{A}^{c}_{r_{i}}(p_{j_{i}};1/2,2)$ and finite coverings $\tilde{\mathcal{B}}_{i}$ such that the sequence $(\tilde{\mathcal{B}}_{i};\hg_{r_{j_{i}}})$ converges to a $\Sa\times \Sa$-symmetric flat space $([1/2,2]\times T;g_{F}=dx^{2}+h_{F}$) where $h_{F}$ is a flat metric on $T$. 

We recall first a fact from \cite{MR3302042}: for any $\delta>0$ sufficiently small and $a\leq 1/2$ and $b\geq 2$, there is $\epsilon(\delta)$, such that if $(\mathcal{A}(p_{j};a,b);\hg_{r_{j}})$ is $\epsilon(\delta)$-close in the Gromov-Hausdorff metric to the segment $[a,b]$, then there is a neighbourhood $\mathcal{B}_{j}$ of the annulus $\mathcal{A}(p_{j};a,b)$, diffeomorphic to $[a,b]\times T^{2}$, on which $\hg_{r_{j}}$ has the form,
\be
\hg_{r_{j}}=\alpha^{2}dx^{2}+h
\ee
and where the function $\alpha$ and the family of metrics $h(x)$ on $T^{2}$, satisfy,
\be\label{SUPERCONDITIONS}
1-\delta\leq \alpha\leq 1+\delta,\quad (1-\delta)h(x')\leq h(x)\leq (1+\delta)h(x')
\ee
Furthermore the Gromov-Hausdorff distance is controlled by the $h(x)$-diameter of $T^{2}$ (for any $x$), namely it is between $c_{1}\diam_{h(x)}(T^{2})$ and $c_{2}\diam_{h(x)}(T^{2})$ where $c_{1}$ and $c_{2}$ do not depend on $\epsilon$, $a$ or $b$.    

Using this fact, we fix $\delta$ small, let $k\geq 1$, and find $\epsilon(k)\leq \epsilon(\delta)$ small enough that if $(\mathcal{A}(p_{j};1/2,2);\hg_{r_{j}})$ is $\epsilon(k)$-close to $[1/2,2]$ then $(\mathcal{A}(p_{j};1/2,2^{k});\hg_{r_{j}})$ is $\epsilon(\delta)$-close to $[1/2,2^{k}]$. But then the relations (\ref{SUPERCONDITIONS}) hold for all $x\in [1/2,2^{k}]$, hence also for $x\in[1/2,2]$, and then the Gromov-Hausdorff distance from $(\mathcal{A}(p_{j};1/2,2^{k});\hg_{r_{j}})$ to $[1/2,2^{k}]$ is less or equal than $c_{3}\epsilon(k)$ where $c_{3}$ does not depend on $k$. Thus, if $k$ is big enough the Gromov-Hausdorff distance between $(\mathcal{A}(p_{j+k};1/2,2);\hg_{r_{j+k}})$ and $[1/2,2]$ is less than $\epsilon(k)/2$ (observe here that $r_{j+k}\geq 2^{2k}r_{j}$ and thus $\hg_{r_{j+k}}\leq 2^{-4k}\hg_{r_{j}}$). Therefore, if for some $j_{*}$ the GH-distance from $(\mathcal{A}(p_{j_{*}};1/2,2);\hg_{r_{j_{*}}})$ and the interval $[1/2,2]$ is less or equal than $\epsilon(k)$, then the same occurs for the annuli $(\mathcal{A}(p_{j_{*}+mk};1/2,2);\hg_{r_{j_{*}+mk}})$ for any $m\geq 1$. Then the end must then be diffeomorphic to $[0,\infty)\times T^{2}$.
\end{proof}

The following proposition completes the proof of the previous proposition. 

\begin{Proposition}\label{SSEU} Let $(\Sigma;\hg,U)$ be a static end with $\Sigma$ diffeomorphic to ${\rm T}^{2}\times [0,\infty)$. Suppose that the limit of $U$ at infinity exists and that $U<U_{\infty}(\leq \infty)$ everywhere. Then, $(\Sigma;\hg,U)$ is a $\star$-static end.
\end{Proposition}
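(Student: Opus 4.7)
The plan is to verify items (i), (ii) and (iii) of Definition~\ref{DEFSSS}. Items (i) and (ii) are given directly by hypothesis, so only (iii) needs work. I would pick any regular value $U_{0}$ with $\sup_{\partial\Sigma}U < U_{0} < U_{\infty}$ (such values are dense, since the set of critical values of $U$ is discrete and the open interval is non-empty by hypothesis) and then show, for every regular $U_{1}\ge U_{0}$, that $U_{1}^{-1}$ is compact, connected, and of positive genus.

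Compactness of $U_{1}^{-1}$ is immediate from the uniform convergence $U\to U_{\infty}$ together with $U_{1}<U_{\infty}$: outside some compact $K\subset\Sigma$ one has $U>U_{1}$, so $U_{1}^{-1}\subset K$. The same observation gives that $M := \{U\le U_{1}\}$ is compact, and $M$ is connected because any compact component of $\{U\le U_{1}\}$ disjoint from $\partial\Sigma$ would have $U\equiv U_{1}$ on its boundary, forcing $U\equiv U_{1}$ on it by the max/min principle and then globally by real-analyticity, contradicting $U<U_{\infty}$. An entirely parallel argument (no compact components, no two disjoint unbounded components by the one-end property) shows $N := \{U\ge U_{1}\}$ is connected. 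Positive genus of every component of $U_{1}^{-1}$ follows because a spherical component would, by irreducibility of $T^{2}\times[0,\infty)$, bound a $3$-ball on which $U\equiv U_{1}$ by the maximum principle, again contradicting analyticity and $U<U_{\infty}$.

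The one genuinely topological step is the connectedness of $U_{1}^{-1}$. My approach is to run Mayer--Vietoris on the open cover $A=\{U<U_{1}+\epsilon\}$, $B=\{U>U_{1}-\epsilon\}$ of $\Sigma$, which deformation retract onto $M$ and $N$ respectively (for small $\epsilon$, since $U_{1}$ is regular), with $A\cap B$ retracting onto $U_{1}^{-1}$. Denoting by $k$ the number of components of $U_{1}^{-1}$, the relevant portion of the sequence reads
\[
H_{1}(\Sigma)\;\xrightarrow{\;\partial\;}\;\mathbb{Z}^{k}\;\longrightarrow\;\mathbb{Z}\oplus\mathbb{Z}\;\longrightarrow\;\mathbb{Z}\;\longrightarrow\;0,
\]
and the kernel of the middle map has rank $k-1$. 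The key point is that the connecting map $\partial$ vanishes: since $\Sigma\cong T^{2}\times[0,\infty)$, the inclusion $\partial\Sigma\hookrightarrow\Sigma$ is a homotopy equivalence, so every class in $H_{1}(\Sigma)$ is represented by a $1$-cycle sitting inside $\partial\Sigma$; such a cycle lies entirely in $A$ because $U<U_{0}\le U_{1}$ on $\partial\Sigma$, and writing it as an $A$-chain plus the zero $B$-chain in the MV recipe gives $\partial[\gamma]=0$. Exactness then yields $k-1=0$, i.e.\ $k=1$.

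The main obstacle I anticipate is precisely this connectedness step: compactness, positive genus, and the connectedness of $M$ and $N$ are all straightforward consequences of the maximum principle, analyticity, and uniform convergence, whereas showing that $U_{1}^{-1}$ itself has only one component really does require the specific topology of $\Sigma\cong T^{2}\times[0,\infty)$ --- in particular the fact that $\pi_{1}(\Sigma)$ is carried by $\partial\Sigma$ --- together with the careful Mayer--Vietoris bookkeeping above.
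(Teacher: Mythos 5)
Your proof is correct, but the topological core is handled by a genuinely different mechanism than the paper's. The paper caps off $\Sigma$ by gluing a solid torus ${\rm H}$ to $\partial\Sigma$ and then argues inside the open solid torus $\Sigma\#{\rm H}$: each component of $U_{1}^{-1}$ separates and encloses a bounded region; the maximum principle forces every such region to contain ${\rm H}$, so the regions are nested, and two nested components would trap a compact region of $\Sigma$ between them (again killed by the maximum principle) --- this gives connectedness; a spherical level set is then excluded because it would bound a ball containing the non-contractible core of ${\rm H}$. You instead keep the manifold as is, prove connectedness of $\{U\le U_{1}\}$ and $\{U\ge U_{1}\}$ by the maximum principle plus the one-end property, and then extract connectedness of $U_{1}^{-1}$ from the Mayer--Vietoris sequence, using that $H_{1}(\Sigma)$ is carried by $\partial\Sigma\subset\{U<U_{1}\}$ so the connecting map vanishes; positive genus comes from irreducibility of ${\rm T}^{2}\times\mathbb{R}$. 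The analytic input (maximum principle, discreteness of critical values, compactness of level sets from the uniform convergence $U\to U_{\infty}$) is the same in both arguments. The paper's compactification is shorter and matches the way $\Sigma\#{\rm H}$ is used elsewhere in Section \ref{POKA}, but it leaves the separation and nesting claims for surfaces in a solid torus implicit; your homological bookkeeping makes the connectedness step more self-contained, at the cost of invoking irreducibility of ${\rm T}^{2}\times\mathbb{R}$ for the genus statement, which the paper avoids by reducing to the non-contractibility of the core circle of ${\rm H}$ inside a ball.
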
 

\begin{proof} Let $U_{0}$ be a regular value of $U$ sufficiently close to $U_{\infty}$ such that for any regular value $U_{1}>U_{0}$, $U^{-1}_{1}$ is a compact manifold without boundary embedded in $\Sigma^{\circ}$. Let ${\rm H}$ be a solid torus and consider $\Sigma\# {\rm H}$. Let us prove first that $U_{1}^{-1}$ is connected. Suppose $U_{1}^{-1}$ has components $S_{1},\ldots,S_{m}$, $m\geq 2$. Then, each $S_{k}$ encloses a bounded region $\Omega_{k}$ in $\Sigma \# {\rm H}$ (think $\Sigma \# {\rm H}$ as an open solid torus embedded in $\mathbb{R}^{3}$). If one of the $\Omega_{k}$ does not contain ${\rm H}$ then $U$ must be constant by the maximum principle which is against the hypothesis. Therefore if $S_{i}\neq S_{j}$ then either $\Omega_{i}\subset \Omega_{j}$ or $\Omega_{j}\subset \Omega_{i}$. Whatever the case, the surfaces $S_{i}$ and $S_{j}$ bound a compact region inside $\Sigma$ which again is impossible by the maximum principle. So $U_{1}^{-1}=S_{1}$, is connected. Finally, $U_{1}^{-1}$ cannot be a sphere because if so the region $\Omega$ enclosed by it must be a ball containing ${\rm H}$ but ${\rm H}$ is not contractible. 
\end{proof}
\begin{Proposition}\label{LALA} Let $(\Sigma;\hg,U)$ be a static black hole data set, asymptotic to a Boost $B$ but that is not a Boost. Then, $\Sigma$ is diffeomorphic to an open solid three-torus minus a finite number of open three-balls, and thus there is a divergent sequence of disconnecting tori $T_{i}$ enclosing solid tori in $\Sigma\# {\rm H}$.
\end{Proposition}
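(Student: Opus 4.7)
Since the data is not a Boost, by the results of Part I each component of $\partial\Sigma$ is a two-sphere, so ${\rm H}$ is a disjoint union of three-balls and it suffices to show that the boundary-less manifold $\Sigma\#{\rm H}$ is an open solid three-torus. The hypothesis that $(\Sigma;\hg,U)$ is asymptotic to a Boost $B$ furnishes, in view of Definition \ref{KADEF}, a compact $K\subset\Sigma$ and a diffeomorphism $\phi:\Sigma\setminus K\to\T^{2}\times[R,\infty)$ under which $g$ is close to $dx^{2}+h$ (with $h$ flat on $\T^{2}$) and $N$ is close to $x$, in both cases faster than any inverse power of $x$. In particular the unique end of $\Sigma\#{\rm H}$ is diffeomorphic to $\T^{2}\times[R,\infty)$.

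The plan is to produce a divergent sequence of disconnecting tori $T_{i}\subset\Sigma$ with strictly positive outward mean curvature in $(\Sigma;g)$, and then to apply Galloway's theorem \cite{MR1201655} (cited in the excerpt just before Proposition \ref{TOROROS}), which yields that each such $T_{i}$ bounds a solid torus in $\Sigma\#{\rm H}$. Since the $T_{i}$ diverge to infinity, the enclosed solid tori are nested and exhaust $\Sigma\#{\rm H}$, forcing this manifold to be an open solid three-torus. Accordingly $\Sigma$ is an open solid three-torus minus finitely many open three-balls, and the sequence $T_{i}$ is the one whose existence is asserted by the proposition.

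The natural candidates for $T_{i}$ are the level sets $\{N=c\}$ for $c$ large (which asymptotically agree with $\phi^{-1}(\T^{2}\times\{c\})$ since $|\nabla N|_{g}\to 1$ at infinity, so these surfaces are regular). Using $\Delta_{g}N=0$, the outward mean curvature of $\{N=c\}$ in $(\Sigma;g)$ is
\[
H_{c}=-\,\nu\big(\ln|\nabla N|_{g}\big),\qquad \nu=\nabla N/|\nabla N|_{g},
\]
so strict positivity of $H_{c}$ reduces to strict monotone decrease of $|\nabla N|_{g}$ outward along $\nu$. Since $|\nabla N|_{g}\to 1$ at infinity, a Bochner-type identity for $|\nabla N|_{g}^{2}$ combined with the static equations (which yield $NR(\nabla N,\nabla N)=\tfrac{1}{2}\nabla N\cdot\nabla|\nabla N|_{g}^{2}$, and hence a weighted subharmonicity for $|\nabla N|_{g}^{2}$) should, by a maximum-principle argument propagated into the end, produce a divergent sequence of level sets on which $H_{c}>0$.

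The main obstacle is precisely this last step: in the exact Boost one has $|\nabla N|_{g}\equiv 1$ and $H_{c}\equiv 0$, and because the asymptotic-to-Boost condition decays faster than any polynomial, the correction to $H_{c}$ is invisible at every polynomial order. Extracting a definite sign for $H_{c}$ therefore requires either analysing the sub-leading exponential modes $e^{-|k|x}e^{ik\cdot y}$ that must be present in the asymptotic expansion of harmonic functions on $\T^{2}\times[R,\infty)$ whenever the data is genuinely not a Boost, or an ad-hoc perturbation argument replacing $\{N=c\}$ by a nearby torus on which one can force positive mean curvature using the failure of exact Boost symmetry. Once this positivity is established on a divergent subsequence, the invocation of Galloway's theorem as outlined above completes the proof.
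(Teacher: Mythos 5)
Your overall strategy---produce divergent disconnecting tori to which Galloway's theorem applies, then conclude that the nested solid tori exhaust $\Sigma\#{\rm H}$---matches the paper's, but the step you yourself flag as the ``main obstacle'' is exactly the step that carries all the content, and your proposal does not close it. Seeking a divergent sequence of tori with \emph{strictly positive} outward $g$-mean curvature is problematic here precisely because the data is asymptotic to a Boost: the model tori have mean curvature converging to zero faster than any inverse power of $x$, so no polynomial-order expansion can produce a sign, and your suggested remedies (sub-leading exponential modes of harmonic functions on $\T^{2}\times[R,\infty)$, or an unspecified perturbation) are not carried out and are not obviously viable. Moreover the sign heuristics in your sketch are shaky: from $H_{c}=-\nu(\ln|\nabla N|_{g})$ and a weighted subharmonicity of $|\nabla N|_{g}^{2}$ with boundary value the surface gravity and limit $1$ at infinity, the natural conclusion is that $|\nabla N|_{g}$ tends to \emph{increase} toward its supremum going outward, which would give $H_{c}\leq 0$, the wrong sign for your purposes.

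The paper avoids this issue entirely: it never produces a far-out torus of positive mean curvature. Instead it shows there exists a torus $T_{*}$ isotopic to the $T_{x}$ with area \emph{strictly less} than $A(T_{\infty})=\lim_{x\to\infty}A(T_{x})$, and feeds that into Galloway's isotopy-minimization argument, using the almost-flat region near infinity only as an outer barrier (a thin ring attached beyond $T_{x}$ whose new boundary can be made mean-convex, and on which any minimizer would have area exceeding $A(T_{*})$). The strict area drop is obtained by passing to the optical metric $\bar{g}=N^{-2}g$ and considering the integrable congruence of $\bar{g}$-geodesics entering from infinity perpendicular to the asymptotic tori: by the monotonicity formula of \cite{MR1201655}, the inward mean curvature $\theta$ is monotone along the congruence and tends to $0$ at infinity, hence $\theta\leq 0$ everywhere, so $A(\bar{T}_{t})\leq A(T_{\infty})$; equality on some slab forces the region to be a flat product, i.e.\ a Boost, which is excluded by hypothesis. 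This rigidity step is where ``not a Boost'' enters, and it is the ingredient your argument is missing. To repair your proof you would need either to import this optical-metric monotonicity-plus-rigidity argument, or to genuinely establish the exponential-mode analysis you allude to; as written the proposal is incomplete.
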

\begin{proof} Recall that a Boost has a data $\Sigma_{B}=[0,\infty)\times \T^{2}$, $g_{B}=dx^{2}+h$, $N_{B}=x$, where $h$ is a flat metric on $\T^{2}$. Following the Definition \ref{KADEF} of Kasner asymptotic, Let $\phi:\Sigma\setminus K\rightarrow \Sigma_{B}\setminus K^{\mathbb{K}}$ be a diffeomorphism into the image such that the components $(\phi_{*}g)_{ij}$ (and their derivatives) converge to the components $g_{B,ij}$ (and their derivatives, i.e. zero) faster than any inverse power of $x$. Denote by $T_{x}$ the tori $\phi^{-1}(\{x\}\times \T^{2})$ ($x\geq x_{0}$ such that $\{x\}\times \T^{2}\subset \Sigma^{\mathbb{K}}\setminus K^{\mathbb{K}}$). Note that by the fast decay, the Gaussian curvature and the second fundamental forms of the tori $T_{x}$ tend to zero faster than any inverse power of $x$ as $x\rightarrow \infty$. Let $A(T_{x})$ be the area of $T_{x}$ and $A(T_{\infty})=\lim_{x\rightarrow \infty} A(T_{x})$.

The key point to prove the theorem is to to show that there is a torus, say $T_{*}$, isotopic to the tori $T_{x}$ and with area less than $A(T_{\infty})$. If this is the case, then one can essentially use Galloway's arguments in \cite{MR1201655} to conclude that indeed the tori $T_{x}$ enclose solid tori in $\Sigma\# {\rm H}$ (see footnote \ref{FN} in Part I). Shortly, the argument would be as follows. Let $\Sigma_{x}$ be the closure of the connected component of $\Sigma\setminus T_{x}$ containing $\partial \Sigma$. Let $x_{1}$ be large enough that for any $x\geq x_{1}$, the region near $T_{x}$ is so close to flat that one can extend $\Sigma_{x}$ by a small (Riemannian) ring (diffeomorphic to $[0,1]\times \T^{2}$), in such a way that the new boundary has positive outwards mean curvature, and furthermore that, if a stable minimal surface intersects the ring then it has area greater than $A(T^{*})$. Granted this, there is always a sequence of tori $S_{i}$ isotopic to $T_{x}$, disjoint from the ring and minimising area within the class of tori isotopic to $T_{x}$. One can repeat Galloway's argument directly. 

Let us show now the existence of $T^{*}$. It will follow from proving that there is an suitable integrable congruence of geodesics $\{\bar{\gamma}\}$ with respect to the optical metric $\bar{g}=N^{-2}g$ over the end of $\Sigma$. Integrable here means that the distribution of planes perpendicular to the geodesics integrate to surfaces, in this case two-tori. The congruence will cover $\Sigma$ outside a bounded closed set. Furthermore if we let $\bar{T}_{t}$ be the family of `integral' tori, where $t$ is the $\bar{g}$-distance between $\bar{T}_{t}$ and $\bar{T}_{0}$, then the Gaussian curvature and second fundamental form of the $\bar{T}_{t}$ tend to zero $t\rightarrow \infty$. Suppose that we have such a congruence. Let $\theta$ be, at every point, the $g$-mean curvature of the tori $\bar{T}_{t}$ passing through that point and with respect to the normal $n=-\partial_{t}/|\partial_{t}|_{\bar{g}}$ (`inwards'). Then, it was proved in \cite{MR1201655} (see also \cite{MR3077927}) that, the mean curvature $\theta$ evaluated on a geodesic $\bar{\gamma}$ is monotonically decreasing as $t$ decreases. As the mean curvature $\theta$ of the tori $\bar{T}_{t}$ tends to zero as $t\rightarrow \infty$, then $\theta\leq 0$ everywhere. As the areas of the tori $\bar{T}$ tend to $A(T_{\infty})$ then, at any $t$, $A(\bar{T}_{t})<A(T_{\infty})$ unless the mean curvature is identically zero in all the region between $\bar{T}_{t}$ and infinity. If such is the case, it also follows from \cite{MR1201655} that in that region the metric is a flat product, which is not the case because by hypothesis the data is not a Boost. So $A(\bar{T}_{t})<A(T_{\infty})$ and we define $T_{*}=\bar{T}_{t}$.   

The construction of the congruence of $\bar{g}$ geodesics is as follows. Consider the congruence of geodesics with respect to $\bar{g}$, emanating from $T_{x}$ and perpendicularly to it, and towards $\partial \Sigma$ (`inwards'). Due to the fast decay of $\phi_{*}g$ into $g_{\mathbb{K}}$, and of $\phi_{*}N$ to the function $x$ (indeed the fast decays of $\phi_{*}N-x$ to zero), the congruence converges as $x\rightarrow \infty$, to a (smooth) congruence and covering $\Sigma$ outside a bounded closed set as wished.
\end{proof} 

The following proposition, that uses the previous ones, essentially proves that static black hole ends are $\star$-static ends. 

\begin{Proposition}\label{REGVALUO} Let $(\Sigma; \hg, U)$ be a static black hole data set with sub-cubic volume growth. Then, there is a regular value $U_{0}<U_{\infty}$, such that for any regular value $U_{1}$ of $U$ with $U_{\infty}>U_{1}\geq U_{0}$, $U_{1}^{-1}$ is a compact connected surface of genus greater than zero.  
\end{Proposition}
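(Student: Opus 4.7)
The plan is to mimic the argument of Proposition \ref{SSEU}, but after localising to a suitable ``tail'' neighbourhood of infinity diffeomorphic to $\T^{2}\times [0,\infty)$ that is carved out using the topological information already assembled. The Boost case is trivial: there $U=\ln x$, $U_{\infty}=\infty$, and every regular level set is a flat torus $\{x\}\times \T^{2}$. Assume henceforth that $(\Sigma;\hg,U)$ is not the Boost. Then by the results of Part I every horizon component is a two-sphere, so the manifold $\Sigma\# {\rm H}$ obtained by capping each horizon with a ball is well-defined. Propositions \ref{TOROROS} and \ref{LALA} together supply a divergent sequence of disconnecting tori $T_{i}\subset \Sigma$ bounding compact solid tori $V_{i}\subset V_{i+1}\subset \Sigma\# {\rm H}$ with $\bigcup_{i}V_{i}=\Sigma\# {\rm H}$, which identifies $\Sigma\# {\rm H}$ with the open solid three-torus $\Sa\times \mathbb{R}^{2}$.

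Next I invoke Proposition \ref{EXISTLIM} to get the uniform limit $U\to U_{\infty}\leq \infty$ at infinity; combined with the strong maximum principle and the boundary values $U=-\infty$ on $\partial \Sigma$, this yields $U<U_{\infty}$ throughout $\Sigma^{\circ}$. Fix $i_{0}$ so large that $V_{i_{0}}$ contains all the capping balls. Since $V_{i_{0}}$ is compact, $U^{*}_{i_{0}}:=\max_{V_{i_{0}}}U$ is strictly less than $U_{\infty}$, so one can select a regular value $U_{0}\in (U^{*}_{i_{0}},U_{\infty})$. For any regular $U_{1}\in [U_{0},U_{\infty})$, uniform convergence makes $\{U\leq U_{1}\}$ compact, hence $U_{1}^{-1}$ is compact; and since $U<U_{1}$ throughout $V_{i_{0}}$, the level set $U_{1}^{-1}$ lies in the interior of the tail $\mathcal{T}:=(\Sigma\# {\rm H})\setminus V_{i_{0}}^{\circ}$, which is diffeomorphic to $\T^{2}\times [0,\infty)$.

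To conclude I cap the torus boundary $T_{i_{0}}$ of $\mathcal{T}$ with an auxiliary solid torus ${\rm H}'$, so that $\mathcal{T}\# {\rm H}'\cong \Sa\times \mathbb{R}^{2}$ has $H_{2}=0$. Each component $S_{k}$ of $U_{1}^{-1}$ is thus null-homologous and bounds a compact region $\Omega_{k}\subset \mathcal{T}\# {\rm H}'$. If some $\Omega_{k}$ does not contain ${\rm H}'$, then $\Omega_{k}\subset \mathcal{T}^{\circ}\subset \Sigma^{\circ}$ and the maximum principle forces $U\equiv U_{1}$, contradicting regularity of $U_{1}$; hence every $\Omega_{k}$ contains ${\rm H}'$. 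When $m\geq 2$, any two $\Omega_{i},\Omega_{j}$ share ${\rm H}'$ and must therefore be nested, and the intermediate compact region $\Omega_{j}\setminus \Omega_{i}^{\circ}\subset \mathcal{T}^{\circ}$ has $U=U_{1}$ on its boundary, again contradicting regularity. Hence $m=1$ and $U_{1}^{-1}$ is connected. Finally, $U_{1}^{-1}$ cannot be a sphere: by irreducibility of $\Sa\times \mathbb{R}^{2}$, a bounding sphere would make $\Omega_{1}$ a three-ball, incompatible with $\Omega_{1}\supset {\rm H}'$ since a ball cannot contain a solid torus. The principal technical effort is front-loaded in Propositions \ref{TOROROS}, \ref{LALA} and \ref{EXISTLIM}; once those provide a model of $\Sigma\# {\rm H}$ as an open solid torus exhausted by the $V_{i}$, the remaining argument is essentially the one used in Proposition \ref{SSEU}.
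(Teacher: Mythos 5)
Your argument for connectedness and for the compactness of the level sets is sound, and front-loading Propositions \ref{EXISTLIM}, \ref{TOROROS} and \ref{LALA} is the right instinct. The gap is in the topological step: from ``there is a divergent sequence of disconnecting tori $T_{i}$ enclosing solid tori $V_{i}$ with $V_{i}\subset V_{i+1}$ and $\bigcup_{i}V_{i}=\Sigma\# {\rm H}$'' you conclude that $\Sigma\#{\rm H}\cong \Sa\times\mathbb{R}^{2}$ and that the tail $(\Sigma\#{\rm H})\setminus V_{i_{0}}^{\circ}$ is diffeomorphic to $\T^{2}\times[0,\infty)$. This inference is false in general: a nested union of solid tori need not be a solid torus (the Whitehead manifold is the standard counterexample), because nothing forces the intermediate regions $V_{i+1}\setminus V_{i}^{\circ}$ to be products $\T^{2}\times[0,1]$. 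Propositions \ref{TOROROS} and \ref{LALA} only deliver the tori themselves; the product structure of the end is obtained in the paper only \emph{after} the Kasner asymptotic is established, and Proposition \ref{REGVALUO} is an input to that asymptotic (via Corollary \ref{LERO} and Theorem \ref{KAFR}), so you cannot borrow it here without circularity. The failure is not cosmetic: your exclusion of spherical level sets rests on the core of ${\rm H}'$ being $\pi_{1}$-essential in $\mathcal{T}\#{\rm H}'$, and in a Whitehead-type union of solid tori the ambient manifold is simply connected, so that argument evaporates exactly in the case you have not ruled out. (Also, as stated, ``a ball cannot contain a solid torus'' is wrong; what you mean is that a ball cannot contain a loop that is essential in the ambient manifold.)

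The paper sidesteps all of this. Connectedness is proved with no solid-torus structure at all: each component of a regular level set must separate $\Sigma$ (otherwise doubling along it produces a static black hole data set with two ends), one side contains $\partial\Sigma$ and the other is non-compact, and two components would enclose a compact region with $U$ constant on its boundary, violating the maximum principle. For the genus, the paper argues by contradiction: if a divergent sequence of regular level sets were spheres, each such sphere sits inside one of the solid tori $V_{j}$ and hence bounds a ball by irreducibility of the solid torus; the resulting nested balls exhaust $\Sigma\#{\rm H}$, which is therefore $\mathbb{R}^{3}$ (monotone unions of balls, unlike of solid tori, are well behaved), so the end is ${\rm S}^{2}\times[0,\infty)$ and has cubic volume growth, contradicting the sub-cubic hypothesis. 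Note that the sub-cubic volume growth enters essentially in this last step, whereas your proposal never uses it beyond invoking \ref{TOROROS}; that is a further sign that the topological shortcut is doing work it cannot justify.
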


\begin{proof} If the data is a Boost then we are done, so let us assume from now on that it is not. 

Let $\Omega$ be a an open connected set with a closure $\overline{\Omega}$ containing $\partial \Sigma$. Let $U_{0}<U_{\infty}$ be a regular value such that the set $\{U<U_{0}\}$ contains $\overline{\Omega}$. Suppose that for some regular value $U_{1}>U_{0}$, $U_{1}^{-1}$ has the connected components $S_{1},\ldots,S_{m}$, $m\geq 2$. If for one of the components, say $S_{i}$, $\Sigma\setminus S_{i}$ is connected, then we can glue two copies of $\Sigma\setminus S_{i}$ along $S_{i}$ to make a static black hole data set with more than one end which is not possible. So for every $S_{j}$, $\Sigma\setminus S_{j}$ has two connected components, and because $U_{1}>U_{0}$, one of the them must contain $\overline{\Omega}$. Call the closure of that component $\Sigma_{j}$. We have $\partial \Sigma_{j}=\partial \Sigma\cup S_{j}$. Observe now that $\Sigma\setminus \Sigma_{j}^{\circ}$ must be connected because first, no component of it can be compact (that would violate the maximum principle) and second, no two of the components can be non-compact (because there would be at least two ends). Hence, if $\Sigma\setminus \Sigma_{j}^{\circ}$ is non-compact then $\Sigma_{j}$ must be compact (if not there would be two ends again). In sum, every $S_{j}$ is disconnecting, and $\partial \Sigma_{j}=\partial \Sigma \cup S_{j}$. Therefore if $m\geq 2$ then, either $\Sigma_{1}\setminus \Sigma_{2}$ or $\Sigma_{2}\setminus \Sigma_{1}$ is a compact manifold with $U=U_{1}$ on its boundary contradicting the maximum principle (here, following the notation above, $\Sigma_{1}$ is connected component of $\Sigma\setminus S_{1}$ containing $\overline{\Omega}$ and similarly for $\Sigma_{2}$). So $U_{1}^{-1}$ is connected for every regular value $U_{1}>U_{0}$. 

Now, by contradiction suppose that there is a sequence of regular values $U_{i}>U_{1}$ tending to $U_{\infty}$ such that each $U^{-1}_{i}$ is a sphere. Clearly such sequence of spheres is divergent (i.e. escapes any compact set). Also, by Propositions \ref{TOROROS} and \ref{LALA}, every sphere is embedded inside a solid torus in $\Sigma\# {\rm H}$. Hence, every $U^{-1}_{i}$ bounds a ball. Thus $\Sigma\# {\rm H}$ must be diffeomorphic to $\mathbb{R}^{3}$. Hence, the complement of an open set of $\Sigma$ is diffeomorphic to ${\rm S}^{2}\times [0,\infty)$ and the end must have cubic-volume growth by \cite{MR3302042} which is against the hypothesis.
\end{proof} 


The next Corollary is direct from Propositions \ref{EXISTLIM}, \ref{REGVALUO}.
\begin{Corollary}\label{LERO} Let $(\Sigma; \hg,U)$ be a static black hole data set with sub-cubic volume growth. Then $(\Sigma; \hg,U)$ is a $\hgls$-static end.
\end{Corollary}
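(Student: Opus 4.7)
The plan is to unpack Definition \ref{DEFSSS} into three items and check each one directly from results already in hand, since most of the heavy lifting has been done in Propositions \ref{EXISTLIM}, \ref{REGVALUO} and \ref{SSEU}. First, I will note that by Part I, a static black hole data set has only one end and $\partial\Sigma$ is compact, so $(\Sigma;\hg,U)$ is indeed a static end in the sense of Definition \ref{SDE}; in particular the hypotheses of the propositions just mentioned apply.

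For item \ref{URVA1} of Definition \ref{DEFSSS}, I would simply invoke Proposition \ref{EXISTLIM}, which asserts that $U$ tends uniformly to a constant $U_{\infty}\le\infty$ at infinity, using the sub-cubic volume growth hypothesis. For item \ref{URVA2}, I would argue by the strong maximum principle applied to the harmonic function $U$ on the connected interior $\Sigma^{\circ}$: since the data is in the harmonic presentation and $N$ vanishes on the horizon, we have $U\to-\infty$ as one approaches $\partial\Sigma$, while $U\to U_{\infty}$ at the one end. If $U_{\infty}<\infty$ and $U$ attained the value $U_{\infty}$ at some interior point, then $U\equiv U_{\infty}$ by the maximum principle, which is incompatible with $U\to-\infty$ near $\partial\Sigma$; if $U_{\infty}=\infty$, the inequality $U<U_{\infty}$ is automatic. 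For item \ref{URVA3}, I would pick $U_{0}$ as produced by Proposition \ref{REGVALUO}: a regular value with $U_{0}<U_{\infty}$ such that every regular value $U_{1}$ with $U_{\infty}>U_{1}\ge U_{0}$ has $U_{1}^{-1}$ compact, connected and of positive genus. The auxiliary requirement $U_{0}>\sup\{U(p):p\in\partial\Sigma\}$ is trivially met because $U\equiv-\infty$ on the horizon.

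Once these three items are checked, the data set satisfies Definition \ref{DEFSSS} and the corollary follows. There is essentially no obstacle here: the statement is a bookkeeping combination of the two main propositions of the subsection together with the maximum principle, and the only point that needs to be handled with a little care is the boundary behavior of $U$ in the harmonic presentation, which is standard. I would therefore present the proof as a short paragraph citing Propositions \ref{EXISTLIM} and \ref{REGVALUO} and noting the maximum principle argument for the strict inequality $U<U_{\infty}$.
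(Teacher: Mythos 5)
Your proof is correct and matches the paper, which simply declares the corollary ``direct from Propositions \ref{EXISTLIM} and \ref{REGVALUO}''; your unpacking of Definition \ref{DEFSSS} item by item is exactly the intended reading, and the observation that $U_{0}>\sup\{U(p):p\in\partial\Sigma\}$ is automatic because $N=0$ on the horizon is the right way to dispose of that clause. The only step worth making fully explicit is that $U\leq U_{\infty}$ must be established first --- via the weak maximum principle on the regions enclosed by $\partial\Sigma$ and the surfaces of a simple cut, together with the uniform convergence from Proposition \ref{EXISTLIM} --- before the strong maximum principle can be invoked at a hypothetical interior point where $U=U_{\infty}$, since as written your argument does not by itself exclude $U>U_{\infty}$ somewhere.
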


We are now ready to prove the Theorem \ref{KAFR}.

\begin{proof}[Proof of Theorem \ref{KAFR}] Suppose that the data is not AK. Let $\{\mathcal{S}_{i}\}$ be a simple cut and let $\gamma$ be a ray. Then, by Proposition \ref{DFNKA}, the curvature decays sub-quadratically along $\gamma\cup(\cup \mathcal{S}_{i})$. By Corollary \ref{LERO} the data is $\hgls$-static and by Proposition \ref{CORONA}  the curvature cannot decay sub-quadratically along $\gamma\cup(\cup \mathcal{S}_{i})$. We obtain a contradiction. Therefore the data is AK.
\end{proof} 

\section{The proof of the classification theorem}\label{TCTH}

\begin{proof}[Proof of the classification theorem \ref{TCTHM}] Let $(\Sigma;\sg,N)$ be a static black hole data set. By Proposition \ref{SOFOR} we know that one of the following holds,
\begin{enumerate}
\item\label{CCC1} $\partial \Sigma = H$, where $H$ is a two-torus, or,
\item\label{CCC2} $\partial \Sigma =H_{1}\cup\ldots\cup H_{h}$, $h\geq 1$, where each $H_{j}$ is a two-sphere, and $(\Sigma;\hg)$ has cubic volume growth, or,
\item\label{CCC3} $\partial \Sigma =H_{1}\cup\ldots\cup H_{h}$, $h\geq 1$, where each $H_{j}$ is a two-sphere, and $(\Sigma;\hg)$ has sub-cubic volume growth.
\end{enumerate}
Then depending on whether $1, 2$ or $3$ holds, we can conclude the following, 
\begin{enumerate}
\item If $\partial \Sigma = H$,  then the data is a Boost as explained in Proposition \ref{SOFOR}.
\item In this case the data is asymptotically flat (with Schwarzschildian fall off), as discussed in Section \ref{ENDSAF}. By Galloway's \cite{MR1201655}, $\Sigma$ is diffeomorphic to $\mathbb{R}^{3}$ minus $h$-balls and the uniqueness theorem of Israel-Robinson-Bunting-Masood-um-Alam, shows that the solution is Schwarzschild. 
\item By Theorem \ref{KAFR} the data is asymptotically Kasner different from a Kasner $A$ or $C$. If the asymptotic is a Boost, that is $B$, then $\Sigma$ is diffeomorphic to a solid three-torus minus a finite number of open three-balls, Proposition \ref{LALA}. If the asymptotic is different from $B$ (and also from $A$ and $C$) then one can clearly find an embedded torus $T$ sufficiently far away that its outwards mean curvature is positive and that separate $\Sigma$ into (i) a compact manifold $\Sigma_{\partial}$ containing the horizons (i.e. $\partial \Sigma$) and another manifold (the `end') diffeomorphic to $[0,\infty)\times {\rm T}^{2}$. It follows again by Galloway's \cite{MR1201655}, that $\Sigma_{\partial}$ is diffeomorphic to a solid torus minus a finite number of open balls. Thus, $\Sigma$ is diffeomorphic to a solid three-torus minus a finite number of open balls. Hence, according to Definition \ref{KNTDEF}, $(\Sigma; g,N)$ is of Myers/Korotkin-Nicolai type.
\end{enumerate}
\end{proof}

\bibliographystyle{plain}
\bibliography{Master}

\end{document}